\pgfplotsset{compat=1.18}
\newtheorem{theorem}{Theorem}
\newtheorem{lemma}[theorem]{Lemma}
\newtheorem{corollary}[theorem]{Corollary}
\newtheorem{proposition}[theorem]{Proposition}
\newtheorem{definition}[theorem]{Definition}
\newtheorem{introtheorem}{Theorem} 
\theoremstyle{definition}
\newtheorem*{problem*}{Problem}
\newtheorem*{assumption*}{Assumption}
\newtheorem{example}[theorem]{Example}
\newtheorem{remark}[theorem]{Remark}
\newtheorem*{warning*}{Warning}
\newcommand{\ip}[2]{\langle #1,#2\rangle}
\newcommand{\ketbra}[2]{|#1\rangle\langle#2|}
\newcommand{\ket}[1]{|#1\rangle}
\newcommand{\kettbra}[1]{\ketbra{#1}{#1}}
\newcommand{\bra}[1]{\langle#1|}
\DeclareMathOperator{\supp}{supp}
\newcommand{\norm}[1]{\lVert #1\rVert}
\newcommand{\oo}{\infty}
\newcommand{\ox}{\otimes}
\newcommand{\mc}{\mathcal}
\newcommand{\eps}{\varepsilon}
\newcommand{\III}{{\mathrm{III}}}
\newcommand{\II}{{\mathrm{II}}}
\newcommand{\I}{{\mathrm{I}}}
\DeclareMathOperator{\Aut}{Aut}
\DeclareMathOperator{\Out}{Out}
\DeclareMathOperator{\Inn}{Inn}
\newcommand{\abs}[1]{\lvert #1 \rvert}
\newcommand{\up}[1]{^{(#1)}}
\DeclareMathOperator{\tr}{Tr}
\DeclareMathOperator{\Tr}{Tr} 
\renewcommand{\tilde}{\widetilde}
\renewcommand{\hat}{\widehat}
\newcommand{\hide}[1]{}
\def\A{{\mc A}}
\def\B{{\mc B}}
\def\CC{{\mathbb C}}
\def\H{{\mc H}}
\def\K{{\mathcal K}}
\def\M{{\mc M}}
\def\N{{\mc N}}
\def\O{{\mc O}}
\def\U{{\mc U}}
\def\RR{{\mathbb R}}
\def\NN{{\mathbb N}}
\renewcommand\P{\mc P}
\def\ZZ{{\mathbb Z}}
\newcommand{\R}{\mc R}
\newcommand{\proj}{\mathrm{Proj}}
\def\locc{\xrightarrow{\LOCC}}
\def\barlocc{\xrightarrow{\overline\LOCC}}
\def\slocc{\xrightarrow{\SLOCC}}
\def\barslocc{\xrightarrow{\overline\SLOCC}}
\DeclareMathOperator{\lin}{span}
\DeclareMathOperator{\id}{id}
\DeclareMathOperator{\Ad}{Ad}
\DeclareMathOperator{\Sp}{Sp}
\DeclareMathOperator{\conv}{conv}
\newcommand{\LOCC}{\mathrm{LOCC}}
\newcommand{\SLOCC}{\mathrm{SLOCC}}
\newcommand{\placeholder}[0]{{\,\cdot\,}}
\newcommand{\piso}{\overline\U}
\newcommand{\DS}{\mathrm{DS}}
\newcommand{\DSS}{\mathrm{DSS}}
\newcommand{\qandq}{\quad\text{and}\quad}
\newcommand{\no}{%
\tikz[scale=0.23] {
    \draw[line width=0.7,line cap=round] (0.0,0.05) to [bend left=4] (.9,1);
    \draw[line width=0.7,line cap=round] (0.1,0.95) to [bend right=2] (0.8,0.05);
}}
\newcommand{\yes}{%
\tikz[scale=0.23] {
    \draw[line width=0.7,line cap=round] (0.25,0) to [bend left=10] (1,1);
    \draw[line width=0.8,line cap=round] (0,0.35) to [bend right=1] (0.23,0);
}}
\title{Pure state entanglement and von Neumann algebras}
\author{Lauritz van Luijk, Alexander Stottmeister, Reinhard F.\ Werner, and Henrik Wilming}
\date{\small Institut f\"ur Theoretische Physik, Leibniz Universit\"at Hannover, \\ Appelstraße 2, 30167 Hannover, Germany\\[2ex]\today}
\begin{document}
\newgeometry{top=0.in} 
\maketitle
 \vspace{-.5cm}
\begin{abstract}
We develop the theory of local operations and classical communication (LOCC) for bipartite quantum systems represented by commuting von Neumann algebras. Our central result is the extension of Nielsen's Theorem, stating that the LOCC ordering of bipartite pure states is equivalent to the majorization of their restrictions, to arbitrary factors. As a consequence, we find that in bipartite system modeled by commuting factors in Haag duality, a) all states have infinite single-shot entanglement if and only if the local factors are not of type I, b) type III factors are characterized by LOCC transitions of arbitrary precision between any two pure states, and c) the latter holds even without classical communication for type III$_{1}$ factors. In the case of semifinite factors, the usual construction of pure state entanglement monotones carries over. Together with recent work on embezzlement of entanglement, this gives a one-to-one correspondence between the classification of factors into types and subtypes and operational entanglement properties. In the appendix, we provide a self-contained treatment of majorization on semifinite von Neumann algebras and $\sigma$-finite measure spaces.
\end{abstract}

\setcounter{tocdepth}{2}
\tableofcontents
\restoregeometry

\section{Introduction and Overview}
Entanglement is at the core of quantum information theory.
Its study has traditionally been restricted, for the most part, to systems of finitely many degrees of freedom, such as spin-1/2 particles or polarization degrees of freedom of photons (both modeled by finite-dimensional Hilbert spaces), or a finite number of continuous variable degrees of freedom, i.e., a finite number of bosonic modes. 
Already, in the latter case, technical difficulties appear, and some fundamental questions regarding the distillation or formation of entanglement have only very recently been generalized to the case of infinite dimensional Hilbert spaces \cite{yamasaki_entanglement_2024}.
Here, we study entanglement in systems with infinitely many degrees of freedom. This is motivated from several points of view: First, various quantum information theoretic protocols presuppose the existence of an unbounded number of entangled states (e.g., ebits) shared between two agents (henceforth Alice and Bob). While the actual number of ebits used will be finite in any given run of the protocol, one cannot give an a priori bound on the number. It is desirable to idealize such a scenario by one where Alice and Bob share a (countable) infinity of Bell pairs and to be able to use the mathematical tools that follow from such an idealization. Importantly, the mathematical framework should allow a given protocol in the idealized limit to be approximable in sufficiently large systems.
Second, tools and concepts from quantum information theory have by now diffused into other branches of physics, such as quantum many-body physics and quantum field theory. Understanding the entanglement structure of the latter is now a core topic at the frontier of research in both fields. It is desirable to have at one's disposal a mathematical framework that allows one to pose and answer questions regarding the entanglement content of quantum fields or many-body systems without explicit regularisation (in the case of quantum fields) or directly in the thermodynamic limit (in the case of quantum many-body systems).

The operational definition of bipartite entanglement in quantum information theory proceeds via the separated-labs paradigm: Alice and Bob are situated in separated labs and are allowed to implement arbitrary quantum operations locally but only to communicate via classical communication. 
This setup is commonly called the setting of \emph{local operations and classical communication} (LOCC).
By definition, entanglement is the property of bipartite quantum systems that cannot be created via LOCC. 
Studying entanglement in systems of infinitely many degrees of freedom requires an appropriate mathematical formulation of LOCC. Here, we provide such a formulation in a von Neumann algebraic framework and explore its consequences, focussing on entanglement theory for pure states. We refer to \cite{verch_distillability_2005} for a discussion of LOCC in a C*-algebraic framework, with a focus on the distillability of entanglement in quantum field theories.
\begin{figure}[h]
    \centering
    \includegraphics[width=.8\linewidth]{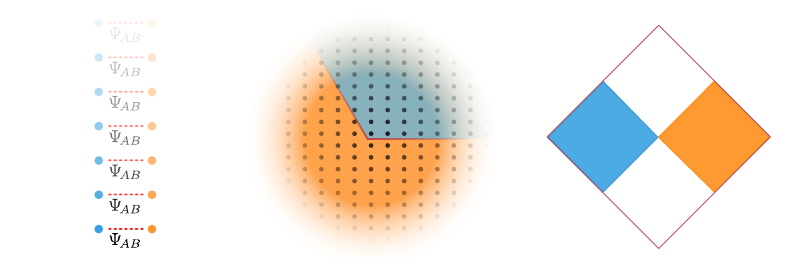}
    \caption{Examples for the applicability of our results, discussed in \cref{sec:examples}: \emph{Left:} Alice (blue) and Bob (orange) share an infinite supply of entangled qubits. \emph{Middle:} Alice and Bob act on different, infinite parts of a many-body system in the thermodynamic limit. \emph{Right:} Alice and Bob act on complementary wedges of Minkowski spacetime, depicted as a Penrose diagram.}
    \label{fig:enter-label}
\end{figure}
\subsection*{Set-up}
Although we discuss the general framework of LOCC in a multipartite setting in the main text, we restrict attention to the bipartite case here (see \cref{sec:general LOCC}).
We model a bipartite quantum system  by a pair of  commuting von Neumann algebras $(\M_A,\M_B)$ jointly acting on a Hilbert space $\H$. In systems with finitely many degrees of freedom, we have $\H= \H_A\ox\H_B$ and $\M_{A/B} \subseteq \B(\H_{A/B})$. 
A bipartite quantum system is purely quantum if the von Neumann algebra $\M_{AB}=\M_A\vee \M_B$, generated by $\M_A$ and $\M_B$, is a factor, i.e., if all elements of that commute with all other elements of $\M_{AB}$ are proportional to the identity. This is equivalent to $\M_A$ and $\M_B$ being factors. 
Factors can be classified into different types ($\I$, $\II$ and $\III$) and subtypes ($\I_n$ with $n\in\NN\cup\{\infty\}$, $\II_1,\II_\infty$ and $\III_\lambda$ with $\lambda\in[0,1]$). 
Type $\I_n$ factors are isomorphic to $\B(\H)$ for an $n$-dimensional Hilbert space, modeling quantum systems with finitely many degrees of freedom.
While all types may appear as subalgebras in the ground state sectors of many-body systems, see, e.g., \cite{matsui_split_2001, keyl_entanglement_2006, matsui_boundedness_2013, naaijkens_quantum_2017, ogata_type_2022, jones_local_2023, van_luijk_critical_2024}, local observable algebras in QFT are generically of type $\III$ \cite{buchholz_universal_1987, baumgaertel1995oam, yngvason_role_2005}. 
Our results show that bipartite systems of factors have very different entanglement properties relative to LOCC, depending on the type of the factors. 
We say that Haag duality holds if $\M_A=\M_B' = \{x\in \B(\H) : [x,b]=0, b\in \M_B\}$ (cp.\ \cite{keyl_entanglement_2006}). 
In the tensor product (type $\I$) setting $\H=\H_A\ox\H_B$, Haag duality simply means $\M_A=\B(\H_A)$, $\M_B=\B(\H_B)$.
Haag duality implies that $\M_A$ and $\M_B$ have the same type (but not necessarily the same subtype (for types I and II)).

\subsection*{Results}
Our first result concerns the structure of local operations. In the tensor product framework, a local operation by Alice is a quantum channel on Alice's Hilbert space $\H_A$, represented by Kraus operators in $\B(\H_A)$. In the general case, we have to distinguish between \emph{locality preserving} operations, which are quantum channels on the bipartite Hilbert space $\H$ (in the Heisenberg picture) such that
\begin{align}
    T(ab) = T(a)T(b),\quad a\in \M_A,b\in\M_B,
\end{align}
and \emph{local operations}. A local operation on Alice's side $T_A$ is a quantum channel on $\H$ such that
\begin{align}
    T_A(\M_A) \subset \M_A\quad \text{and}\quad T_A(b)=b,\ b\in \M_B.
\end{align}
A general local operation is a composition $T=T_A\circ T_B$ of Alice's and Bob's local operations.
\begin{introtheorem}[informal, see props.~\ref{prop:local_op} and \ref{prop:l=lp iff typeI}] 
    A local operation $T_A$ can be written as $T_A = \sum_x k_x(\placeholder) k_x^*$ with $k_x\in \M_A$ (similarly for Bob). In a bipartite system $(\M_A,\M_B)$ of approximately finite-dimensional\footnote{A factor is called approximately finite-dimensional or hyperfinite if it contains a weakly dense increasing net of matrix algebras $M_n(\CC)$. Essentially, von Neumann algebras appearing in physics always have this property.} factors, every locality preserving operation is a local operation if and only if $\H = \H_A\ox \H_B$ with $\M_A= \B(\H_A)\ox 1$ and $\M_B= 1\ox \B(\H_B)$ (i.e., $\M_A,\M_B$ have type $\I$). 
\end{introtheorem}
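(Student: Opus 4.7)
Let $T_A$ be a local operation for Alice. The plan is to combine two standard dilations. Start with a Kraus decomposition $T_A(\cdot) = \sum_i k_i^*(\cdot) k_i$ with $k_i \in \B(\H)$, available since $T_A$ is normal UCP on $\B(\H)$. The identity $T_A(b) = b$ for $b \in \M_B$ puts $\M_B$ into the multiplicative domain of $T_A$ (every $T_A$-fixed unitary is multiplicative by Schwarz, and $\M_B$ is generated by its unitaries), which forces $k_i \in \M_B'$. To tighten the containment from $\M_B'$ to $\M_A$ (which can be strictly smaller when Haag duality fails), I would rebuild the Kraus operators by applying Paschke/Stinespring dilation to the restriction $T_A|_{\M_A}$---a normal UCP endomorphism of the factor $\M_A$---and realise the dilating representation as an amplification of $\M_A$'s defining representation on $\H$. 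This produces Kraus operators $k_i \in \M_A$. The induced map $\sum k_i^*(\cdot) k_i$ on $\B(\H)$ matches $T_A$ on $\M_A$ by construction, fixes $\M_B$ automatically (since $k_i \in \M_A \subseteq \M_B'$), and agrees with $T_A$ on $\B(\H)$ by a normal extension argument. The argument for Bob is symmetric.

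\textbf{Part 2, ``if'' direction.} Suppose $\H = \H_A \otimes \H_B$ with $\M_A = \B(\H_A) \otimes 1$ and $\M_B = 1 \otimes \B(\H_B)$. For a locality preserving $T$ that restricts to $\M_A$ and $\M_B$ individually, set $\alpha := T|_{\M_A}$ and $\beta := T|_{\M_B}$, which are normal UCP maps on $\B(\H_A)$ and $\B(\H_B)$. The multiplicative relation $T(a \otimes b) = T(a \otimes 1) T(1 \otimes b) = \alpha(a) \otimes \beta(b)$ extends by ultraweak continuity to $T = \alpha \otimes \beta = (\alpha \otimes \id) \circ (\id \otimes \beta) = T_A \circ T_B$ on $\B(\H)$, exhibiting $T$ as a local operation.

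\textbf{Part 2, ``only if'' direction.} By contraposition, assuming $\M_A$ is not type~I I would construct a locality-preserving-but-not-local channel. By Part~1, the Kraus operators of any local operation $T_A \circ T_B$ lie in the algebraic product $\M_A \cdot \M_B$, so such an operation is determined by its restriction to $\M_A \vee \M_B$. Under Haag duality, $\M_A \vee \M_B \subsetneq \B(\H)$ precisely when $\M_A$ is not type~I, leaving slack for channels that extend locality-preservingly to $\B(\H)$ without being local. A natural candidate is $T(x) = u^* x u$ for a unitary $u \in \B(\H)$ normalising both $\M_A$ and $\M_B$ but lying outside the norm closure of $\M_A \cdot \M_B$; in the AFD type~III setting, the modular unitary $\Delta_\Omega^{it}$ associated to a cyclic-and-separating vector provides such a $u$, since the modular automorphism preserves $\M_A$ (and hence $\M_B = \M_A'$ by Haag duality) while being outer with respect to the inner automorphisms of either factor. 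Analogous constructions using the trace handle AFD type~II.

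\textbf{Main obstacle.} The ``only if'' direction is the subtle one: exhibiting a locality-preserving-but-not-local channel in \emph{every} non-type-I AFD case and rigorously ruling out all local decompositions. This rests on a sharpened version of Part~1 constraining Kraus operators of local ops to the algebraic product $\M_A \cdot \M_B$ (not merely to $\M_A \vee \M_B$), together with the explicit production of joint normalisers outside this product. The AFD hypothesis enters essentially, both through the classification of hyperfinite factors and through matrix-algebra approximations that make the obstruction concrete.
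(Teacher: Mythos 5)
Your Part 1 contains the paper's actual mechanism --- the multiplicative-domain/commutator computation forcing Kraus operators into a commutant --- but you apply it to the wrong algebra and then try to patch the result with an argument that fails. The paper's definition of a local operation for Alice (Def.~\ref{def:local operation}) demands $T_A\restriction\M_A'=\id_{\M_A'}$, i.e.\ triviality on the \emph{full} commutant, not merely on $\M_B$; with that hypothesis your own computation immediately yields $k_i\in\M_A''=\M_A$ (this is exactly Lemma~\ref{lem:inner} in the paper) and no ``tightening'' is needed. Your proposed repair --- that the Stinespring dilation of the normal ucp endomorphism $T_A|_{\M_A}$ automatically produces Kraus operators in $\M_A$ --- is false: an outer automorphism of $\M_A$ is a normal ucp endomorphism with no Kraus decomposition inside $\M_A$ (Lemma~\ref{lem:inner auto}). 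Indeed, under the weak hypothesis that $T_A$ only fixes $\M_B\subsetneq\M_A'$, the conclusion $k_i\in\M_A$ is simply false in general (take $T=u^*(\placeholder)u$ with $u\in\M_B'$ normalizing $\M_A$ and inducing an outer automorphism of $\M_A$). The step where your dilation argument would have to work is precisely where the hypothesis $T_A\restriction\M_A'=\id$ enters in the paper.

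In Part 2 the ``if'' direction is fine and in fact spelled out more explicitly than in the paper (whose proof of Prop.~\ref{prop:l=lp iff typeI} cites Lemmas~\ref{lem:OutM} and~\ref{lem:inner iff typeI} rather tersely). The ``only if'' direction has two genuine problems. First, the claim that $\M_A\vee\M_B\subsetneq\B(\H)$ when $\M_A$ is not type $\I$ is wrong: for factors in Haag duality, $(\M_A\vee\M_A')'=\M_A\cap\M_A'=\CC1$, so $\M_A\vee\M_B=\B(\H)$ in \emph{every} type. The correct obstruction is not the size of $\M_A\vee\M_B$ but the existence of outer automorphisms: by Part~1 together with Lemma~\ref{lem:inner auto}, a locality-preserving unitary is local iff the automorphism it induces on $\M_A$ is inner (this is Prop.~\ref{lem:OutM}, $\U_{lp}/\U_l\cong\Out(\M_A)$), so one needs a locality-preserving unitary implementing an outer automorphism. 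Your type $\III$ candidate (the modular unitary $\Delta_\Omega^{it}$) matches the paper's. Your type $\II$ remark, ``analogous constructions using the trace,'' does not work: the modular flow of a trace is trivial, hence inner. The paper instead writes the AFD type $\II$ factor as $\bigotimes_n(M_2(\CC);\omega_n)$ and uses the product automorphism $\bigotimes_n\Ad_{u_0}$ with $u_0=\mathrm{diag}(1,-1)$, which preserves each $\omega_n$ (so is implemented by a locality-preserving unitary via Lemma~\ref{lem:auts-lp}) yet is outer because $\tr u_0=0$ \cite[Thm.~XIV.1.13]{takesaki3}. Without such a construction the type $\II$ case of your argument is missing.
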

In \cref{sec:general LOCC}, we provide explicit examples demonstrating the distinction between locality preserving and local operations. That local operations should, in general, be modeled by those with local Kraus operators has previously also been observed in \cite{crann_state_2020}.
We define LOCC and stochastic LOCC (SLOCC) state transitions with respect to local operations and classical communication in \cref{Sec:locc-protocols}. For SLOCC, we show:
\begin{introtheorem}[informal, see thm.~\ref{thm:slocc}] Let $(\M_A,\M_B)$ be a bipartite system of factors in Haag duality on a Hilbert space $\H$. The following are equivalent:
\begin{enumerate}[(a)]
    \item $\Psi\in \H$ can be converted to $\Phi\in\H$ via SLOCC to arbitrary precision.
    \item $s_\phi \lesssim s_\psi$, where $s_\phi, s_\psi\in \M_A$ denote the support projections of the marginal states $\phi,\psi$ induced on $\M_A$ by $\Psi$ and $\Phi$, respectively.
\end{enumerate}
\end{introtheorem}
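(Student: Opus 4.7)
Combining Theorem~A (local operations have Kraus operators in the respective local algebra) with $[\M_A,\M_B]=0$, every conditional SLOCC output, even after many rounds of classical communication, has the form $\Phi'=kl\Psi/\|kl\Psi\|$ with $k\in\M_A$, $l\in\M_B$, $\|k\|,\|l\|\le 1$, because Alice's and Bob's Kraus operators commute and so their telescoping product can be regrouped as a single pair. ``Convertibility to arbitrary precision'' therefore amounts to the $\H$-norm closure of the set of such $\Phi'$, and the theorem becomes a statement about that closure.

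For (a)$\Rightarrow$(b): under Haag duality $s_\psi\H=\overline{\M_B\Psi}$, and the same identity applied to $\Phi'=kl\Psi$ gives $s_{\Phi'}\H=\overline{\M_B\,kl\Psi}=\overline{k\M_B l\Psi}$. The projection $q\in\M_A$ onto $\overline{\M_B l\Psi}\subseteq s_\psi\H$ satisfies $q\le s_\psi$, and $s_{\Phi'}$ is the range projection of $kq\in\M_A$; polar decomposition inside $\M_A$ then yields $s_{\Phi'}\lesssim q\le s_\psi$. The support condition passes to limits $\Phi_n\to\Phi$: in the semifinite case because the dimension function $\phi\mapsto\tau(s_\phi)$ is lower semicontinuous under trace-norm convergence of normal states (and $\phi_n\to\phi$ in $\H$ forces $\omega_{\Phi_n}\to\omega_\Phi$ in $L^1$), and in the type~III case trivially, since all nonzero projections are MvN equivalent.

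For (b)$\Rightarrow$(a): Given $s_\phi\lesssim s_\psi$, pick a partial isometry $v\in\M_A$ with $v^*v=s_\phi$ and $vv^*\le s_\psi$ and absorb $v$ into the final Alice-side Kraus, reducing to the case $s_\phi\le s_\psi$. Spectrally truncating the spatial derivative $\dd\phi/\dd\psi$ (a positive self-adjoint operator affiliated with $\M_A$) yields normal states $\phi_n\le\lambda_n\psi$ on $\M_A$ with $\phi_n\to\phi$ in norm, and Sakai's Radon--Nikodym theorem produces $h_n\in\M_A$, $\|h_n\|^2\le\lambda_n$, with $\phi_n=\psi(h_n^*\placeholder h_n)$; setting $k_n=h_n/\sqrt{\lambda_n}\in\M_A$ gives $\|k_n\|\le1$ and marginal of $k_n\Psi/\|k_n\Psi\|$ on $\M_A$ equal to $\phi_n$. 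Two vectors inducing the same state on $\M_A$ differ by a partial isometry in $\M_A'=\M_B$, so we can find $l_n\in\M_B$ of norm $\le1$ implementing this identification and producing the target direction $\Phi$; the convergence $l_nk_n\Psi/\|l_nk_n\Psi\|\to\Phi$ then follows from the Araki inequality $\|\xi-\xi'\|\le\|\omega_\xi-\omega_{\xi'}\|_1^{1/2}$ for natural cone representatives.

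The main obstacle is this last step: fixing the ``$\M_B$-phase'' of the output consistently along the sequence. The marginal on $\M_A$ determines the vector only up to a partial isometry in $\M_B$, and one needs the approximating vectors to converge to the specified $\Phi$, not merely to some other vector with the correct marginal on $\M_A$. Passing to natural cone representatives relative to a chosen cyclic and separating reference vector (obtainable in the corner $s_\psi^B\H$ when $\Psi$ is not already separating for $\M_A$, which is allowed by Haag duality) yields a canonical and continuous choice of representatives that pins down the phase; the remainder is a bookkeeping argument with norm estimates.
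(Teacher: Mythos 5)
Your proposal is correct in outline, and the two directions deserve separate comments. For (a)$\Rightarrow$(b) you follow essentially the paper's route: reduce a general SLOCC branch to a single commuting Kraus pair $kl\Psi$ (the paper's \cref{lem:pure state locc and kraus ops}), bound the range projection of $kq$ by $q\le s_\psi$ via polar decomposition, and pass to the limit using lower semicontinuity of the dimension function (the paper's \cref{lem:rank lsc}), with type $\III$ handled trivially. For (b)$\Rightarrow$(a) you take a genuinely different and heavier path. The paper's argument is purely geometric: from $s_\phi\lesssim s_\psi$ and the Murray--von Neumann symmetry $s_{\phi'}\lesssim s_{\psi'}$ (\cref{lem:nonstd schmidt rank}) it picks partial isometries $v\in\M_A$, $v'\in\M_B$ acting isometrically on $\Psi$ with $\Phi\in[\M_B vv'\Psi]\cap[\M_A vv'\Psi]$, and then \emph{any} $a'\in\M_B$ with $a'vv'\Psi$ close to $\Phi$ already furnishes the Kraus pair — no Radon--Nikodym derivatives, no natural cone, and no need to match marginals exactly. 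Your route via spectral truncation of $\dd\phi/\dd\psi$, Sakai's theorem, and natural cone representatives does work: the phase-fixing step you flag as the main obstacle is resolved exactly as you sketch, since $k_n\Psi/\norm{k_n\Psi}$ and $\Omega_{\phi_n}$ induce the same state on $\M_A$ and hence differ by a partial isometry $u_n\in\M_B$ (\cref{lem:polar-decomposition}), while $\norm{\Omega_{\phi_n}-\Omega_\phi}\to0$ by the Powers--St\o rmer estimate \eqref{eq:purification_estimate}, so $l_n=wu_n$ with $\Phi=w\Omega_\phi$ does the job. What your approach buys is an explicit, canonical construction of the approximating Kraus operators; what it costs is the reduction to a standard bipartite system needed to even speak of $\Omega_{\phi_n}$ — your parenthetical about passing to a corner is the right idea but should be done carefully via the cut-down $e=s_\psi s_{\psi'}$ as in \cref{lem:minimal subspace}, since in a non-standard system (e.g.\ type $\I$ with unequal local dimensions) not every state on $\M_A$ admits a purification in $\H$. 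With that reduction made explicit, your proof is complete.
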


In the second item, $\lesssim$ denotes the ordering of projections in the sense of Murray and von Neumann \cite{murray_rings_1936}. 
Consequently, if $\M$ has type $\III$, all pure states are SLOCC equivalent up to arbitrarily small errors (since any two projections in von Neumann algebras of type III with separable predual are Murray-von Neumann equivalent).
If $\M_A$ and $\M_B$ are semifinite, the theorem implies that the generalized Schmidt rank $r(\Psi):=\Tr_{\M_A} s_{\psi_A} = \tr_{\M_B} s_{\psi_B}$ is a complete monotone for pure state SLOCC (see \cref{cor:schmidt-rank}).

Next, we generalize Nielsen's theorem to arbitrary factors:
\begin{introtheorem}[Nielsen's theorem, informal, see thm.~\ref{thm:nielsen}] Let $(\M_A,\M_B)$ be a bipartite system of factors in Haag duality on a Hilbert space $\H$. $\Psi\in \H$ can be transformed to $\Phi\in\H$ via LOCC with arbitrary precision if and only if 
\begin{align}\label{eq:intro maj}
\psi \in \overline{\mathrm{conv}}\{ u \phi u^*,\  u \in \U(\M_A)\} \quad\Leftrightarrow : \quad \psi \prec \phi,
\end{align}
where $\psi,\phi$ are the induced marginal states on $\M_A$, $\U(\M_A)$ denotes the unitary elements of $\M_A$ and the closure is taken with respect to the norm on the state space of $\M_A$. 
\end{introtheorem}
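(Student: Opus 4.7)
The plan is to prove the two implications separately, modeling the argument on the finite-dimensional Nielsen proof but replacing matrix-algebraic tools with their von Neumann algebraic counterparts from modular theory.

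\emph{Sufficiency} $(\Leftarrow)$. Suppose $\psi \prec \phi$, so there exist convex combinations $\psi_n = \sum_i p_i^{(n)}\, u_i^{(n)}\phi u_i^{(n)*}$ with $u_i^{(n)} \in \U(\M_A)$ and $\|\psi - \psi_n\| \to 0$. For each $n$ I construct a one-way LOCC protocol in three moves. \emph{(i) Steering:} by an HJW-type theorem for factors---a direct consequence of Haag duality together with the uniqueness of GNS purifications---Bob performs a POVM on $\M_B$ realizing the ensemble $\{(p_i^{(n)}, u_i^{(n)}\phi u_i^{(n)*})\}_i$ on Alice's side, with error controlled by $\|\psi - \psi_n\|$. \emph{(ii) Correction:} conditional on outcome $i$, the post-measurement vector is a purification of $u_i^{(n)}\phi u_i^{(n)*}$; another such purification is $u_i^{(n)}\Phi$, so Uhlmann's theorem in the von Neumann algebraic setting (again via Haag duality) provides a unitary in $\M_B$ that Bob applies to align the two. \emph{(iii) Inversion:} after receiving $i$ by classical communication, Alice applies $u_i^{(n)*} \in \U(\M_A)$, producing $\Phi$. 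Taking $n\to\infty$ gives arbitrary-precision LOCC convergence.

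\emph{Necessity} $(\Rightarrow)$. Suppose $\Psi\to\Phi$ is achievable via LOCC with arbitrary precision. A generalization of the Lo--Popescu theorem---whose argument relies only on the rigidity forced by pure-to-pure transitions---reduces each approximating protocol to one-way form: Alice performs a POVM $\{K_x^*K_x\}_x$ with $K_x \in \M_A$ and $\sum_x K_x^*K_x = 1$, while Bob applies conditional unitaries in $\M_B$. Assume first (for clarity) an exact transition; pure-to-pure steering then forces $K_x\psi K_x^* = p_x\phi$ for every outcome $x$. Polar-decomposing the element $K_x\psi^{1/2}$ inside $\M_A$ gives $K_x\psi^{1/2} = \sqrt{p_x}\,\phi^{1/2}V_x$ with a partial isometry $V_x \in \M_A$. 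Substituting and summing the identity $K_x^*K_x = p_x\,\psi^{-1/2}V_x^*\phi V_x\psi^{-1/2}$ over $x$ (on the support of $\psi$) yields $\psi = \sum_x p_x\, V_x^*\phi V_x$. Using Haag duality and Murray--von Neumann comparison of the relevant support projections, each $V_x$ can be extended to---or approximated by---a unitary in $\M_A$, placing $\psi \in \overline{\mathrm{conv}}\{u\phi u^* : u \in \U(\M_A)\}$ in the limit. The fully approximate case is handled by applying this argument to a sequence of protocols and passing to the limit in norm on the predual.

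\emph{Main obstacle.} The delicate point is the joint control of two nested approximations: the outer LOCC precision and the inner extension of partial isometries to unitaries whose orbit averages still converge to $\psi$. In type $\III$ factors the latter is automatic, since any two nonzero projections are equivalent; in types $\I$ and $\II$, one must compare the supports of $\psi$ and $\phi$ in $\M_A$ carefully and absorb the resulting small errors into the outer approximation. Haag duality is essential throughout: it is what places Bob's steering measurement and correction unitaries in $\M_B=\M_A'$ rather than in the ambient $\B(\H)$, and thus what makes the constructed protocol genuinely local.
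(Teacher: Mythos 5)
Your architecture mirrors the finite-dimensional Nielsen proof, and the paper's own argument has the same skeleton (a Lo--Popescu-type reduction to one-way protocols for necessity, a measure-and-correct protocol for sufficiency), so in the semifinite case most of your outline can be completed. But two steps, as written, genuinely fail in the generality the theorem claims. First, in the necessity direction you polar-decompose $K_x\psi^{1/2}$ ``inside $\M_A$'' and later multiply by $\psi^{-1/2}$. For a type $\III$ factor there is no trace, hence no density affiliated with $\M_A$ whose square root you could form; the densities live in Haagerup $L^p$-spaces and are not elements of, or affiliated with, $\M_A$. The type-free replacement for $\rho_\phi^{1/2}v_x^*\rho_\psi^{-1/2}$ is the analytically continued Connes cocycle $v_x^*[D\psi_x:D\psi]_{-\i/2}\in\M$ (\cref{lem:cocycle}), and the one-way reduction itself is carried out with the modular conjugation $J$ and the polar decomposition of vectors relative to the natural cone (\cref{lem:polar-decomposition,lem:twoway-to-oneway1,lem:twoway-to-oneway2}). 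Your ``HJW-type theorem'' and ``generalized Lo--Popescu theorem'' are precisely the statements that require this modular machinery; neither is a formal consequence of Haag duality plus GNS uniqueness, and asserting them is where the real work is being skipped.

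Second, your closing step claims each partial isometry $V_x$ ``can be extended to---or approximated by---a unitary in $\M_A$.'' This is false for infinite factors: a proper isometry admits no unitary extension, and $v\phi v^*$ need not be a norm limit of $u\phi u^*$ over unitaries $u$. What is true, and what the paper invokes (Haagerup--St\o rmer), is the equality of \emph{closed convex hulls}
\begin{equation*}
\overline{\conv}\{u\phi u^* : u\in\U(\M_A)\}=\overline{\conv}\{v\phi v^* : v\in\piso(\M_A),\ s_\phi\le v^*v\},
\end{equation*}
a statement about the sets rather than about individual orbit points; your argument needs this lemma, not a pointwise extension. Two smaller but real omissions: you never reduce to standard bipartite systems (otherwise purifications of the ensemble members need not exist in $\H$, which is why the paper first passes to $\H_0=s_\psi s_{\psi'}\H$ via \cref{lem:minimal subspace,lem:wlog std}); and in the sufficiency direction Bob cannot steer the actual state $\Psi$ into an ensemble averaging to $\psi_n\ne\psi$, so an explicit continuity estimate (e.g.\ via \cref{eq:purification_estimate}) is needed to transfer the exact protocol for a purification of $\psi_n$ back to $\Psi$.
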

Since the assumption are symmetric under the exchange of $\M_A$ and $\M_B$, \eqref{eq:intro maj} holds equivalently with $\M_B$ instead of $\M_A$. Nielsen's theorem has recently been generalized to semifinite factors in \cite{crann_state_2020}. The type $\I_\oo$ case was obtained earlier in \cite{owari_convertibility_2008} (see also \cite{massri_locc_2024}). We provide a unified treatment covering all types, in particular including type III, which is particularly relevant in many-body physics and quantum field theory.

Using Nielsen's theorem, we show that on non-type $\I$ systems, all states contain an infinite amount of single-shot entanglement. The first two items of the following theorem generalize \cite[Prop.~4.9]{keyl_entanglement_2006} to the case that $\M_A,\M_B$ are not approximately finite-dimensional. It was already shown in \cite{keyl_infinitely_2003} that the type $\I$ case does not allow for infinite single-shot entanglement.
\begin{introtheorem}[informal, see thm.~\ref{thm:equivalence-bell}] Let $(\M_A,\M_B)$ be a bipartite system of factors in Haag duality on a Hilbert space $\H$. The following are equivalent:
\begin{enumerate}[(a)]
    \item For every $\Omega\in \H$, every $d\in\NN$ and every $\Psi\in \CC^d\ox \CC^d$ we have
    \begin{align}
        \Omega\ox (\ket 1 \ket 1) \xrightarrow{\LOCC} \Omega' \ox \Psi,\quad \Omega'\in\H.
    \end{align}
    \item $\M_A$ (hence $\M_B$) is not of type $\I$. 
\end{enumerate}
If $\M_A=\M_B'$ is approximately finite-dimensional, then both items are equivalent to:
\begin{enumerate}[resume*]
    \item Every density matrix $\rho$ on $\H$ \emph{maximally} violates the CHSH inequality.
\end{enumerate}
\end{introtheorem}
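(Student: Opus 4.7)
The plan is to leverage Nielsen's theorem (\cref{thm:nielsen}), applied to the enlarged bipartite system $(\M_A \otimes M_d(\CC), \M_B \otimes M_d(\CC))$ acting on $\H \otimes \CC^d \otimes \CC^d$, which remains a pair of factors in Haag duality. Writing $\omega, \omega'$ for the $\M_A$-marginals of $\Omega, \Omega'$ and $\psi_A$ for Alice's marginal of $\Psi$, statement (a) reduces to finding, for every $\Omega$, $d$, $\Psi$, some $\Omega' \in \H$ whose marginal $\omega'$ satisfies
\[
\omega \otimes \ketbra{1}{1} \prec \omega' \otimes \psi_A \quad \text{in } \M_A \otimes M_d(\CC).
\]

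For \textbf{(a)$\Rightarrow$(b)}, I argue contrapositively. If $\M_A$ is type I, then $\H = \H_A \otimes \H_B$ admits product vectors $\Omega = \xi \otimes \eta$ whose marginal $\omega = \ketbra{\xi}{\xi}$ is pure. Choosing $\Psi$ maximally entangled yields $\psi_A = I_d/d$ with $\|\psi_A\|_\infty = 1/d$. The top eigenvalue of $\omega' \otimes \psi_A$ is then at most $\|\omega'\|_\infty \cdot 1/d \le 1/d$, whereas $\omega \otimes \ketbra{1}{1}$ is pure with top eigenvalue $1$. Majorization fails for every $\omega'$, so (a) fails for this choice of $\Omega, \Psi$.

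For \textbf{(b)$\Rightarrow$(a)}, the key structural fact is that a non-type-I factor has no minimal projections, which forces every normal state to have a spread spectral distribution. In the semifinite case, using the non-increasing rearrangement $\mu_\omega$ with respect to a fixed trace, I would construct $\omega'$ whose rearrangement satisfies $\mu_{\omega'}(u) = d\,\mu_\omega(du)$ (with an analogous weighted rescaling when $\psi_A \ne I_d/d$). A short substitution shows $\mu_{\omega' \otimes \psi_A} = \mu_\omega = \mu_{\omega \otimes \ketbra{1}{1}}$, so the majorization condition holds with equality in the integrated distribution functions. Such a $\mu_{\omega'}$ is realized as a bona fide density in $\M_A$ precisely because the factor has no atoms to obstruct arbitrary scaling. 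For type III, where no trace exists, I would instead use that $\M_A \otimes M_d(\CC)$ is again type III with all non-zero projections Murray--von Neumann equivalent; combining the unitary-orbit definition of $\prec$ from Nielsen's theorem with density-of-orbits results (Dixmier; Connes--Stormer) yields the required majorization directly. That $\omega'$ arises as the marginal of a vector $\Omega' \in \H$ follows from Haag duality, which (for a pair of factors) turns $\H$ into a standard-form-like representation so that every normal state on $\M_A$ is a vector state.

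The equivalence with (c) under approximate finite-dimensionality splits as follows. Contrapositively, \textbf{(c)$\Rightarrow$(b)} is immediate: if $\M_A$ is type I, the product density $\rho = \ketbra{\xi}{\xi} \otimes \ketbra{\eta}{\eta}$ produces only classical correlations, bounding CHSH by $2 < 2\sqrt{2}$. For \textbf{(a)$\Rightarrow$(c)}, fix $\rho$ on $\H$ and $\eps > 0$; purify $\rho$ to a vector $\Omega \in \H$ (via standard form) and apply (a) with $\Psi$ a Bell pair to obtain a LOCC protocol transforming $\Omega \otimes \ket{1}\ket{1}$ arbitrarily close to $\Omega' \otimes \ket{\Phi^+}$ in an auxiliary $\CC^2 \otimes \CC^2$. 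Canonical CHSH measurements on $\ket{\Phi^+}$ saturate Tsirelson's bound. Approximate finite-dimensionality of $\M_A, \M_B$ lets us embed the auxiliary $M_2(\CC)$'s within $\eps$ into $\M_A$ and $\M_B$, and lift the Kraus operators of the LOCC protocol into genuine local operations. Pulling back the CHSH observables yields observables in $\M_A, \M_B$ attaining CHSH value at least $2\sqrt{2} - O(\eps)$ on $\rho$; a standard weak-$\ast$ compactness argument on the observable unit balls then promotes this to exact maximal violation.

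\textbf{The main obstacle} is the construction of $\omega'$ in (b)$\Rightarrow$(a), particularly in type III where no trace is available: one must combine Murray--von Neumann equivalence of projections with density-of-unitary-orbits results, relying on the paper's appendix on majorization in general von Neumann algebras. The lift in (a)$\Rightarrow$(c) from a Bell pair in an auxiliary $M_2 \otimes M_2$ to genuine observables in $\M_A \otimes \M_B$ is the other subtle point, and is exactly what the approximate finite-dimensional hypothesis provides.
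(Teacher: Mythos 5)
Your treatment of the equivalence (a)$\Leftrightarrow$(b) is correct and essentially the paper's own argument: (a)$\Rightarrow$(b) by exhibiting a product vector in the type $\I$ case (the paper phrases it as "LOCC cannot create entanglement", you phrase it quantitatively via the top of the Lorenz curve — same content), and (b)$\Rightarrow$(a) by disposing of type $\III$ separately and, in type $\II$, constructing $\rho'$ with $\lambda_{\rho'}(t)=d\,\lambda_\rho(dt)$ so that $\lambda_{\rho'\ox\frac1d}=\lambda_\rho$; this is exactly the rescaling the paper uses (the paper additionally simplifies by first reducing to $\Psi$ maximally entangled, since the tracial state on $M_d(\CC)$ is majorized by every density matrix, which spares you the "weighted rescaling"). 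One small caveat: Haag duality alone does not make $(\M_A,\M_B)$ standard, so "every normal state on $\M_A$ is a vector state" is not automatic; what saves the construction is that $s_{\rho'}\lesssim s_{\rho}$, so $\rho'$ does admit a purification in $\H$.

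The genuine gap is in (a)$\Rightarrow$(c). First, a mixed density matrix $\rho$ on $\H$ is a mixed state on $\M_A\vee\M_B=\B(\H)$ and cannot be purified to a vector \emph{in} $\H$; decomposing $\rho$ into pure states does not help either, because the optimal CHSH observables depend on the state, and $\beta$ is a supremum of linear functionals, so maximal violation of each component does not pass to the mixture. Second, and more fundamentally, your scheme establishes maximal CHSH violation \emph{after LOCC preprocessing}, which is not the quantity $\beta(\rho,\M_A,\M_B)$: that quantity is a supremum over single-round product observables $a_i\in\M_A$, $b_j\in\M_B$ evaluated directly on $\rho$. Pulling the CHSH observables back through the overall instrument $\sum_y k_{A,y}k_{B,y}(\placeholder)k_{B,y}^*k_{A,y}^*$ does not produce product observables, because the sum over the shared classical record $y$ correlates the two sides; already in finite dimensions there are distillable states that do not violate CHSH at all, so "LOCC-distillable Bell pair" does not imply "maximal direct violation". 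The paper avoids this entirely: it uses that an AFD non-type-$\I$ factor is strongly stable ($\M\cong\M\ox\R_1$) and invokes the Summers--Werner theorem that on a standard bipartite system with $\M$ strongly stable \emph{every} normal state maximally violates CHSH, followed by an amplification argument to remove the standardness assumption. Some input of this strength (locating, for any normal state and any $\eps$, a $2\times2$ subsystem of $\M_A\vee\M_B$ on which the state is $\eps$-close to a Bell pair) is needed; it does not follow from item (a).
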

The theorem leaves open whether there is a distinction in terms of LOCC between type $\II$ and type $\III$. We show that in the type $\III$ case, bipartite pure state LOCC completely trivializes:
\begin{introtheorem}[informal, see thm.~\ref{thm:LOCC-trivial}] 
Let $(\M_A,\M_B)$ be a bipartite system of factors in Haag duality with $\M_A,\M_B\neq \CC$.
$\M_A$ has type $\III$ if and only if for all unit vectors $\Psi,\Phi \in \H$ we have that $\Psi \xrightarrow{\LOCC} \Phi$ to arbitrary accuracy.
$\M_A$ has type $\III_1$ if and only if the same is true with local operations and without classical communication. 
\end{introtheorem}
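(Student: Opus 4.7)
My strategy is to combine Nielsen's theorem (Theorem~C) on the LOCC side with the Connes--Størmer transitivity characterization of type~$\III_1$ factors on the local-operations side, and to use the SLOCC obstruction of Theorem~B to handle the converses in both statements.

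For the LOCC/type~$\III$ equivalence, let $\Psi,\Phi\in\H$ be unit vectors with marginals $\psi,\phi$ on $\M_A$. Theorem~C reduces $\Psi\xrightarrow{\LOCC}\Phi$ to arbitrary precision to the majorization $\psi\prec\phi$, i.e.\ $\psi\in\overline{\conv}\{u\phi u^*:u\in\U(\M_A)\}$. In a type~$\III$ factor with separable predual all nonzero projections are Murray--von Neumann equivalent, so any two nonzero marginals have equivalent supports; I would then show that the convex hull of the unitary orbit of any nonzero normal state is norm-dense in the set of normal states whose support is dominated by it. This convex-hull density is the content of the majorization theory developed in the appendix and can be cited as a black box. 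Running the argument in both directions yields interconvertibility. For the converse, if $\M_A$ is semifinite and $\neq\CC$, then it contains Murray--von Neumann inequivalent nonzero projections (finite vs.\ infinite, or of different trace); realising them as supports of marginals of pure bipartite vectors via cyclicity of $\M_B$ on reduced subspaces $p\H$ under Haag duality produces $\Psi,\Phi$ that violate the SLOCC criterion of Theorem~B, which a fortiori rules out LOCC interconvertibility.

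For the LO-without-CC/type~$\III_1$ equivalence, the first step is to observe that a deterministic product of local channels mapping a pure state to approximately a pure state must, up to small error, be a product of local isometries; since in a type~$\III$ factor every isometry dilates to a unitary (its range projection is equivalent to $1$), the relevant transitions reduce to $\Psi\mapsto u_A u_B\Psi$ with $u_A\in\U(\M_A)$ and $u_B\in\U(\M_B)$. I would then invoke the Connes--Størmer theorem: $\M_A$ is type~$\III_1$ iff for all faithful normal states $\phi,\psi$ on $\M_A$ and all $\varepsilon>0$ there is $u\in\U(\M_A)$ with $\|u\phi u^*-\psi\|<\varepsilon$. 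Lifting this to $\H$ uses modular theory together with Haag duality: once Alice has brought $u_A\psi u_A^*$ close to $\phi$, the two cyclic--separating vectors $u_A\Psi$ and $\Phi$ implement close states of $\M_A$, so by uniqueness of the GNS implementer up to a unitary in the commutant $\M_B=\M_A'$, a suitable $u_B\in\U(\M_B)$ yields $u_A u_B\Psi\approx\Phi$. For the converse, if $\M_A$ is not type~$\III_1$ then either it is not type~$\III$ (handled above) or it is $\III_\lambda$ with $\lambda<1$ or $\III_0$, where Connes--Størmer fails; picking $\psi,\phi$ in distinct norm-closed unitary orbits and passing to pure vector extensions produces pairs that are not LO-interconvertible without CC.

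The main technical obstacle I foresee is the Hilbert-space lift for Statement~2: turning the predual-level transitivity supplied by Connes--Størmer into a genuine approximate equality $u_A u_B\Psi\approx\Phi$ requires careful use of modular/GNS implementers under Haag duality, together with control of the commutant perturbation induced by the approximation. A secondary but more routine step is the reduction of approximately pure-state-preserving local channels to approximate local isometries via Kraus perturbation estimates, performed uniformly in the approximation parameter. The heavy algebraic input into Statement~1, namely the norm-density of the convex hull of unitary orbits of normal states in type~$\III$ factors, is concentrated in the majorization appendix and is invoked as a lemma.
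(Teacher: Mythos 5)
Your proposal is correct and follows the paper's core strategy: Nielsen's theorem plus the fact that in a type $\III$ factor every normal state lies in the norm-closed convex hull of the unitary orbit of any other, and Connes--St{\o}rmer homogeneity lifted to the Hilbert space via the standard form for the type $\III_1$ statement. Two remarks on where you diverge. First, the "heavy algebraic input" you invoke for type $\III$ is \emph{not} in the majorization appendix, which deliberately treats only semifinite algebras; the paper instead cites it directly as Haagerup--St{\o}rmer's result (\cref{lem:majorization_in_typeIII-mt}). The fact is true and citable, so this is a misattribution rather than a gap, but be aware you cannot extract it from the appendix. Second, your converse for the semifinite case runs through the SLOCC support-projection obstruction (\cref{thm:slocc}) rather than the paper's route of exhibiting states with $\psi\not\prec\phi$ via spectral scales and invoking Nielsen; your version works (cut any achievable support $p=s_\psi$ down by a subprojection $q<p$ of strictly smaller trace to get a vector $\Phi=q\Psi/\norm{q\Psi}$ with $s_\psi\not\lesssim s_\phi$) and has the mild advantage of sidestepping the paper's preliminary argument that at least one of $\M_A,\M_B$ is in standard form. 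Finally, note that the paper's formal \cref{thm:LOCC-trivial} states the $\III_1$ equivalence directly for local \emph{unitaries}, so your extra reduction from approximately-purity-preserving local channels to approximate local isometries is an additional (nontrivial, and only sketched) step that the paper does not carry out; if you intend to prove the informal statement for general local operations you would need to make that perturbative Kraus argument precise.
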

As a corollary of the theorem, we find that a bipartite system of factors in Haag duality is a \emph{universal LOCC embezzler} (cp.~\cite{short_paper,long_paper}, see also \cite{zanoni_complete_2023}) if and only if it is of type $\III$: For every $\Omega\in \H$ and $\Psi\in \CC^d\ox \CC^d$ we have
\begin{align}
    \Omega\ox (\ket 1 \ket 1) \xrightarrow{\LOCC} \Omega\ox\Psi
\end{align}
to arbitrary accuracy. 

To prove our results, we partly rely on majorization theory. In \cref{app:majorization}, we provide a self-contained treatment of majorization theory on $\sigma$-finite measure spaces and von Neumann algebras, which we believe to be of independent interest. 
Using majorization theory, we can also straightforwardly define entanglement monotones, such as R\'enyi entanglement entropies, in the case of semifinite factors \cite{crann_state_2020}. This is discussed in \cref{sec:monotones}.

Recently, we introduced a quantifier $\kappa_{max}$ that measures how well a bipartite system $(\M_A,\M_B)$ in Haag duality can be used as a resource to embezzle entanglement via local operations without classical communication \cite{van_dam2003universal,short_paper,long_paper}. For type $\III_\lambda$ factors $\M_A,\M_B$  with $0\leq \lambda\leq 1$, $\kappa_{max}$ allows to deduce the value of $\lambda$. Together with the results presented in this paper, we have now obtained a complete one-to-one correspondence between operational entanglement properties and the subtype of factors if $\M_A$ and $\M_B$ have the same subtype, see \cref{table:classification}. 
Note that Haag duality implies that the types of $\M_A,\M_B$ agree, but outside the type $\III$ case the subtype need not agree. 
As in finite dimension, entanglement based properties can only detect the type of the "smaller" of the two subsystems, see also \cref{sec:monotones}.

\begin{table*}[]\centering
\setlength{\tabcolsep}{7pt}
\renewcommand\arraystretch{1.2}
\begin{tabular}{@{} l cc cc ccc @{}}
\toprule
\multirow{2}{*}[-0.5\dimexpr \aboverulesep + \belowrulesep + \cmidrulewidth]{operational property} & \multicolumn{2}{c}{type I} & \multicolumn{2}{c}{type II} & \multicolumn{3}{c}{type III}\\
\cmidrule(lr){2-3} \cmidrule(lr){4-5} \cmidrule(l){6-8}
& I$_n$       & \ I$_\infty$   & \ II$_1$    & \ II$_\infty$ & \ III$_0$ & \ III$_\lambda$ & \ III$_1$ \\ 
 \midrule
one-shot entanglement           & \!\!\! $\le$$n$\!    & \!$<$$\infty$    & $\oo$&$\oo$ & $\oo$&$\oo$&$\oo$                                \\
maximally entangled state       & \yes &    \no          & \yes        & \no             & \yes&\yes&\yes          \\
all pure states LOCC equivalent & \no&\no     &\no&\no      & \yes&\yes&\yes         \\ 
some/all pure states LOCC-embezzling & \no&\no     &\no&\no      & \yes&\yes&\yes         \\ 
all pure states equally entangled& \no&\no&\no&\no & \yes & \yes & \yes\\
embezzling states             & \no&\no&\no&\no & (\,\yes\,) & \yes & \yes\\
worst embezzlement capability $\kappa_{\textit{max}}$ &  2 & 2    & 2&2       & 2 & \!\!$2\frac{1-\sqrt\lambda}{1+\sqrt\lambda}$\!\! & 0 \\ 
\bottomrule
\end{tabular}
\caption{Correspondence between operational entanglement properties and the type classification of factors with $n\in\NN$ and $0<\lambda<1$.
The inequalities for the single-shot entanglement in type $\I$ are bounds on the Schmidt rank of maximally entangled states that may be distilled via LOCC.
Maximally entangled states are discussed in \cref{sec:trivialization-III}.
We refer to \cite{short_paper,long_paper} for the formal definition of $\kappa_{\textit{max}}$ and the derivation of the formula for type $\III_\lambda$ factors. Since $\lambda\mapsto 2\frac{1-\sqrt\lambda}{1+\sqrt\lambda}$ is invertible, $\lambda$ is determined by the operational quantity $\kappa_{\textit{max}}$. 
Some $\III_0$ factors admit embezzling states while others do not.}
\label{table:classification}
\end{table*}

\paragraph{Acknowledgements.} We thank Niklas Galke and Albert H.\ Werner for useful discussions. We thank Fumio Hiai for helpful correspondence and providing a proof of \cref{lemma:fidelity-orbits}. We thank Jason Crann for making us aware of \cite{crann_state_2020} establishing Nielsen's theorem in the semifinite case after the first preprint of our work appeared. 

\paragraph{Funding.}
LvL and AS have been funded by the MWK Lower Saxony via the Stay Inspired Program (Grant ID: 15-76251-2-Stay-9/22-16583/2022).

\paragraph{Declarations.} 
The authors have no conflicts of interest to declare.
No data was generated or processed in this work.

\paragraph{Notation and standing conventions.}
We denote the set of unitary elements of a von Neumann algebra $\M$ by $\U(\M)$ and the set of partial isometries by $\piso(\M)$. All Hilbert spaces are assumed to be separable.
If $\M$ is a von Neumann algebra and $\Psi\in\H$, then $[\M\Psi]$ denotes both the closure of the subspace $\M\Psi\subset\H$ and the orthogonal projection onto it, which is a projection in $\M'$.
If $\M \subset \B(\H)$ is a von Neumann algebra and $\Psi\in\H$ a unit vector, we denote by $s_\psi= [\M'\Psi]$ the support projection of the induced state $\psi= \ip\Psi{(\placeholder)\Psi}$ on $\M$. Traces on semifinite von Neumann algebras are always normal, semifinite, and faithful. $\chi_A$ denotes the indicator function of a set $A$.

\section{Quantum systems described by von Neumann algebras}

\subsection{Basic setup}

A quantum system will be described by its observable algebra $\M$ acting on a separable Hilbert space $\H$.
The observables of the systems are described by self-adjoint elements $a=a^*$ in $\M$ and quantum states are given by the expectation value functionals $\omega:\M\to\CC$ induced by density operators $\rho$ on $\H$ via $\omega(a) = \tr a\rho$.
Throughout, we assume that the observable algebra $\M$ is \emph{von Neumann algebra}, which simply means that $\M$ contains the identity operator on $\H$ and that $\M$ is complete in the \emph{ultraweak topology}, i.e., the topology induced by the convergence of all expectation values induced by density operators on $\H$.
Both of these requirements are physically motivated: The identity operator corresponds to the binary measurement which always outputs 'yes' independent of the state, and the completion assumption is a basic requirement in general statistical theories \cite{ludwig_imprecision_1981,werner_uniformities_1983,ludwig_foundations_1983,lami_non-classical_2018}. 
The Hilbert space $\H$ plays the role of an environment in which the system described by $\M$ is embedded.
Importantly, the notion of state is independent of the description of the environment: Every faithful representation of $\M$ on a Hilbert space yields the same states on $\M$.
In fact, the state space admits an intrinsic description, which relies on the fact that a von Neumann algebra $\M$ has a unique predual $\M_*$: The states on $\M$ are precisely the elements $\omega$ of $\M_*$, viewed as functionals on $\M$, that are positive ($\omega(a^*a)\ge0$ for all $a\in\M$) and unital ($\omega(1)=1$).
These states are referred to as normal states on $\M$ to distinguish them from algebraic states on $\M$, which are general positive linear functionals $\omega:\M\to\CC$ with $\omega(1)=1$.%
\footnote{An algebraic state $\omega$ on a von Neumann algebra is normal if and only if it is $\sigma(\M,\M_*)$ continuous.}
To summarize, if the observable algebra of a quantum system is a von Neumann algebra $\M$, then the (physical) states of the system are described by normal states on $\M$. For this reason, all states appearing in the following will be normal, even if not explicitly stated.

It is not just quantum systems that can be described in this way.
The characteristic feature of classical systems is that all observables are jointly measurable and can be performed without perturbing the state of the system. 
Mathematically, this is reflected in the commutativity of the observables of classical systems.
Indeed, classical systems can be described by abelian von Neumann algebras, which are of the form $\M=L^\oo(X,\mu)$ for some measure space $(X,\mu)$.
The opposite case is that of a \emph{purely quantum system} in which any non-disturbing measurement is trivial.
Purely quantum systems are characterized as those systems whose observable algebra $\M$ is a factor $\M$, i.e., a von Neumann algebra with trivial center $Z(\M):=\CC1$ where the center $Z(\M)$ of a von Neumann algebra $\M$ is the (abelian) subalgebra of elements in $\M$ that commute with all other elements.%
\footnote{For instance, a system of two fermionic modes is not purely quantum since the parity can be measured without causing any perturbance. This is reflected by the fact that the observable algebra $\M$, given by the even part of the CAR algebra $\mathrm{CAR}(\CC^2)$, has center $Z(\M)= \langle (-1)^F\rangle\cong \CC^2$ where $(-1)^F$ is the parity operator.}
A general von Neumann algebra can decomposed as a direct integral of factors in an essentially unique way. This corresponds to the fact that every physical system is composed of classical and quantum degrees of freedom.\footnote{The existence of fundamentally classical degrees of freedom is debatable. However, von Neumann algebras can be used to describe effective/emergent systems which do contain classical degrees of freedom.}

If $\M$ is a von Neumann algebra on $\H$, then the same is true for the commutant
\begin{equation}
    \M' = \{b\in \B(\H) : [a,b]=0\ \forall a\in\M\}.
\end{equation}
The commutant sets up a basic duality for von Neumann algebras on a fixed Hilbert space. Indeed, we always have $\M''=\M$.%
\footnote{More generally, the bicommutant $\M''$ of a unital *-algebra $\M$ on $\H$ equals its weak closure \cite[Sec.~II.3]{takesaki1}.}
Since joint measurability of observables is equivalent to their commutativity, a bipartite quantum system is described by a commuting pair $(\M_1,\M_2)$ of von Neumann algebras on a Hilbert space $\H$.
For every von Neumann algebra, the pair $(\M,\M')$ is always a bipartite system, and the commutant $\M'$ is, by definition, the largest von Neumann algebra $\R$ on $\H$ such that $(\M,\R)$ is a bipartite system.
Therefore, we can regard the commutant $\M'$ as describing those physical degrees of freedom in the environment that are independent of the system.

We can directly generalize the above definition of a bipartite system to the case of multiple parties:

\begin{definition}[Multipartite systems]\label{def:multipartite system}
    Let $N$ be an integer. An \emph{$N$-partite system} on a Hilbert space $\H$ is a collection of $(\M_x)_{x=1}^N$ of $N$ pairwise commuting von Neumann algebras $\M_x$ on $\H$.
    An $N$-partite system is \emph{irreducible} if $\bigvee_x \M_x = \B(\H)$ and it satisfies \emph{Haag duality} if
    \begin{equation}
        \bigg(\bigvee_{x\in I}\M_x\bigg)' = \bigvee_{x\notin I} \M_x,\qquad \text{for all $I\subset [N]$}.
    \end{equation}
    A \emph{state} on the multipartite system $(\M_x)_{x=1}^N$ is a is a normal state $\omega$ on $\bigvee_x \M_x$.
    In the case of an irreducible system, we identify these with the corresponding density operator $\rho$ on $\H$ such that $\omega=\tr(\placeholder)\rho$, and we identify unit vectors $\Omega\in\H$ with the induced state $\omega = \bra\Omega\placeholder\ket\Omega$.
\end{definition}

\begin{lemma}[{\cite[Lem.~3]{van_luijk_multipartite_2024}}]
    Let $(\M_x)_{x\in[N]}$ be a multipartite system of factors on $\H$. Then Haag duality implies irreducibility.
    Moreover, every irreducible multipartite system consists of factors.
\end{lemma}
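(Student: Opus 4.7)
The plan is to prove the two implications separately, both by fairly direct manipulations of the Haag-duality / irreducibility conditions.

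For the first implication (Haag duality $\Rightarrow$ irreducibility), I would simply apply the Haag-duality relation to the index set $I=\emptyset$. By the standard convention for the empty join in the lattice of unital von Neumann subalgebras, $\bigvee_{x\in\emptyset}\M_x = \CC 1$, whose commutant is all of $\B(\H)$. The Haag-duality identity then reads $\B(\H) = (\CC1)' = \bigvee_{x\in[N]}\M_x$, which is exactly the irreducibility condition. Note that this argument does not actually use the factor hypothesis; it uses only that the system is well-defined.

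For the second implication (irreducibility $\Rightarrow$ each $\M_x$ is a factor), the goal is to show $Z(\M_x)=\M_x\cap\M_x' = \CC1$ for each $x$. Fix $x$ and $z\in Z(\M_x)$. I would argue that $z$ commutes with every $\M_y$: for $y=x$, this is because $z\in\M_x'$; for $y\neq x$, pairwise commutativity gives $\M_y\subset\M_x'$, so every element of $\M_y$ commutes with every element of $\M_x$, and in particular with $z\in\M_x$. Hence $z$ commutes with $\bigcup_y \M_y$ and therefore with the von Neumann algebra it generates, which by irreducibility is $\B(\H)$. Consequently $z\in Z(\B(\H))=\CC1$, so $\M_x$ is a factor.

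Neither step has a genuine obstacle: the first is essentially a bookkeeping observation about how the $I=\emptyset$ case of Haag duality degenerates, and the second is a standard application of the fact that the bicommutant of a generating family equals the generated von Neumann algebra, combined with pairwise commutativity. The only small point that deserves care in the write-up is making explicit the convention that $\bigvee_{x\in\emptyset}\M_x=\CC1$ (rather than $\{0\}$), since the whole first implication hinges on that reading of the Haag-duality hypothesis.
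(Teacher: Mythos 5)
The paper does not prove this lemma itself (it cites an external reference), so I can only assess your argument on its merits. Your proof of the second implication is correct and is the standard argument: a central element $z\in\M_x\cap\M_x'$ commutes with $\M_x$ and, by pairwise commutativity, with every $\M_y$ for $y\ne x$, hence lies in $\big(\bigvee_y\M_y\big)'=\B(\H)'=\CC1$. Nothing to add there.

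The first implication is where your route is fragile, and you have correctly identified the pressure point yourself. Reading the Haag-duality condition at $I=\emptyset$ (or equivalently $I=[N]$) does give irreducibility immediately, but only under the convention that the empty join is $\CC1$ and that the degenerate index sets are genuinely included in the hypothesis. Two warning signs suggest this is not the intended reading: (i) your argument makes the factor hypothesis completely unused in the first sentence of the lemma, even though the lemma explicitly carries it there while dropping it in the second sentence; and (ii) combined with the second implication, your reading would make Haag duality imply factoriality outright, so stating the lemma for ``a multipartite system of factors'' would be redundant. The robust argument, which uses the stated hypothesis and does not touch the empty set, is to apply Haag duality to a singleton $I=\{x\}$: this gives $\M_x'=\bigvee_{y\ne x}\M_y$, hence
\begin{equation*}
    \Big(\bigvee_{y}\M_y\Big)' \;=\; \big(\M_x\vee\M_x'\big)' \;=\; \M_x'\cap\M_x \;=\; Z(\M_x)\;=\;\CC1,
\end{equation*}
so $\bigvee_y\M_y=(\CC1)'=\B(\H)$ by the bicommutant theorem. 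I would replace your $I=\emptyset$ step with this; it is the argument that actually needs the factors and survives any reasonable reading of the definition.
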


In case of bipartite systems, $(\M_A,\M_B)$ on a Hilbert space $\H$, irreducibility is equivalent to $\M_A\subset \M_B'$ being an irreducible subfactor inclusion.\footnote{A subfactor $\R\subset \N$ is irreducible if the relative commutant is trivial $\R'\cap\N=\CC1$.}
An important special class of bipartite systems are standard bipartite systems, which we study in the following:

\subsection{Standard bipartite systems}
\label{sec:standard}
Standard bipartite systems are a special class of bipartite systems $(\M_A,\M_B)$ where Alice's and Bob's subsystems are connected via an exchange symmetry.
A finite-dimensional bipartite system described by a product Hilbert space $\H=\H_A\ox\H_B$ and local algebras $\M_A = \B(\H_A)\ox 1, \M_B = 1\ox \B(\H_B)$ is standard precisely when the local Hilbert spaces of Alice and Bob have the same dimension, i.e., $\dim\H_A=\dim\H_B$.

To define standard bipartite systems, we briefly recall a few facts about von Neumann algebras (see \cite{haagerup_standard_1975} or \cite[Sec.~IV.1]{takesaki2} for details).
A \emph{standard representation} of a von Neumann algebra $\M$ is a faithful representation $\M\subset\B(\H)$ admitting a cyclic separating vector $\Omega\in \H$, i.e., a vector such that $\M\Omega$ is dense in $\H$ and such that $\omega(a)=\ip\Omega{a\Omega}$ defines a faithful state on $\M$.\footnote{Strictly speaking, this definition only applies to $\sigma$-finite von Neumann algebras. Since separable von Neumann algebras are $\sigma$-finite, this definition works in our case.}
It is well-known that all standard representations are unitarily equivalent.
Standard representations have a surprising amount of structure:
Given a cyclic separating vector $\Omega\in\H$, one can canonically construct a conjugation $J$ on $\H$ such that\footnote{A conjugation on a Hilbert space $\H$ is an anti-unitary operator $J:\H\to\H$ with $J^2=1$.}\textsuperscript{,}\footnote{$J$ is the anti-unitary arising from the polar decomposition of the closable operator $S_0(a\Omega)=a^*\Omega$ on $D(S_0)=\M\Omega$.}
\begin{equation}\label{eq:std form1}
    J\M J =\M',\qquad J\Omega=\Omega,\qquad JaJ=a^*, \ a\in Z(\M).
\end{equation}
Furthermore, by defining $\P = \overline{\{ aJa\Omega\ :\ a\in \M\}}$, one obtains a self-dual positive cone which satisfies 
\begin{equation}\label{eq:std form2}
    aJaJ \P\subset \P,\qquad J\Psi = \Psi, \ \Psi\in\P.
\end{equation}
The triple $(\H,\P,J)$ of a representation $\M\subset\B(\H)$, a self-dual cone $\P\subset\H$ and a conjugation $J$ is uniquely specified up to unitary isomorphism by \cref{eq:std form1,eq:std form2} \cite{haagerup_standard_1975}.
The standard form enjoys the property that, for each normal state $\varphi$ on $\M$ there is a unique $\Omega_\varphi\in\P$ such that
\begin{equation}
    \varphi(a) = \ip{\Omega_\varphi}{a\Omega_\varphi},\qquad a\in\M.
\end{equation}
In fact, the map $\omega\mapsto \Omega_\omega$ is a homeomorphism for the respective norm topologies as it satisfies \cite{haagerup_standard_1975}:
\begin{equation}\label{eq:purification_estimate}
    \norm{\Omega_\omega-\Omega_\varphi}^2 \le \norm{\omega-\varphi} \le \norm{\Omega_\omega-\Omega_\varphi}\cdot \norm{\Omega_\omega+\Omega_\varphi}, \qquad \omega,\varphi\in\M_*^+.
\end{equation}
Furthermore, there is a map $\alpha \mapsto u_\alpha$ which implements an automorphism $\alpha$ of $\M$ by a unitary $u_\alpha$ on $\H$, i.e., one has $\alpha(a)=u_\alpha^*au_\alpha$ for all $a\in\M$, and the implementation is such that
\begin{equation}\label{eq:unitary implementation}
    u_\alpha J = J u_\alpha,\qquad u_\alpha \Omega_\varphi = \Omega_{\varphi\circ\alpha}
\end{equation}
for all normal states $\varphi$ on $\M$. The map $\alpha\mapsto u_\alpha$ is a homeomorphism onto its range for the $u$-topology on $\Aut\M$ and the strong operator topology on the unitary group of $\H$.\footnote{The $u$-topology on $\Aut\M$ is the topology of pointwise norm-convergence on the state space, i.e., $\alpha_i \to \alpha$ in the $u$-topology if and only if $\norm{\omega\circ\alpha_i-\omega\circ\alpha}\to 0$ for all normal states $\omega$ on $\M$.}

\begin{proposition}\label{prop:std}
    Let $(\M_A,\M_B)$ be a bipartite system on a (separable) Hilbert space $\H$ such that Haag duality $\M_A=\M_B'$ holds.
    The following are equivalent:
    \begin{enumerate}[(a)]
        \item\label{it:std1}
            $\M_A$ (and hence $\M_B$) is in standard representation, i.e., there exists a cyclic separating vector.
        \item\label{it:std2}
            Existence of purifications: For every normal state $\omega$ on $\M_A$ there exists a vector $\Omega\in\H$ such that $\omega(a)=\ip\Omega{a\Omega}$, $a\in\M_A$, and $\M_B$ has the same property.
        \item\label{it:std3}
            Exchange symmetry: There is a conjugation $J$ on $\H$ such that $J\M_AJ=\M_B$ and $JaJ=a^*$ for all $a\in\M_A\cap\M_B$.
    \end{enumerate}
\end{proposition}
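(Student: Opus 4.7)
My plan is to prove (a) $\Rightarrow$ (b) and (a) $\Rightarrow$ (c) directly from the standard-form facts recalled just above the proposition, and then to handle the two converses separately. Separability of $\H$ ensures that $\M_A$ and $\M_B$ are $\sigma$-finite and, in particular, admit faithful normal states, which is used throughout.

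For the forward implications, if $\Omega$ is cyclic separating for $\M_A$, then by Haag duality $\M_B=\M_A'$ it is also cyclic separating for $\M_B$, and the canonical bijection $\omega\mapsto\Omega_\omega\in\P$ supplies vector purifications of every normal state on $\M_A$ (and on $\M_B$), establishing (b). For (c), let $J$ be the modular conjugation associated with $\Omega$; Tomita--Takesaki gives $J\M_A J=\M_A'=\M_B$, and since $\M_A\cap\M_B=\M_A\cap\M_A'=Z(\M_A)$, the identity $JaJ=a^*$ on the intersection is immediate from (\ref{eq:std form1}).

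For (b) $\Rightarrow$ (a) I exploit the symmetry of (b) between Alice and Bob. Pick faithful normal states $\omega_A$ on $\M_A$ and $\omega_B$ on $\M_B$ and let $\Omega_A,\Omega_B\in\H$ be the purifications provided by (b). Then $\Omega_A$ is separating for $\M_A$ (because $\omega_A$ is faithful), while $\Omega_B$ is separating for $\M_B$, which by Haag duality is the same as being cyclic for $\M_A$. Thus $\M_A\subset\B(\H)$ admits both a cyclic and a separating vector, and the classical Dixmier-type theorem for $\sigma$-finite von Neumann algebras (existence of cyclic plus separating implies existence of cyclic-and-separating) produces a single cyclic separating vector, yielding (a).

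The main obstacle is (c) $\Rightarrow$ (a). The crucial structural observation is that on the real subspace $\H^J:=\{\xi\in\H:J\xi=\xi\}$, cyclicity for $\M_A$ automatically forces the separating property, because $\M_A'\xi=J\M_A J\xi=J(\M_A\xi)$ and $J$ is an antiunitary homeomorphism. It therefore suffices to exhibit a $J$-invariant vector cyclic for $\M_A$. Since $J$ implements a spatial $*$-anti-isomorphism between $\M_A$ and $\M_A'$, the multiplicity/coupling data of the representations $\M_A\subset\B(\H)$ and $\M_A'\subset\B(\H)$ must agree, which together with $\sigma$-finiteness pins the representation down (up to unitary equivalence) to the standard form. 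Concretely, one can argue via Haagerup's uniqueness theorem after matching the spatial data provided by $J$, or via a Baire-category/perturbation argument that combines the openness and density of the $\M_A$-cyclic locus in $\H$ with the decomposition $\H=\H^J\oplus i\H^J$ to intersect this locus with $\H^J$. Either route delivers the required vector and completes the proof.
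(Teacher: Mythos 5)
Your forward implications (a)$\Rightarrow$(b) and (a)$\Rightarrow$(c) match the paper, and your route for (b)$\Rightarrow$(a) is a legitimate alternative to the paper's citation: since $\M_A$ is $\sigma$-finite it has a faithful normal state, whose purification is separating for $\M_A$, while a purification of a faithful normal state on $\M_B=\M_A'$ is cyclic for $\M_A$; the Dixmier--Mar\'echal theorem (cyclic vector plus separating vector on a separable Hilbert space implies a cyclic \emph{and} separating vector) then closes the argument. That part is fine.

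The gap is in (c)$\Rightarrow$(a). You correctly reduce the problem to exhibiting a $J$-invariant vector that is cyclic (equivalently, by your own observation, separating) for $\M_A$, but neither of your two proposed routes delivers such a vector. The Baire-category route fails outright: the cyclic locus is a dense $G_\delta$ in $\H$, but $\H^J$ is a proper closed real subspace of $\H$, hence nowhere dense, and a dense $G_\delta$ need not meet a nowhere-dense set; to run the category argument \emph{inside} $\H^J$ you would first need a cyclic vector in $\H^J$, which is exactly what is to be proved. The coupling/uniqueness route is also not available as stated: Haagerup's uniqueness theorem applies to a full standard form $(\H,J,\P)$, including the self-dual cone with $aJaJ\P\subset\P$, and hypothesis (c) only hands you the conjugation; the claim that the spatial anti-isomorphism alone ``pins down'' the representation is precisely the content of (c)$\Rightarrow$(a) and would require a genuine coupling-operator computation, nontrivial for non-factors (the proposition does not assume factoriality). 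The paper instead gives an explicit construction: take \emph{any} nonzero $J$-invariant unit vector $\Psi$ (these exist since $\H=\H^J+i\H^J$), choose $k_n\in\M_A$ with $\sum_n k_n^*k_n=1$ such that $\omega=\sum_n k_n\psi k_n^*$ is faithful (by covering $1$ with projections equivalent to $s_\psi$), and set $\Omega=\sum_n k_nJk_nJ\Psi$; then $\Omega$ induces $\omega$ on $\M_A$, hence is separating, and $J\Omega=\Omega$ makes it separating for $\M_B$ as well, hence cyclic for $\M_A$. Some such construction (or an equivalent substitute) is needed to complete your proof.
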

\begin{proof}
    The equivalence between \cref{it:std1,it:std2} is shown in \cite[Lem.~18]{long_paper}.
    \ref{it:std1} $\Rightarrow$ \ref{it:std3} is explained above.
    \ref{it:std3} $\Rightarrow$ \ref{it:std1}: 
    Set $\R=\M_A$. Let $\Psi$ be a nonzero $J$-invariant unit vector in $\H$ and let $\psi$ be the induced normal state on $\R$.
    Let $k_n\in \R$ be sequence of elements such that $\sum k_n^*k_n=1$ and such that the state $\omega=\sum k_n\psi k_n^*$ is faithful.%
    \footnote{To construct such a sequence, consider the support projection $p_1 = s_\psi$ and let $p_2,\ldots p_r$ be a collection of Murray-von Neumann-equivalent projections such that $p_1\vee\ldots p_r=1$. Pick partial isometries $v_n$ such that $v_n^*v_n=p_1$ and $v_vv_n^*=p_n$. Set $k_n = 2^{-n} v_n$ (where $v_1=p_1$) and $k_0= (1- \sum_{n=1}^r 4^{-n} p_n)^{1/2}$ such that $\sum_{n=0}^r k_n^*k_n=1$. By construction $\omega = \sum_n k_n\psi k_n^*$ is faithful. Indeed, $\omega \ge v_n \psi v_n^*$ for all $n=1,\ldots r$ implies $s_\omega\ge p_1\vee \ldots p_r=1$.}
    Set $\Omega = \sum_n k_n Jk_n \Psi$.
    Then $\omega(a)=\ip\Omega{a\Omega}$ for all $a\in\R$. Hence $\Omega$ is a separating vector and, since $J\Omega =\Omega$, we know that $\Omega$ is also separating for $J\R J=\R'$.
    Thus $\Omega$ is also cyclic for $\M_A$. Therefore $\R$ is in standard representation. 
    In particular, this shows that every $J$-invariant vector $\Psi\ne0$ induces a positive cone $\P$, such that $(\H,J,\P)$ is a standard from, via $\P=\overline{\lin}\,\{aJa\Psi\ :\ a\in\R\}$.
\end{proof}

\begin{definition}[{\cite{long_paper}}] \label{def:standard}
    A bipartite system $(\M_A,\M_B)$ on a Hilbert space $\H$ is \emph{standard} if the equivalent properties of \cref{prop:std} hold.
\end{definition}

Note at this point that a standard bipartite system need not necessarily be irreducible.
In fact, this is the case if and only if $\M_A$ (and hence $\M_B$) is a factor.
For a bipartite system $(\M_A,\M_B)$ of factors in Haag duality, being standard is essentially saying that $\M_A$ and $\M_B$ have the same size. For example, if both factors are infinite (types $\I_\oo$, $\II_\oo$ or $\III$), they are automatically standard \cite[Cor.~III.2.6.16]{blackadar_operator_2006}.
For finite factors, the relative size of $\M_A$ and $\M_B$ is measured by the coupling constant $c(\M_A,\M_B)$, and being standard is equivalent to $c(\M_A,\M_B)=1$ (see \cref{sec:monotones}).
If the bipartite system is not standard, one factor will be larger than the other one, and, by truncating it, one can obtain a standard bipartite system.

Since the algebras $\M_A$ and $\M_B$ in a standard bipartite system are (anti)-isomorphic, we will use the terminology for von Neumann algebras also for standard bipartite systems.
For example, an irreducible approximately finite-dimensional type $\II_\oo$ standard bipartite system is a standard bipartite system $(\M_A,\M_B)$ where $\M_A$ and $\M_B$ are approximately finite-dimensional type $\II_\oo$ factors. 
Since the latter is unique, this means that we consider the hyperfinite type $\II_\oo$ factor with its commutant in standard representation.

The following two results often allow for a reduction to the case of standard bipartite systems:

\begin{lemma}[{\cite[Prop.~23]{long_paper}}]\label{lem:minimal subspace}
    Let $(\M,\M')$ be a bipartite system of factors in Haag duality on $\H$.
    Let $\Omega\in\H$ be a unit vector and let $s_\omega=[\M\Omega]$ and $s_{\omega'}=[\M'\Omega]$ be the support projections of the reduced states.
    Set $e=s_\omega s_{\omega'}$, $\H_0=e\H$ and $\M_0$.
    Then $(\M_0,\M_0')$ is a standard bipartite system on $\H_0$ and $\Psi$ is a cyclic separating vector for $\M_0$ and $\M_0'$.
\end{lemma}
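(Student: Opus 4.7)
The plan is to produce $\Omega$ itself as the cyclic separating vector required by \cref{prop:std}; once this is in hand, the proposition concludes that $(\M_0,\M_0')$ is a standard bipartite system on $\H_0$. The natural definition of $\M_0$ consistent with the setup is $\M_0 := (p\M p)\big|_{\H_0}$ where $p := s_{\omega'} = [\M'\Omega]\in\M$ and $q := s_\omega = [\M\Omega]\in\M'$; equivalently $\M_0 = e\M e$ acting on $\H_0 = e\H$, since for $a\in\M$ a short computation using $q\in\M'$ and $pq=qp$ gives $eae = (pap)q$, whose restriction to $e\H$ agrees with $(pap)\big|_{e\H}$.

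First I would check that $\Omega\in\H_0$: from $\Omega\in\overline{\M'\Omega}$ and $\Omega\in\overline{\M\Omega}$ one reads off $p\Omega = q\Omega = \Omega$, hence $e\Omega = \Omega$. Next, because $p\in\M$, the corner $p\M p$ is a von Neumann algebra on $p\H$ with commutant $\M'p$ in $\B(p\H)$ by the standard corner-commutant formula (no factoriality required). Since $p$ and $q$ commute, $e = pq\in\M'p$ lies in this commutant, so the further compression yields the von Neumann algebra $\M_0 = (p\M p)e$ on $\H_0$, whose commutant in $\B(\H_0)$ equals $e(\M'p)e = (q\M'q)e$. The symmetric construction on the $\M'$-side, combined with Haag duality of the ambient system, identifies the latter with $\M_0'$.

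Cyclicity of $\Omega$ for $\M_0$ on $\H_0$ then follows from
\[
    \overline{\M_0\Omega} \;=\; \overline{(p\M p)\Omega} \;=\; p\,\overline{\M\Omega} \;=\; pq\H \;=\; e\H \;=\; \H_0,
\]
where the second equality uses $p\Omega = \Omega$ and the third uses $q = [\M\Omega]$. Swapping the roles of $\M$ and $\M'$ in this computation gives $\overline{\M_0'\Omega} = \H_0$; thus $\Omega$ is cyclic for both $\M_0$ and $\M_0'$, which is equivalent to being cyclic and separating for each. The hypotheses of \cref{prop:std} are then satisfied, and $(\M_0,\M_0')$ is standard.

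The only mildly technical ingredient is the commutant identification $\M_0' = (q\M'q)\big|_{\H_0}$ via two successive corner reductions by the commuting projections $p\in\M$ and $q\in\M'$, which rests on the standard corner-commutant formula $(p\M p)' = \M'p$ in $\B(p\H)$ together with Haag duality of the ambient system. Once this bookkeeping is organized, the remainder is a direct calculation with the support projections, and no appeal to the type classification of $\M$ is needed.
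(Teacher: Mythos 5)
Your argument is correct. The paper itself does not prove this lemma — it imports it from \cite[Prop.~23]{long_paper} — but your proof supplies exactly the expected argument: the two support projections $p=[\M'\Omega]\in\M$ and $q=[\M\Omega]\in\M'$ commute, the corner-commutant formula $(p\M p)'=\M'p$ on $p\H$ (applied twice, or equivalently $(e\M e)'=e\M'e$ as the paper itself notes near \eqref{eq:H_0}) identifies $\M_0'=(q\M'q)\restriction\H_0$, and the computation $\overline{p\M\Omega}=p\,\overline{\M\Omega}=pq\H=\H_0$ (legitimate because $pq\H$ is closed) together with its $\M\leftrightarrow\M'$ mirror shows $\Omega$ is cyclic for both reduced algebras, hence cyclic and separating for each, so \cref{prop:std}\ref{it:std1} applies. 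You also correctly read the garbled statement: $\M_0=e\M e$ and the cyclic separating vector is $\Omega$, not $\Psi$.
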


\begin{lemma}[{\cite[Cor.~III.2.6.16]{blackadar_operator_2006}}]\label{lem:tensor-standard}
    Let $(\M,\M')$ be a bipartite system of factors in Haag duality on $\H$. 
    Then the bipartite system of factors $(\tilde \M,\tilde \M')=(\M\ox\B(\K)\ox 1,\M'\ox1\ox \B(\K))$ on $\H\ox\K\ox\K$ is a standard bipartite system if $\dim\K=\oo$.
\end{lemma}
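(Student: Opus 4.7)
The plan is to deduce the lemma from the commutation theorem for tensor products of von Neumann algebras together with the classical structural fact that a properly infinite factor on a separable Hilbert space whose commutant is also properly infinite admits a cyclic and separating vector.

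First I verify Haag duality for the enlarged pair. Iterating the commutation theorem gives
\begin{equation*}
    (\M\ox\B(\K)\ox 1)' \;=\; \M'\ox\B(\K)'\ox \B(\K) \;=\; \M'\ox 1\ox\B(\K) = \tilde\M',
\end{equation*}
so $\tilde\M$ and $\tilde\M'$ are mutual commutants. Tensor products of factors are factors, so $\tilde\M$ is a factor as well. By \cref{prop:std} it then suffices to exhibit a cyclic separating vector for $\tilde\M$ inside $\H\ox\K\ox\K$.

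Next I show that both $\tilde\M$ and $\tilde\M'$ are \emph{properly infinite}. Since $\dim\K=\oo$, $\B(\K)$ contains a sequence of pairwise Murray--von Neumann-equivalent rank-one projections $(p_n)_{n\in\NN}$ with $\sum_n p_n = 1_\K$, implemented by partial isometries $v_{nm}\in\B(\K)$ satisfying $v_{nm}^*v_{nm}=p_m$ and $v_{nm}v_{nm}^*=p_n$. The projections $1\ox p_n\ox 1 \in \tilde\M$ are pairwise equivalent in $\tilde\M$ via $1\ox v_{nm}\ox 1$ and sum to $1_{\H\ox\K\ox\K}$, so the identity of $\tilde\M$ is an infinite projection. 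Running the same argument in the third tensor slot shows $\tilde\M'$ is properly infinite.

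Finally I invoke the cited structural result: any properly infinite factor on a separable Hilbert space whose commutant is also properly infinite admits a joint cyclic and separating vector. Since $\H\ox\K\ox\K$ is separable, this produces $\tilde\Omega\in\H\ox\K\ox\K$ cyclic and separating for $\tilde\M$, so condition \ref{it:std1} of \cref{prop:std} holds and the enlarged bipartite system is standard. The only substantive input is the cited structural theorem; everything else is bookkeeping with the tensor-product commutation theorem and Murray--von Neumann comparison, and I foresee no real obstacle beyond correctly quoting that theorem.
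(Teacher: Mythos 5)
Your proof is correct and matches the paper's (implicit) argument: the paper simply cites Blackadar's Cor.~III.2.6.16, and your write-up is exactly the verification that its hypotheses hold — the tensor commutation theorem gives Haag duality for the enlarged pair, and amplification by $\B(\K)$ with $\dim\K=\oo$ makes both $\tilde\M$ and $\tilde\M'$ properly infinite factors on a separable Hilbert space, hence jointly in standard form. No gaps.
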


\subsection{Three examples}\label{sec:examples}

We give three examples of physical systems naturally formulated in these terms for readers who are not yet accustomed to the von Neumann algebraic description of infinite quantum systems.
The first example considers an idealized resource in quantum information theory, the second example concerns ground state sectors of quantum many-body systems in the thermodynamic limit, and the third example are the observable algebras in quantum field theory.

\subsubsection{Infinitely many entangled $N$-qudit systems}
\label{sec:oo-qudits}
We consider the situations where $N$ parties share a countably infinity of entangled $N$-qudit systems.
Let $\Phi_n \in (\CC^d)^{\ox N}$ be the state of the $n$th system.
The Hilbert space of the full system is the infinite Hilbert space tensor product 
\begin{equation}
    \H= \bigotimes_{n\in\NN} \big((\CC^d)^{\ox N};\Phi_n\big)
\end{equation}
with respect to the sequence of reference states $(\Phi_n)_{n\in\NN}$ \cite[Sec.~XIV.1]{takesaki3}.
We obtain $N$ commuting von Neumann algebras, corresponding to the $N$ parties, via
\begin{equation}
    \M_x = \overline{ \bigotimes_{n\in\NN} \big(1^{\ox x-1}\ox M_d(\CC)\ox 1^{N-x}\big)}^{w}.
\end{equation}
This defines an irreducible $N$-partite system of factors $(\M_x)_{x\in[N]}$ satisfying Haag duality \cite{van_luijk_multipartite_2024}. 
We will refer to this construction as an \emph{ITPFI multipartite system} (where ITPFI is short for "infinite tensor product of finite type $\I$") \cite{araki_classification_1968}.
The state of the full system is described by the vector $\Omega = \otimes_{n\in\NN}\Omega_n \in \H$.
This construction may be generalized by making the integer $d$ depend on $n$ (or even $x$).
Since the local factors are ITPFI factors
\begin{equation}
    \M_x\cong \bigotimes_{n\in\NN}\, (M_d(\CC); \phi_{x,n}), \qquad \phi_{x,n} = \tr_{[N]\setminus\{x\}} \kettbra{\Phi_n},
\end{equation}
their type can be computed from the asymptotic behavior of the spectrum of the reduced states $\phi_{x,n}$ \cite{araki_classification_1968}. 

In the case of two parties, the resulting bipartite system $(\M_A,\M_B)$ on $\H$ is guaranteed to be a standard bipartite system \cite{van_luijk_multipartite_2024}.
To be concrete, for $0\leq \lambda\leq 1$ consider a two-qubit pure state $\Psi_\lambda \in \CC^2\ox\CC^2$ of the form
\begin{align}
    \Psi_\lambda = \frac{1}{\sqrt{1+\lambda}}\big(\ket1\ox\ket 1 + \sqrt{\lambda} \ket 2\ox\ket 2\big).
\end{align}
Then, the ITPFI bipartite system obtained by choosing $\Phi_n = \Psi_\lambda$ for all $n$ is given by $(\M_A,\M_B) = (\R_\lambda,\R_\lambda')$, where $\R_\lambda$ is known as a \emph{Powers factor} \cite{powers_representations_1967}. 
It has type $\I_\infty$ for $\lambda=0$, type $\III_\lambda$ for $0<\lambda<1$ and type $\II_1$ for $\lambda=1$.
In particular, the latter case corresponds to an infinite supply of Bell states shared between Alice and Bob.

\subsubsection{Ground state sectors of quantum many-body systems}

We consider a quantum many-body system in the thermodynamic limit, which consists of infinitely many qudits positioned on the sites of a lattice $\Gamma$.
We denote by $\A_\Gamma= \bigotimes_{x\in\Gamma} M_d(\CC)$ the corresponding quasi-local algebra. 
To every region $X\subset\Gamma$, there is a naturally associated subalgebra $\A_X$ given by $\bigotimes_{x\in X}M_d(\CC)$ (by tensoring with the identity on all remaining sites).
For disjoint regions $X,Y$, the subalgebras $\A_X$ and $\A_Y$ commute and for $X\subset Y$ we have $\A_X\subset\A_Y$.
The $C^*$-algebra $\A_X$ is the norm-closure of the *-algebra of observables that are supported on finite subsets of $X$.
We can summarize this by saying that, for every region $X$ (even $X=\Gamma$), the elements of $\A_X$ have approximately finite support (or are `quasi-local').
For instance, if $\Gamma = \ZZ$ the algebra $\A_\NN$ will not contain elements like $\otimes_{x\in \NN} u$ where $1\neq u\in M_d(\CC)$ is some fixed unitary. It will, however, contain the unitary $\otimes_{x\in\NN} u_x$ if $(u_x)$ is a sequence of unitaries in $M_d(\CC)$ such that $u_x\to 1$ sufficiently fast as $x\to \oo$.\footnote{For example, $\sum_x \norm{1-u_x} <\infty$ is sufficient.}

If $\omega$ is a state on $\A_\Gamma$, we denote by  the GNS representation of $\omega$.
In the case where $\omega$ is a ground state of a local Hamiltonian $H$, we refer to $(\H,\pi,\omega)$ as a ground state sector of $H$.
Since $\A_\Gamma$ is a simple $C^*$-algebra (has no proper ideals), the representation $\pi$ is faithful, and we can suppress it by regarding $\A_\Gamma$ as a subalgebra of $\B(\H)$.
Taking weak closures, we obtain, for every region $X\subset \Gamma$, a von Neumann algebra 
\begin{equation}
    \M_X = \overline{\A_X}^w \subset \B(\H).
\end{equation}
Since the elements of $\A_X$ are supported in $X$, the same holds for the weak closure $\M_X$. 
Since the weak closure and the strong closure of $\A_X$ both give the same von Neumann algebra $\M_X$, the elements of $\M_X$ are also approximable by operators with finite support in the strong operator topology: For each $\Psi\in\H$ and $a\in \M_X$ we can approximate $a\Psi$ up to error $\eps>0$ by $a_\eps \Psi$ with $a_\eps\in \A_X$ having finite support in $X$.
Again, it follows that $\M_X$ and $\M_Y$ commute if $X$ and $Y$ are disjoint regions and that $\M_X\subset\M_Y$ whenever $X\subset Y$.
Moreover, we have $\M_X\vee\M_Y = \M_{X\cup Y}$.
For any pairwise disjoint collection of regions $X_1,\ldots X_N\subset \Gamma$, the algebras $(\M_{X_1},\ldots \M_{X_N})$ form an $N$-partite system on $\H$.

If the state $\omega$ is pure, e.g., if $\omega$ is the unique ground state of a local Hamiltonian $H$, we will have $\M_\Gamma=\B(\H)$.
For a partition of $\Gamma$ into $N$ subset $X_1,\ldots X_N$, this guarantees that the multipartite system $(\M_{X_1},\ldots \M_{X_N})$ is irreducible.
We remark that Haag duality is notoriously difficult to prove -- even in the case of a mere bipartition $\Gamma = X_1\cup X_2$ of the lattice.\footnote{A purported proof of Haag duality for half-chain algebras associated with translation invariant pure states in \cite{keyl2008haag} turned out to be flawed.}

\subsubsection{Observable algebras in quantum field theory}

In the algebraic approach, a quantum field theory on a spacetime $M$ is modeled by a net of von Neumann algebras
\begin{equation}
    \O \mapsto \A(\O), \qquad \O\subset M,
\end{equation}
on a common Hilbert space $\H$ \cite{haag_local_1996}. Here, $\O$ denotes an open spacetime region, and the algebras $\A(\O)$ describe the observables localized in $\O$.
The net is required to satisfy certain axioms -- most importantly, \emph{causality}, which means that causally separated regions $\O_1\bigtimes \O_2$ yield commuting von Neumann algebras $\A(\O_1) \subset \A(\O_2)'$. Another important requirement is that of \emph{isotony}, which means that $\A(\O_1)\subset\A(\O_2)$ whenever $O_1\subset \O_2$.
Typically, the Hilbert space $\H$ is assumed to contain reference vector $\Omega$, representing the "vacuum state".\footnote{On curved spacetime and in the absence of a time-translation symmetry (a timelike Killing vector field), there is, in general, no canonical choice of reference state  \cite{fewster_algebraic_2015}.}
There are a variety of additional axioms (or rather properties) depending on the specific setting to be addressed. Among the most common are \emph{causal completeness}, which states that the observable algebra of a region $\O$ coincides with the observable algebra of its causal completion $\O''$, i.e., $\A(\O) = \A(\O'')$, where $\O'$ denotes the causal complement (and hence $\O''$ is the causal completion of $\O$), \emph{additivity}, which means that $\A(\O_1)\vee\A(\O_2) = \A(\O_1\cup\O_2)$ holds for collections of open regions $\O_x$, and \emph{Haag duality}, which means that the algebra associated to the causal complement $\O'$ of an open subset $\O$ is the commutant
\begin{equation}
    \A(\O)' = \A(\O').
\end{equation}
Notice that Haag duality is a strengthened version of causal completeness.
In almost all cases, the von Neumann algebras $\A(\O)$ are approximately finite-dimensional type $\III$ (typically $\III_1$) algebras or even factors \cite{buchholz_universal_1987,baumgaertel1995oam}. It was recently argued, however, that local algebras may in fact be of type $\II$ if semiclassical gravity is taken into account, see for example \cite{witten2022crossed_product,chandrasekaran_algebra_2023,fewster_quantum_2024,de_vuyst_gravitational_2024,chen_clock_2024}.

Thus, a collection of pairwise causally separated regions $\O_1,\ldots,\O_N\subset M$ generically yields a multipartite quantum system $(\A(\O_x))_{x=1}^{N}$ on $\H$ with a natural multipartite state given by the vacuum vector $\Omega$.
This multipartite system is irreducible (see \cref{def:multipartite system}) whenever the sets $\O_x$ partition spacetime in the sense that $\big(\bigcup \O_x\big)'' = M$. It satisfies Haag duality (in the sense of $N$-partite system, see \cref{def:multipartite system}) if the net $\O \mapsto \A(\O)$ satisfies it.
For instance, in a quantum field theory on Minkowski spacetime, which satisfies Haag duality, the bipartition of $M$ into complementary wedges $W$ and $W'$ yields a standard bipartite system $(\M,\M')=(\A(W),\A(W'))$.

We caution that while the QFT setting fits the formal framework we develop here, some care must be taken in its interpretation: For once, if Alice and Bob act on spacelike separated regions, they clearly cannot communicate classically. Conversely, if they can communicate classically, they cannot act on spacelike separated regions, and their local observables will not commute (see also \cite{verch_distillability_2005}). As we will see, this problem is, in a sense, solved automatically: If the local algebras are of type $\III_1$, we will see that not only LOCC trivializes, but that all pure states may be mapped to each other via local operations \emph{without communication} (cp.~\cref{thm:LOCC-trivial}).
Second, it is not clear whether in relativistic QFT, indeed, all elements of the local operator algebras should be considered valid Kraus operators describing implementable operations (see, for example, \cite{fewster2020local_measurements} and reference therein for recent discussions of this point).

\subsection{Operations and their implementability}\label{sec:operations}

In this section, we discuss operations on a quantum system described by a von Neumann algebra $\M$ acting on a Hilbert space $\H$.
Basic statistical principles require that operations on the quantum system are described by unital completely positive (cp) maps
\begin{equation}
    T:\M\to \M.
\end{equation}
Further requiring that normal states are mapped to normal states, i.e., $\omega\circ T$ is normal for all normal states $\omega$ on $\M$, forces $T$ to be continuous with respect to the $\sigma(\M,\M_*)$ topology.
Such maps are called \emph{normal} unital cp maps.
The question arises whether all normal unital cp maps $T:\M\to\M$ correspond to operations on the quantum system described by $\M$.
This depends on the \emph{(local) implementability} of the operation, which requires that the operation can be extended to $\B(\H)$ in such a way that the extension is trivial on the commutant.
We will see that local implementability is equivalent to $T$ being an \emph{inner} cp map: 

\begin{definition}\label{def:inner}
 Let $\M$ be a von Neumann algebra. 
    A cp map $T:\M\to\M$ is called \emph{inner} if there exist operators $\{k_\alpha\}\subset\M$ such that
    \begin{equation}\label{eq:kraus}
        T(a) = \sum_{\alpha} k_\alpha^*ak_\alpha,\qquad a\in\M.
    \end{equation}
    The \emph{Kraus rank} $r(T)$ of an inner cp map $T$ is the minimal number of operators $k_\alpha$ that are necessary for \eqref{eq:kraus}.
    The set $\{k_\alpha\}\subset\M$ is called a collection of \emph{Kraus operators} if
    \begin{equation}
        \sum_{\alpha}k_\alpha^* k_\alpha=1.
    \end{equation}    
\end{definition}

Note that inner cp maps are automatically normal. They are unital if and only if the operators $k_\alpha$ in \eqref{eq:kraus} are a collection of Kraus operators.
The implementability of a normal ucp map $T:\M\to\M$ requires that this map can be extended to the environment in such a way that it acts trivially on the commutant (which we can think of as the complement of $\M$).

\begin{proposition}\label{prop:inner}
    Let $\M$ be a von Neumann algebra on $\H$.
    A normal ucp map $T:\M\to\M$ is inner if and only if it has a normal ucp extension $\tilde T:\B(\H)\to\B(\H)$ with $\tilde T \restriction \M' =\id_{\M'}$.
\end{proposition}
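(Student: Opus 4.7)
The ($\Leftarrow$) direction is a direct verification: given a Kraus presentation $T(a)=\sum_\alpha k_\alpha^* a k_\alpha$ with $k_\alpha\in\M$ and $\sum_\alpha k_\alpha^*k_\alpha=1$, extend the formula verbatim to $\tilde T(x)=\sum_\alpha k_\alpha^* x k_\alpha$ for $x\in\B(\H)$. This defines a normal ucp map by standard Kraus estimates, and because each $k_\alpha$ commutes with $\M'$, one has $\tilde T(b)=b\sum_\alpha k_\alpha^* k_\alpha=b$ for every $b\in\M'$.

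For the nontrivial direction, assume a normal ucp extension $\tilde T:\B(\H)\to\B(\H)$ with $\tilde T|_{\M'}=\id$. The first observation is that $\M'$ lies inside the multiplicative domain of $\tilde T$: since $\tilde T$ is unital and $\tilde T(b^*b)=b^*b=\tilde T(b)^*\tilde T(b)$ for $b\in\M'$, the Kadison--Schwarz inequality is saturated on $\M'$. Consequently $\tilde T$ is a normal $\M'$-bimodule map,
\[
\tilde T(b_1 x b_2)=b_1\,\tilde T(x)\,b_2 \qquad (b_1,b_2\in\M',\ x\in\B(\H)).
\]

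Next, invoke the normal Stinespring dilation on $\B(\H)$: there exist a separable Hilbert space $\K$ and an isometry $V:\H\to\H\ox\K$ such that $\tilde T(x)=V^*(x\ox 1_\K)V$. Passing to the minimal dilation, the set $\{(x\ox 1_\K)V\eta : x\in\B(\H),\ \eta\in\H\}$ spans a dense subspace of $\H\ox\K$. The bimodule identity rewrites as $[V^*(b\ox 1_\K)-bV^*](x\ox 1_\K)V=0$ for all $x\in\B(\H)$ and $b\in\M'$, and minimality forces $V^*(b\ox 1_\K)=bV^*$ and hence, by taking adjoints, the intertwining relation
\[
(b\ox 1_\K)V = Vb \qquad (b\in\M').
\]
Expanding $V\eta=\sum_\alpha (k_\alpha\eta)\ox e_\alpha$ along an orthonormal basis $\{e_\alpha\}\subset\K$ produces operators $k_\alpha\in\B(\H)$ with $\sum_\alpha k_\alpha^*k_\alpha=V^*V=1$ and $\tilde T(x)=\sum_\alpha k_\alpha^* x k_\alpha$, while the intertwining relation translates termwise to $bk_\alpha=k_\alpha b$ for every $b\in\M'$, placing $k_\alpha\in\M''=\M$. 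Restricting to $\M$ exhibits $T$ as an inner cp map.

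The delicate step is the passage from the bimodule property to the intertwining relation on $V$: it is precisely here that the minimality of the Stinespring dilation is needed in order to upgrade $[V^*(b\ox 1_\K)-bV^*]\restriction(\B(\H)\ox 1_\K)V(\H)=0$ to $V^*(b\ox 1_\K)=bV^*$ on all of $\H\ox\K$. Everything else is routine bookkeeping with Kraus decompositions and von Neumann's bicommutant theorem.
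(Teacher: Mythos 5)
Your proof is correct, and the forward direction (inner $\Rightarrow$ extension) coincides with the paper's, which simply declares it clear. For the converse, however, you take a genuinely different route. The paper (in the proof of its Lemma~\ref{lem:inner}, implication (i) $\Rightarrow$ ``$T$ is inner'') starts from an \emph{arbitrary} Kraus decomposition $\tilde T=\sum_\alpha k_\alpha^*(\placeholder)k_\alpha$ with $k_\alpha\in\B(\H)$ and shows in a two-line positivity computation that
\[
0\le \sum_\alpha [k_\alpha,y]^*[k_\alpha,y] = y^*\tilde T(1)y+\tilde T(y^*y)-\tilde T(y^*)y-y^*\tilde T(y)=0
\]
for every $y\in\M'$, forcing each $k_\alpha$ into $\M''=\M$; no minimality is needed, and every Kraus family of every admissible extension automatically lies in $\M$. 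You instead pass through Choi's multiplicative domain theorem to get the $\M'$-bimodule property, then use the \emph{minimal} Stinespring dilation to upgrade the bimodule identity to the intertwining relation $(b\ox 1_\K)V=Vb$, and read off the Kraus operators from $V$. Both arguments ultimately rest on the same two facts (normal cp maps on $\B(\H)$ have Stinespring/Kraus form with a multiple of the identity representation, and $\M''=\M$), and your commutator identity is really the same ``defect operator'' that appears summed in the paper's display. What the paper's version buys is brevity and independence from minimality; what yours buys is conceptual transparency (the extension is an $\M'$-bimodule map, hence its dilation isometry intertwines $\M'$) and the fact that it directly produces the minimal Kraus family, which the paper needs anyway for item (iii) of its Lemma~\ref{lem:inner}. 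You correctly flag that minimality is the one place where care is required; that step is sound as written.
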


This follows from the following more general statement, which contains a description of the Stinespring dilation of inner maps and a generalization of \cref{prop:inner} to non-unital normal cp maps:

\begin{lemma}\label{lem:inner}
    Let $\M$ be a von Neumann algebra and let $T:\M\to\M$ be a normal cp map. The each of the following is equivalent to $T$ being inner:
    \begin{enumerate}[(i)]
        \item\label{it:inner2}
            There exists a faithful representation $\pi:\M\to\B(\H)$ and a normal cp map $\tilde T:\B(\H)\to\B(\H)$ satisfying
            \begin{equation}\label{eq:inner2}
                \tilde T\circ\pi= \pi\circ T \qandq \tilde T|_{\pi(\M)'} = T(1)\cdot \id_{\pi(\M)'}.
            \end{equation}
        \item\label{it:inner3}
            The statement in \cref{it:inner2} holds for all representations of $\M$.
        \item\label{it:inner4}
            Let $\M\subset\B(\H)$ be the standard representation. The minimal Stinespring dilation of $T:\M\to\B(\H)$ is of the form
            \begin{equation}\label{eq:inner3}
                T = v^*(\placeholder\ox1)v, \qquad v\in M_{r(T),1}(\M),
            \end{equation}
            where  $r(T)\in\NN\cup\{\oo\}$. The operator $v$ is an isometry if and only if $T$ is unital.
    \end{enumerate}
\end{lemma}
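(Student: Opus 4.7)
The plan is to establish a circle inner $\Rightarrow$ \ref{it:inner3} $\Rightarrow$ \ref{it:inner2} $\Rightarrow$ inner, together with the separate equivalence inner $\Leftrightarrow$ \ref{it:inner4}. The implication \ref{it:inner3} $\Rightarrow$ \ref{it:inner2} is trivial. The implication inner $\Rightarrow$ \ref{it:inner3} is a direct check: given Kraus operators $\{k_\alpha\}\subset\M$ and any faithful representation $\pi$, the formula $\tilde T(x) := \sum_\alpha \pi(k_\alpha)^* x \pi(k_\alpha)$ defines a normal cp extension to $\B(\H_\pi)$, and for $y\in\pi(\M)'$ the $\pi(k_\alpha)$'s commute with $y$, so $\tilde T(y) = \pi(T(1))y$, as required.

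The core of the argument is \ref{it:inner2} $\Rightarrow$ inner. Suppressing $\pi$, take the minimal Stinespring dilation of $\tilde T:\B(\H)\to\B(\H)$. Because every normal representation of $\B(\H)$ is an amplification, this takes the form $\tilde T(x) = V^*(x\ox 1_\K)V$ for some separable Hilbert space $\K$ and bounded $V:\H\to\H\ox\K$ satisfying $V^*V = T(1)$; the operator $V$ is an isometry iff $T$ is unital. Expanding $V = \sum_\alpha k_\alpha\ox\ket{e_\alpha}$ in an orthonormal basis of $\K$ yields $\tilde T(x) = \sum k_\alpha^*xk_\alpha$ with $k_\alpha\in\B(\H)$, and I need to show that each $k_\alpha$ lies in $\M$. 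By the bicommutant theorem this is equivalent to the intertwining relation $Vy = (y\ox 1)V$ for every $y\in\M'$. I obtain the intertwining by computing $W^*W$ where $W:=(y\ox 1)V-Vy$: expanding $W^*W$ into its four Stinespring terms of the form $V^*(\placeholder\ox 1)V = \tilde T(\placeholder)$, substituting the hypothesis $\tilde T|_{\M'} = T(1)\cdot\id_{\M'}$, and using that $T(1)\in\M$ commutes with $y\in\M'$, all four terms cancel and $W=0$. Restricting $\tilde T(x) = \sum k_\alpha^*xk_\alpha$ to $\M$ then gives the Kraus form of $T$.

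For inner $\Leftrightarrow$ \ref{it:inner4}: assembling minimal Kraus operators $k_1,\ldots,k_{r(T)}\in\M$ into $v = \sum_\alpha k_\alpha\ox\ket{\alpha}\in M_{r(T),1}(\M)$ yields $T(a) = v^*(a\ox 1)v$ and $v^*v = T(1)$, giving a Stinespring triple with representation $a\mapsto a\ox 1$ on $\H\ox\CC^{r(T)}$. Minimality of this dilation follows because any proper invariant subspace of $\H\ox\CC^{r(T)}$ for $\M\ox 1$ corresponds to a projection in $\M'\ox M_{r(T)}(\CC)$, which compressed against $v$ would produce a Kraus decomposition with strictly fewer operators, contradicting the minimality of $r(T)$. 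Conversely, decomposing any $v\in M_{r(T),1}(\M)$ into its coordinates immediately recovers a Kraus decomposition.

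I expect the main obstacle to be the non-unital case of \ref{it:inner2} $\Rightarrow$ inner. In the unital case Choi's multiplicative domain theorem applies directly, since $\tilde T(y^*y)=y^*y=\tilde T(y)^*\tilde T(y)$ for $y\in\M'$ places $\M'$ inside the multiplicative domain and yields the intertwining immediately. When $T(1)\neq 1$ this Cauchy--Schwarz equality fails, so the argument must proceed via the explicit $W^*W$ cancellation above, which depends crucially on the full hypothesis $\tilde T|_{\M'} = T(1)\cdot\id_{\M'}$ together with the commutation of $T(1)\in\M$ with every element of $\M'$.
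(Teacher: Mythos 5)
Your treatment of \ref{it:inner2} $\Rightarrow$ ``$T$ is inner'' is correct and is essentially the paper's argument: writing $\tilde T$ in Stinespring/Kraus form on $\B(\H)$ and showing $W^*W=0$ for $W=(y\ox 1)V-Vy$ is literally the same computation as the paper's $\sum_\alpha[k_\alpha,y]^*[k_\alpha,y]=0$ (the four terms cancel pairwise; the commutation of $T(1)$ with $y$ is not even needed), and your remark that the multiplicative-domain shortcut only covers the unital case is apt. The genuine gap is in the equivalence with \ref{it:inner4}. You assemble a Kraus decomposition with $r(T)$ operators into $v\in M_{r(T),1}(\M)$ and claim the dilation $(\H\ox\CC^{r(T)},\id\ox 1,v)$ is already minimal, arguing that a proper projection $P=[(\M\ox 1)v\H]\in\M'\ox M_{r(T)}(\CC)$ ``compressed against $v$ would produce a Kraus decomposition with strictly fewer operators.'' This step is unjustified: $P$ is a general projection in $\M'\ox M_{r(T)}(\CC)$ and need not be of the product form $1\ox Q$; and even when $P$ is Murray--von Neumann equivalent to some $1\ox Q$ with $\dim Q\CC^{r(T)}<r(T)$, the partial isometry $u$ implementing the equivalence lies in $\M'\ox M_{r(T)}(\CC)$ and does not commute with $y\ox 1$ for $y\in\M'$, so $uv$ need not have entries in $\M$ --- i.e., it need not yield any Kraus decomposition of $T$, let alone a shorter one. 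Since the whole content of \ref{it:inner4} is precisely that the minimal dilation sits inside the amplification as a subspace of the form $\H\ox Q\CC^{r}$, assuming that non-minimality forces fewer Kraus operators is close to circular.

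The paper closes exactly this hole by using the standard representation, which your proof never invokes: with $\Omega$ cyclic and separating and $w\in M_{r,1}(\M)$ any Kraus column, one has $[(\M\ox 1)w\H]=[(\M\ox 1)w\M'\Omega]=[(\M\M'\ox 1)w\Omega]=[(\B(\H)\ox 1)w\Omega]$, and Schmidt-decomposing the single vector $w\Omega\in\H\ox\CC^{r}$ shows that the projection onto this subspace is literally of the form $1\ox Q$. The minimal dilation is then $v=(1\ox Q)w\in M_{\dim Q\CC^r,1}(\M)$, and $\dim Q\CC^r$ is identified with the Kraus rank a posteriori. You should supply this (or an equivalent) argument; without it the implication ``$T$ inner $\Rightarrow$ \ref{it:inner4}'' is not established. (The converse \ref{it:inner4} $\Rightarrow$ inner, by reading off the entries of $v$, is of course fine.)
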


The number $r(T)\in\NN\cup\{\oo\}$ in item \ref{it:inner4} is precisely the Kraus rank of $T$.
We will refer to the dilation $T$ in \ref{it:inner4} as the \emph{minimal dilation} of $T:\M\to\M$.
The equation $v^*v=T(1)$ implies that $v$ is an isometry if and only if $T$ is unital, in which case we call $v$ the Stinespring isometry of $T$.

\begin{proof}
    The implications \ref{it:inner4} $\Rightarrow$ "$T$ is inner" $\Rightarrow$ \ref{it:inner3} $\Rightarrow$ \ref{it:inner2} are clear.

    \ref{it:inner2} $\Rightarrow$ "$T$ is inner":
    Let $\M\subset\B(\H)$ be a faithful representation such that $\tilde T = \sum_\alpha k_\alpha^*(\placeholder)k_\alpha$ with $k_\alpha\in\B(\H)$ satisfies \eqref{eq:inner2}.
    Let $y\in\M'$, then
    \begin{align*}
        0\le \sum_\alpha [k_\alpha,y]^* [k_\alpha,y] 
        &= \sum_\alpha ( y^*k_\alpha^*k_\alpha y+ k_\alpha^*y^*yk_\alpha- k_\alpha^* y^*k_\alpha y - y^*k_\alpha^*yk_\alpha ) \\
        &= y^* \tilde T(1) y + \tilde T(y^*y) - \tilde T(y^*)y-y^*\tilde T(y)=0.
    \end{align*}
    Therefore $[k_\alpha,y]^* [k_\alpha,y]$ must be zero for each $\alpha$ and all $y\in\M'$ and, hence, $k_\alpha\in \M''=\M$.

    "$T$ is inner" $\Rightarrow$ \ref{it:inner4}:
    Recall that the minimal Stinespring dilation can be obtained from an arbitrary dilation $(\K,\sigma,w)$ by reduction to the subspace $\hat\H=[\sigma(\M)w\H]$. 
    By assumption, we can find a $w\in M_{r,1}(\M)$ such that $T = w^*(\placeholder\ox 1)w$ for some $r\in\NN\cup\{\oo\}$.
    Let $\Omega \in\H$ be a cyclic separating vector for $\M$ and note that $wx' =(x'\ox1)w$, $x'\in\M'$.
    Thus:
    \begin{align*}
        [(\M\ox1)w\H] = [(\M\ox1)w\M'\Omega] = [(\M\M'\ox1)w\Omega]
        = [(\B(\H)\ox1)w\Omega]
    \end{align*}
    Schmidt decomposing $w\Omega$ reveals that the projection onto this subspace is $1\ox Q$ for some projection projection $Q$ on $\CC^{r}$.
    Therefore, the minimal Stinespring dilation is given by $v=(1\ox Q)w$, $\pi = \id \ox Q$ and $\hat\H=\H\ox Q\CC^r$. 
    Set $r(T)=\dim Q\CC^r$. Picking a basis of $Q\CC^r$, we can regard $v$ as an element of $M_{r(T),1}(\M)$, which shows the claim.
\end{proof}

A normal ucp map $T:\M\to \M$ has a normal ucp (left and right) inverse $T^{-1}:\M\to\M$ if and only if it is an automorphism (automorphism are automatically normal \cite[Cor.~III.3.10]{takesaki1}).
An automorphism $\alpha\in \Aut\M$ is called inner if it is implemented by a unitary in $u\in\M$ in the sense that $\alpha(a)=u^*au$, $a\in\M$.

\begin{lemma}[{\cite[Cor.~4.5]{mingo1989}}]\label{lem:inner auto}
    An automorphism is an inner cp map if and only if it is an inner automorphism.
\end{lemma}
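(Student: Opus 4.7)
The $(\Leftarrow)$ direction is immediate: if $\alpha=\Ad u^*$ with $u\in\U(\M)$, then $\{u\}$ is a single Kraus operator for $\alpha$, exhibiting it as an inner cp map.

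For $(\Rightarrow)$, suppose $\alpha$ is an automorphism that is also inner. The plan is to extend $\alpha$ to all of $\B(\H)$ trivially on $\M'$ and to recognize the extension as a spatial automorphism of $\B(\H)$ implemented by a unitary that necessarily lies in $\M$. By \cref{prop:inner}, there is a normal ucp extension $\tilde\alpha:\B(\H)\to\B(\H)$ with $\tilde\alpha|_{\M'}=\id$. Both $\M$ and $\M'$ lie in the multiplicative domain of $\tilde\alpha$ — the first because $\tilde\alpha|_\M=\alpha$ is a $*$-homomorphism, the second because $\tilde\alpha|_{\M'}=\id$ — so their generated von Neumann algebra $\M\vee\M'$ is contained in the multiplicative domain, and $\tilde\alpha$ restricts to a normal $*$-homomorphism there.

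Assume first that $\M$ is a factor, so $\M\vee\M'=\B(\H)$. Then $\tilde\alpha$ is a normal unital $*$-endomorphism of $\B(\H)$ whose image, being weakly closed, contains $\alpha(\M)\vee\M'=\M\vee\M'=\B(\H)$; it is therefore surjective, hence an automorphism of $\B(\H)$. By the spatial implementation of automorphisms of $\B(\H)$, one has $\tilde\alpha=\Ad U$ for some $U\in\U(\H)$, and the condition $\tilde\alpha|_{\M'}=\id$ forces $U\in\M''=\M$, exhibiting $\alpha$ as inner.

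The general case reduces to the factor case via the central decomposition $\M=\int_X^\oplus\M_x\,d\mu(x)$. Since $Z(\M)\subset\M\cap\M'$, the extension $\tilde\alpha$ fixes $Z(\M)$ pointwise, and hence so does $\alpha$; consequently $\alpha$ disintegrates as $\alpha=\int^\oplus\alpha_x\,d\mu(x)$ with each $\alpha_x\in\Aut\M_x$, and the Kraus operators disintegrate correspondingly, making each $\alpha_x$ inner cp on the factor $\M_x$. The factor case supplies unitary implementers $u_x\in\U(\M_x)$ with $\alpha_x=\Ad u_x$. I expect the main technical obstacle to be measurability: since $u_x$ is determined only up to a phase in $\U(Z(\M_x))=\mathbb T$, a standard measurable selection argument is required to assemble a measurable section $x\mapsto u_x$, from which one obtains $u=\int^\oplus u_x\,d\mu(x)\in\U(\M)$ implementing $\alpha=\Ad u$ globally.
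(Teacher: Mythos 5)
The paper does not prove this lemma itself --- it cites it from Mingo --- so there is no in-paper proof to compare against; I will assess your argument on its own terms. Your $(\Leftarrow)$ direction and your factor case are correct: the multiplicative-domain argument shows $\tilde\alpha$ restricts to a normal unital $*$-endomorphism of $\M\vee\M'=\B(\H)$, its image is an ultraweakly closed subalgebra containing $\M$ and $\M'$, its kernel is a trivial ultraweakly closed ideal, and the implementing unitary $U$ of the resulting automorphism of $\B(\H)$ lands in $(\M')'=\M$ because $\tilde\alpha$ fixes $\M'$. That part is complete.

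The general (non-factor) case is where your write-up stops short. The disintegration steps are plausible but each needs care (the fiberwise identity $\alpha_x=\sum_n k_n(x)^*(\placeholder)k_n(x)$ holds a.e.\ only after restricting to a countable ultraweakly dense set of $a$'s, and the convergence of the infinite sum must be controlled fiberwise), and the measurable selection of $x\mapsto u_x$ --- which you correctly identify as the crux --- is asserted rather than performed. Since $u_x$ is unique only up to a phase, one genuinely needs a Borel selection theorem here, and without it the proof of the lemma as stated (for arbitrary von Neumann algebras, which is how the paper uses it, e.g.\ in the corollary on locally invertible local operations) is incomplete. Note that the direct-integral machinery can be avoided entirely using \cref{lem:inner}\ref{it:inner4}: since $\alpha:\M\to\B(\H)$ is itself a representation with cyclic subspace all of $\H$, minimality of the dilation $\alpha=v^*(\placeholder\ox 1)v$ with $v=(v_1,\dots,v_r)^\top\in M_{r,1}(\M)$ means $[(\M\ox1)v\H]=\H\ox\CC^r$, while multiplicativity of $\alpha$ gives $(1-vv^*)(a\ox1)v=0$ for all $a\in\M$, so $vv^*=1$ and $v$ is unitary; the intertwining relation $v_i\alpha(a)=av_i$ then forces the mutually orthogonal projections $v_i^*v_i$ to be central and equivalent to $1$, hence equal to $1$, so $r=1$ and $v\in\U(\M)$ implements $\alpha$. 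This closes the general case with no separability or measurability input.
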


Next, we explain how quantum measurements on a system with observable algebra $\M$ are defined. 
For simplicity and because of our application to LOCC, we only consider discrete outcome spaces.
An instrument implementing a measurement with outcome space $X$ is described by a \emph{quantum instrument}, which is a normal ucp map
\begin{equation}
    T: \M\ox\ell^\oo(X) \to \M.
\end{equation}
A quantum instrument is completely determined by the normal cp contractions $T_x = T(\placeholder\ox \delta_x)$. Conversely every collection $\{T_x\}$ of normal cp maps $T_x:\M\to\M$ such that $\bar T =\sum_x T_x$ is unital, defines an instrument $T$ with $T_x=T(\placeholder\ox\delta_x)$.
The maps $T_x$ describe the post-measurement state:
If the outcome $x$ was measured and the system was in the state $\omega$, then the post-measurement state is
\begin{equation}
    \omega_x = \frac1{p_x} \,\omega\circ T_x,\qquad p_x = \omega(T_x(1)),
\end{equation}
which is well-defined because $p_x$ is the probability of measuring the outcome $x$ (this probability is non-zero since, by assumption, the outcome $x$ has been measured).
We must again consider the implementability of these operations, and by \cref{prop:inner}, implementability is essentially equivalent to innerness.

\begin{lemma}\label{lem:coarse-graining}
    Let $\{T_x\}_{x\in X}$ be a quantum measurement on a von Neumann algebra $\M$.
    Then $\bar T$ is inner if and only if each $T_x$ is inner if and only if there exists a set $Y$, a partition $Y=\cup_{x\in X} Y_x$ and Kraus operators $k_{y}\in \M$, $y\in Y$, such that
    \begin{equation}
        T_x = \sum_{y\in Y_x} k_y^*(\placeholder) k_y.
    \end{equation}
    Thus, $\{T_x\}_{x\in X}$ is a coarse-graining of an instrument $\{\tilde T_y\}_{y\in Y}$ where every $\tilde T_x$ has Kraus rank $1$.
\end{lemma}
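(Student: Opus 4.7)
The plan is to prove the cycle $(3) \Rightarrow (2) \Rightarrow (1) \Rightarrow (3)$. The first two implications are immediate by concatenation of Kraus operators: from (3) each $T_x$ is inner with Kraus family $\{k_y\}_{y \in Y_x}$, and pooling these families across $x$ exhibits $\bar T$ as inner with $\sum_{y \in Y} k_y^* k_y = \bar T(1) = 1$. The substance lies in $(1) \Rightarrow (3)$.

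For that implication I would begin by invoking \cref{lem:inner} to put $\bar T$ in its minimal Stinespring form $\bar T(a) = v^*(a \ox 1_{\CC^r})v$ with $v \in M_{r,1}(\M)$, whose components $(k_\alpha)_\alpha \subset \M$ are Kraus operators summing to $1$. Since each $T_x$ is a normal cp map into $\M \subset \B(\H)$ with $0 \le T_x \le \bar T$, Arveson's Radon--Nikodym theorem (applied to the minimal dilation) produces a unique positive contraction $E_x \in (\M \ox 1)' = \M' \ox \B(\CC^r)$ satisfying
\begin{equation*}
    T_x(a) = v^*(a \ox 1) E_x v, \qquad \sum_x E_x = 1.
\end{equation*}

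The crucial step is to upgrade the location of $E_x$ to $Z(\M) \ox \B(\CC^r)$, exploiting that $T_x$ takes values in $\M$ rather than just in $\B(\H)$. For $y' \in \M'$ the intertwining $vy' = (y' \ox 1)v$ (valid since $v \in M_{r,1}(\M)$) gives the identity $[T_x(a), y'] = v^*(a \ox 1)[E_x, y' \ox 1]v$; its vanishing for all $a \in \M$, combined with minimality of the dilation, forces $[E_x, y' \ox 1] = 0$. Hence $E_x$ commutes with both $\M \ox 1$ and $\M' \ox 1$, placing it in $Z(\M) \ox \B(\CC^r)$. In the factor case, which is the setting of primary interest, this collapses $E_x$ to a positive operator on $\CC^r$.

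Finally, I would decompose $E_x = \sum_{y \in Y_x} h_{x,y} h_{x,y}^*$ with columns $h_{x,y}$ having entries in $Z(\M)$ (a spectral decomposition in the factor case; a measurable pointwise decomposition over the spectrum of $Z(\M)$ in general) and set $k_{x,y} := \sum_\alpha h_{x,y,\alpha}^* k_\alpha \in \M$. Using that the $h$-coefficients are central, the expansion $T_x(a) = \sum_{\alpha,\beta}(E_x)_{\alpha\beta} k_\alpha^* a k_\beta$ collapses to $T_x(a) = \sum_{y \in Y_x} k_{x,y}^* a k_{x,y}$. Taking $Y = \bigsqcup_x Y_x$ with $k_y := k_{x,y}$ for $y \in Y_x$ delivers both the simultaneous Kraus representation and the coarse-graining by rank-one instruments $\tilde T_y = k_y^*(\placeholder) k_y$. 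The main obstacle I anticipate is the centrality step; once $E_x$ sits in $Z(\M) \ox \B(\CC^r)$, the rest is bookkeeping.
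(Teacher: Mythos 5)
The paper states \cref{lem:coarse-graining} without proof (it passes directly to the following remark), so there is no authors' argument to compare yours against; judged on its own, your proof is correct and complete. The cycle $(3)\Rightarrow(2)\Rightarrow(1)$ is indeed immediate, and your route for $(1)\Rightarrow(3)$ --- minimal Stinespring form of $\bar T$ from \cref{lem:inner}, Arveson's Radon--Nikodym theorem to obtain $T_x=v^*(\placeholder\ox1)E_xv$ with positive contractions $E_x\in\pi(\M)'$ summing to $1$, and then the commutator identity $[T_x(a),y']=v^*(a\ox1)[E_x,y'\ox1]v$ together with minimality to push $E_x$ into $Z(\M)\ox\B(\CC^r)$ --- is exactly the natural continuation of the paper's \cref{lem:inner} and is sound. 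Two points are worth making explicit. First, to identify the commutant of $\pi(\M)$ on the minimal dilation space with $\M'\ox\B(\CC^{r(T)})$ on the nose (so that $y'\ox1$ lies in it and the projection onto the dilation space does not interfere with your commutator computation), you should work in the standard representation as in item \ref{it:inner4} of \cref{lem:inner}, where the minimal space is $\H\ox Q\CC^r$ with $Q$ acting only on the multiplicity space; since innerness is representation-independent by items \ref{it:inner2}--\ref{it:inner3}, this is no loss of generality. Second, the decomposition $E_x=\sum_{y\in Y_x}h_{x,y}h_{x,y}^*$ with $Z(\M)$-valued columns requires no spectral or direct-integral argument: simply take $h_{x,y}$ to be the columns of $E_x^{1/2}\in Z(\M)\ox\B(\CC^r)$. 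With that, your bookkeeping $k_{x,y}=\sum_\alpha h_{x,y,\alpha}^*k_\alpha$ yields $T_x=\sum_{y\in Y_x}k_{x,y}^*(\placeholder)k_{x,y}$ and $\sum_{y\in Y}k_y^*k_y=\bar T(1)=1$, which is exactly the asserted joint Kraus family.
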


\begin{remark}[Inner operations are not closed w.r.t.\ pointwise convergence]
    The set of inner operations is not closed in the point-ultraweak topology, i.e., in the topology induced by the functionals $T\mapsto \omega(Tx)$, $\omega\in \M_*$, $x\in\M$.
    As an example, consider the approximately finite-dimensional type $\II_1$ factor $\M = \bigotimes_{n\in\NN} (M_2(\CC),\frac12\tr)$ acting on $\H=\bigotimes_{n\in\NN} (\CC^2\ox\CC^2,\Phi^+)$ where $\Phi^+= 2^{-1/2}(\ket0\ket0+\ket1\ket1)$, which may be constructed by taking the weak closure of $\A=\bigotimes_{n\in\NN}M_2(\CC)$ in the GNS representation of the unique tracial state.
    Let $u_0\in M_2(\CC)$ be a unitary and consider the automorphism $\alpha_0(a) = \ox_{n\in\NN} u_0(\placeholder)u_0^*$ on $\A$.
    Since $\alpha_0$ leaves the tracial state invariant, it is implemented by a unitary $u$ on the GNS space, which yields an automorphism $\alpha$ on $\M$ (namely the weak extension of $\alpha_0$).
    This automorphism is not inner \cite[Thm.~XIV.1.13]{takesaki3}. However, it lies in the point-ultraweak closure of inner operations since it can be approximated by the inner automorphisms that are implemented by the unitaries $u_0^{\ox k} \ox 1^{\ox\oo}\in \A$.
    While these approximating unitaries converge in the weak operator topology of $\H$, their limit is zero.
\end{remark}

\subsection{Local operations in multipartite systems}\label{sec:general LOCC}

\begin{definition}\label{def:locality-pres}
    A \emph{locality-preserving} transformation of a multipartite system $(\M_1,\ldots,\M_N)$ on a Hilbert space $\H$ is a normal ucp map $T:\B(\H)\to\B(\H)$ such that $T(\M_x)\subset \M_x$ for all $x\in[N]$ and
    \begin{equation}
        T( a_1\cdots a_N) = T(a_1)\cdots T(a_N), \qquad a_x\in\M_x, \ x\in [N].
    \end{equation}
    A unitary $u$ on $\H$ is locality preserving if the automorphism $T=u^*(\placeholder)u$ is a locality preserving operation.
\end{definition}

\begin{lemma}\label{lem:auts-lp}
    Let $\M$ be a von Neumann algebra with standard form $(\H,J,\P)$.
    Consider the standard bipartite system $(\M,\M')$.
    Then, for every automorphism $\alpha$ on $\M$, the unitary $u_\alpha$ from \cref{eq:unitary implementation} is a locality-preserving unitary $u_\alpha$ on $\H$ such that $u_\alpha J = J u_\alpha$. 
\end{lemma}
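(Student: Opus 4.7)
My plan is to check the two defining properties of a locality-preserving unitary directly, using only the properties of $u_\alpha$ stated around \eqref{eq:unitary implementation} and the defining property $J\M J = \M'$ of the standard form.

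First, the commutation relation $u_\alpha J = J u_\alpha$ is given verbatim in \eqref{eq:unitary implementation}, so one half of the claim is immediate. Taking adjoints (and using $J^* = J$, antilinearity notwithstanding) this also gives $J u_\alpha^* = u_\alpha^* J$, and hence $J u_\alpha J = u_\alpha$ and $J u_\alpha^* J = u_\alpha^*$.

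Next I would show that $T := \Ad u_\alpha^* = u_\alpha^*(\placeholder) u_\alpha$ is a locality-preserving operation on the bipartite system $(\M,\M')$. Since $T$ is an automorphism of $\B(\H)$, the multiplicativity condition $T(ab) = T(a)T(b)$ in \cref{def:locality-pres} holds automatically; what remains is the invariance of each local factor. Invariance of $\M$ is built into the definition of $u_\alpha$: by \eqref{eq:unitary implementation}, $T(a) = \alpha(a) \in \M$ for all $a \in \M$. For invariance of $\M'$, I pick $b \in \M'$ so that $JbJ \in \M$ by the standard-form identity, whence $u_\alpha^*(JbJ)u_\alpha = \alpha(JbJ) \in \M$. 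Conjugating once more by $J$ and pushing the $J$'s through $u_\alpha^*$ and $u_\alpha$ using the commutation relation just derived gives
\begin{equation*}
    u_\alpha^* b u_\alpha \;=\; J \bigl(u_\alpha^*(JbJ)u_\alpha\bigr) J \;\in\; J\M J \;=\; \M',
\end{equation*}
so $T(\M') \subset \M'$ as required.

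There is no real obstacle here; the whole content of the lemma is that the unitary implementer supplied by the standard-form machinery commutes with the modular conjugation, which is precisely property \eqref{eq:unitary implementation}. The only care needed is in handling the antilinear $J$ when taking adjoints of $u_\alpha J = J u_\alpha$, but this is routine.
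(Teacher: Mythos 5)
Your proof is correct and follows essentially the same route as the paper: multiplicativity is automatic, invariance of $\M$ is built into $u_\alpha$, and invariance of $\M'$ is obtained by conjugating with $J$ and using $u_\alpha J = J u_\alpha$ together with $J\M J = \M'$. The paper's proof is just the algebra-level version of your element-wise computation, $T(\M') = u_\alpha^* J\M J u_\alpha = J u_\alpha^* \M u_\alpha J = J\M J = \M'$.
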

\begin{proof}
    The existence of $u_\alpha$ was discussed in \cref{sec:standard}. Clearly, $T(ab)=T(a)T(b)$ for $T=u_\alpha^*(\placeholder)u_\alpha$ and $a\in\M,b\in\M'$. Since $u_\alpha^* \M u_\alpha = \M$ we have 
    \begin{align}
    T(\M') = u_\alpha^* \M' u_\alpha = u_\alpha^* J \M J u_\alpha = J u_\alpha^* \M u_\alpha J = J \M J = \M'.
    \end{align}
\end{proof}

Locality-preserving operations cannot be regarded as local operations as they might not be implementable locally.

\begin{definition}\label{def:local operation}
    Let $(\M_1,\ldots,\M_N)$ be an $N$-partite system on $\H$.
    A \emph{local operation of the party $x\in[N]$} is a normal ucp map $T_x:\B(\H)\to \B(\H)$ such that
    \begin{equation}\label{eq:local_op}
        T_A(\M_x)\subset \M_x \qandq T_x\restriction{\M_x'} = \id_{\M_x'}.
    \end{equation}
    A local operation on the multipartite system $(\M_x)_{x\in[N]}$ is a product $T = T_1\circ \cdots \circ T_N$ of a local operations $T_x$ performed by each of the parties.
\end{definition}

It is evident that all local operations are locality-preserving, but the converse is false:

\begin{example}
    Consider a free field theory $\O \mapsto \A(\O)$ on Minkowski space $\RR^{1,3}$, let $W$ and $W'$ be complementary wedges, and let
    $\M=\A(W)$ and $\M'=\A(W')$ be the corresponding observable algebras (see \cref{sec:examples}).
    Let $g \in \text{SO}(1,3)$ be Lorentz boost  such that $gW=W$, $gW'=W'$,
    and let $u = u_g$ be the implementing unitary.
    Then $T = u^*(\placeholder)u$ is locality-preserving but not a product of local operations.
    This follows from the Bisognano-Wichmann theorem and the fact that $\M$ and $\M'$ are type $\III_1$ factors.%
    \footnote{By \cref{lem:inner auto} and \cref{prop:local_op}, the map $T=u^*(\placeholder)u$ being local would imply that $a\mapsto u^*au$ is an inner automorphism on $\M$.
    By the Bisognano-Wichmann theorem, the Lorentz boost is implemented as the modular flow $u=\Delta_\Omega^{it}$ relative to the vacuum $\Omega\in\H$ (at an appropriate time corresponding to the rapidity of the Lorentz boost) \cite{bisognano_duality_1975}.
    However, since $\M$ is a type $\III$ factor, the modular flow $\sigma_t(a) = \Delta_\Omega^{it}a\Delta_\Omega^{-it}$ is an outer automorphism for some $t\in\RR$ \cite[Thm.~III.4.6.6]{blackadar_operator_2006}.
    Thus, the Lorentz boost induces a non-inner automorphism on $\M$, which means that it cannot be a local operation for $(\M,\M')$. 
    }
\end{example}

\begin{example}\label{exa:locality-pres unitary}
    We consider the standard bipartite system corresponding to Alice and Bob sharing infinitely many bell pairs (see \cref{sec:examples}).
    The Hilbert space of this bipartite system is given as $\H=\bigotimes_{n\in\NN} (\CC^2\ox\CC^2; \Phi^+)$, where $\Phi^+=2^{-1/2}(\ket0\ket0+\ket1\ket1)$.
    Note that, for every unitary $u_0$ on $\CC^2$, we have $(u_0\ox \bar{u_0})\Phi^+=\Phi^+$.
    Therefore, the infinite tensor product $u = \otimes_{n\in\NN} (u_0\ox\bar u_0)$ yields a well-defined unitary on $\H$ (see \cite[Thm.~XIV.1.13]{takesaki3}).
    Clearly, this unitary leaves the factors $\M_A$ and $\M_B$ corresponding to Alice and Bob invariant and, hence, is locality preserving.
    However, the implemented automorphism $\alpha = \otimes_{n\in\NN} \Ad_{u_0}$ on $\M_A$ is outer unless $u_0$ is a scalar \cite[Thm.~XIV.1.13]{takesaki3}.
\end{example}

In analogy with \cref{def:locality-pres}, we could call a unitary $u$ on $\H$ local if the automorphism $T=u^*(\placeholder)u$ is a local operation.
The next result shows that such unitaries are simply given by products $u=\prod_x u_x$ of unitaries $u_x\in\M_x$.
Indeed, as a direct consequence of \cref{lem:inner}, we get that local operations are necessarily inner:

\begin{proposition}\label{prop:local_op}
    Let $(\M_x)_{x\in[N]}$ be a multipartite system on $\H$. 
    A map $T:\B(\H)\to\B(\H)$ is a local operation of the party $x$ if and only if it is of the form
    \begin{equation}\label{eq:local_op1}
        T = \sum_\alpha k_\alpha^*(\placeholder)k_\alpha, \qquad\sum_\alpha k_\alpha^*k_\alpha=1,\qquad \{k_\alpha\}\subset\M_x.
    \end{equation}
    Thus, $T$ is a local operation of the multipartite system if and only if it is of the form
    \begin{equation}\label{eq:local_op2}
        T= \sum_{\alpha_1,\ldots \alpha_N} (k_{\alpha_1}\cdots k_{\alpha_N})^*(\placeholder)(k_{\alpha_1}\cdots k_{\alpha_N}), \qquad \sum_{\alpha_x} k_{\alpha_x}^*k_{\alpha_x} =1,\qquad \{k_{\alpha_x}\}\subset \M_x,\,x\in[N].
    \end{equation}
\end{proposition}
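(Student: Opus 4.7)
The plan is to reduce the forward direction of the characterization \eqref{eq:local_op1} to Lemma~\ref{lem:inner}. Indeed, the defining conditions of a local operation of party $x$, namely $T(\M_x)\subset\M_x$ and $T|_{\M_x'}=\id_{\M_x'}$, say exactly that, viewing the inclusion $\iota:\M_x\hookrightarrow\B(\H)$ as a faithful representation of $\M_x$, the map $T$ is a normal ucp extension of $T|_{\M_x}:\M_x\to\M_x$ of the kind considered in \ref{it:inner2} of that lemma (with $T(1)=1$). The proof of \ref{it:inner2}$\Rightarrow$inner shows more than the mere innerness of $T|_{\M_x}$: starting from \emph{any} Kraus representation $T=\sum_\alpha k_\alpha^*(\placeholder) k_\alpha$ with $k_\alpha\in\B(\H)$ (which exists because $T:\B(\H)\to\B(\H)$ is itself a normal ucp map, so admits such a Kraus decomposition), the commutator identity
\begin{equation*}
\sum_\alpha [k_\alpha,y]^*[k_\alpha,y] = y^*T(1)y+T(y^*y)-T(y^*)y-y^*T(y) = 0 \qquad (y\in\M_x'),
\end{equation*}
forces $k_\alpha\in(\M_x')'=\M_x$ for each $\alpha$. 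Thus $T$ itself, as a map on $\B(\H)$, has the form \eqref{eq:local_op1}.

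The reverse direction is a direct verification. Given $\{k_\alpha\}\subset\M_x$ with $\sum_\alpha k_\alpha^*k_\alpha=1$, the map $T=\sum_\alpha k_\alpha^*(\placeholder)k_\alpha$ on $\B(\H)$ is normal ucp, and since $\M_x$ is an algebra containing the Kraus operators, we have $T(\M_x)\subset\M_x$. Moreover, for any $y\in\M_x'$ the $k_\alpha$ commute with $y$, so $\sum_\alpha k_\alpha^* y k_\alpha = y\sum_\alpha k_\alpha^*k_\alpha = y$, showing $T|_{\M_x'}=\id_{\M_x'}$.

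For \eqref{eq:local_op2}, we compose local operations $T_x$ of each party, using \eqref{eq:local_op1} for each $T_x$ with Kraus operators $\{k_{\alpha_x}\}\subset\M_x$. Since the $\M_x$ pairwise commute, the Kraus operators from different parties commute, so composition gives
\begin{equation*}
T(b) = \sum_{\alpha_1,\ldots,\alpha_N} k_{\alpha_1}^*\cdots k_{\alpha_N}^*\, b\, k_{\alpha_N}\cdots k_{\alpha_1} = \sum_{\alpha_1,\ldots,\alpha_N}(k_{\alpha_1}\cdots k_{\alpha_N})^*\, b\, (k_{\alpha_1}\cdots k_{\alpha_N}),
\end{equation*}
with normalization $\sum_{\alpha_1,\ldots,\alpha_N}\prod_x k_{\alpha_x}^*k_{\alpha_x} = \prod_x\bigl(\sum_{\alpha_x}k_{\alpha_x}^*k_{\alpha_x}\bigr)=1$ by the same commutativity and the Fubini-type factorization of the ultraweakly convergent sums. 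Conversely, any operator of the form \eqref{eq:local_op2} is manifestly a composition of $N$ operations of the form \eqref{eq:local_op1}, one for each party.

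The main potential obstacle is subsumed by the technical content already present in Lemma~\ref{lem:inner}: the existence of a $\B(\H)$-valued Kraus decomposition of a normal ucp map on $\B(\H)$ (with possibly infinitely many Kraus operators and ultraweak convergence), and the passage of the commutator identity through such an infinite sum. Everything else is bookkeeping.
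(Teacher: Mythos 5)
Your proof is correct and follows essentially the same route as the paper, which simply cites Lemma~\ref{lem:inner} for the first equivalence and the definition of a local operation as a product for the second; you have merely unpacked the commutator argument from that lemma's proof and written out the routine verifications. No gaps.
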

\begin{proof}
    It is shown in \cref{lem:inner} that a normal unital completely positive map $T_x:\B(\H)\to\B(\H)$ us a local operation $T_x$ of $\M_x$ if and only if it is of the form in \eqref{eq:local_op1} (cp.\ \cref{def:local operation}).
    Since general local operations $T$ are, by definition, products of a local operation each of the single parties, the second claim follows.
\end{proof}

As a direct consequence of \cref{lem:inner auto}, we get:

\begin{corollary}
    Let $(\M_x)_{x\in[N]}$ be an $N$-partite system on $\H$ and let $T$ be a local operation.
    Then $T$ is invertible with local inverse, i.e., there exists a local operation $T^{-1}$ with $T\circ T^{-1}=T^{-1}\circ T = \id$, if and only if there exist unitaries $u_x\in\M_x$, $x\in[N]$, such that
    \begin{equation}
        T = u^*(\placeholder)u, \qquad u= \prod_{x\in[N]} u_x.
    \end{equation}
\end{corollary}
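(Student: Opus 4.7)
The plan is to prove the $\Leftarrow$ direction by direct composition and the $\Rightarrow$ direction by isolating each party's action, applying \cref{lem:inner auto} on each $\M_x$ individually, and then reassembling the resulting local unitaries.

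For the easy direction, suppose $u = \prod_{x\in[N]} u_x$ with $u_x \in \U(\M_x)$. The product is well-defined and order-independent thanks to pairwise commutation of the $\M_x$'s. Each $T_x := u_x^*(\placeholder) u_x$ is then a local operation of party $x$ by \cref{prop:local_op} (a single Kraus operator $u_x \in \M_x$), with local inverse $u_x(\placeholder) u_x^*$, so $T = \prod_x T_x = u^*(\placeholder) u$ has local inverse $u(\placeholder) u^*$.

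For the converse, fix decompositions $T = T_1 \circ \cdots \circ T_N$ and $T^{-1} = S_1 \circ \cdots \circ S_N$ into local operations, with Kraus operators in the respective $\M_x$'s by \cref{prop:local_op}. Two observations are central. First, $T_x$ and $T_y$ (resp.\ $S_x$ and $S_y$) commute for $x \ne y$, as one sees by expanding both sides in Kraus representations and using $[\M_x,\M_y] = 0$. Second, for $y \ne x$ the local operation $T_y$ fixes every element of $\M_x \subset \M_y'$ pointwise. Combining these, for $a \in \M_x$ one gets $T(a) = T_x(a)$ and $T^{-1}(a) = S_x(a)$, and the identity $T\circ T^{-1} = \id$ restricts to show that $T_x \restriction \M_x$ is a ucp bijection of $\M_x$ with ucp inverse $S_x \restriction \M_x$. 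Such a map is automatically a $*$-automorphism (the Kadison--Schwarz inequality is forced to be an equality both for $T_x$ and for $S_x$).

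Since $T_x \restriction \M_x$ is an automorphism of $\M_x$ that is inner as a cp map, \cref{lem:inner auto} yields a unitary $u_x \in \U(\M_x)$ with $T_x(a) = u_x^* a u_x$ for all $a \in \M_x$. Set $u = \prod_x u_x$. By commutation, $u^* a u = u_x^* a u_x = T(a)$ for $a \in \M_x$, so $T$ and $u^*(\placeholder) u$ agree on $\bigcup_x \M_x$. I would extend this equality to all of $\B(\H)$ by considering, for each $x$, the auxiliary ucp map $R_x(b) := u_x T_x(b) u_x^*$, whose Kraus operators $k_{x,\alpha}u_x^*$ again lie in $\M_x$: by construction it is the identity on $\M_x$ (by the previous step) and on $\M_x'$ (since $u_x \in \M_x$ commutes with $\M_x'$ and $T_x\restriction \M_x' = \id$), so by the multiplicative domain theorem $R_x$ is a bimodule map over both $\M_x$ and $\M_x'$, hence the identity on the generated algebra $\M_x \vee \M_x'$. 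In the factor setting -- the principal context of the paper -- $\M_x \vee \M_x' = \B(\H)$, so $T_x = u_x^*(\placeholder)u_x$ on $\B(\H)$ and composing gives $T = u^*(\placeholder)u$. The main obstacle is precisely this final extension step: in the general (non-factor) case one must absorb a central element of $\bigvee_x \M_x$ into one of the $u_x$'s, which is possible but demands more delicate bookkeeping.
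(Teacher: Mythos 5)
Your argument is essentially the one the paper intends: the corollary is stated there as a direct consequence of \cref{lem:inner auto} with no written proof, and your route --- isolate $T\restriction\M_x=T_x\restriction\M_x$ using commutation and the fact that $T_y$ fixes $\M_x\subset\M_y'$ for $y\ne x$, observe that a ucp map with ucp inverse is an automorphism, and then invoke \cref{lem:inner auto} for the inner cp map $T_x\restriction\M_x$ --- is the natural way to fill in the details. In the factorial case your proof is complete and correct, including the multiplicative-domain step showing $R_x=\id$ on $\M_x\vee\M_x'=Z(\M_x)'=\B(\H)$.

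The one point you flag --- the non-factor case --- is a genuine gap relative to the statement as written (the $\M_x$ in \cref{def:multipartite system} need not be factors), but it closes more cleanly than "delicate bookkeeping" suggests. The commutator trick from the proof of \cref{lem:inner} (applied with $y\in\M_x$ rather than $y\in\M_x'$) shows that the Kraus operators of $R_x$ lie in $Z(\M_x)$, so $R:=\prod_x R_x = u\,T(\placeholder)\,u^*$ is a normal ucp map that fixes every $\M_x$ and every $\M_x'$ and has the ucp inverse $u^*\,T^{-1}(\placeholder)\,u$. Hence $R$ is an automorphism of $\B(\H)$, so $R=w^*(\placeholder)w$ for a unitary $w$ on $\H$; fixing $\M_x$ and $\M_x'$ pointwise forces $w\in\M_x'\cap\M_x''=Z(\M_x)$ for every $x$, and replacing $u_1$ by $wu_1\in\U(\M_1)$ absorbs $w$ and yields $T=\tilde u^*(\placeholder)\tilde u$ with $\tilde u$ a product of local unitaries. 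With that addendum your proof covers the statement in full generality.
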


\subsubsection*{Locality preserving modulo local operations}

We wish to understand the discrepancy between locality-preserving and local operations.
In standard bipartite systems, we will see that this boils down to the difference between general and inner operations on a single von Neumann algebra.

If $(\M_x)_{x\in[N]}$ is an $N$-partite system on a Hilbert space $\H$, we define the unitary groups 
\begin{align}
    \U_{lp}(\M_1,\ldots,\M_N) &:= \{ u \in \U(\H)\quad:\quad \text{$u$ is locality-preserving}\}
\intertext{and}
    \U_{l}(\M_1,\ldots,\M_N) &:= \{ u \in \U(\H)\quad:\quad u=\prod u_x,\ u_x\in\U(\M_x) \}
\end{align}
It is evident that $\U_l(\M_1,\ldots,\M_N)$ is a normal subgroup of $\U_{lp}(\M_1,\ldots,\M_N)$, which suggests to study the induced quotient group.
If $\M$ is a von Neumann algebra, we denote by $\Inn(\M)$ the group of inner automorphisms, which is a normal subgroup of $\Aut(\M)$, and we denote the quotient group $\Aut(\M)/\Inn(\M)$ as $\Out(\M)$.

\begin{proposition}\label{lem:OutM}
    Let $(\M,\M')$ be a standard bipartite system.
    Then 
    \begin{equation}
        \U_{lp}(\M,\M')/\,\U_l(\M,\M') \cong \Out(\M).
    \end{equation}
    The isomorphism is induced by the homomorphism $\rho:\U_{lp}(\M_A,\M_B) \ni u \mapsto  u^*(\placeholder)u \in \Aut\M$ which maps $\ker\rho = \U_l(\M,\M')$ onto $\Inn(\M)$.
\end{proposition}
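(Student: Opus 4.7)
My plan is to apply the first isomorphism theorem to the composition $\bar\rho = q \circ \rho$, where $\rho: \U_{lp}(\M, \M') \to \Aut(\M)$ is defined by $\rho(u) = u^*(\placeholder)u|_\M$ and $q: \Aut(\M) \twoheadrightarrow \Out(\M)$ is the canonical quotient. The theorem will follow once I show that $\bar\rho$ is a surjective group homomorphism whose kernel is precisely $\U_l(\M, \M')$.

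First I verify that $\rho$ lands in $\Aut(\M)$: since $u \in \U_{lp}(\M,\M')$ satisfies $u^*\M u \subseteq \M$ by \cref{def:locality-pres}, and the same applied to $u^{-1}$ (also locality-preserving) gives the reverse inclusion, $\rho(u)$ is a $*$-automorphism of $\M$. A direct check gives $\rho(uv) = \rho(v) \circ \rho(u)$, so $\rho$ is strictly an anti-homomorphism; this is cosmetic because $\Out(\M)$ is canonically isomorphic to its opposite via $\alpha \mapsto \alpha^{-1}$ (alternatively, one may work with $u \mapsto u(\placeholder)u^*$, which is an honest homomorphism with the same kernel and image). Surjectivity of $\bar\rho$ is then immediate from \cref{lem:auts-lp}: given $\alpha \in \Aut(\M)$, the implementing unitary $u_\alpha$ from \cref{eq:unitary implementation} lies in $\U_{lp}(\M, \M')$ and satisfies $\rho(u_\alpha) = \alpha$.

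The core step is the kernel computation, carried out as a double inclusion. For $\U_l(\M, \M') \subseteq \ker\bar\rho$: if $u = u_1 u_2$ with $u_1 \in \U(\M)$ and $u_2 \in \U(\M')$, then $u_2$ commutes with every $a \in \M$, so $\rho(u)(a) = u_2^* u_1^* a u_1 u_2 = u_1^* a u_1$ is inner, hence $\bar\rho(u) = e$. Conversely, if $\rho(u) \in \Inn(\M)$, then by \cref{lem:inner auto} there exists a unitary $v \in \M$ with $u^* a u = v a v^*$ for all $a \in \M$; rearranging yields $(uv)a = a(uv)$ on $\M$, i.e., $uv \in \M'$, so $u = (uv) v^* \in \U(\M') \cdot \U(\M) = \U_l(\M, \M')$ (the two factor orderings coincide because elements of $\M$ and $\M'$ commute). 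The main obstacle, if any, is tracking the (anti-)homomorphism convention; beyond that, the argument is a routine combination of \cref{lem:auts-lp,lem:inner auto} with the first isomorphism theorem.
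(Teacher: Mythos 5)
Your proof is correct and follows essentially the same route as the paper's: surjectivity via the implementing unitaries of \cref{lem:auts-lp}, and the kernel computation by extracting a unitary $v\in\M$ implementing the inner automorphism and checking that $uv$ (resp.\ $u^*v$ in the paper's variant) commutes with $\M$. Your remark that $u\mapsto u^*(\placeholder)u$ is strictly an anti-homomorphism is a fair cosmetic point the paper glosses over, and it does not affect the conclusion.
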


By a result of Falguières and Vaes \cite{falguieres_every_2008}, for every compact group $G$ there exists a (type $\II_1$) factor such that $\Out(\M) \cong G$.
Therefore, every compact group $G$ arises as locality-preserving mod local operations of a standard bipartite system.

\begin{proof}
We introduce the shorthands $\U_l$ and $\U_{lp}$ for $\U_l(\M,\M')$ and $\U_{lp}(\M,\M')$.
It is clear that $\rho(\U_l) = \Inn(\M)$.
Thus, $\rho$ induces a homomorphism $\dot \rho:\U_{lp}/\U_l \to \Out(\M)$.
Let $J$ be a conjugation and $\P\subset\H$ a self-dual positive cone such that $(\H,J,\P)$ is a standard form for $\M$.
Since every automorphism $\alpha\in\Aut(\M)$ is implemented by a locality-preserving unitary $u_\alpha$ (see \cref{lem:auts-lp}), the map $\rho$, and hence also $\dot\rho$, is surjective.
Next we show that $\dot\rho$ is injective by showing that $\ker\rho = \U_l$.
If $u\in\ker\rho$, then $u^*(\placeholder)u$ is inner on $\M$. We pick a unitary $v\in\M$ such that $u^*au = v^*av$ for all $a\in\M$ and set $v' = u^*v$.
Then, since $v'a = vu^*auu^* = vv^*avu^* =a v'$ for all $a\in\M$, we have $v'\in\M'$.
Therefore, $u = vv'$ is in $\U_l$.
\end{proof}

\begin{lemma}\label{lem:inner iff typeI}
    Let $\M$ be an approximately finite-dimensional factor. Then $\M$ is type $\I$ if and only if every automorphism on $\M$ is inner.
\end{lemma}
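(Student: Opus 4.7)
The forward direction is classical. For $\M \cong M_n(\CC)$ with $n<\infty$, Skolem--Noether gives that every automorphism is inner. For $\M \cong B(\H)$ with $\dim\H=\infty$, every automorphism is spatial and hence inner (see, e.g., \cite[Thm.~III.3.4]{takesaki1}). I will concentrate on the converse: if $\M$ is AFD and not of type $\I$, then $\M$ admits an outer automorphism. The three remaining cases---types $\II_1$, $\II_\infty$, and $\III$---will be handled separately, building on the $\II_1$ case. Combined with \cref{lem:inner auto}, this yields the conclusion.

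The key case is type $\II_1$, in which $\M$ is the hyperfinite $\II_1$ factor $R$. Following the construction already highlighted in the remark in \cref{sec:operations}, I would realize $R$ as the weak closure of $\bigotimes_{n\in\NN} M_2(\CC)$ in the GNS representation of the tracial state, and fix a non-scalar unitary $u_0 \in M_2(\CC)$. Then $\alpha_0 = \bigotimes_n \Ad_{u_0}$ is a trace-preserving automorphism of the algebraic tensor product, which extends to an automorphism $\alpha \in \Aut R$. The substantive ingredient is the outerness of $\alpha$, for which I would cite \cite[Thm.~XIV.1.13]{takesaki3}.

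For type $\II_\infty$, write $\M \cong R \ox B(\K)$ with $\dim\K = \infty$ and consider $\beta = \alpha \ox \id$ with $\alpha$ as above. If $\beta = \Ad_W$ for some unitary $W \in R \ox B(\K)$, then $W$ commutes with $1 \ox B(\K)$, whence $W \in (1 \ox B(\K))' \cap (R \ox B(\K)) = R \ox 1$; writing $W = W_R \ox 1$ would then force $\alpha = \Ad_{W_R}$, a contradiction. For type $\III$, pick any faithful normal state $\varphi$ on $\M$ (which exists by $\sigma$-finiteness) and invoke Takesaki's characterization: a factor is semifinite iff its modular automorphism group consists entirely of inner automorphisms. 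Since $\M$ is type $\III$, some $\sigma^\varphi_t$ must therefore be outer. The main obstacle is the outerness of the product-type automorphism on $R$, which relies on a delicate approximation argument in Takesaki's book; once it is in hand, the reductions for types $\II_\infty$ and $\III$ are brief.
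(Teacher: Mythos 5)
Your proof is correct and follows essentially the same strategy as the paper: the core of the type $\II$ case is the outer product-type automorphism $\bigotimes_n \Ad_{u_0}$ on the infinite tensor product of $M_2(\CC)$'s, justified by the same reference [Takesaki, Thm.~XIV.1.13], and the type $\III$ case rests on outerness of the modular flow (you via Takesaki's semifiniteness criterion, the paper via the Connes invariant $T(\M)$ having Lebesgue measure zero — two phrasings of the same fact). The one organizational difference is in type $\II_\oo$: you reduce to the $\II_1$ case by writing $\M\cong R\ox\B(\K)$ and running a relative-commutant argument to show $\alpha\ox\id$ stays outer, whereas the paper avoids this reduction by realizing both type $\II$ subtypes uniformly as ITPFI factors $\bigotimes_n(M_2(\CC);\omega_n)$ with $\omega_n$ either tracial or pure, so that the single criterion from [Takesaki, Thm.~XIV.1.13] (applicable because $\tr u_0=0$ on the infinitely many tracial components) covers both at once. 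Your reduction is perfectly sound and arguably more self-contained, since it does not require knowing that the hyperfinite $\II_\oo$ factor is ITPFI with that particular reference-state sequence; the paper's route is slightly shorter once that representation is granted. You also supply the forward (type $\I$) direction explicitly, which the paper takes as classical and omits.
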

\begin{proof}
    For type $\III$ factors with separable predual, we can always find times $t\in\RR$ such that the modular flow $\sigma_t^\psi$ with respect to some a normal semifinite weight $\psi$ on $\M$ is outer.
    In fact, the Connes invariant $T(\M)$ (which consists of those times $t$ for which the modular flow is inner) of a type $\III$ factor with separable predual always has Lebesgue measure zero \cite[Prop.~27.2]{stratila_modular_2020}.
    Approximately finite-dimensional type $\II$ factors can be as written infinite tensor products $\M=\bigotimes_{n\in\NN}\,(M_2(\CC);\omega_n)$ where $\omega_n\in\{ \frac12\tr_2, \bra1\placeholder\ket1\}$ for all $n$ (one gets $\II_1$ if and only if $\omega_n=\frac12\tr_2$ for all but finitely many $n$ and $\II_\oo$ if both kinds appear infinitely often). Let $u_0=\mathrm{diag}(1,-1)\in M_2(\CC)$ such that $u_0\omega_n u_0^*=\omega_n$ for all $n$. However, since $\tr(u_0)=0$, the automorphism $\alpha = \otimes_{n\in\NN} u_0^*(\placeholder)u_0$ on $\M$ is outer \cite[Thm.~XIV.1.13]{takesaki3}.
\end{proof}

The assumption of approximate finite dimensionality is only essential in the type $\II$ case.
Indeed, any type $\III$ factor with separable predual admits outer automorphisms (see the proof of \cref{lem:inner iff typeI}).

\begin{proposition}\label{prop:l=lp iff typeI}
    Let $(\M_A,\M_B)$ be a factorial standard bipartite system on $\H$.
    If $\M_A$ and $\M_B$ are type $\I$, then every locality-preserving operation is local.
    The converse holds if $\M_A$ and $\M_B$ are approximately finite-dimensional.
\end{proposition}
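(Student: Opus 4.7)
My plan is to leverage the correspondence between local operations and inner cp maps (\cref{prop:local_op,lem:inner}) together with the unitary implementation of automorphisms in standard form (\cref{lem:auts-lp}), so that both directions reduce to the characterization of type $\I$ AFD factors via innerness of automorphisms (\cref{lem:inner iff typeI}).

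For the forward direction, assume $\M_A$ and $\M_B$ have type $\I$. Combined with factoriality, standardness, and the Haag duality that these entail, we may identify $\H\cong\H_0\ox\H_0$ with $\M_A = \B(\H_0)\ox 1$ and $\M_B = 1\ox\B(\H_0)$, so in particular $\M_A\vee\M_B = \B(\H)$. Let $T\colon \B(\H)\to\B(\H)$ be a locality-preserving operation. Since $\M_A$ and $\M_B$ are type $\I$, every normal ucp self-map admits a Kraus representation with local Kraus operators; in particular
\begin{align*}
T|_{\M_A}(a) = \sum_\alpha k_\alpha^* a k_\alpha, \ k_\alpha\in\M_A, \qquad T|_{\M_B}(b) = \sum_\beta l_\beta^* b l_\beta, \ l_\beta\in\M_B.
\end{align*}
Set $T' = T_A'\circ T_B'$, where $T_A'(x) = \sum_\alpha k_\alpha^* x k_\alpha$ and $T_B'(x) = \sum_\beta l_\beta^* x l_\beta$ on all of $\B(\H)$; by \cref{prop:local_op} this is a local operation. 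Using the commutativity $[k_\alpha, l_\beta]=0$ and locality preservation, a direct computation yields $T'(ab) = T|_{\M_A}(a)\cdot T|_{\M_B}(b) = T(a)T(b) = T(ab)$ for all $a\in\M_A$, $b\in\M_B$. Since $T$ and $T'$ are normal and agree on the ultraweakly dense linear span of such products, $T = T'$ is local.

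For the backward direction, assume $\M_A$ and $\M_B$ are approximately finite-dimensional and that every locality-preserving operation is local. By \cref{lem:inner iff typeI}, it suffices to show that every automorphism of $\M_A$ is inner (the argument for $\M_B$ being symmetric). Given $\alpha\in\Aut(\M_A)$, \cref{lem:auts-lp} supplies a locality-preserving unitary $u_\alpha$ which implements $\alpha$ on $\M_A$. By hypothesis, the operation $T = u_\alpha^*(\placeholder) u_\alpha$ decomposes as $T = T_A\circ T_B$ with $T_A, T_B$ local. Haag duality $\M_A = \M_B'$ gives $T_B|_{\M_A} = \id_{\M_A}$, hence $\alpha = T|_{\M_A} = T_A|_{\M_A}$. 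By \cref{prop:local_op}, $T_A|_{\M_A}$ is an inner cp endomorphism of $\M_A$, and being also an automorphism, \cref{lem:inner auto} forces it to be an inner automorphism. Thus $\alpha$ is inner.

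The bookkeeping is straightforward, and the genuinely nontrivial inputs are packaged in \cref{lem:inner auto} (an inner cp map that is an automorphism is implemented by a unitary of the algebra) and \cref{lem:inner iff typeI} (characterizing type $\I$ among AFD factors via outer automorphisms). Haag duality, built into the standard hypothesis, plays a key structural role: in the backward direction it collapses $T_B|_{\M_A}$ to the identity so that the cp-level factorization passes algebraic content onto $\M_A$; in the forward direction it identifies $\M_A\vee\M_B$ with $\B(\H)$ so that the density-plus-normality argument concludes $T = T'$.
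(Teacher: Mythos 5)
Your proof is correct and follows essentially the same route as the paper: both directions reduce to \cref{lem:inner iff typeI} via the innerness of the relevant (restricted) maps, with the backward direction using the locality-preserving implementation $u_\alpha$ of \cref{lem:auts-lp} and the innerness characterizations of \cref{prop:local_op,lem:inner auto}, exactly as the paper does through \cref{lem:OutM}. Your forward direction is in fact spelled out in more detail than the paper's one-line citation, supplying the Kraus-decomposition-plus-density argument needed to pass from locality-preserving unitaries to general locality-preserving operations.
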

\begin{proof}
    This follows from \cref{lem:OutM,lem:inner iff typeI}.
\end{proof}

As mentioned above, if it is known that $\M_A$ and $\M_B$ are not of type $\II$, then the assumption of approximate finite dimensionality may be dropped.

\subsection{LOCC protocols }
\label{Sec:locc-protocols}
A general LOCC protocol on an $N$-partite system $(\M_1,\ldots,\M_N)$ consists of a finite number of rounds of classical communication between agents interspersed by local quantum instruments applied by the $N$ agents (see \cite{chitambar_everything_2014} for details in the finite-dimensional setting). 
In each communication round, the outcomes of the instruments applied by each agent are communicated to the other agents, and the subsequent instruments applied by the agents will, in general, depend on the received messages, i.e., the outcomes of prior instruments. 
The overall outcome of the LOCC protocol applied to the input state $\psi$ is given by a classical outcome $x$ in some outcome space $X$ incorporating all outcomes of the steps of the protocol together with a quantum state 
\begin{align}
    \psi_x = \frac{1}{p_x} \psi\circ T_x,\quad p_x = \psi(T_x(1)),
\end{align}
where the instrument $\{T_x\}_{x\in X}$ takes the form
\begin{align}\label{eq:overall-LOCC-instrument}
    T_x = \prod_{j\in [N]}T_x^{(j)}
\end{align}
and each $\{T_x^{(j)}\}_{x\in X}$ determines an instrument on the $j$th subsystem. 
We call $\{T_x\}_{x\in X}$ the \emph{overall instrument} of a given LOCC protocol with local instruments $\{T\up j_x\}_{x\in X}$.  
Not every instrument of the form \cref{eq:overall-LOCC-instrument} appears as the overall instrument of a LOCC protocol.

\begin{definition}[LOCC transitions]
Let $(\M_1,\ldots,\M_N)$ be a multipartite system and let $\psi,\phi$ be normal states on $\bigvee_x\M_x$.
We say that a state $\psi$ \emph{can be transformed into $\phi$ with probability $p\in [0,1]$ via LOCC} if there exists an LOCC protocol with overall instrument $\{T_x\}_{x\in X}$ such that
\begin{align}
    \sum_{x:\ \psi_x = \phi} \psi(T_x(1)) \geq p. 
\end{align}
We write $\psi \locc \phi$ (resp.\ $\psi\slocc\phi$) if $\psi$ can be transformed into $\phi$ with $p=1$ (resp.\ $p>0$).
Finally, we write $\psi \barlocc \phi$ (resp.\ $\psi \barslocc \phi$) if for every $\eps>0$ there is a state $\phi'$ with $\norm{\phi-\phi'}<\eps$ and $\psi\xrightarrow{\LOCC} \phi$ (resp.\ $\psi\xrightarrow{\LOCC} \phi$).
\end{definition}

The following implications are obvious from the definition:
\begin{equation}
\begin{tikzcd}
    \quad\, \psi\locc\phi \quad\, \arrow[Rightarrow]{r} \arrow[Rightarrow]{d} &  \quad\, \psi\barlocc\phi \quad\, \arrow[Rightarrow]{d} \\
     \quad\psi\slocc\phi \quad \arrow[Rightarrow]{r} &  \quad\psi\barslocc\phi  \quad
    \end{tikzcd}
\end{equation}

\begin{remark}[LOCC in ground state sectors]
    Consider a ground state sector of a quantum many-body system of particles localized on the sites of a lattice $\Gamma$ (see \cref{sec:examples}).
    For every bipartition $\Gamma = A\cup B$, we can consider the bipartite system $(\M_A,\M_B)$.
    Since $\M_A$ is the $\sigma$-strong closure of the operators with finite support in $A$, the Kraus operators appearing in the local instruments of Alice realizing an LOCC state transition $\psi\xrightarrow{\LOCC}\phi$ may be approximated to arbitrary accuracy by Kraus operators with finite support within $A$ (and similarly for Bob).
    Thus, the state transition may be realized (up to arbitrarily small errors) via an LOCC protocol whose instruments all have finite support on the lattice. Therefore $\overline\LOCC$ transitions can equivalently be defined with finitely localized operations.
    We may view the von Neumann algebras $\M_A,\M_B$ as idealized objects that do not have direct physical significance but allow us to prove statements about the physically relevant case of finitely localized LOCC protocols up to any finite error.
\end{remark}

\begin{remark}
    LOCC transitions are also considered in the case that the observable algebras are general von Neumann algebras (including non-factors) in \cite{crann_state_2020}.
    In this case, the local operations are still described by inner operations.
    As a consequence, they do not allow the local parties to manipulate their respective local classical degrees of freedom (described by the center of the algebra). 
    It would be interesting to better understand LOCC transitions that allow for the manipulation of classical degrees freedom. 
\end{remark}

\section{Pure state LOCC}

In this section, we specialize in LOCC transformations between pure states of bipartite systems of factors, i.e., purely quantum systems, in Haag duality. 
We first discuss stochastic LOCC (SLOCC) transformations and then generalize Nielsen's theorem to bipartite systems of factors of arbitrary type. 
We then use Nielsen's theorem to discuss pure state LOCC transformations on different types of bipartite systems.
Since we consider pure state transformations on irreducible systems, in the following, we identify pure states with their vector representatives on $\H$. 
In particular, we use the notation $\Psi\xrightarrow{\LOCC}\Phi$ (and similarly for SLOCC)

As a direct application of \cref{lem:coarse-graining} we find:
\begin{lemma}\label{lem:pure state locc and kraus ops}
    Consider a bipartite system of factors $(\M_A,\M_B)$ on $\H$ and let $\Psi,\Phi\in\H$ be unit vectors such that $\Psi\xrightarrow{\LOCC}\Phi$.
    Then there exists a discrete outcome space $Y$, an LOCC protocol with overall instrument $\{T_y\}_{y\in Y}$ and local rank-1 instruments $T_{A,y} = k_{A,y}^*(\placeholder) k_{A,y}$ and $T_{B,y} = k_{B,y}^*(\placeholder) k_{B,y}$, such that 
    \begin{align}
        k_{A,y}k_{B,y}\Psi \ \propto\ \Phi \qquad \text{for all $y\in Y$}.\ \,
    \end{align}
    Similarly, if $\Psi \xrightarrow{\SLOCC} \Phi$ there exists a suitable LOCC protocol  such that
    \begin{align}
        \ \, k_{A,y}k_{B,y} \Psi \ \propto\ \Phi\qquad \text{for some $y\in Y$},
    \end{align}
    where $k_{A,y}\in \M_A$, $k_{B,y}\in \M_B$ are Kraus operators of the associated overall instrument.
\end{lemma}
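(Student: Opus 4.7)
The plan is to take any LOCC protocol realising the given transition and refine every round's local instruments into rank-$1$ pieces by invoking \cref{lem:coarse-graining}, then read off the claim. By definition of $\Psi\xrightarrow{\LOCC}\Phi$, there is an LOCC protocol with outcome space $X$ and overall instrument $\{T_x\}_{x\in X}$, $T_x = T^{(A)}_x\circ T^{(B)}_x$, for which $\psi_x=\phi$ holds for every $x$ with positive probability $p_x:=\psi(T_x(\1))$. By \cref{prop:local_op}, each local instrument in each round of the protocol has Kraus operators in $\M_A$, respectively $\M_B$, and \cref{lem:coarse-graining} then lets us view each such instrument as the coarse-graining of one whose components all have Kraus rank one. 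Alice and Bob simply run the finer instruments and communicate the finer outcomes, producing a refined LOCC protocol with an outcome space $Y$ that refines $X$.

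A composition of rank-$1$ inner maps is again rank-$1$: if $T_i=k_i^*(\placeholder)k_i$, then $T_n\circ\cdots\circ T_1=(k_1\cdots k_n)^*(\placeholder)(k_1\cdots k_n)$. Hence Alice's total contribution to an outcome $y\in Y$ is a single rank-$1$ instrument $T_{A,y}=k_{A,y}^*(\placeholder)k_{A,y}$ with $k_{A,y}\in\M_A$, and similarly $T_{B,y}=k_{B,y}^*(\placeholder)k_{B,y}$ with $k_{B,y}\in\M_B$. Because $\M_A$ and $\M_B$ commute, the overall Kraus operator for outcome $y$ is $k_{A,y}k_{B,y}$, and the refined protocol still realises the transition $\Psi\xrightarrow{\LOCC}\Phi$.

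Finally, for each $y\in Y$ with $p_y=\norm{k_{A,y}k_{B,y}\Psi}^2>0$ and $\psi_y=\phi$, the unit vectors $k_{A,y}k_{B,y}\Psi/\norm{k_{A,y}k_{B,y}\Psi}$ and $\Phi$ induce the same pure state on $\M_A\vee\M_B$ and therefore differ only by a phase, giving $k_{A,y}k_{B,y}\Psi\propto\Phi$. The SLOCC statement follows by restricting the same argument to a single $y\in Y$ with positive probability at which $\psi_y=\phi$. The only step requiring attention is that refining an instrument within a round preserves the LOCC structure of the protocol, but this is automatic: any party may always perform a finer instrument and broadcast the refined outcome. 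There is no substantial obstacle; the lemma is essentially a direct assembly of \cref{prop:local_op} and \cref{lem:coarse-graining} together with the uniqueness of vector representatives of pure states up to phase.
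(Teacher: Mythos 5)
Your proof is correct and follows essentially the same route as the paper, which presents this lemma precisely as a direct application of \cref{lem:coarse-graining}: refine each round's local instrument into rank-$1$ components, compose the Kraus operators within each party, and use irreducibility of the bipartite system to conclude that two unit vectors inducing the same state on $\M_A\vee\M_B=\B(\H)$ agree up to a phase. The only point worth making explicit is that outcomes $y$ of probability zero give $k_{A,y}k_{B,y}\Psi=0$, which is still (trivially) proportional to $\Phi$, while every outcome with $p_y>0$ must satisfy $\psi_y=\phi$ because the $\phi$-achieving outcomes already carry total probability $1$; together these yield the claim for \emph{all} $y\in Y$.
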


\subsection{Majorization on von Neumann algebras}
\label{sec:majorization-mt}

The most important implication of Nielsen's theorem is that it transfers questions about pure state entanglement to questions in majorization theory, i.e., classical probability theory. For example, it allows us to construct a pure state entanglement monotone from any convex function on $\RR$, and, as we will see, the same holds true in systems with infinitely many degrees of freedom.

Most discussions of majorization theory restrict to finite-dimensional probability vectors or discrete probability distributions, which is sufficient to discuss pure state LOCC for finite-dimensional quantum systems. 
However, it is insufficient for quantum systems with infinitely many degrees of freedom, to which Nielsen's theorem also applies, as we will show. 
In the next sections, we will make use of majorization theory on general $\sigma$-finite measure spaces as well as on semifinite von Neumann algebras, much of which was developed in \cite{alberti_stochasticity_1982,kamei1983majorization,kamei1984double,petz1985scale,hiai_majorization_1987,hiai_majorizations_1987,fack1986generalized,hiai1989distance}. 
Since this material is treated differently by various authors, and since for the application to LOCC, we are only interested in majorization theory for positive functionals; we provide a self-contained and comprehensive, independent treatment of majorization theory on $\sigma$-finite measure spaces and semifinite von Neumann algebras in \cref{app:majorization}.
Our treatment is guided by reducing to the classical (commutative) case as quickly as possible. 
We believe this appendix to be of independent interest and now provide a brief summary  of those results that we will use in the application to LOCC in the following sections. 

To set the stage, we briefly recall the standard results of finite-dimensional majorization theory. For any probability distribution $p$ on $[d]$ we define the \emph{Lorenz curve} $L_p:[0,d]\to [0,1]$ as
\begin{align}
    L_p(t) := \sup\bigg\{ \sum_{s=1}^d p(s) a(s)\quad :\quad 0\leq a(s) \leq 1,\ \sum_{s=1}^d a(s) \leq t\bigg\}.
\end{align}
If $\rho\in M_d(\CC)$ is a $d$-dimensional density matrix we define $L_\rho(t) := L_{\lambda_\rho}(t)$, where $\lambda_\rho(t)$, $t\in[d]$, are the eigenvalues of $\rho$ in non-increasing order (repeated according to their multiplicity).
It follows that 
\begin{align}
   L_\rho(t) = \sup\Big\{\Tr(\rho a)\quad :\quad a\in M_d(\CC),\  0\leq a\leq 1, \ \Tr a\leq t\Big\}. 
\end{align}
Recall that a cp map $T:M_d(\CC)\to M_d(\CC)$ is called \emph{doubly stochastic} if $T(1)=1$ and $\Tr T(a) = \Tr a$ for all $a\in M_d(\CC)$.

\begin{lemma}\label{lem:majorization-finite-dimensional}
For any two density matrices $\rho,\sigma\in M_d(\CC)$ the following are equivalent:
\begin{enumerate}
    \item There exists a probability distribution $\{p_x\}$ and unitaries $u_x\in M_d(\CC)$ such that $\sigma = \sum_x p_x u_x \rho u_x^*$.
    \item There exists a doubly stochastic cp map $T:M_d(\CC)\to M_d(\CC)$ such that $T(\rho)= \sigma$. 
    \item For any convex function $f:\RR^+\rightarrow \RR^+$ we have
    \begin{align}
        \Tr f(\rho) \geq \Tr f(\sigma). 
    \end{align}
    
    \item For any $t\in [0,d]$ we have
    \begin{align}
        L_\rho(t) =L_{\lambda_\rho}(t) \geq L_{\lambda_\sigma}(t) = L_\sigma(t).
    \end{align}
    \end{enumerate}
    If any of these conditions are true, we say that $\rho$ majorizes $\sigma$, denoted by $\rho\succeq \sigma$. 
\end{lemma}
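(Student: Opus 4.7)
The plan is to reduce this quantum majorization statement to its classical counterpart on eigenvalue vectors, since all four conditions are in fact spectral invariants of $\rho$ and $\sigma$. By the spectral theorem we can write $\rho=U D_\rho U^*$ and $\sigma=V D_\sigma V^*$ with $D_\rho,D_\sigma$ diagonal containing the non-increasing eigenvalue sequences $\lambda_\rho,\lambda_\sigma$. Both the trace functionals $\Tr f(\cdot)$ in (3) and the Lorenz curves in (4) depend only on these spectra, and the variational formula given for $L_\rho$ will allow us to pass freely between operator and scalar pictures. Once we are reduced to the spectra, the statement is the classical Hardy–Littlewood–Pólya theorem, and I would organize the proof as the chain (1)$\Rightarrow$(2)$\Rightarrow$(3)$\Rightarrow$(4)$\Rightarrow$(1).

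The first two implications are immediate: for (1)$\Rightarrow$(2), the map $T(a)=\sum_x p_x u_x a u_x^*$ is unital, completely positive and trace-preserving, hence doubly stochastic, and obviously satisfies $T(\rho)=\sigma$. For (2)$\Rightarrow$(3), I would use Jensen/Uhlmann: writing $\rho=\sum_i\lambda_i p_i$ and $\sigma=\sum_j\mu_j q_j$ in spectral form, the entries $s_{ji}=\Tr(q_j T(p_i))$ form a doubly stochastic $d\times d$ matrix with $\mu_j=\sum_i s_{ji}\lambda_i$; convexity of $f$ then gives $f(\mu_j)\le\sum_i s_{ji}f(\lambda_i)$, and summing over $j$ yields $\Tr f(\sigma)\le\Tr f(\rho)$. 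For (3)$\Rightarrow$(4), I would apply (3) to the family of convex functions $f_\mu(x)=(x-\mu)_+$, $\mu\ge0$; the resulting inequality $\sum_i(\lambda_i^\rho-\mu)_+\ge\sum_i(\lambda_i^\sigma-\mu)_+$ is the Ky Fan–type dual statement to $L_\rho\ge L_\sigma$, which can be verified directly from the variational formula defining $L_\rho$ (the Lorenz curve is exactly the concave envelope obtained by Legendre transform of $\mu\mapsto\sum_i(\lambda_i-\mu)_+ + t\mu$).

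The main obstacle is (4)$\Rightarrow$(1), which is the nontrivial direction and essentially the classical Hardy–Littlewood–Pólya theorem followed by a matrix reassembly. Assuming $L_\rho\ge L_\sigma$, one first shows at the level of vectors that $\lambda_\sigma=S\lambda_\rho$ for some doubly stochastic $d\times d$ matrix $S$; the standard argument constructs $S$ by an explicit greedy/T-transform procedure on partial sums. Then Birkhoff's theorem expresses $S=\sum_x p_x P_x$ as a convex combination of permutation matrices $P_x$. To lift back to operators, observe that $P_x D_\rho P_x^*$ is diagonal with entries $P_x\lambda_\rho$, so $\sum_x p_x P_x D_\rho P_x^* = D_\sigma$; conjugating with the diagonalizing unitaries yields $\sigma=\sum_x p_x u_x\rho u_x^*$ with $u_x=V P_x U^*$, as required. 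The only step that requires genuine work here is the Hardy–Littlewood–Pólya construction of $S$; everything else is bookkeeping with spectral data.
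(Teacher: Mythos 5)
Your argument is correct, and all four implications go through as you describe. Note, however, that the paper does not actually prove this lemma: it is stated as a recollection of standard finite-dimensional majorization theory, and the real work is deferred to the general machinery of \cref{app:majorization}. Specialized to $M_d(\CC)$, that machinery takes a genuinely different route from yours: the noncommutative problem is reduced to the commutative one by passing to the abelian subalgebra generated by the density via the trace-preserving conditional expectation (\cref{lem:sf_subalg}), the commutative equivalences are established on general $\sigma$-finite measure spaces by discretization together with a compactness argument for doubly substochastic operators (\cref{thm:submaj}), and the unitary-mixture characterization --- your item 1 --- is obtained from Hiai's theorem on unitary orbits (\cref{thm:factor majorization}) rather than from Birkhoff. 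Your route, Hardy--Littlewood--P\'olya via T-transforms followed by Birkhoff's theorem, is the standard elementary finite-dimensional argument; it buys a fully constructive statement with finitely many unitaries and no closures, but the Birkhoff step is precisely what fails to generalize beyond type $\I_n$, which is why the paper's general theorems must replace it by norm closures of unitary orbits.

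Two points are worth spelling out if you write this up. First, the identity $L_\rho(t)=L_{\lambda_\rho}(t)$ asserted inside item 4 is Ky Fan's maximum principle and deserves a line of proof from the variational formula, since you use it to pass between the operator and scalar pictures. Second, in your step (2)$\Rightarrow$(3) the spectral projections $p_i,q_j$ must be taken rank one (eigenvalues listed with multiplicity) so that $\Tr p_i=\Tr q_j=1$ and the matrix $[s_{ji}]$ is genuinely doubly stochastic; with that convention the zero eigenvalues of $\rho$ and $\sigma$ contribute equal multiples of $f(0)$ to both sides, so no hypothesis $f(0)=0$ is needed.
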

We use the first item to generalize the definition of majorization to states on arbitrary factors:
\begin{definition}
    Let $\M$ be a factor (of arbitrary type) and $\psi,\phi\in\M_*$ be states on $\M$. The state $\phi$ majorizes $\psi$, written $\phi\succeq \psi$, if
    \begin{align}
\psi \in \overline{\mathrm{conv}}\{u\phi u^*\ :\ u\in \U(\M)\},
    \end{align}
    where the closure is taken with respect to the norm topology on $\M_*$.
\end{definition}

Note that $\rho \mapsto \sum_x p_x u_x \rho u_x^*$ is a doubly stochastic cp map and corresponds precisely to the transformation that occurs on the level of the local marginals in a pure state LOCC transformation for finite-dimensional quantum systems.
In bipartite systems where the local factors are not finite, we will see that LOCC transformations are directly related to transformations of the type $\phi \mapsto \psi = \sum_x p_x v_x\phi v_x^*$, where $v_x$ are partial isometries instead of unitaries. This mapping is not necessarily doubly stochastic.
It is, however, \emph{doubly substochastic} and takes the relevant input state (i.e., $\phi$) to a properly normalized output state (i.e., $\psi$).

This difference is directly reflected in majorization theory and is the reason why we have to take a closure in our definition of majorization. We will see that majorization trivializes if $\M$ has type $\III$ (see \cref{sec:trivialization-III}). In the following we therefore restrict to semifinite von Neumann algebras. When dealing with majorization theory on general semifinite von Neumann algebras $\M$, there are three essential changes we have to make: 
\begin{enumerate}
\item Instead of mixtures of unitaries, we have to consider mixtures of partial isometries. 

\item Instead of considering doubly stochastic channels, we consider normal, completely positive maps $T:\M\to \M$, which are \emph{doubly substochastic}: $T(1_\M) \leq 1_\M$ and $\Tr_\M \circ T \leq \Tr_\M$. 
\end{enumerate}

We describe the definitions required in the general case and state the analog of \cref{lem:majorization-finite-dimensional}. 
In the following, $\M$ is a semifinite von Neumann algebra with trace $\Tr$. We denote by $L^p(\M):= L^p(\M,\Tr)$ the $L^p$ spaces of $\M$ with respect to its trace (see, for example, \cite[Sec. IX.2]{takesaki2} or \cite[Sec.~]{hiai_lecture_2021}). It is well-known that $L^1(\M)$ is isometrically isomorphic to $\M_*$. Hence,
every normal state $\psi$ on $\M$ can be identified with its density $\rho_\psi \in L^1(\M)^+$ defined by $\psi(x) = \Tr \rho_\psi x$. Therefore, we discuss majorization theory at the level of densities.

The definition of the Lorenz curve immediately generalizes to elements $\rho\in L^1(\M)^+$ by
\begin{align}\label{eq:lorenz curve main}
    L_\rho(t) := \sup\big\{\Tr(\rho a)\quad:\quad a\in \M,\ 0\leq a\leq 1,\ \Tr a\leq t\big\}. 
\end{align}
The non-increasingly eigenvalues from before are generalized using the notion of \emph{spectral scale} introduced by Petz \cite{petz1985scale}. 
The \emph{distribution function} of a density $\rho\in L^1(\M)^+$ is the right-continuous non-increasing function
\begin{align}
    D_\rho(t) := \Tr(\chi_{[t,\infty)}(\rho) ), \quad t > 0.
\end{align}
Here, $\chi_A$ denotes the indicator function of a set $A$. The spectral scale of $\rho$ is then defined by
\begin{align}
    \lambda_\rho(t) := \inf\{s >0 : D_\rho(s)\leq t\},\quad t > 0. 
\end{align}
If $f$ is a function on $\RR^+$ we have
\begin{align}
    \Tr f(\rho) = \int_{\RR^+} (f\circ \lambda_\rho)(t) dt.
\end{align}
In particular, the spectral scale of a normalized density $\rho$ is a probability density in $L^1(\RR^+)$. 
The spectral scale $\lambda_\rho$ of a density $\rho\in L^1(\M)^+$ coincides with the \emph{generalized s-numbers} of Fack and Kosaki \cite{fack1986generalized} (see also \cite[Sec.~4.2]{hiai_lecture_2021}).

\begin{definition}
    Let $\M$ and $\N$ be semifinite von Neumann algebras and let $\rho\in L^1(\M)^+$ and $\sigma \in L^1(\N)^+$. 
    We say that $\rho$ \emph{submajorizes} $\sigma$, denoted $\rho\succ_w \sigma$, if $L_\rho(t)\ge L_\sigma(t)$ for all $t\ge0$, and that $\rho$ \emph{majorizes} $\sigma$, denoted $\rho\succ \sigma$, if $\rho\succ_w \sigma$ and $\Tr \rho = \Tr \sigma$.
\end{definition}

This definition applies to general semifinite von Neumann algebras, not just factors. Furthermore, we allow to compare densities on different von Neumann algebras. 
In particular, we may compare $\rho$ with its spectral scale $\lambda_\rho$ and find that (see \cref{cor:marjorization is classical})
\begin{align}
\rho \succ \lambda_\rho \succ \rho,
\end{align}
where we view $\lambda_\rho$ as an element of $L^1(X,\mu)$, where $X=\Sp(\rho)\setminus\{0\}$ and $\mu$ is the measure induced by $\rho$ via $\mu(\Omega) = \Tr \chi_\Omega(\rho)$. 
This allows us to transfer results from the commutative case to the noncommutative case. We have:

\begin{theorem}[Submajorization, see \cref{thm:q submaj}]
    Let $\M$ and $\N$ be semifinite von Neumann algebras and let $\rho\in L^1(\M)^+$ and $\sigma\in L^1(\N)^+$. The following are equivalent:
    \begin{enumerate}[(a)]
        \item\label{it:q submaj1-mt} $\rho \succ_w \sigma$,
        \item\label{it:q submaj2-mt} $\lambda_\rho \succ_w \lambda_\sigma$, 
        \item\label{it:q submaj3-mt} $\Tr f(\rho) \ge \Tr f(\sigma)$ for all continuous convex functions $f:\RR^+\to\RR^+$ with $\phi(0)=0$,
        \item\label{it:q submaj4-mt} there exists a doubly substochastic cp map $T$ from $\M$ to $\N$ such that $T(\rho)=\sigma$.
    \end{enumerate}
    The operator $T$ can be chosen to satisfy $T(\supp \rho)\le \supp \sigma$ and $T^*(\supp \sigma) \le \supp \rho$.
\end{theorem}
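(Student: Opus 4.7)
The plan is to exploit the reduction-to-classical principle emphasized in \cref{sec:majorization-mt}. The two identities $\Tr f(\rho) = \int_0^\infty f(\lambda_\rho(t))\,dt$ and $L_\rho(t) = \int_0^t \lambda_\rho(s)\,ds = L_{\lambda_\rho}(t)$ translate every trace- and Lorenz-based statement about a density $\rho\in L^1(\M)^+$ into the same statement about its spectral scale $\lambda_\rho\in L^1(\RR^+)^+$; both identities are standard properties of generalized $s$-numbers and are set up earlier in \cref{app:majorization}.

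First I would dispatch the easy implications. The identity $L_\rho = L_{\lambda_\rho}$ gives (a)$\Leftrightarrow$(b) at once. For (a)$\Rightarrow$(c), I would approximate an arbitrary continuous convex $f:\RR^+\to\RR^+$ with $f(0)=0$ by non-negative combinations of the hockey-stick functions $f_s(x):=(x-s)^+$, $s\ge 0$, and invoke the Legendre identity $\Tr(\rho-s)^+ = \sup_{t\ge 0}[L_\rho(t)-st]$. The converse (c)$\Rightarrow$(a) applies (c) to each $f_s$ and recovers $L_\rho$ via $L_\rho(t) = \inf_{s\ge 0}[st + \Tr(\rho-s)^+]$. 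Finally (d)$\Rightarrow$(a) is a duality argument: the Heisenberg adjoint $T^*\colon\N\to\M$ of a doubly substochastic $T$ is subunital and sub-trace-preserving, so for any $a\in\N$ with $0\le a\le 1$ and $\Tr_\N a\le t$ one has $0\le T^*(a)\le 1$ and $\Tr_\M T^*(a)\le t$, whence $\Tr(\sigma a) = \Tr(\rho\,T^*(a)) \le L_\rho(t)$; taking the supremum over such $a$ yields (a).

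The one substantive direction is (b)$\Rightarrow$(d), where I would lift the classical analogue of the theorem for $\sigma$-finite measure spaces, which is an independent result proven first in \cref{app:majorization}. From $\lambda_\rho\succ_w\lambda_\sigma$ the classical theorem furnishes a doubly substochastic operator $K\colon L^1(\RR^+)\to L^1(\RR^+)$ with $K(\lambda_\rho)=\lambda_\sigma$ and the desired support bounds on both sides. To lift, I would identify $W^*(\rho)\subset\M$ with $L^\infty(\Sp(\rho)\setminus\{0\},\mu_\rho)$ under spectral theory, so that $\rho$ becomes the identity function, and compose with the non-increasing rearrangement to obtain a trace-preserving isomorphism $W^*(\rho)\cong L^\infty(\RR^+,\mathrm{Leb})$ sending $\rho\leftrightarrow\lambda_\rho$; do likewise for $\sigma\in W^*(\sigma)\subset\N$. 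Let $\E_\rho\colon\M\to W^*(\rho)$ be the normal unital trace-preserving conditional expectation (available since $\Tr_\M$ is semifinite on the abelian subalgebra $W^*(\rho)$) and let $\iota\colon W^*(\sigma)\hookrightarrow\N$ be the inclusion; both are doubly stochastic. Setting $T = \iota_*\circ K\circ(\E_\rho)_*$ at the level of densities yields a doubly substochastic map satisfying $T(\rho)=\sigma$ and inheriting the support bounds from $K$.

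The hard work is thus absorbed into the classical theorem on $\sigma$-finite measure spaces, which is why it is isolated and proved first in the appendix. Once it is in hand, the only delicate point in the lift above is that the identifications $W^*(\rho)\cong L^\infty(\RR^+)$ and $W^*(\sigma)\cong L^\infty(\RR^+)$ must be genuinely trace-preserving, so that $K$ outputs the specific density $\sigma$ and not merely some density equimeasurable with it inside $\N$; this is automatic once one sets up the isomorphisms through the non-increasing rearrangement of $\rho$ and of $\sigma$ respectively.
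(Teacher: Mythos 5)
Your proposal is correct and follows essentially the same route as the paper: every equivalence is reduced to the classical submajorization theorem on $\sigma$-finite measure spaces by transferring between $\rho$ and its spectral scale with mutually inverse DSS maps built from the functional calculus, the conditional expectation, and the rearrangement (this is exactly the content of \cref{cor:marjorization is classical} and \cref{lem:sf_subalg}). One small point to tighten: the trace-preserving conditional expectation onto the abelian algebra generated by $\rho$ exists only after first cutting down to the corner $\M_{\supp\rho}$ (the \emph{unital} algebra generated by $\rho$ in $\M$ need not be semifinite when $\rho$ has infinite co-support), but since the cut-down $a\mapsto(\supp\rho)a(\supp\rho)$ is itself DSS this does not affect your construction.
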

Let us briefly discuss the distinction between doubly substochastic and doubly stochastic normal cp maps. 
One may be worried that doubly substochastic maps are not valid quantum channels. 
While this is true, it does not impede the physical relevance, since for every doubly substochastic, normal, completely positive map $T$ on $\M$ such that $\psi=\phi\circ T$ for two states $\phi,\psi$ on $\M$, we can 
find a unital (but not trace-preserving), normal, completely positive map $\hat T$ such that $\psi = \phi\circ \hat T$ by setting
\begin{align}
    \hat T = T + (1-T(1))\omega(\placeholder),
\end{align}
where $\omega$ is some normal state on $\M$. 

The relation between doubly substochastic maps and doubly stochastic maps is similar to the relation between (partial) isometries and unitaries. 
Consider a proper isometry $v$ on an infinite-dimensional Hilbert space $\H$ (proper meaning that $v\H$ is a proper subspace). Then, $v$ cannot be extended to  a unitary. 
The map $T_{v} = v (\placeholder) v^*$ is only doubly substochastic because $T_{v}(1) = vv^*\ne 1$ even though it is trace-preserving. 
It can be extended to a unital (but not trace-preserving) map $\hat T_{v}$ so that $\langle \Psi,T_v(\placeholder)\Psi\rangle=\langle \Psi,v(\placeholder) v^*\Psi\rangle = \langle \Psi,\hat T_{v}(\placeholder)\Psi\rangle$ for all $\Psi \in vv^* \H$. 
As a concrete example, consider the shift $v\ket{n} = \ket{n+1}$ on $\H=\ell^2(\NN)$. Then $v\H = \overline{\mathrm{span}}\{\ket{n}:\ n\geq 2\}$ and we may set
\begin{align}
    \hat T_v = v(\placeholder)v^* + \ketbra{1}{1}\, \omega(\placeholder)
\end{align}
for some state $\omega$, since $1-T(1) = \ketbra{1}{1}$. 

We now specialize to factors. Combining a result of Hiai \cite{hiai_majorization_1987} with our results, we find:
\begin{theorem}[Majorization in semifinite factors, see \cref{thm:factor majorization}]\label{thm:factor-majorization-mt}
    Let $\M$ be a semifinite factor and let $\rho,\sigma\in L^1(\M)^+$.
    The following are equivalent
    \begin{enumerate}[(a)]
        \item\label{it:factor majorization1-mt} $\rho\succ \sigma$, 
        \item\label{it:factor majorization2-mt} $\sigma \in \overline{\conv}\{u\rho u^*\ :\ u\in\U(\M)\}$,
        \item\label{it:factor majorization3-mt} $\tr \rho = \tr \sigma$ and $\tr f(\rho) \ge \tr f(\sigma)$ for all continuous convex functions $f:\RR^+\to\RR^+$, 
        \item\label{it:factor majorization4-mt} $\lambda_\rho \succ \lambda_\sigma$,
    \end{enumerate}
    If $\M$ is finite, i.e., $\tr 1<\oo$, these conditions are equivalent to the existence of a doubly stochastic map on $\M$ with $T(\rho)=\sigma$.
\end{theorem}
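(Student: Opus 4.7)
The plan is to derive most equivalences directly from the preceding submajorization theorem and to import Hiai's unitary-orbit characterization for the one remaining step. The key observation is that $\rho\succ\sigma$ is the conjunction of $\rho\succ_w\sigma$ with the trace equality $\tr\rho=\tr\sigma$; this trace equality is built into (a), stated explicitly in (c), and automatic in (d) via $\int_0^\infty\lambda_\rho(s)\,ds=\tr\rho$. So all of (a), (c), (d) amount to a submajorization statement together with trace equality.

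The equivalence $(a)\Leftrightarrow(c)\Leftrightarrow(d)$ is then immediate from the submajorization theorem, which already identifies $\rho\succ_w\sigma$, $\lambda_\rho\succ_w\lambda_\sigma$, and the integral inequality in the case $f(0)=0$. The restriction $f(0)=0$ is removed under trace equality: if $f(0)=c>0$, then in the infinite-factor case both $\tr f(\rho)$ and $\tr f(\sigma)$ are $+\infty$ and the inequality is trivial, while in the finite-factor case the constants $c\,\tr 1$ cancel, and the argument reduces to the classical Hardy-Littlewood-Polya inequality on the spectral scales $\lambda_\rho\succ\lambda_\sigma$, which does not require positivity of the integrand.

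For $(a)\Leftrightarrow(b)$, the easy direction $(b)\Rightarrow(a)$ uses that the Lorenz curve defined in \eqref{eq:lorenz curve main} is a supremum of linear functionals in $\rho$, hence convex and lower semicontinuous on $L^1(\M)$, and is invariant under unitary conjugation (substitute $a\mapsto u^*au$ in the supremum); thus $L_\sigma\le L_\rho$ for any $\sigma$ in the closed convex hull of $\{u\rho u^*:u\in\U(\M)\}$, and trace equality is preserved in that closure. The reverse direction $(a)\Rightarrow(b)$ is exactly Hiai's theorem \cite{hiai_majorization_1987}, which I invoke as a black box; this is the step I expect to be the main obstacle to a fully self-contained proof, since it requires constructing an approximating net of mixed unitary conjugations directly from the spectral data.

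For the finite-factor addendum, one direction is trivial: a doubly stochastic $T$ with $T(\rho)=\sigma$ is in particular doubly substochastic, so the submajorization theorem plus trace preservation gives $\rho\succ\sigma$. Conversely, starting from (b), an approximating net $\sigma_n=\sum_i p_i^{(n)}u_i^{(n)}\rho\,(u_i^{(n)})^*\to\sigma$ furnishes doubly stochastic maps $T_n=\sum_i p_i^{(n)}\Ad(u_i^{(n)})$ with $T_n(\rho)\to\sigma$; in a finite factor, the set of doubly stochastic maps is compact in the point-$\sigma$-weak topology (by $\sigma$-weak compactness of the unit ball), so passing to a limit point $T$ yields $T(\rho)=\sigma$ and completes the argument.
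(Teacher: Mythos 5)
Your proposal is correct and follows the same route as the paper, which itself gives no detailed argument but simply states that the theorem follows by combining Hiai's unitary-orbit characterization \cite{hiai_majorization_1987} with the submajorization theorem (\cref{thm:q submaj}) and the commutative majorization theorem. Your write-up supplies exactly the details the paper leaves implicit — the reduction of (a), (c), (d) to submajorization plus trace equality, the removal of the $f(0)=0$ restriction, the lower-semicontinuity/convexity/unitary-invariance argument for the easy direction of (b), and the point-ultraweak compactness argument for the finite-factor addendum (cf.\ \cref{lem:compactness}) — and all of these steps are sound.
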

The closure in \cref{it:factor majorization2-mt} is taken in the norm topology on $L^1(\M)\cong \M_*$.

\subsection{Stochastic LOCC}\label{sec:slocc}
Before discussing LOCC transformations, we characterize SLOCC transformations of pure states.
\begin{theorem}\label{thm:slocc}
    Let $\M$ be a factor on $\H$ and let $\Psi,\Phi\in\H$ be unit vectors.
    Then SLOCC with respect to $(\M,\M')$ is characterized as
    \begin{equation}\label{eq:slocc}
        \Psi\slocc \Phi\quad \iff\quad \exists_{k\in\M,\,k'\in\M'}\ \ \text{s.t.}\ \ \Phi = kk'\Psi.
    \end{equation}
    Furthermore, the following are equivalent:
    \begin{enumerate}[(a)]
        \item\label{it:slocc1} $\Psi \barslocc \Phi$,
        \item\label{it:slocc2} $s_\phi \lesssim s_\psi$,
        \item\label{it:slocc3} $s_{\phi'}\lesssim s_{\psi'}$,
        \item\label{it:slocc4} There exist partial isometries $v\in\M$ and $v'\in\M'$ with $w^*w\Psi=\Psi$ and $\Phi \in [\M'w\Psi]\cap [\M w\Psi]$ where $w=vv'$,
        \item\label{it:slocc5} for each $\eps>0$ there exists a unitary $ u\in\M$ and $k'\in \M'$ such that $\norm{uk\Psi -\Phi} < \eps$.
    \end{enumerate}
\end{theorem}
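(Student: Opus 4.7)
The plan is to first establish the SLOCC characterization \eqref{eq:slocc}, then prove the equivalences (a)--(e) via the cycle $(b)\Rightarrow(e)\Rightarrow(a)\Rightarrow(b)$, derive $(b)\Leftrightarrow(c)$ from the exchange symmetry $\M\leftrightarrow\M'$ under which $\barslocc$ is invariant, and obtain $(d)$ as an algebraic repackaging of $(b)\,\&\,(c)$.

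For \eqref{eq:slocc}, the forward direction is immediate from Lemma~\ref{lem:pure state locc and kraus ops}, which supplies the local Kraus operators on a successful outcome. Conversely, given $\Phi\propto kk'\Psi$ with $k\in\M$ and $k'\in\M'$, rescale so $\|k\|,\|k'\|\le 1$ and use Kraus decompositions $\{k,\sqrt{1-k^*k}\}$ and $\{k',\sqrt{1-{k'}^*k'}\}$ to build two-outcome local instruments; conditioning on the first outcome for both parties realizes $\Psi\mapsto\Phi$ with probability $\|kk'\Psi\|^2>0$.

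For $(b)\Rightarrow(e)$: in the semifinite case, apply spectral truncation to the density $\rho_\phi$ to obtain vectors $\Phi_\eps$ with $\|\Phi_\eps-\Phi\|<\eps$ and supports $s_{\phi_\eps}\in\M$ of finite $D$-value; in any factor, such a support can be embedded into a projection $q_\eps\in\M$ unitarily conjugate to $s_\psi$ (by choosing $q_\eps\ge s_{\phi_\eps}$ with $D(q_\eps)=D(s_\psi)$ and $D(1-q_\eps)=D(1-s_\psi)$, a dimension-counting exercise). A conjugating unitary $u_\eps\in\U(\M)$ then sends $\Phi_\eps\in q_\eps\H$ to $u_\eps^*\Phi_\eps\in s_\psi\H=[\M'\Psi]$, which can be approximated by $k_\eps'\Psi$ with $k_\eps'\in\M'$, whence $\|u_\eps k_\eps'\Psi-\Phi\|<2\eps$. (In type $\III$ the statement is automatic since all non-zero projections are Murray--von Neumann equivalent.) The step $(e)\Rightarrow(a)$ is immediate from \eqref{eq:slocc}. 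For $(d)\Rightarrow(b)$, direct computation using $v^*v\Psi=\Psi$ (which forces $v^*v\ge s_\psi$, as $s_\psi$ is the minimal projection in $\M$ fixing $\Psi$) identifies the $\M$-projection of $[\M'w\Psi]$ as $vs_\psi v^*\sim s_\psi$; then $\Phi\in[\M'w\Psi]$ yields $s_\phi\le vs_\psi v^*\sim s_\psi$. Conversely, $(b)\,\&\,(c)\Rightarrow(d)$ follows by choosing partial isometries $v\in\M$ and $v'\in\M'$ with $v^*v=s_\psi$, $vv^*\ge s_\phi$, $v'^*v'=s_{\psi'}$, $v'v'^*\ge s_{\phi'}$.

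The main obstacle is $(a)\Rightarrow(b)$. From $\Psi\barslocc\Phi$ and \eqref{eq:slocc}, each approximant is of the form $\Phi_\eps\propto k_\eps k_\eps'\Psi$; setting $p_\eps:=[\M'k_\eps'\Psi]\le s_\psi$ in $\M$ and polar-decomposing $k_\eps p_\eps$ yields $s_{\phi_\eps}\lesssim p_\eps\le s_\psi$. Passing to the limit $\eps\to 0$ requires norm-closedness of $\{\omega\in\M_*^+:s_\omega\lesssim s_\psi\}$, equivalently lower semicontinuity of the Murray--von Neumann dimension function along norm-convergent normal states: in the semifinite case this follows from $L^1$-convergence of the generalized $s$-numbers (Fack--Kosaki) combined with Fatou-type lower semicontinuity of $\Tr$ applied to spectral projections $\chi_{[t,\infty)}(\rho_{\phi_\eps})\lesssim s_\psi$ at a.e.\ threshold $t$, while in type $\III$ the conclusion reduces to the triviality $s_\phi\ne 0$ for unit $\Phi$. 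Finally, $(b)\Leftrightarrow(c)$ follows from $(a)\Leftrightarrow(b)$ via the $\M\leftrightarrow\M'$ symmetry of $\barslocc$, $s_\psi$, and $s_\phi$.
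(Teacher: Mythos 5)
Your proposal is correct and rests on the same two analytic pillars as the paper's proof: the Kraus-operator characterization of a single successful SLOCC branch (for \eqref{eq:slocc} and for producing the approximants $k_\eps k_\eps'\Psi$), and lower semicontinuity of the Murray--von Neumann dimension of the support along norm-convergent states for the limit in (a)$\Rightarrow$(b) --- the latter is exactly the paper's \cref{lem:rank lsc}, proved there via the $L^1$-contractivity of $\rho\mapsto\lambda_\rho$, so your Fack--Kosaki/Fatou sketch is the intended argument. You differ in two places. First, you obtain (b)$\Leftrightarrow$(c) from the $\M\leftrightarrow\M'$ symmetry of $\barslocc$ once (a)$\Leftrightarrow$(b) is closed, whereas the paper invokes the Murray--von Neumann result (\cref{lem:nonstd schmidt rank}, $[\M'\Psi]\lesssim[\M'\Phi]$ iff $[\M\Psi]\lesssim[\M\Phi]$) directly; your route is legitimate since your cycle $(b)\Rightarrow(e)\Rightarrow(a)\Rightarrow(b)$ never uses (c), and it even recovers that lemma as a byproduct, though citing it is shorter. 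Second, you prove (b)$\Rightarrow$(e) directly by truncating $\rho_\phi$ to a finite-trace support and conjugating with an honest unitary obtained from unitary equivalence of projections, whereas the paper routes (b)\&(c)$\Rightarrow$(d)$\Rightarrow$(e) and at the last step asserts without proof that the partial isometry $v$ can be replaced by a unitary $u$ with $\norm{u\Psi-v\Psi}$ small; your construction makes that step explicit and is arguably more self-contained in the semifinite case. The only soft spot is your parenthetical that type $\III$ is ``automatic'': Murray--von Neumann equivalence of all nonzero projections gives partial isometries, not unitaries, so for (e) you still need either the complement-matching argument (handling the edge case $s_\phi=1$, $s_\psi\ne1$ by noting $\inf_{u}\phi(u(1-s_\psi)u^*)=0$) or an appeal to the density of the unitary orbit; this is the same gap the paper itself leaves open, so it is a matter of polish rather than a flaw in the approach.
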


We need the following Lemma, originally due to Murray-von Neumann (see \cite[Sec.~III.1]{murray_rings_1936}):

\begin{lemma}[{\cite[Cor.~2.7.10]{sakai_c-algebras_1998}}]\label{lem:nonstd schmidt rank}
    Let $\M$ be a von Neumann algebra on $\H$, let $\Psi,\Phi\in\H$ be unit vectors, let $\psi,\phi$ be the induced states on $\M$ and let $\psi',\phi'$ be the induced states on $\M'$.
    Then
    \begin{equation}\label{eq:nonstd schmidt rank}
        s_\psi=[\M'\Psi] \lesssim s_\phi =[\M'\Phi]\quad\iff\quad s_{\psi'}=[\M\Psi]\lesssim s_{\phi'}=[\M\Phi].
    \end{equation}
\end{lemma}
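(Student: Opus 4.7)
The plan is to prove the forward implication (the converse follows by symmetry between $\M$ and $\M'$) by constructing the required partial isometry in $\M'$ via a GNS argument.

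Suppose $v\in\M$ is a partial isometry with $v^*v=[\M'\Psi]$ and $vv^*\le[\M'\Phi]=s_\phi$, and set $\Xi:=v\Psi$. Since $v$ commutes with every $x'\in\M'$, a direct computation yields
\begin{equation*}
\langle\Xi,x'\Xi\rangle \;=\; \langle\Psi,v^*v\,x'\Psi\rangle \;=\; \langle\Psi,[\M'\Psi]x'\Psi\rangle \;=\; \langle\Psi,x'\Psi\rangle,
\end{equation*}
so $\Xi$ induces the same state on $\M'$ as $\Psi$, whence $[\M\Xi]=[\M\Psi]$. Simultaneously, the vector state induced by $\Xi$ on $\M$ has support $[\M'\Xi] = v[\M'\Psi]v^* = vv^* \le s_\phi$.

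The key step is to realize this state by a vector in the cyclic subspace $[\M\Phi]\H$. Under the canonical unitary $x\Phi \leftrightarrow [x]_\phi$, the action of $\M$ on $[\M\Phi]\H$ with cyclic vector $\Phi$ is the GNS representation of $\phi$. Restricting $\phi$ to the corner $s_\phi\M s_\phi$ gives a faithful normal state whose GNS construction is a standard form for $s_\phi\M s_\phi$, and which embeds isometrically into $[\M\Phi]\H$ via $[y]_{\phi_0}\mapsto y\Phi$ for $y\in s_\phi\M s_\phi$. Through this embedding and the existence of cone representatives in standard forms (\cref{sec:standard}), every normal state on $s_\phi\M s_\phi$—equivalently, every normal state on $\M$ with support $\le s_\phi$—admits a vector representative in $[\M\Phi]\H$. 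Applying this to the state induced by $\Xi$, I obtain $\Xi_0 \in [\M\Phi]\H$ satisfying $\langle\Xi_0,x\Xi_0\rangle = \langle\Xi,x\Xi\rangle$ for every $x\in\M$.

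Finally, I would define $v'\colon \M\Xi \to [\M\Phi]\H$ by $v'(x\Xi):= x\Xi_0$. The agreement of vector states on $\M$ renders this map both well-defined and isometric: if $x\Xi=y\Xi$, then $\|(x-y)\Xi_0\|^2 = \langle\Xi,(x-y)^*(x-y)\Xi\rangle = 0$, and $\|x\Xi_0\|^2 = \langle\Xi,x^*x\Xi\rangle = \|x\Xi\|^2$. Extending to $[\M\Xi]\H$ by continuity and by zero on its orthogonal complement yields a bounded partial isometry on $\H$ that commutes with the left $\M$-action, hence $v' \in \M'$; by construction $(v')^*v' = [\M\Xi] = [\M\Psi]$ and $v'(v')^* \le [\M\Phi]$, establishing $[\M\Psi] \lesssim [\M\Phi]$ in $\M'$. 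The main technical point to verify is the realization of states of support $\le s_\phi$ as vector states in $[\M\Phi]\H$; this reduces, through the identification with the GNS space of $\phi$, to the existence of cone representatives in the standard form of $s_\phi\M s_\phi$, which is the content of Haagerup's theorem recalled in the paper.
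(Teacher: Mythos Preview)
Your argument is correct. The paper itself does not supply a proof here: it cites \cite[Cor.~2.7.10]{sakai_c-algebras_1998} and notes the result goes back to Murray--von Neumann. Your route---replace $\Psi$ by $\Xi=v\Psi$ so that $[\M'\Xi]=vv^*\le s_\phi$ while $[\M\Xi]=[\M\Psi]$, invoke the standard form of $s_\phi\M s_\phi$ (which is $\sigma$-finite since $\phi$ restricts to a faithful normal state on it) to produce a vector $\Xi_0\in[\M\Phi]\H$ with the same $\M$-state as $\Xi$, then build the isometric intertwiner $x\Xi\mapsto x\Xi_0$---is a valid variant of the classical argument.

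The textbook proof (as in Sakai or Kadison--Ringrose) typically avoids standard-form machinery: once $\Xi\in[\M'\Phi]\H$, one observes that the map $x\Phi\mapsto x\Xi$ on $\M\Phi$ is well-defined (because $x\Phi=0$ forces $[\M'\Phi]\H\subset\ker x$, hence $x\Xi=0$) and closable, and the partial isometry from the polar decomposition of its closure lies in $\M'$ and witnesses $[\M\Xi]\lesssim[\M\Phi]$. Your approach trades this unbounded-operator computation for Haagerup's theorem, which is heavier structural input but yields a bounded intertwiner directly and stays within the toolkit the paper has already assembled in \cref{sec:standard}.
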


\begin{lemma}
    If $\Psi \barslocc\Phi$, then for each $\eps>0$, there exist $k\in\M$, $k'\in\M'$ such that $\norm{kk'\Psi-\Phi}<\eps$.
\end{lemma}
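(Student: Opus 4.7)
The plan is to unpack $\Psi\barslocc\Phi$ via an approximating state $\phi'$ with $\Psi\slocc\phi'$, refine the realizing LOCC protocol so that its local instruments have rank-one Kraus operators, and extract a single pure branch whose post-measurement \emph{vector} approximates $\Phi$. The conversion from state-norm to vector-norm closeness is handled by an averaging argument, and normalization and phase are absorbed into one local Kraus operator.

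Fix $\eps>0$ and set $\delta:=\eps^{2}/2$. By the definition of $\barslocc$ we may choose a normal state $\phi'$ on $\B(\H)$ with $\norm{\phi-\phi'}<\delta$ and $\Psi\slocc\phi'$. Using \cref{lem:coarse-graining} we refine each local instrument of the realizing protocol so that its Kraus operators have rank one. The successful outcome $x$ with $\psi_x=\phi'$ then splits into branches $y\in Y_x$ with local rank-one Kraus operators $k_{A,y}\in\M$ and $k_{B,y}\in\M'$, pure post-measurement states $\psi_y=\ketbra{\Phi_y}{\Phi_y}$ where $\Phi_y\propto k_{A,y}k_{B,y}\Psi$, and $\sum_{y\in Y_x}p_{y\mid x}\psi_y=\phi'$.

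Since $\phi=\ketbra\Phi\Phi$ is rank one,
\begin{equation*}
\sum_{y\in Y_x}p_{y\mid x}\bigl(1-|\langle\Phi,\Phi_y\rangle|^{2}\bigr) = 1-\phi'\!\bigl(\ketbra\Phi\Phi\bigr) = (\phi-\phi')\!\bigl(\ketbra\Phi\Phi\bigr) \le \norm{\phi-\phi'} < \delta,
\end{equation*}
so by pigeonhole there is a branch $y^*\in Y_x$ with $|\langle\Phi,\Phi_{y^*}\rangle|^{2}>1-\delta$. Choose the phase of $\Phi_{y^*}$ so that $\langle\Phi,\Phi_{y^*}\rangle\ge 0$; using $\sqrt{1-x}\ge 1-x$ for $x\in[0,1]$ yields
\begin{equation*}
\norm{\Phi-\Phi_{y^*}}^{2} = 2\bigl(1-\langle\Phi,\Phi_{y^*}\rangle\bigr) \le 2\delta = \eps^{2}.
\end{equation*}
Writing $k_{A,y^*}k_{B,y^*}\Psi=c\,\Phi_{y^*}$ with $c\in\CC\setminus\{0\}$ and setting $k:=c^{-1}k_{A,y^*}\in\M$, $k':=k_{B,y^*}\in\M'$, we obtain $kk'\Psi=\Phi_{y^*}$ and hence $\norm{kk'\Psi-\Phi}<\eps$.

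The main subtlety is that $\barslocc$ supplies only a state-norm approximation, possibly by a \emph{mixed} $\phi'$, while the conclusion needs closeness at the level of vectors. Refining to rank-one branches converts this into an ensemble of pure outputs whose squared overlaps with $\Phi$ must average close to $1$, so some single branch realizes the approximation individually. The final step -- rescaling and rephasing one local Kraus operator, which stays in $\M$ under scalar multiplication -- bridges ray-level equality to vector-level closeness.
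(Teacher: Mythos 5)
Your proof is correct and follows the same route as the paper, whose entire proof is to cite \cref{lem:pure state locc and kraus ops} (itself an application of \cref{lem:coarse-graining}): refine the protocol realizing $\Psi\slocc\phi'$ into rank-one local Kraus branches and extract a single branch $k_{A,y}k_{B,y}\Psi$ close to $\Phi$. Your averaging/pigeonhole step, which handles a possibly mixed approximant $\phi'$ and converts the state-norm bound $\norm{\phi-\phi'}<\delta$ into vector-norm closeness after absorbing normalization and phase into $k$, is precisely the detail the paper's one-line proof leaves implicit.
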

\begin{proof}
    This is a direct consequence of \cref{lem:pure state locc and kraus ops}.
\end{proof}

\begin{proof}[Proof of \cref{thm:slocc}]
    We first show \eqref{eq:slocc}: One direction has been shown in \cref{lem:pure state locc and kraus ops}. For the converse direction, if $kk'\Psi = \Phi$ we can extend the contractions $k/\norm{k}, k'/\norm{k'}$ to local instruments to obtain a suitable LOCC protocol preparing $\Phi$ with probability at least $1/(\norm{k}\norm{k'})$. 

    We now prove the equivalence of \cref{it:slocc1,it:slocc2,it:slocc3,it:slocc4,it:slocc5}.
    Note that \cref{lem:nonstd schmidt rank} already implies \ref{it:slocc2} $\Leftrightarrow$ \ref{it:slocc3}, and note that \ref{it:slocc5} $\Rightarrow$ \ref{it:slocc1} follows from \eqref{eq:slocc}.

    \ref{it:slocc1} $\Rightarrow$ \ref{it:slocc2}:
    Since all projections are equivalent in a type $\III$ factor, we may assume that $\M$ is semifinite with trace $\tr$.
    Let $k_n\in\M$, $k_n'\in\M'$ such that $\lim_n k_nk_n'\Psi=\Phi$.
    Using \cref{lem:rank lsc}, we get $\tr s_\phi = \tr [\M' \Phi] \le \liminf_n \tr[\M'k_nk_n'\Psi] \le \liminf_n \tr[k_n\M'\Psi] \le \tr [\M'\Psi] = \tr s_\psi$.
    Since for factors $p\lesssim q$ is equivalent to $\tr p\le \tr q$ ($p,q\in\proj\M$), this yields the claim.

    \ref{it:slocc2}\&\ref{it:slocc3} $\Rightarrow$ \ref{it:slocc4}:
    Pick partial isometries $v\in\M$ and $v'\in\M'$ such that $v^*v=[\M'\Psi]$, $v'^*v'=[\M\Psi]$, $vv^* \ge [\M'\Phi]$ and $v'v'^*\ge [\M\Phi]$.
    These properties ensure $[\M'vv'\Psi] =  [\M'v \Psi] \ni \Phi$ and $[\M vv'\Psi]=[\M v'\Psi]\ni\Phi$, which proves the claim.
    
    \ref{it:slocc4} $\Rightarrow$ \ref{it:slocc5}:
    Let $v,v'$ be partial isometries as in \ref{it:slocc4}.
    Let $\eps>0$. 
    Since $\Phi\in [\M'vv'\Psi]$, we can pick $a'\in\M'$ such that $\norm{\Phi- vk'\Psi}<\eps/2$ where $k'=a'v'$.
    Now pick a unitary $u\in\M$ such that $\norm{u\Psi-v\Psi}<\eps/2$.
    The claim follows from the triangle inequality.
\end{proof}

\begin{corollary} Let $\M$ be a factor of type $\III$ on $\H$ and $\Psi,\Phi\in \H$ be unit vectors. Then $\Psi \xrightarrow{\overline{\SLOCC}}\Phi$.
\end{corollary}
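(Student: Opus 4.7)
The plan is to invoke \cref{thm:slocc}, specifically the equivalence \ref{it:slocc1} $\Leftrightarrow$ \ref{it:slocc2}, which reduces $\Psi \barslocc \Phi$ to verifying the Murray--von Neumann subequivalence $s_\phi \lesssim s_\psi$ of support projections in $\M$. Thus the whole content of the corollary becomes a statement about the projection lattice of a type $\III$ factor acting on a separable Hilbert space.

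The key input is the well-known fact that in a type $\III$ factor with separable predual, any two nonzero projections are Murray--von Neumann equivalent. Since $\Psi$ and $\Phi$ are unit vectors, the support projections $s_\psi=[\M'\Psi]$ and $s_\phi=[\M'\Phi]$ of the induced states on $\M$ are nonzero projections in $\M$, and separability of $\H$ implies that $\M$ has separable predual. Hence $s_\phi \sim s_\psi$, and in particular $s_\phi \lesssim s_\psi$.

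With this established, \ref{it:slocc2} of \cref{thm:slocc} holds, so \ref{it:slocc1} follows, giving $\Psi \barslocc \Phi$. No obstacle is expected, since the nontrivial work is entirely contained in \cref{thm:slocc} and in the standard structure theory of type $\III$ factors; the only thing one should be careful about is noting that $\Psi,\Phi$ being unit vectors guarantees nonzero support projections so that the equivalence of projections in a type $\III$ factor applies.
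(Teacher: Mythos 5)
Your proof is correct and is essentially the paper's own argument: both apply \cref{thm:slocc} (\ref{it:slocc1} $\Leftrightarrow$ \ref{it:slocc2}) and conclude via the equivalence of all nonzero projections in a type $\III$ factor with separable predual. The only difference is that you spell out the nonvanishing of the support projections and the separability hypothesis, which the paper leaves implicit.
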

\begin{proof}
    In a type $\III$ factor, all projections are equivalent. Hence $s_\psi \lesssim s_\phi \lesssim s_\psi$. 
\end{proof}

Instead of considering (quasi-)exact SLOCC transformations, we now want to consider the maximum achievable fidelity in an SLOCC transformation.
Given a bipartite system of factors $(\M,\M')$ on $\H$ we define the SLOCC fidelity for unit vectors $\Psi,\Phi\in\H$ as
\begin{align}\label{eq:slocc-fidelity}
    F^2(\Psi \xrightarrow{\text{SLOCC}} \Phi) := \sup \{|\ip{\Phi}{\Omega}|^2\quad :\quad \Omega\in \H,\, \Psi \xrightarrow{\text{SLOCC}} \Omega\}.
\end{align}

\begin{lemma}\label{lem:slocc-std} Let $(\M,\M')$ be a bipartite system of factors on $\H$ and $\Psi,\Phi\in\H$. Denote by $(\M_0,\M_0')$ the standard bipartite system on $\H_0\subseteq \H$ induced by $\Phi$ according to \cref{lem:minimal subspace}.
    Then
    \begin{align}
        F^2(\Psi\slocc\Phi) = \sup \{|\langle \Phi,\Omega \rangle|^2\quad:\quad \Omega\in\H_0,\ s_{\omega} \lesssim s_{\psi}\}.
    \end{align}
\end{lemma}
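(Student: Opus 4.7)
The plan is to prove both inequalities using the minimal-subspace projection $e = [\M\Phi][\M'\Phi]\in\B(\H)$ from \cref{lem:minimal subspace}. Since $[\M\Phi]\in\M'$ and $[\M'\Phi]\in\M$ commute, $e$ is a projection, $\H_0 = e\H$, and $\Phi = e\Phi$.

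For the inequality $\le$, start with a unit vector $\Omega\in\H$ such that $\Psi\slocc\Omega$. By \cref{thm:slocc}, there are $k\in\M$ and $k'\in\M'$ with $\Omega = kk'\Psi$. I project onto $\H_0$ by setting $\Omega_0 := e\Omega/\|e\Omega\|$; the case $e\Omega=0$ is trivial since then $\langle\Phi,\Omega\rangle=\langle e\Phi,\Omega\rangle=0$. From $\Phi=e\Phi$ I get $\langle\Phi,\Omega\rangle = \|e\Omega\|\,\langle\Phi,\Omega_0\rangle$, so $|\langle\Phi,\Omega_0\rangle|\ge |\langle\Phi,\Omega\rangle|$. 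It remains to check that $\Omega_0$ is admissible on the right-hand side, i.e., $s_{\omega_0}\lesssim s_\psi$. Writing $e\Omega = pqkk'\Psi$ with $p := [\M\Phi]\in\M'$ and $q := [\M'\Phi]\in\M$, for any $y'\in\M'$,
\begin{equation*}
y'pqkk'\Psi = qk\,(y'pk')\Psi \in qk\,\M'\Psi \subseteq qks_\psi\H,
\end{equation*}
using commutativity of $\M$ and $\M'$ and the fact that $y'pk'\in\M'$. Taking closures, $s_{\omega_0} = [\M'e\Omega]\le[qks_\psi\H]$. The polar decomposition of $qks_\psi\in\M$ gives $[qks_\psi\H]\sim[s_\psi k^*q^*\H]\le s_\psi$, hence $s_{\omega_0}\lesssim s_\psi$.

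For the inequality $\ge$, take a unit vector $\Omega_0\in\H_0$ with $s_{\omega_0}\lesssim s_\psi$. The equivalence of items \ref{it:slocc1} and \ref{it:slocc2} in \cref{thm:slocc} gives $\Psi\barslocc\Omega_0$. So for every $\eps>0$ there exists a unit vector $\Omega_\eps\in\H$ with $\Psi\slocc\Omega_\eps$ and induced vector state $\omega_\eps$ satisfying $\|\omega_\eps-\omega_0\|<\eps$ on $\B(\H)$. Evaluating both states on the rank-one projection $\kettbra\Phi$,
\begin{equation*}
\big||\langle\Phi,\Omega_\eps\rangle|^2 - |\langle\Phi,\Omega_0\rangle|^2\big| = \big|\omega_\eps(\kettbra\Phi) - \omega_0(\kettbra\Phi)\big| \le \|\omega_\eps-\omega_0\| < \eps,
\end{equation*}
hence $F^2(\Psi\slocc\Phi)\ge|\langle\Phi,\Omega_\eps\rangle|^2 > |\langle\Phi,\Omega_0\rangle|^2-\eps$. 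Letting $\eps\to0$ and taking the supremum over admissible $\Omega_0$ closes this direction.

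The main obstacle is the support-projection bound $s_{\omega_0}\lesssim s_\psi$ in the first part: the naive inclusion $[\M'e\Omega]\le[\M'\Omega]$ fails in general because $[\M\Phi]\in\M'$ need not preserve $\overline{\M'\Omega}$. The argument must genuinely invoke Murray--von Neumann equivalence via the polar decomposition of $qks_\psi$, transferring the manifest right-support bound (by $s_\psi$) into a bound on the left/range projection that dominates $[\M'e\Omega]$.
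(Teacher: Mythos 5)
Your proof is correct and follows essentially the same route as the paper: project with $e=s_\phi s_{\phi'}$, note $\Phi=e\Phi$ so the overlap can only increase after normalizing $e\Omega$, and use \cref{thm:slocc} to match the admissibility condition $s_\omega\lesssim s_\psi$ in both directions. The paper compresses the "$\le$" direction by observing that $e\Omega=(s_\phi k)(s_{\phi'}k')\Psi$ is again of the exact-SLOCC form so that \cref{thm:slocc}\ref{it:slocc1}$\Rightarrow$\ref{it:slocc2} applies directly, whereas you re-derive the bound $s_{\omega_0}\lesssim s_\psi$ by hand via the polar decomposition of $qks_\psi$ — both are fine.
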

\begin{proof}
   If $e=s_\phi s_{\phi'}$, we have $\H_0 = e\H$ and $e\Phi = \Phi$. Hence
    $|\ip{\Phi}{\Omega}|^2 = |\ip{\Phi}{e\Omega}|^2$ and $\Psi \slocc e\Omega$ by \cref{thm:slocc}. The claim now follows from using \cref{it:slocc2} in \cref{thm:slocc}.
\end{proof}

We briefly digress to discuss the Uhlmann fidelity \cite{uhlmann_transition_1976} on von Neumann algebras.
\begin{definition}
   Let $\psi,\phi$ be normal states on a von Neumann algebra $\M$. Then the \emph{Uhlmann fidelity} between $\psi$ and $\phi$ is
   \begin{align}
       F(\psi,\phi) = \sup_{s\in\M'}|\langle\Omega_\psi,  s \Omega_\phi\rangle|,
   \end{align}
   where the supremum is taken over contractions $s\in \M'$ where $\Omega_\psi$ and $\Omega_\phi$ are the representatives of $\psi$ and $\phi$ in the positive cone of the standard form of $\M$.
\end{definition}

If $(\M,\M')$ is a standard bipartite system on $\H$ and $\Psi \in \H$ is any purification of a state $\psi$ on $\M$, then $u \Omega_\psi = \Psi$ for some partial isometry $u\in \M'$. 
It follows that for any two purifications $\Psi,\Phi\in\H$ of $\psi$ and $\phi$, respectively, we have
\begin{align}
  |\langle\Psi,\Phi\rangle| = |\langle\Omega_\psi,u^*v\Omega_\phi\rangle| \leq F(\psi,\phi)
\end{align}
for appropriate partial isometries $u,v\in\M'$.
\begin{lemma}
    Let $(\M,\Tr)$ be a semifinite von Neumann algebra in standard form on $\H$. Then
    \begin{align}
        F(\psi,\phi) = \Tr|\rho_\psi^{1/2}\rho_\phi^{1/2}|,
    \end{align}
    where $\rho_\psi,\rho_\phi\in L^1(\M)$ are the densities representing $\psi$ and $\phi$, respectively.
\end{lemma}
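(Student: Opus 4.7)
The plan is to realize the standard form of $\M$ concretely as $L^2(\M,\Tr)$, in which the positive cone representatives of states admit the explicit form $\Omega_\psi = \rho_\psi^{1/2}$, and then to identify the supremum defining $F(\psi,\phi)$ with the $L^1$--$L^\infty$ duality pairing applied to $A := \rho_\psi^{1/2}\rho_\phi^{1/2}\in L^1(\M)$.

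First, I would recall that for a semifinite von Neumann algebra $(\M,\Tr)$, the GNS construction with respect to the trace gives a realization of the standard form on $\H = L^2(\M,\Tr)$, with $\M$ acting by left multiplication, modular conjugation $J:\xi\mapsto \xi^*$, and natural cone $\P = L^2(\M,\Tr)^+$. In this realization, for any normal state $\psi$ with density $\rho_\psi\in L^1(\M)^+$, the unique vector representative in $\P$ is $\Omega_\psi = \rho_\psi^{1/2}\in L^2(\M,\Tr)^+$, and similarly for $\phi$. Moreover, the commutant $\M'$ is identified with $J\M J$, so any contraction $s\in\M'$ can be written as $s = JbJ$ for a unique $b\in\M$ with $\|b\|=\|s\|\le 1$.

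Next, since $\Omega_\phi=\rho_\phi^{1/2}$ is self-adjoint, $J\Omega_\phi = \Omega_\phi$ and hence $s\Omega_\phi = JbJ\,\rho_\phi^{1/2} = (b\rho_\phi^{1/2})^* = \rho_\phi^{1/2}b^*$. Using the inner product $\langle\xi,\eta\rangle = \Tr(\xi^*\eta)$ on $L^2(\M,\Tr)$, this gives
\begin{equation*}
    \langle \Omega_\psi, s\Omega_\phi\rangle \;=\; \Tr\!\left(\rho_\psi^{1/2}\rho_\phi^{1/2} b^*\right) \;=\; \Tr(A b^*),
\end{equation*}
where $A := \rho_\psi^{1/2}\rho_\phi^{1/2}$ belongs to $L^1(\M)$ by Hölder's inequality. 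Consequently,
\begin{equation*}
    F(\psi,\phi) \;=\; \sup_{b\in\M,\;\|b\|\le 1} \bigl|\Tr(Ab^*)\bigr|.
\end{equation*}

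To finish, I would invoke the trace-norm duality $L^1(\M)^* \cong \M$, which yields $\sup_{\|b\|\le 1}|\Tr(Ab^*)| = \|A\|_1 = \Tr|A|$. For completeness, attainment is explicit: taking the polar decomposition $A=u|A|$ with $u\in\M$ a partial isometry satisfying $u^*u = \supp|A|$, and choosing $b=u$, one gets $\Tr(Au^*)=\Tr(u^*u|A|)=\Tr|A|$, while the upper bound $|\Tr(Ab^*)|\le\|b\|\cdot\|A\|_1$ is just Hölder.

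The main technical input is really just the concrete form of the standard representation in the semifinite case, together with the identification $\Omega_\psi = \rho_\psi^{1/2}$; once these are in hand, the computation reduces to a short application of the polar decomposition and trace duality, so no substantial obstacle is expected.
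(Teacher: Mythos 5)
Your proof is correct and follows essentially the same route as the paper's: identify the standard form with $L^2(\M,\Tr)$ so that $\Omega_\psi=\rho_\psi^{1/2}$, reduce the supremum over contractions in $\M'$ to the pairing $\Tr(\rho_\psi^{1/2}\rho_\phi^{1/2}b^*)$ over the unit ball of $\M$, and conclude by polar decomposition and trace duality. You merely spell out more explicitly (via $s=JbJ$) why elements of $\M'$ act as right multiplication, which the paper leaves implicit.
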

\begin{proof}
    We identify $\H$ with $L^2(\M)$ in the usual way. 
    Then if $s\in \M'$ we have
    \begin{align}
        \langle \Omega_\psi, s \Omega_\phi \rangle = \Tr\rho_\psi^{1/2}\rho_\phi^{1/2} s.
    \end{align}
    Optimizing over $s$ we find
    \begin{align}
        \sup_s |\langle\Omega_\psi,s\Omega_\phi \rangle| = \sup_s \left|\Tr \rho_\psi^{1/2}\rho_\phi^{1/2} s\right| = \Tr|\rho_\psi^{1/2}\rho_\phi^{1/2}|.
    \end{align}
    The last equality follows from the fact that $\rho_\psi^{1/2}\rho_\phi^{1/2} = |\rho_\psi^{1/2}\rho_\phi^{1/2}| u$ for a partial isometry $u\in\M$. 
\end{proof}
Note that the proof shows that the optimization over $s$ in the definition of the fidelity can be restricted to partial isometries. 
\begin{remark}
    Even for type $\III$ algebras we can define the fidelity via the standard representation and obtain the same formula in terms of densities $\rho_\psi,\rho_\phi$, which are now elements of the Haagerup $L^1$ space. 
\end{remark}

We are grateful to F.~Hiai for communicating to us the following Lemma and its proof.
\begin{lemma}\label{lemma:fidelity-orbits}
    Let $\psi,\phi$ be states on a semifinite factor $\M$ with trace $\Tr$. Then
    \begin{align}
    \sup_{u\in\U(\M)}F(\psi, u\phi u^*) = \int_0^\infty \sqrt{\lambda_\rho(t)\lambda_\sigma(t)}\, dt,
    \end{align}
    where $\rho,\sigma\in L^1(\M)^+$ are the densities of $\psi,\phi$, respectively.
\end{lemma}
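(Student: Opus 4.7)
The plan is to combine the explicit trace formula for the fidelity (established just above) with a rearrangement-type upper bound and a matching alignment construction for the lower bound.

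First, since $(u\phi u^*)$ has density $u\sigma u^*$ and $(u\sigma u^*)^{1/2}=u\sigma^{1/2}u^*$, the preceding lemma reduces the claim to
\begin{equation*}
\sup_{u\in\U(\M)} \Tr\bigl|\rho^{1/2}\,u\sigma^{1/2}u^*\bigr| \;=\; \int_0^\infty\!\sqrt{\lambda_\rho(t)\lambda_\sigma(t)}\,dt.
\end{equation*}
I will prove the two inequalities separately. Throughout, note that $\lambda_{\rho^{1/2}}=\lambda_\rho^{1/2}$ (from the definition of the spectral scale via $D_{\rho^{1/2}}(s)=D_\rho(s^2)$), and that the spectral scale is conjugation-invariant, so $\lambda_{u\sigma^{1/2}u^*}=\lambda_{\sigma^{1/2}}=\lambda_\sigma^{1/2}$.

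For the upper bound, I invoke the Fack--Kosaki rearrangement inequality in the form $\Tr|xy|\le\int_0^\infty \lambda_x(t)\lambda_y(t)\,dt$, valid for $x,y$ in the appropriate $L^p$ spaces (this is the noncommutative analog of the classical Hardy--Littlewood--Pólya rearrangement bound; it is standard for generalized $s$-numbers, see \cite{fack1986generalized} or \cite[Ch.~4]{hiai_lecture_2021}). Applied to $x=\rho^{1/2}$ and $y=u\sigma^{1/2}u^*$, this gives the required upper bound on $\Tr|\rho^{1/2}u\sigma^{1/2}u^*|$, uniformly in $u$.

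For the lower bound I construct, for each $\varepsilon>0$, a unitary $u\in\U(\M)$ achieving the integral up to $\varepsilon$. The idea is to approximate $\rho,\sigma$ by step functions of their spectral scales and align their level sets. Concretely, choose partitions $0=t_0<t_1<\cdots<t_n$ of an interval on which the bulk of $\int\sqrt{\lambda_\rho\lambda_\sigma}\,dt$ is concentrated. Using the spectral theorem and the fact that $\Tr(\chi_{[a,b)}(\rho))$ can be made to equal any prescribed value in a range (because semifinite factors have projections of any trace in $[0,\Tr 1]$), select projections $p_i\in\M$ supported on appropriate spectral subspaces of $\rho$ with $\Tr p_i=t_i-t_{i-1}$ and with $p_i\rho p_i$ close to $\lambda_\rho(t_i^-)p_i$. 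Produce analogous projections $q_i$ for $\sigma$. Because $\M$ is a factor and $\Tr p_i=\Tr q_i$, Murray--von Neumann equivalence yields a partial isometry $w_i\in\M$ with $w_i^*w_i=q_i$, $w_iw_i^*=p_i$; assemble these (together with a unitary on the complement of $\sum q_i$ mapping onto the complement of $\sum p_i$, possible in the semifinite factor after absorbing the remainder into the error) into a single unitary $u\in\U(\M)$ with $uq_iu^*=p_i$. With this alignment, $\Tr|\rho^{1/2}u\sigma^{1/2}u^*|$ is close to $\sum_i\sqrt{\lambda_\rho(t_i^-)\lambda_\sigma(t_i^-)}(t_i-t_{i-1})$, a Riemann sum for the target integral.

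The main obstacle is the alignment step in the continuous-spectrum case. Spectral projections of $\rho$ and of $\sigma$ need not have matching traces a priori, and one must carefully refine and redistribute mass across level sets (using the intermediate value property of $\Tr\circ\chi_{(\alpha,\infty)}(\rho)$ in $\alpha$, up to point masses at eigenvalues of $\rho$) to produce projections with the exact traces required for Murray--von Neumann matching. A clean way to bypass part of this is to first reduce to the commutative situation: pass through the identifications $\rho\succ\lambda_\rho\succ\rho$ and $\sigma\succ\lambda_\sigma\succ\sigma$ from \cref{sec:majorization-mt} to work with the classical densities $\lambda_\rho,\lambda_\sigma\in L^1(\mathbb{R}_+)$ on abelian subalgebras, where the alignment reduces to composing with the monotone rearrangement; the noncommutative factor structure then supplies the unitaries implementing this alignment. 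Once this is in place, a final density/continuity argument (using that $\rho^{1/2},\sigma^{1/2}\in L^2(\M)$ and $x\mapsto \Tr|x|$ is continuous on $L^1(\M)$) lets one pass from step-function approximants back to $\rho,\sigma$ themselves, completing the proof.
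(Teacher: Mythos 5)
Your proof follows the same overall strategy as the paper's: reduce to $\sup_u\Tr|\rho^{1/2}u\sigma^{1/2}u^*|$ via the preceding trace formula, get the upper bound from the Fack--Kosaki submajorization inequality for generalized $s$-numbers, and get the lower bound by unitarily aligning the spectral data of $\rho$ and $\sigma$. The difference is in how the alignment is executed. The paper builds commuting models $\tilde\rho=\int\lambda_\rho(t)\,de_t$ and $\tilde\sigma=\int\lambda_\sigma(t)\,de_t$ over a single increasing family of projections with $\Tr e_t=t$, and then invokes Hiai--Nakamura \cite[Lem.~4.1]{hiai1989distance}: two positive elements of a factor with equal spectral scales are approximately unitarily equivalent in $L^2$-norm. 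Your Riemann-sum construction with Murray--von Neumann-matched projections $p_i\sim q_i$ is essentially a hands-on proof of that lemma in the case needed here; it is workable (the trace-matching issue at atoms is resolvable in a type $\II$ factor by splitting eigenprojections, and in type $\I$ by taking integer partition points), but you should carry out the $L^2$-error estimates for $\rho^{1/2}-\sum_i\lambda_\rho(t_i)^{1/2}p_i$ explicitly, including the tail $t>t_n$ and the possible blow-up of $\lambda_\rho$ near $0$ (both controlled since $\lambda_\rho\in L^1$). The one step I would not accept as written is the ``clean bypass'': the relations $\rho\succ\lambda_\rho\succ\rho$ are witnessed by doubly (sub)stochastic maps (conditional expectations and measure-space isomorphisms), not by unitary conjugations, so the assertion that ``the noncommutative factor structure then supplies the unitaries implementing this alignment'' begs exactly the question that the Hiai--Nakamura lemma (or your explicit construction) answers. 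Either cite that lemma, as the paper does, or complete the direct construction; do not rely on the majorization identifications alone.
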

\begin{proof}
    It follows from \cite[Prop.~2.7]{fack1986generalized} and \cite[Thm.~4.2]{fack1986generalized} (see also \cite[Prop.~4.20, Prop.~4.42]{hiai_lecture_2021}) that
    \begin{align}
        F(\psi,\phi)=\Tr(|\rho^{1/2}\sigma^{1/2}|)= \int_0^\infty \lambda_{|\rho^{1/2}\sigma^{1/2}|}(t)\,dt \leq \int_0^\infty \lambda_{\rho^{1/2}}(t)\lambda_{\sigma^{1/2}}(t)\,dt = \int_0^\infty \sqrt{\lambda_\rho(t)\lambda_\sigma(t)}\,dt.
    \end{align}
    In the type $I$ case, the upper bound can clearly be achieved by choosing a unitary $u$ such that $[u\sigma u^*,\rho]=0$ and which reorders the eigenbasis appropriately. In the type $\II$ case, we choose an increasing family of projections $\{e_t : t\geq 0\}$ in $\M$ with $\Tr(e_t) =t$ for all $t\geq 0$ and define
    \begin{align}
        \tilde \rho = \int_0^\infty \lambda_\rho(t) de_t,\quad \tilde \sigma = \int_0^\infty \lambda_\sigma(t) de_t.
    \end{align}
    Since $\lambda_\rho(t)=\lambda_{\tilde \rho}(t)$ and $\lambda_\sigma(t)=\lambda_{\tilde \sigma}(t)$ for all $t>0$, by \cite[Lemma~4.1]{hiai1989distance} there exist sequences of unitaries $u_n, v_n\in \U(\M)$ such that
    \begin{align}
        \norm{\tilde \rho^{1/2} - u_n\rho^{1/2}u_n^*}_2 \to 0,\quad  \norm{\tilde \rho^{1/2} - v_n\rho^{1/2}v_n^*}_2 \to 0,
    \end{align}
    where $\norm{\placeholder}_2$ denotes the $L^2$ norm with respect to $\Tr$. 
    It follows from H\"older's inequality (see \cite[Thm.~4.2]{fack1986generalized} or \cite[Prop.~4.43]{hiai_lecture_2021}) that
    \begin{align}
        F(\psi,u_{n}^{*}v_{n}\phi v_{n}^{*}u_{n})=\Tr(|\rho^{1/2} u_n^* v_n \sigma^{1/2} v_n^* u_n|) &= \Tr(|u_n \rho^{1/2} u_n^* v_n \sigma^{1/2} v_n^*|) \to \Tr(|\tilde \rho^{1/2}\tilde\sigma^{1/2}|).
    \end{align}
    The latter is given by
    \begin{align}
        \Tr(|\tilde \rho^{1/2}\tilde\sigma^{1/2}|)&=\Tr\left(\int_0^\infty \lambda_\rho(t)^{1/2} \lambda_\sigma(t)^{1/2} de_t\right) = \int_0^\infty \sqrt{\lambda_\rho(t)\lambda_\sigma(t)}\,dt.
    \end{align}
    Thus, the upper bound can be achieved. 
\end{proof}

Using \cref{lemma:fidelity-orbits} we can compute the SLOCC fidelity from the Lorenz curve:
\begin{proposition} Let $(\M,\M')$ be a semifinite bipartite system of factors on $\H$ and let $\Psi,\Phi\in\H$ be unit vectors.
Then
\begin{align}
F^2(\Psi\xrightarrow{\SLOCC} \Phi)=  L_\phi(r(\Psi)),
\end{align}
where $r(\Psi) = \Tr(s_\psi)$ and $L_\phi$ is the Lorenz curve of the state $\phi$ (cp.~\eqref{eq:lorenz curve main}).
\end{proposition}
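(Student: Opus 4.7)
My plan is to re-express $F^2(\Psi\slocc\Phi)$ as a supremum of squared Uhlmann fidelities, convert that supremum into an integral of spectral scales via \cref{lemma:fidelity-orbits}, and close with Cauchy--Schwarz.

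First I would invoke \cref{lem:slocc-std} to rewrite $F^2(\Psi\slocc\Phi)$ as the supremum of $|\langle\Phi,\Omega\rangle|^2$ over unit vectors $\Omega\in\H_0$ whose induced state $\omega$ on $\M_0$ satisfies $s_\omega\lesssim s_\psi$. On the standard subsystem $(\M_0,\M_0')$ the vector $\Phi$ agrees with the positive-cone representative $\Omega_\phi$, and every admissible $\Omega$ has the form $v\Omega_\omega$ for some partial isometry $v\in\M_0'$, while every such $v$ yields an admissible $\Omega$ inducing the same $\omega$. Since the Uhlmann fidelity is $F(\phi,\omega)=\sup_{v}|\langle\Omega_\phi,v\Omega_\omega\rangle|$ over such partial isometries (as shown in the discussion following the definition of $F$), this identifies
\begin{equation}
    F^2(\Psi\slocc\Phi) \;=\; \sup\bigl\{F(\phi,\omega)^2 \,:\, \omega \text{ normal state on }\M_0,\ s_\omega\lesssim s_\psi\bigr\}.
\end{equation}

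Next, in the semifinite factor $\M_0$ the constraint $s_\omega\lesssim s_\psi$ is equivalent to $\Tr s_\omega\le r(\Psi)$ and is therefore preserved by the unitary action $\omega\mapsto u\omega u^*$, $u\in\U(\M_0)$. Decomposing the supremum into unitary orbits and applying \cref{lemma:fidelity-orbits} orbit-by-orbit gives
\begin{equation}
    F^2(\Psi\slocc\Phi) \;=\; \sup_\omega\Bigl(\int_0^{r(\Psi)}\sqrt{\lambda_{\rho_\phi}(t)\,\lambda_{\rho_\omega}(t)}\,dt\Bigr)^{\!2},
\end{equation}
where $\omega$ ranges over normal states whose spectral scale is supported in $[0,r(\Psi)]$. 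Cauchy--Schwarz, the identity $L_\phi(t)=\int_0^t\lambda_{\rho_\phi}(s)\,ds$ from \cref{app:majorization}, and the normalization $\int_0^{r(\Psi)}\lambda_{\rho_\omega}=1$ then give the upper bound
\begin{equation}
    \Bigl(\int_0^{r(\Psi)}\sqrt{\lambda_{\rho_\phi}\lambda_{\rho_\omega}}\,dt\Bigr)^{\!2} \;\le\; L_\phi(r(\Psi))\cdot \int_0^{r(\Psi)}\lambda_{\rho_\omega}(t)\,dt \;=\; L_\phi(r(\Psi)).
\end{equation}

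To saturate the bound I would choose $\omega=\phi(p\,\cdot\,p)/\phi(p)$, where $p\in\M$ is a spectral projection of $\rho_\phi$ of trace $\min(r(\Psi),\Tr s_\phi)$ onto its largest spectral values (available by semifiniteness of $\M$). Its spectral scale is then proportional to $\lambda_{\rho_\phi}$ on $[0,r(\Psi)]$, which saturates Cauchy--Schwarz. The main obstacle is bookkeeping rather than analysis: one must carefully pass between $\M$, its corner $\M_0$, and the positive cone (so that the overlap $|\langle\Phi,\Omega\rangle|$ really becomes a partial-isometry Uhlmann fidelity), ensure that the Lorenz curve and spectral scale of $\phi$ are unchanged by this reduction (they are, since $\rho_\phi$ is supported in $s_\phi$), and verify that the saturating $\omega$ does lie in the admissible set. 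Once this is in order, the entire analytical content is a single Cauchy--Schwarz applied to the orbit formula.
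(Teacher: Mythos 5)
Your proposal is correct and follows essentially the same route as the paper's proof: reduce to the standard subsystem via \cref{lem:slocc-std}, recast the problem as maximizing the Uhlmann fidelity $F(\phi,\omega)$ over states $\omega$ with $\Tr s_\omega \le r(\Psi)$, apply \cref{lemma:fidelity-orbits} to pass to spectral scales, bound with Cauchy--Schwarz/H\"older, and saturate with the truncated-and-renormalized state whose spectral scale is $\lambda_{\rho_\phi}\chi_{(0,r(\Psi)]}$ up to normalization. The only cosmetic differences are that the paper dispatches the case $r(\Psi)>\Tr s_\phi$ up front rather than via a $\min$ in the choice of projection, and specifies the saturating state directly by its spectral scale rather than as a compression $\phi(p\,\cdot\,p)/\phi(p)$.
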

We will see in \cref{sec:monotones} that $r(\Psi)$ is a generalization of the \emph{Schmidt rank} for $\Psi$.
\begin{proof}
We only have to consider the case $\Tr s_\psi \leq \Tr s_\phi$, since otherwise $F^2(\Psi\slocc\Phi)=1=L_\phi(t)$ for $t\geq \Tr(s_\phi)$. Consider $e=s_\phi s_{\phi'}$ and set $\H_0 = e\H$. Since $\Tr s_\psi \leq \Tr s_\phi$ implies $s_\psi\lesssim s_\phi$ and $s_{\psi'}\lesssim s_{\phi'}$ by \cite[III.1.7.10]{blackadar_operator_2006} (or \cite{murray_rings_1937}), we can find partial isometries $v\in\M,v\in\M'$ such that $vv'\Psi\in \H_0$ and $v^*v = s_\psi, {v'}^*v'= s_{\psi'}$. Using \cref{lem:slocc-std} we can hence assume without loss of generality that $(\M,\M')$ is a standard bipartite system by restricting to $(\M_0,\M_0')$. 
By \cref{thm:slocc}, we can obtain any pure state $\Omega\in \H$  with $\Tr s_\omega =\Tr s_\psi$ to arbitrary precision. 
We, thus, want to maximize $F(\phi,\omega)$ under the constraint that $\Tr s_\omega = \Tr s_\psi$. Let $\rho_\phi,\rho_\omega \in L^1(\M)^+$ be the densities of $\phi$ and $\omega$. We find from \cref{lemma:fidelity-orbits} that
\begin{align}
\sup_{u\in\U(\M)} F(\phi, u\omega u^*) = \int_0^\infty \sqrt{\lambda_{\rho_\phi}(t) \lambda_{\rho_\omega}(t)}\,\chi_{(0,\Tr(s_\psi)]}(t) \, dt = \int_0^{\Tr(s_\psi)}\sqrt{\lambda_{\rho_\phi}(t) \lambda_{\rho_\omega}(t)}\, dt,  
\end{align}
and by H\"older's inequality
\begin{align}
    \sup_{u\in\U(\M)} F(\phi, u\omega u^*)  \leq \left(\int_0^{\Tr(s_\psi)}\lambda_{\rho_\phi}(t)\,dt\right)^{1/2} = \sqrt{L_\phi(\Tr(s_\psi))}.
\end{align}
For the converse direction, we can choose $\omega$ such that (see proof of \cref{lemma:fidelity-orbits}) 
\begin{align}
    \lambda_{\rho_\omega}(t) = \frac{1}{p} \lambda_{\rho_\phi}(t) \chi_{(0,\Tr(s_\psi)]}(t),\quad p = \int_0^{\Tr(s_\psi)} \lambda_{\rho_\phi}(t)\,dt, 
\end{align}
which yields
\begin{align}
    \sup_{u\in\U(\M)} F(\phi, u\omega u^*) = \sqrt{L_\phi(\Tr(s_\psi))}. 
\end{align}
\end{proof}

\subsection{Nielsen's theorem}\label{sec:nielsen}

In finite-dimensional quantum mechanics, Nielsen's theorem, first proven in \cite{nielsen_conditions_1999}, relates the possibility of a LOCC transformation mapping a pure state $\Psi$ on a bipartite system to another pure state $\Phi$ to a majorization relation on the marginals $\psi$ and $\phi$ on one of the subsystems.
In this section, we prove Nielsen's theorem for factors of arbitrary type.
Throughout this section, we consider a bipartite system of factors $(\M,\M')$ on $\H$.
We denote pure states by vectors $\Psi,\Phi\in \H$ and their marginals on $\M$ by $\psi$ and $\phi$.

\begin{theorem}[Nielsen's theorem]\label{thm:nielsen}
    Let $(\M,\M')$ be a bipartite system of factors in Haag duality on a Hilbert space $\H$.
    Let $\Psi,\Phi\in\H$ be unit vectors and denote by $\psi$ and $\phi$ the induced states on $\M$.
    Then 
    \begin{align}\label{eq:nielsen1}
        \Psi \xrightarrow{\LOCC} \Phi \quad &\iff \quad \psi = \sum_{x=1}^\oo p_x \,v_x\phi v_x^*
    \intertext{for a probability distribution $\{p_x\}_{x=1}^\oo$ and partial isometries $v_x\in\M$. Furthermore, it holds that}
        \Psi \xrightarrow{\overline{\LOCC}}\Phi \quad &\iff\quad\ \psi \prec\phi.
    \end{align}
\end{theorem}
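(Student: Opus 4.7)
The plan is to establish both equivalences in parallel, using the first (exact LOCC $\leftrightarrow$ partial-isometry convex combination) as the building block for the second ($\overline{\LOCC}\leftrightarrow$ majorization), invoking \cref{thm:factor-majorization-mt} which identifies $\psi\prec\phi$ with membership in $\overline{\conv}\{u\phi u^*:u\in\U(\M)\}$. I would first reduce to a standard bipartite system via \cref{lem:minimal subspace} or \cref{lem:tensor-standard}; in the type $\III$ case both statements become vacuous, since all nonzero projections in $\M$ are Murray--von Neumann equivalent, so any two pure states are SLOCC-equivalent (\cref{thm:slocc}) and any two normal states on $\M$ are mutual majorants. The substantive case is therefore the semifinite one, where I identify states with densities $\rho_\psi,\rho_\phi\in L^1(\M)^+$.

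\emph{Forward direction.} By \cref{lem:pure state locc and kraus ops}, an LOCC transition produces rank-1 local Kraus operators with $k_{A,y}k_{B,y}\Psi=c_y\Phi$. A Lo--Popescu-type reduction to a one-way Alice-first protocol (valid for pure-to-pure transitions, using $[\M,\M']=0$ to regroup factored Kraus $K_\alpha=A_\alpha B_\alpha$) yields an Alice instrument $\{A_y\}\subset\M$ with $\sum_y A_y^*A_y=1$ and $A_y\rho_\psi A_y^* = p_y\rho_\phi$ for $p_y=|c_y|^2$. Setting $X_y := A_y\rho_\psi^{1/2}\in L^2(\M)$ and polar-decomposing $X_y = V_y|X_y|$ with $V_y\in\piso(\M)$, one has $X_yX_y^* = p_y\rho_\phi$ and $X_y^*X_y = V_y^*(p_y\rho_\phi)V_y$. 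Summing $\sum_y X_y^*X_y = \rho_\psi^{1/2}\bigl(\sum_y A_y^*A_y\bigr)\rho_\psi^{1/2} = \rho_\psi$ produces $\rho_\psi = \sum_y p_y v_y\rho_\phi v_y^*$ with $v_y := V_y^*\in\piso(\M)$, establishing the $(\Rightarrow)$-implication of the first statement. The $(\Rightarrow)$-implication of the second then follows either directly from the Lorenz-curve inequality $L_{\rho_\psi}(t)\le\sum_y L_{X_y^*X_y}(t) = \sum_y p_y L_{\rho_\phi}(t) = L_{\rho_\phi}(t)$ together with $\Tr\rho_\psi = \Tr\rho_\phi$, or by observing that convex combinations of partial-isometry conjugates lie in the closure of the unitary orbit, invoking \cref{thm:factor-majorization-mt}.

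\emph{Converse direction.} Given an exact decomposition $\psi=\sum_x p_x v_x\phi v_x^*$ with $v_x\in\piso(\M)$, I would build a two-round LOCC protocol: in round one Alice applies an instrument $\{M_x\}\subset\M$ with $\sum_x M_x^*M_x=1$ and $M_x\rho_\psi M_x^* = p_x v_x\rho_\phi v_x^*$; in round two Bob corrects via a partial isometry $u_x'\in\piso(\M')$ implementing the purification-uniqueness from \cref{prop:std} (extended to a unitary on the appropriate compression), and Alice applies the final Kraus $v_x^*$ to reach $\Phi$. The existence of $\{M_x\}$ is a noncommutative Hardy--Littlewood--Polya / Schur--Horn statement dual to the forward-direction polar decomposition: one writes $\rho_\psi^{1/2}$ in $L^2(\M)$ in terms of the $v_x\rho_\phi^{1/2}$ (as forced by the decomposition hypothesis) and extracts $\{M_x\}$ from polar decompositions of the resulting summands. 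For the converse of the majorization statement, \cref{thm:factor-majorization-mt} produces finite unitary convex combinations $\psi_n = \sum_x p_x^{(n)} u_x^{(n)}\phi (u_x^{(n)})^*$ converging to $\psi$ in norm; \eqref{eq:purification_estimate} gives positive-cone vectors $\Omega_{\psi_n}\to\Omega_\psi$, and after a $\piso(\M')$-correction we obtain purifications $\Psi_n\to\Psi$ with marginal $\psi_n$, each admitting an exact LOCC to $\Phi$ by the first equivalence; applying the same LOCC channels to the nearby input $\Psi$ in place of $\Psi_n$ yields outputs $\Phi_n\to\Phi$ by continuity, proving $\Psi\xrightarrow{\overline{\LOCC}}\Phi$.

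The main obstacle is the construction of Alice's instrument $\{M_x\}$ in the converse direction of the first equivalence: the naive candidate $M_x\propto v_x\rho_\phi^{1/2}\rho_\psi^{-1/2}$ yields the correct conjugation $M_x\rho_\psi M_x^* = p_x v_x\rho_\phi v_x^*$ but typically fails the unitality condition $\sum_x M_x^*M_x=1$, which must instead be enforced by a noncommutative Schur--Horn argument dual to the polar decomposition used in the forward direction. In type $\II_\infty$ this requires careful handling of support projections and of the unbounded $\rho_\psi^{-1/2}$ via spectral truncation and a limiting step, but no ideas beyond those developed in \cref{app:majorization} should be needed; the Lo--Popescu reduction in the forward direction is a secondary technicality that can be bypassed by a direct but bookkeeping-heavy analysis of subnormalized marginals $\psi_\alpha(m) = \langle\Psi,m B_\alpha^*B_\alpha\Psi\rangle$.
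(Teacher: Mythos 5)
Your forward direction in the semifinite case is essentially the paper's argument in density-operator clothing: the paper likewise reduces to a one-way protocol (\cref{cor:oneway}) and then reads off the partial isometries, but it does so with the modular conjugation of the standard form rather than with $\rho_\psi^{1/2}$, and that is not a cosmetic difference (see below). Do note that "regrouping via $[\M,\M']=0$" is not by itself a proof of the one-way reduction; commutativity does not let you move Bob's Kraus operator to Alice's side. The actual mechanism is that $J$ converts an element of $\M'$ into one of $\M$ while fixing the purification (\cref{lem:twoway-to-oneway1,lem:twoway-to-oneway2}), which is the von Neumann algebraic substitute for the Schmidt-basis symmetry in the Lo--Popescu argument.

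There are two genuine gaps. First, the type $\III$ case is not vacuous, and your proof does not cover it. In type $\III$ the \emph{right-hand} sides of both equivalences are automatic (\cref{lem:majorization_in_typeIII-mt}), so the content of the theorem there is precisely that the \emph{left-hand} sides always hold, i.e.\ that one can actually exhibit an (approximate) LOCC protocol between any two pure states; this is what later yields \cref{thm:LOCC-trivial}, so it cannot be cited from it, and SLOCC-equivalence of all pure states (\cref{thm:slocc}) does not imply $\overline{\LOCC}$-equivalence. Your argument is intrinsically semifinite because it manipulates trace densities; the paper's protocol built from the analytically continued Connes cocycle, $k_x=v_x^*[D\psi_x:D\psi]_{-\i/2}$ (\cref{lem:cocycle}), is exactly what makes the converse work uniformly in all types. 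Second, the converse of the first equivalence is left unresolved even in the semifinite case. You correctly observe that $M_x\propto v_x\rho_\phi^{1/2}\rho_\psi^{-1/2}$ fails unitality, but the proposed ``noncommutative Schur--Horn'' repair is neither the right tool nor needed: the correct Kraus operators are $k_x=\sqrt{p_x}\,\rho_\phi^{1/2}v_x^*\rho_\psi^{-1/2}$ (note where $v_x^*$ sits), for which $\sum_x k_x^*k_x=\rho_\psi^{-1/2}\bigl(\sum_x p_x v_x\rho_\phi v_x^*\bigr)\rho_\psi^{-1/2}=s_\psi$ follows \emph{directly} from the decomposition hypothesis, each $k_x$ sends $\psi$ to $p_x\phi$, and Bob only needs the correcting partial isometry $Jv_xJ$. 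A smaller issue: your closing continuity argument perturbs the \emph{source} ($\Psi_n\to\Psi$), whereas $\Psi\barlocc\Phi$ as defined in the paper requires exact LOCC onto targets $\Phi'$ approximating $\Phi$; bridging that requires an additional (routine but not free) step.
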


In the case that $\M$ (and hence $\M'$) is of type $\I$ or type $\II$, the result was obtained in \cite{crann_state_2020} by different methods.
The type $\I_\oo$ case was obtained earlier in \cite{owari_convertibility_2008} (see also \cite{asakura_infinite_2017,massri_locc_2024}).

\begin{remark}
    Since we know from the outset that $\psi$ and $\phi$ are unit vectors, the partial isometries $v_x$ in \eqref{eq:nielsen1} necessarily satisfy $v^*v_x \ge s_\psi$. That is, the partial isometries act isometrically on $\phi$.
    In finite von Neumann algebras, every partial isometry can be extended to a unitary. Hence, for type $\II_1$ factors (the only finite factors that are not finite-dimensional), we obtain essentially the same statement as in finite-dimensional quantum mechanics. Nevertheless, there is a difference: In finite dimensions $\psi \prec \phi$ implies that one can convert $\Psi$ into $\Phi$ via a LOCC protocol with finitely many rounds, whereas in the type $\II_1$ case, one can only get arbitrarily close to $\Phi$.  
\end{remark}

The statement on approximate LOCC state transformations and majorization directly follows from the characterization of exact LOCC pure state transformations via partial isometries using the following Lemma, which relates majorization to convex mixtures arising from partial isometries (recall that $\piso(\M)$ denotes the set of partial isometries in $\M$).
\begin{lemma}
    Let $\M$ be a von Neumann algebra and let $\phi$ be a normal state on $\M$. Then 
    \begin{align}
        \overline{\conv} \{u\phi u^*\ :\ u\in \U(\M)\} = \overline{\conv} \{v\phi v^*\ :\ v\in \piso(\M),\, s_\phi \le v^*v\}
    \end{align}
    ($s_\phi \le v^*v$ ensures that $v\phi v^*$ is a state).
    If $\M$ is a finite factor, we have
    \begin{equation}
        \{v\phi v^*\ :\ v\in \piso(\M),\, s_\phi \le v^*v\} 
        =\{ u\phi u^*\ : \ u\in\U(\M)\}.
    \end{equation}
\end{lemma}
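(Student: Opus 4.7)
The plan is to prove the stronger statement that every $v\phi v^*$ (with $v\in\piso(\M)$, $s_\phi\le v^*v$) already lies in the norm closure of the single unitary orbit $\{u\phi u^* : u\in\U(\M)\}$; this at once yields the nontrivial inclusion $\overline{\conv}\{v\phi v^*\}\subseteq\overline{\conv}\{u\phi u^*\}$, the reverse being trivial since every unitary is a partial isometry with $u^*u=1\ge s_\phi$. First I would replace $v$ by $w:=vs_\phi$, observing that $w\phi w^*=v\phi v^*$ (as $\phi=s_\phi\phi s_\phi$) and $w^*w=s_\phi$; set $r:=ww^*$.

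The heart of the argument is the following key identity: whenever $1-s_\phi\sim 1-r$ in $\M$ (Murray--von Neumann equivalence), pick a partial isometry $w'\in\M$ with $(w')^*w'=1-s_\phi$ and $w'(w')^*=1-r$, and set $u:=w+w'$. Orthogonality of the source and range projections makes $u$ unitary, and direct expansion of $u\phi u^*(a)=\phi((w^*+(w')^*)a(w+w'))$ shows that the three cross terms vanish (using $\phi=s_\phi\phi s_\phi$ together with $s_\phi(w')^*=0$ and its adjoint $w's_\phi=0$, which follow from $(w')^*w'\le 1-s_\phi$), leaving $u\phi u^*=w\phi w^*$. The finite-factor statement is then immediate: in a finite factor $\tr(1-s_\phi)=\tr(1-w^*w)=\tr(1-ww^*)=\tr(1-r)$, so $1-s_\phi\sim 1-r$ automatically, and the unitary $u$ realizes $u\phi u^*=v\phi v^*$ exactly.

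For the first (general) statement, when $1-s_\phi\not\sim 1-r$, I would approximate. Pick projections $e_n\le s_\phi$ with $e_n\nearrow s_\phi$ and $1-e_n\sim 1-we_nw^*$ in $\M$. The induced states $\phi_n:=\phi(e_n\,\placeholder\,e_n)/\phi(e_n)$ have support $e_n$ and converge to $\phi$ in norm by a standard Cauchy--Schwarz estimate. Applying the key identity to the pair $(\phi_n,we_n)$, which has $(we_n)^*we_n=e_n=s_{\phi_n}$ and $(we_n)\phi_n(we_n)^*=w\phi_nw^*$ (the latter because $\phi_n$ lives on $e_n$), yields unitaries $u_n\in\U(\M)$ with $u_n\phi_nu_n^*=w\phi_nw^*$. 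Since unitary and partial-isometric conjugations are norm contractions on $\M_*$,
\[
\|u_n\phi u_n^*-w\phi w^*\|\le\|u_n(\phi-\phi_n)u_n^*\|+\|w(\phi_n-\phi)w^*\|\le 2\|\phi-\phi_n\|\longrightarrow 0,
\]
placing $w\phi w^*$ in the norm closure of the unitary orbit.

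The hard part will be the construction of the $e_n$ with $1-e_n\sim 1-we_nw^*$. Because $(s_\phi-e_n)\sim(r-we_nw^*)$ automatically via the partial isometry $w(s_\phi-e_n)$, the desired equivalence reduces to a statement about the outer complements; in a factor it suffices to arrange that both $1-e_n$ and $1-we_nw^*$ are properly infinite (two properly infinite projections with common central support are equivalent), and this is feasible in every factor case not already settled by the finite-factor reduction, splitting according to whether $s_\phi$ is finite or properly infinite. The non-factorial case is handled by central decomposition. I expect the key identity and norm estimate to be clean; the genuinely case-dependent work is this application of the Murray--von Neumann comparison theorem in the choice of $e_n$.
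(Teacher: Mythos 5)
Your proposal is correct in substance but takes a genuinely different route from the paper: the paper disposes of the first identity in one line by citing \cite[Lem.~2.4]{haagerup1990equivalence}, and of the finite-factor statement by the standard fact that partial isometries in a finite von Neumann algebra extend to unitaries. What you have written is in effect a self-contained reconstruction of that cited lemma, which buys independence from the reference at the cost of length. The key identity (replacing $v$ by $w=vs_\phi$, extending to a unitary $u=w+w'$ when $1-s_\phi\sim 1-ww^*$, and checking that all terms of $\phi\bigl((w^*+(w')^*)a(w+w')\bigr)$ other than $\phi(w^*aw)$ vanish because $w's_\phi=0$) is sound; the finite-factor case is exactly the paper's extension argument made explicit; and the approximation step is clean, including the contraction estimate $\norm{u_n\phi u_n^*-w\phi w^*}\le 2\norm{\phi-\phi_n}$. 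Note that the key identity still applies to the pair $(\phi_n,we_n)$ even when $s_{\phi_n}$ is strictly smaller than $e_n$, since the vanishing of the cross terms only needs $w_n's_{\phi_n}=0$, which follows from $(w_n')^*w_n'=1-e_n\le 1-s_{\phi_n}$. In a factor the required $e_n$ do exist: if $s_\phi$ is finite and $\M$ is infinite, take $e_n=s_\phi$ (both complements are already infinite, hence equivalent in a $\sigma$-finite factor), and if $s_\phi$ is properly infinite, halving gives an orthogonal decomposition $s_\phi=\sum_k p_k$ with each $p_k\sim s_\phi$, so $e_n=\sum_{k<n}p_k$ works.

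The one genuinely thin spot is the non-factorial case, which you dismiss with ``central decomposition.'' The statement is asserted for arbitrary von Neumann algebras, and there the complements $1-s_\phi$ and $1-vs_\phi v^*$ need not have the same central support (already in $\B(\H)\oplus\B(\H)$ one can have $1-s_\phi$ supported on one summand and $1-r$ on both), so ``both properly infinite'' does not by itself yield equivalence. One has to first reduce to the central support of $\phi$, split off the finite central summand (where the finite-algebra argument applies), and on the properly infinite summand arrange $s_\phi-e_n$ to have full central support so that the two complements become equivalent. This is routine but not automatic, and if you intend the proof to replace the citation it should be written out; for the paper's actual application $\M$ is a factor, so your argument covers everything that is used.
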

\begin{proof}
The first statement directly follows from \cite[Lem.~2.4]{haagerup1990equivalence}. The statement for finite factors follows because every partial isometry can be extended to a unitary.
\end{proof}
In the remainder of the section, we prove Nielsen's theorem, which we split into several Lemmas. 
We begin by proving it under the additional assumption that $(\M,\M')$ is a standard bipartite system.

\begin{lemma}[{\cite[Ex.~IX.1.2]{takesaki2}}]\label{lem:polar-decomposition} 
Let $\M$ be a von Neumann algebra with standard form $(\H,J,\P)$.
There exists a unique vector $|\Psi|\in \P$ and a unique partial isometry $u\in \M$ such that:
\begin{align}
    u\Psi = |\Psi|,\quad uu^* = [\M'|\Psi|] = s(|\psi|),\quad u^* u = [\M'\Psi] = s(\psi),
\end{align}
where $\psi$ is the marginal on $\M$ of $\Psi$ and $|\psi|$ that of $|\Psi|$.
\end{lemma}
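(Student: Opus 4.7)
My plan is to first pin down what $|\Psi|$ must be using the constraints on $u$, and then construct $u$ explicitly. The key observation is that if $u\in\M$ satisfies $u\Psi=|\Psi|$, then for every $b\in\M'$ one has $u(b\Psi)=bu\Psi=b|\Psi|$, because $u$ and $b$ commute. Combined with $u^*u=[\M'\Psi]$, this forces $u$ to act isometrically on the dense subspace $\M'\Psi\subset[\M'\Psi]\H$, so that $\norm{b\Psi}=\norm{b|\Psi|}$ must hold for every $b\in\M'$. Equivalently, $\Psi$ and $|\Psi|$ induce the same normal state $\psi'(b):=\ip\Psi{b\Psi}$ on $\M'$. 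Since $(\H,J,\P)$ is simultaneously a standard form for $\M'$, the $\P\leftrightarrow(\M')_*^+$ bijection recalled in \cref{sec:standard} forces $|\Psi|$ to be the unique representative of $\psi'$ in $\P$, which both identifies $|\Psi|$ and establishes its uniqueness.

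Given this $|\Psi|$, I would define $u$ by reversing the argument: set $u_0(b\Psi):=b|\Psi|$ on $\M'\Psi$. Well-definedness and isometry of $u_0$ are immediate from $\norm{b\Psi}^2=\psi'(b^*b)=\norm{b|\Psi|}^2$, so $u_0$ extends to a partial isometry $u$ on $\H$ with initial projection $[\M'\Psi]=s(\psi)$ and range projection $[\M'|\Psi|]=s(|\psi|)$. By construction, $u$ intertwines the actions of $\M'$ on $[\M'\Psi]\H$ and $[\M'|\Psi|]\H$, hence lies in $(\M')'=\M$, and $u\Psi=|\Psi|$ holds by setting $b=1$. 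Uniqueness of $u$ is then automatic: any candidate must agree with this $u$ on the dense subspace $\M'\Psi$ by the same commutation argument and must vanish on the orthogonal complement of $[\M'\Psi]\H$ by the condition on $u^*u$.

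The one subtle point---and in my view the only real obstacle---is the identification of $|\Psi|$ via the marginal on $\M'$ rather than on $\M$. A naive guess would set $|\Psi|=\Omega_\psi$, the representative in $\P$ of the $\M$-marginal of $\Psi$; in the $L^2(\M)$-picture of the semifinite case, this would correspond to $(\xi\xi^*)^{1/2}$ instead of the correct $(\xi^*\xi)^{1/2}$, and the two coincide only when $\Psi$ is \emph{normal} in the operator-theoretic sense. Once the correct choice is made, the argument is essentially a reformulation of the standard construction of an intertwining partial isometry between two cyclic vectors of the commutant $\M'$ defining the same state, specialized to the particular cyclic vector in $\P$.
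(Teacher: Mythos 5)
Your proof is correct: identifying $|\Psi|$ as the unique representative in $\P$ of the $\M'$-marginal of $\Psi$ (using that $(\H,J,\P)$ is also a standard form for $\M'$), and then defining $u$ on $\M'\Psi$ by $b\Psi\mapsto b|\Psi|$, is exactly the standard argument, and your side remark about $(\xi^*\xi)^{1/2}$ versus $(\xi\xi^*)^{1/2}$ correctly flags the one point where it is easy to go wrong. The paper does not prove this lemma itself but cites it as \cite[Ex.~IX.1.2]{takesaki2}; your argument is the intended solution of that exercise, so there is nothing to contrast.
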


\begin{lemma}\label{lem:twoway-to-oneway1}
    Let $\M$ be a von Neumann algebra with standard form $(\H,J,\P)$.
    Let $\Psi \in \H$, $m'\in\M'$ and $\Phi=m'\Psi \in \P$.
    Then there exists a unique partial isometry $u \in \M$ and $m\in\M$ such that 
    \begin{align}
        \Phi = j(u^*) m \Psi,\quad m = j(m')u, \quad u^* u = s(\psi) \quad uu^* =  s(|\psi|).
    \end{align}
    \begin{proof}
       By \cref{lem:polar-decomposition}, there is a partial isometry $u\in \M$  with $u\Psi = |\Psi|$. 
       We have $J\Psi = j(u^*)|\Psi| = j(u^*)u\Psi$ and $u$ is independent of $m'$. 
        Therefore 
        \begin{align}
          \Phi =   J\Phi = j(m')J\Psi = j(m') j(u^*) u\Psi.
        \end{align}
        Hence $\Phi = j(u^*) m\Psi$, where  $m= j(m')u$. Note that $m^* m = u^* j({m'}^* m')u$.
    \end{proof}
\end{lemma}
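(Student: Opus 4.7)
The plan is to invoke \cref{lem:polar-decomposition} to obtain the unique partial isometry $u\in\M$ with $u\Psi=|\Psi|\in\P$ and support conditions $u^*u=s(\psi)$, $uu^*=s(|\psi|)$; crucially, this $u$ depends only on $\Psi$ and not on $m'$. Since by hypothesis $\Phi=m'\Psi$ lies in the self-dual positive cone, we have $J\Phi=\Phi$, and this $J$-invariance is the lever that lets us relate $\Phi$ to data on the $\M$-side via the swap $J\M'J=\M$.

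Using $J\Phi=\Phi$ together with the identity $Jm'J=j(m')\in j(\M')=\M$, I would first write
\[\Phi=J\Phi=J(m'\Psi)=j(m')\,J\Psi.\]
Next I express $\Psi=u^*|\Psi|$, which holds because $u^*u=s(\psi)=[\M'\Psi]$ acts as the identity on $\Psi$, and use $J|\Psi|=|\Psi|$ since $|\Psi|\in\P$. This gives $J\Psi=Ju^*J\cdot J|\Psi|=j(u^*)|\Psi|$, and substitution yields
\[\Phi=j(m')\,j(u^*)\,|\Psi|.\]

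The decisive observation is that $j(m')\in\M$ while $j(u^*)\in\M'$, so these two operators commute. Reordering the product and rewriting $|\Psi|=u\Psi$ then produces
\[\Phi=j(u^*)\,j(m')\,u\Psi=j(u^*)\,m\,\Psi,\qquad m:=j(m')u\in\M,\]
which is the claimed factorization. Uniqueness of $u$ is inherited directly from \cref{lem:polar-decomposition}, and $m$ is then forced by the prescribed formula.

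I do not anticipate any serious obstacle here; the argument is essentially formal. The only nontrivial point is the commutation $[j(m'),j(u^*)]=0$, which is immediate from $J\M J=\M'$ and $J\M'J=\M$. A little care is needed in tracking which operators lie in $\M$ versus $\M'$ after applying $j$, but this is straightforward bookkeeping. As a sanity check on the construction, one may verify $m^*m=u^*j({m'}^*m')u$, showing consistency of the norms $\|m\Psi\|=\|m'\Psi\|=\|\Phi\|$.
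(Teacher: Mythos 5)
Your proof is correct and follows essentially the same route as the paper: apply \cref{lem:polar-decomposition} to get $u$, use $J$-invariance of $\Phi$ and $|\Psi|$ to write $\Phi = j(m')j(u^*)u\Psi$, and commute $j(m')\in\M$ past $j(u^*)\in\M'$. You merely make explicit a few steps (the commutation and the identity $\Psi=u^*|\Psi|$) that the paper leaves implicit.
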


\begin{lemma}\label{lem:twoway-to-oneway2} 
    Let $\M$ be a factor with standard form $(\H,J,\P)$. Consider the standard bipartite system $(\M,\M')$ on $\H$.
    Let $\Psi \in \H$, $\Phi\in\P$ be unit vectors.
    If $\Phi$ can be prepared from $\Psi$ exactly via two rounds of LOCC, 
    then there also exists a finite set of Kraus operators $\{k_z\}_{z\in Z}\subset \M$ and a set of partial isometries $\{u'_z\}_{z\in Z}\subset \M'$ such that
    \begin{align}
        \sqrt{p_z} \Phi = u'_z k_z \Psi. 
    \end{align}
\begin{proof}
If $\Phi$ arises from two rounds of LOCC from $\Psi$, there exist Kraus operators $\{m_x\}_{x\in X_A}\subset\M$ and for each $x\in X_A$ Kraus operators $\{m'_{y|x}\}_{y\in Y_B}\subset\M'$ such that 
\begin{align}
    \sqrt{p_{y,x}} \Phi =   m'_{y|x}m_x \Psi, 
\end{align}
where $p_{y,x} \geq 0$ and $\sum_{y,x} p_{y,x} = 1$. 
Note that whenever $p_{y,x}>0$ we can assume that $m_x = s(\phi)m_x s(\psi)$ and $m'_{y|x} = s(\phi') m'_{y|x} s(\psi')$. 
Applying \cref{lem:twoway-to-oneway1} with $m' = m'_{y|x}$ to the vectors $\Psi_x = m_x\Psi$ and $\sqrt{p_{y,x}}\Phi$, we find that there exist partial isometries $u'_{x} = j(u_{x}^*)  \in \M'$ and operators $m_{y|x} = j(m'_{y|x}) u_{x} \in\M$ such that
\begin{align}
 \sqrt{p_{y,x}} \Phi = u_{x}' m_{y|x}m_x \Psi.
\end{align}
Defining $k_{y,x} = m_{y|x}m_x$ it follows from the proof of \cref{lem:twoway-to-oneway1} that
\begin{align}
0\leq \sum_{y,x: p_{y,x}>0} k_{y,x}^* k_{y,x} &= \sum_{y,x: p_{y,x}>0}m_x^* m_{y|x}^* m_{y|x} m_x = \sum_{y,x: p_{y,x}>0} m_x^* u_{x}^* j({m'}^*_{y|x}{m'}^*_{y|x})u_{x} m_x.
\end{align}
By \cref{lem:twoway-to-oneway1}, we have $u_x^* u_x = [\M' m_x\Psi] = [m_x s(\psi)\H] = [m_x\H]$, so that $u_x^* u_x m_x = m_x$.
Since $\sum_{y} {m'}^*_{y|x} m'_{y|x}=1$ for all $x$ we find
\begin{align}
    0\leq \sum_{y,x: p_{y,x}>0} k_{y,x}^* k_{y,x} &\leq 1.
\end{align}
We can therefore define an instrument $\{T_z\}_{z\in Z}$ with finite outcome space $Z$ using the  Kraus operators $k_z=k_{y,x}$ for those $(y,x)$ with $p_{y,x}>0$ and extending them to a full set of Kraus operators arbitrarily. 
The added Kraus operators occur with probability $0$ by construction.\footnote{Explicitly, $k_{y,x} = j(m'_{y,x})u_x \Psi_x = J m'_{y,x} J u_x \Psi_x = Ju_x m'_{y,x} m_x \Psi$, since $|\Psi_x| = u_x \Psi_x = u_x m_x \Psi$ by definition of $u_x$. Hence $\sum_{y,x:p_{y,x}>0} \langle \Psi, k^*_{y,x} k_{y,x}\Psi\rangle = \sum p_{x,y} = 1$.} We define $u'_z = u'_{y,x} = j(u^*_x)$ if $p_{y,x}>0$ and set $u'_z=1$ otherwise.
\end{proof}
\end{lemma}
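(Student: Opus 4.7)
The plan is to unpack the two-round LOCC data and apply Lemma~\ref{lem:twoway-to-oneway1} once for each value of Alice's first-round outcome, thereby pushing Bob's second-round operators back across the modular conjugation onto Alice's side, at the cost of a single $\M'$-side partial isometry that depends only on Alice's outcome.

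A two-round protocol provides Kraus operators $\{m_x\}_{x\in X_A}\subset\M$ for Alice with $\sum_x m_x^*m_x=1$ together with conditional Kraus operators $\{m'_{y|x}\}_{y\in Y_B}\subset\M'$ for Bob with $\sum_y m'^*_{y|x}m'_{y|x}=1$ for every $x$, and satisfies $\sqrt{p_{y,x}}\Phi=m'_{y|x}m_x\Psi$ with $\sum_{y,x}p_{y,x}=1$. For each fixed $x$ I would apply Lemma~\ref{lem:twoway-to-oneway1} to the vectors $m_x\Psi\in\H$ and $\sqrt{p_{y,x}}\Phi\in\P$ with $m'\rightsquigarrow m'_{y|x}$. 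The partial isometry produced by that lemma is the polar part of $m_x\Psi$, hence a single $u_x\in\M$ that is independent of $y$; it comes with operators $m_{y|x}:=j(m'_{y|x})u_x\in\M$ satisfying
\begin{equation*}
    \sqrt{p_{y,x}}\,\Phi \;=\; j(u_x^*)\,m_{y|x}\,m_x\,\Psi.
\end{equation*}
Setting $k_{y,x}:=m_{y|x}m_x\in\M$ and $u'_{y,x}:=j(u_x^*)\in\M'$ already exhibits the asserted identity.

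The remaining check is that $\{k_{y,x}\}$ extends to a valid Alice-instrument. Using $j(a)^*j(a)=j(a^*a)$ and Bob's normalization, I would compute $\sum_y k_{y,x}^*k_{y,x}=m_x^*u_x^*u_xm_x$. The polar decomposition (Lemma~\ref{lem:polar-decomposition}) identifies $u_x^*u_x=[\M' m_x\Psi]=[m_x\H]$, which acts as the identity on the range of $m_x$, hence $u_x^*u_xm_x=m_x$ and so $\sum_y k_{y,x}^*k_{y,x}=m_x^*m_x$. Summing over $x$ gives $\sum_{y,x}k_{y,x}^*k_{y,x}\le 1$, and I would pad with one extra Kraus operator $(1-\sum_{y,x}k_{y,x}^*k_{y,x})^{1/2}$ plus an arbitrary $\M'$-partial isometry to obtain a finite, genuine instrument; the added branch carries probability zero on $\Psi$ by construction.

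The main obstacle, such as it is, is purely bookkeeping: one must verify that the polar isometry $u_x$ returned by Lemma~\ref{lem:twoway-to-oneway1} depends only on Alice's outcome $x$, which is precisely what allows $u'_{y,x}=j(u_x^*)$ to be implementable by Bob from the classical message $x$ alone; and one must track supports carefully enough that the antilinear conjugation $J$ interacts compatibly with the Kraus normalization. Beyond these points the conclusion is immediate from Lemmas~\ref{lem:polar-decomposition} and~\ref{lem:twoway-to-oneway1}.
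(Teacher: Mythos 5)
Your proposal is correct and follows essentially the same route as the paper: unpack the two-round data, apply Lemma~\ref{lem:twoway-to-oneway1} once per first-round outcome $x$ (noting the polar isometry $u_x$ depends only on $x$), set $k_{y,x}=m_{y|x}m_x$ and $u'_{y,x}=j(u_x^*)$, verify subnormalization via $u_x^*u_xm_x=m_x$, and pad to a full instrument whose extra branch has probability zero. The only cosmetic difference is that the paper first replaces $m_x$ by $s(\phi)m_xs(\psi)$ so that $[\M'm_x\Psi]=[m_x\H]$ holds exactly (giving equality $\sum_y k_{y,x}^*k_{y,x}=m_x^*m_x$ rather than just $\le$), but the inequality you obtain suffices for the padding step.
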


\begin{corollary}\label{cor:oneway}
    Let $(\M,\M')$ be a standard bipartite system of factors on $\H$. Let $\Psi,\Phi\in\H$ be unit vectors.
    If $\Psi$ can be converted to $\Phi$ via LOCC, then it can be done by a single measurement on $\M$ followed by a partial isometry on $\M'$, depending on the measurement outcome.
    I.e., there exist  Kraus operators $\{k_x\}_{x\in X}$ in $\M$  and partial isometries $u_x'\in\M'$, such that
    \begin{equation}
        \sqrt{p_x} \Phi= k_xu_x' \Psi, \qquad p_x = \ip{\Psi}{k_x^* k_x\Psi}.
    \end{equation}
    \begin{proof}
        Iteratively apply the previous Lemma, in each step first absorbing the partial isometries into the Kraus operators on $\M'$ and afterward extending them to a valid set of Kraus operators again. 
    \end{proof}
\end{corollary}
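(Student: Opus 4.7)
My plan is a straightforward induction on the number of rounds $n$ in the LOCC protocol, invoking \cref{lem:twoway-to-oneway2} at each reduction step. Without loss of generality I assume the protocol alternates between parties and begins with Alice: consecutive same-party rounds can be merged by taking products of Kraus operators, and a leading Bob-only round can be handled by the same device used in the proof of \cref{lem:twoway-to-oneway2}, namely \cref{lem:twoway-to-oneway1} applied with the single partial isometry $u\in\M$ satisfying $u\Psi=|\Psi|$. I also pass to $\Phi\in\P$ using \cref{lem:polar-decomposition}, absorbing the resulting partial isometry in $\M'$ into the final $u_x'$ at the end of the argument.

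The base case $n=2$ is exactly the statement of \cref{lem:twoway-to-oneway2}. For the inductive step I apply that lemma to the first two rounds, obtaining an effective structure of the form ``Alice measures with $\{k_z\}_z\subset\M$, then Bob applies partial isometries $\{u_z'\}_z\subset\M'$''. I then process round~3 according to which party performs it. If round~3 is Bob's measurement with Kraus operators $\{m_{v|z}'\}_v\subset\M'$, I pre-compose with $u_z'$ to obtain $\tilde m_{v|z}':=m_{v|z}'u_z'$, which satisfies $\sum_v \tilde m_{v|z}^{\prime *}\tilde m_{v|z}' = u_z^{\prime *}u_z'\le 1$, and I extend to a full Kraus set by appending $(1-u_z^{\prime *}u_z')^{1/2}$. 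If round~3 is Alice's measurement, I use commutativity of $\M$ and $\M'$ to push $u_z'$ past her operators and then merge rounds~1 and~3 into a single $\M$-measurement. In either case the total number of genuine measurement rounds decreases strictly while the initial structure (Alice measurement followed by Bob partial isometries) is retained at the head of the protocol, so iterating the procedure collapses everything into a single Alice measurement followed by a single Bob partial isometry.

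The main technical obstacle I anticipate is verifying that the appended dummy Kraus operators in the extension step correspond to outcomes of conditional probability zero, so that the extended protocol genuinely realizes the same state transition $\Psi\to\Phi$ as the original. This zero-probability claim rests on \cref{lem:twoway-to-oneway2} already guaranteeing $\norm{u_z'k_z\Psi}^2=p_z=\norm{k_z\Psi}^2$, so that $k_z\Psi$ lies in the initial subspace $\operatorname{ran}(u_z^{\prime *}u_z')$ of the partial isometry $u_z'$. This isometry property is preserved under subsequent commutations with Alice's later Kraus operators (since those operators act only within $\M$ and so leave $u_z^{\prime *}u_z'\in\M'$ untouched), ensuring that at every stage of the iterative reduction the dummy extensions appended to complete Bob's instruments contribute zero probability and therefore do not alter the output state.
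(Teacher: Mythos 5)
Your proposal is correct and follows essentially the same route as the paper's proof: iterate \cref{lem:twoway-to-oneway2}, commute the resulting $\M'$-partial isometries past Alice's later Kraus operators, absorb them into Bob's next instrument, and complete to a valid Kraus set with dummy operators that fire with probability zero. The extra care you take in verifying the zero-probability claim (via $u_z'^{*}u_z'k_z\Psi=k_z\Psi$ and its stability under commutation with elements of $\M$) is exactly the point the paper leaves implicit, so nothing is missing.
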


\begin{proof}[Proof of "$\Rightarrow$" in \cref{thm:nielsen} for standard bipartite systems]
    We first show that if there exists a LOCC protocol that prepares $\Phi$ from $\Psi$ exactly, then $\psi \in \overline{\mathrm{conv}}\{u\phi u^* : u\in\U(\M)\}$. As before, we assume without loss of generality that $\Phi\in P$.
    By \cref{cor:oneway} (interchanging $\M$ and $\M'$) we have
    \begin{align}
        \sqrt{p_z}\Phi = u_z k'_z \Psi, 
    \end{align}
    for some finite set of probabilities $\{p_z\}_{z\in Z}$, partial isometries $\{u_z\}_{z\in Z}\subset\M$ and Kraus operators $\{k'_z\}_{z\in Z}\subset \M'$. 
    It follows from the fact that $\{k'_z\}$ are Kraus operators ($\sum_z {k'_z}^* k'_z = 1$) that ${u_z}^* u_z k'_z \Psi = k'_z \Psi$.\footnote{We have $r_z := \langle \Psi, {k'_z}^*k'_z \Psi\rangle \geq \sum_z \langle \Psi, {k'_z}^* u_z^* u_z k_z\Psi\rangle = p_z$ with $1 = \sum_z r_z = \sum_z p_z = 1$. Therefore $r_z=p_z$ and $u_z^* u_z k'_z \Psi = k'_z\Psi$.} Hence we have
    \begin{align}
        \sqrt{p_z}u_z^* \Phi = k'_z\Psi. 
    \end{align}
    Thus, for any $a\in\M$
    \begin{align}
        \psi(a) = \sum_z \langle\Psi, a {k'}_z^*k'_z \Psi\rangle = \sum_z p_z \langle\Phi, u_z a u_z^* \Phi\rangle = \sum_z p_z (u_z^* \phi u_z)(a).   
    \end{align}
    Thus $\psi \in \mathrm{conv}\{v \phi v^* : v\in\piso(\M)\}$.
\end{proof}

For the converse direction of Nielsen's theorem, we use (Tomita-Takesaki) modular theory to construct an LOCC protocol converting $\Omega_\psi$ to $\Omega_\phi$ provided that $\psi$ can be obtained as a convex combination of partial isometries applied to $\phi$:

\begin{lemma}\label{lem:cocycle}
    Let $\phi$ be a state on a von Neumann algebra $\M$, let $v_1,\ldots,v_n\in\M$ be a collection of partial isometries with $v_x^*v_x\ge s_\phi$, and let $p_x\ge0$ be such that $\sum_{x=1}^n p_x =1$.
    Set $\psi = \sum_x \psi_x$, where $\psi_x=p_x\,v_x\phi v_x^*$, and $k_x := v_x^*[D\psi_x:D\psi]_{-i/2} \in \M$.
    It holds that
    \begin{equation}
        k_x Jv_x^*J \Omega_\psi = \sqrt{p_x} \Omega_\phi,\qandq \sum_x k_x^*k_x=1.
    \end{equation}
\end{lemma}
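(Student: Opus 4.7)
The plan is to verify both identities by combining (i) the description of vector representatives in the natural cone $\P$ of the standard form $(\H,J,\P)$ of $\M$ with (ii) Araki's formula $\Omega_{\psi_x} = [D\psi_x:D\psi]_{-i/2}\Omega_\psi$. Since $\psi_x \le \psi$, the Connes cocycle $u_x(t) = [D\psi_x:D\psi]_t$ extends to a bounded analytic $\M$-valued function on the strip $-1/2 \le \mathrm{Im}\,t \le 0$, with boundary value $h_x := u_x(-i/2) \in \M$ of norm $\le 1$. I would assume $\psi$ faithful without loss of generality, passing to $s_\psi \M s_\psi$ otherwise.

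First I would identify $\Omega_{\psi_x}$ in terms of $\Omega_\phi$. Because $\P$ is invariant under $a \mapsto aJaJ$ for $a\in\M$, the vector $v_xJv_xJ\,\Omega_\phi$ lies in $\P$, and the hypothesis $v_x^*v_x \ge s_\phi$ yields $v_x^*v_x\,\Omega_\phi = \Omega_\phi$ as well as $Jv_x^*v_xJ\,\Omega_\phi = \Omega_\phi$ (via $J\Omega_\phi=\Omega_\phi$). A direct computation using these relations and the commutation between $\M$ and $J\M J = \M'$ gives
\begin{equation*}
    \ip{v_xJv_xJ\Omega_\phi}{a\,v_xJv_xJ\Omega_\phi} = \phi(v_x^*av_x),\qquad a\in\M.
\end{equation*}
By uniqueness of the natural-cone representative, $\Omega_{\psi_x} = \sqrt{p_x}\,v_xJv_xJ\,\Omega_\phi$, and combining with Araki's formula gives $h_x\Omega_\psi = \sqrt{p_x}\,v_xJv_xJ\,\Omega_\phi$. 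Applying $v_x^*Jv_x^*J$ from the left and using that $Jv_x^*J \in \M'$ commutes with $v_x^*, h_x \in \M$, the left-hand side becomes $v_x^*h_x\,Jv_x^*J\,\Omega_\psi = k_x Jv_x^*J \Omega_\psi$, while the right-hand side collapses to $\sqrt{p_x}\,(v_x^*v_x)(Jv_x^*v_xJ)\Omega_\phi = \sqrt{p_x}\,\Omega_\phi$, which proves the first identity.

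For the second identity, I would first reduce $k_x^*k_x$ to $h_x^*h_x$: the left support of $h_x$ equals $s_{\psi_x}$ (because $h_x\Omega_\psi = \Omega_{\psi_x}$ and $\Omega_\psi$ is cyclic for $\M'$), and $s_{\psi_x}\le v_xv_x^*$ follows from $\psi_x(1-v_xv_x^*) = p_x\phi(v_x^*v_x - (v_x^*v_x)^2) = 0$; hence $v_xv_x^*h_x = h_x$ and $k_x^*k_x = h_x^*h_x$. The remaining claim $\sum_x h_x^*h_x = 1$ is the main obstacle. The state identity $\sum_x \psi_x = \psi$ translates via $\psi_x(a) = \psi(h_x^*ah_x)$ into $\psi\circ T = \psi$ for the cp map $T(a):= \sum_x h_x^*ah_x$, but this alone does not force $T(1)=1$. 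The operator-level identity relies on the full content of modular theory: in the semifinite case it is immediate, since $h_x = \rho_{\psi_x}^{1/2}\rho_\psi^{-1/2}$ on $s_\psi \H$ and hence $\sum_x h_x^*h_x = \rho_\psi^{-1/2}(\sum_x \rho_{\psi_x})\rho_\psi^{-1/2} = s_\psi = 1$ under faithfulness; in general it follows from the Araki--Masuda/Kosaki identification of $h_x^*h_x$ with the spatial derivative $d\psi_x/d\psi$, whose sum over $x$ equals $d\psi/d\psi = s_\psi$.
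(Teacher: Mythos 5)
Your treatment of the first identity is correct and essentially identical to the paper's: both arguments identify $\Omega_{\psi_x}=\sqrt{p_x}\,v_xJv_xJ\,\Omega_\phi$ from the invariance of the cone under $a\mapsto aJaJ$, use the analytic continuation $h_x\Omega_\psi=\Omega_{\psi_x}$ of the Connes cocycle (valid because $\psi_x\le\psi$), and then peel off the partial isometries using $v_x^*v_x\Omega_\phi=\Omega_\phi$ and $J\Omega_\phi=\Omega_\phi$. Your reduction $k_x^*k_x=h_x^*h_x$ via $s_{\psi_x}=v_xs_\phi v_x^*\le v_xv_x^*$ is also fine.

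The gap is in the second identity for general (type $\III$) $\M$. You correctly observe that $\psi\circ T=\psi$ with $T=\sum_x h_x^*(\placeholder)h_x$ does not by itself force $T(1)=1$, but the step you substitute for it --- "$h_x^*h_x$ is the spatial derivative $d\psi_x/d\psi$, and these are additive" --- is asserted rather than proved, and as stated it is not well-formed: Connes' spatial derivative is taken with respect to a weight on the \emph{commutant} and is unbounded in general, and identifying it with the bounded element $h_x^*h_x$ together with the additivity of that identification is essentially equivalent to the claim being proved, so nothing has been gained. The paper closes this step with a short elementary computation that your proposal is missing: test $\sum_x k_x^*k_x$ against vectors $b\Omega_\psi$, $c\Omega_\psi$ with $b,c\in\M'$. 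Since $h_x\in\M$ commutes with $b^*c$ and $v_xk_x\Omega_\psi=h_x\Omega_\psi=\Omega_{\psi_x}$, one gets $\ip{b\Omega_\psi}{\sum_x k_x^*k_x\,c\Omega_\psi}=\sum_x\ip{\Omega_{\psi_x}}{b^*c\,\Omega_{\psi_x}}$. Now use that every $\xi\in\P$ is $J$-invariant, so $\ip{\xi}{JaJ\xi}=\ip{\xi}{a^*\xi}$: the $\M'$-expectation functional of a cone vector is a \emph{linear} function of its $\M$-expectation functional, whence $\sum_x\omega_{\Omega_{\psi_x}}\restriction\M'=\omega_{\Omega_\psi}\restriction\M'$ and the sum above equals $\ip{b\Omega_\psi}{c\Omega_\psi}$. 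Cyclicity of $\Omega_\psi$ for $\M'$ (your faithfulness reduction) then yields $\sum_x k_x^*k_x=1$. You should replace the spatial-derivative appeal by this argument, or by a precise citation that actually contains the additivity statement you need.
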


To understand the appearance of the Connes cocycle, or rather its analytic continuation, in \cref{lem:cocycle}, let us consider the case where $(\M,\tr)$ is a semifinite von Neumann algebra.
In this case the analytic continuation of the Connes cocycle is simply given by $[D\omega:D\varphi]_{-i/2}= (\rho_\omega)^{\frac12}(\rho_\varphi)^{-\frac12}$, where $\rho_\omega$ and $\rho_\varphi$ are the density operators implementing faithful positive linear functionals $\omega,\varphi\in\M_*^+$ with respect to the trace.
Since $\rho_{v_x\phi v_x^*} = v_x\rho_\phi v_x^*$, the Kraus operators in \cref{lem:cocycle} may be written as 
\begin{equation}
    k_x = v_x^* (\rho_{\psi_x})^{\frac12} (\rho_\psi)^{-\frac12}= \sqrt{p_x}(\rho_\phi)^{\frac12} v_x^* (\rho_\psi)^{-\frac12},
\end{equation}
and this is exactly how they are defined in the proof of the finite-dimensional case (see, e.g., \cite[Sec.~12.5.1]{nielsen_chuang}).

\begin{proof}
    Since $\psi \ge \psi_x$ for all $x$, \cite[Lem.~A.24]{hiai2021quantum} (and the discussion after \cite[Lem.~A.59]{hiai2021quantum}), implies
    \begin{align}
        v_xk_x\Omega_\psi = v_x v_x^* [D\psi_x :D\psi]_{-i/2}\Omega_{\psi} = v_x v_x^* \Omega_{\psi_x} = \sqrt{p_x}v_x v_x^* v_x Jv_xJ \Omega_\phi = \sqrt{p_x} v_x Jv_xJ \Omega_\phi = \Omega_{\psi_x},
    \end{align} 
    since $v^*_x v_x v^*_x = v^*_x$ for any partial isometry. Similarly, we also have $v^*_x v_x k_x = k_x$.
    We thus find
    \begin{align}
        \sqrt{p_x} \Omega_\phi = Jv_x^*J v_x^* v_xk_x\Omega_\psi  = k_x Jv_x^* J \Omega_\psi.
    \end{align}
    This proves the first equation. For the second one, let $b,c\in\M'$ be arbitrary and note that 
    \begin{align*}
        \ip{b\Omega_\psi}{\sum_x k_x^*k_xc\Omega_\psi} & = \sum_x \ip{v_xk_x\Omega_\psi}{b^*c v_xk_x\Omega_\psi} = \sum_x \ip{\Omega_{\psi_x}}{b^*c\Omega_{\psi_x}} \\
        & = \sum_x (\psi_x)'(b^*c) = (\sum_x \psi_x)'(b^*c) = \psi' (b^*c) \\
        &=\ip{\Omega_\psi}{b^*c\Omega_\psi} = \ip{b\Omega_\psi}{c\Omega_\psi}.
    \end{align*}
\end{proof}

\begin{proof}[Proof of "$\Leftarrow$" in \cref{thm:nielsen} for standard bipartite systems] Without loss of generality, we can assume $\Psi = \Omega_\psi\in P$ and $\Phi = \Omega_\phi \in P$, since state vectors can be mapped to $P$ by local partial isometries.
By \cref{lem:cocycle} there exist Kraus operators $\{k_x\}$ such that
\begin{align}
    j(v_x^*) k_x\Omega_\psi = \sqrt{p_x}\Omega_\phi. 
\end{align}
Thus, the LOCC protocol which consists of first applying the instrument $\{k^*_x(\placeholder)k_x\}$ on $\M$ and then the instrument $\{T_{y|x}\}_{y=1}^2$ on $\M'$ defined by Kraus operators $\{m'_{1|x} = j(v_x^*), m'_{2|x} =  1 - j(v_xv_x^*)\}$ prepares $\Phi$ from $\Psi$.
Note that the outcome $y=2$ of the instrument on $\M'$ happens with probability $0$.
\end{proof}

It remains to prove the case where $(\M,\M')$ is not a standard bipartite system.
Let us fix a unit vector $\Psi\in\H$. Consider the projection $e = s_\psi s_\psi'$ (where $\psi$ and $\psi'$ are the induced states on $\M$ and $\M'$, respectively).
From \cref{lem:minimal subspace}, we know that the induced bipartite system $(\M_0,\M_0')= (e\M e,e\M'e)$ on
\begin{equation}\label{eq:H_0}
    \H_0 := [\M'\Psi]\cap  [\M\Psi] = s_\psi s_{\psi'}\H
\end{equation}
is a standard bipartite system (note that $(e\M e)'=e\M'e$ \cite[Prop.~5.5.6, Cor.~5.5.7]{KadisonRingrose1}).
Roughly speaking, our strategy to prove Nielsen's theorem is to use \cref{thm:slocc} to ensure that we may assume that the vector $\Phi$ is an element of $\H_0$, and to thereby reduce the general case to the case of standard bipartite systems.

\begin{lemma}\label{lem:wlog std}
    If $\Phi\in\H_0$ is a unit vector, then the following are equivalent:
    \begin{enumerate}[(a)]
        \item\label{it:wlog std1} $\Psi$ can be transformed to $\Phi$ in $\LOCC$ by the bipartite system $(\M,\M')$ on $\H$,
        \item\label{it:wlog std2} $\Psi$ can be transformed to $\Phi$ in $\LOCC$ by the standard bipartite system $(\M_0,\M_0')$ on $\H_0$
    \end{enumerate}
\end{lemma}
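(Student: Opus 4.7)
The proof splits into two directions, both of which rest on two structural ingredients: (i) since $\Psi,\Phi\in\H_0$ we have $e\Psi=\Psi$ and $e\Phi=\Phi$, and (ii) because $\M$ is a factor and $s_{\psi'}\ne 0$, the two-sided ideal $\{R\in\M:R\,s_{\psi'}=0\}\subset\M$ vanishes (and symmetrically for $\M'$). Ingredient (ii) furnishes the natural isomorphisms $\M_0=e\M e\cong s_\psi\M s_\psi$ via $a\mapsto a\,s_{\psi'}$ and $\M_0'\cong s_{\psi'}\M' s_{\psi'}$.

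For $(b)\Rightarrow(a)$: since $(\M_0,\M_0')$ is standard (\cref{lem:minimal subspace}), I will use \cref{cor:oneway} to take the LOCC to be one-way: Alice applies an instrument with Kraus operators $\{k_x\}\subset\M_0$ and Bob applies a conditional partial isometry $u_x'\in\M_0'$, with $k_xu_x'\Psi=\sqrt{p_x}\,\Phi$. Writing $k_x=a_x s_{\psi'}$ for $a_x\in s_\psi\M s_\psi$ and $u_x'=v_x s_\psi$ for a partial isometry $v_x\in s_{\psi'}\M' s_{\psi'}$, a short commutation calculation using $s_\psi\Psi=s_{\psi'}\Psi=\Psi$ yields $k_xu_x'\Psi=a_xv_x\Psi$. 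The completeness $\sum k_x^*k_x=e$ becomes $(\sum a_x^*a_x-s_\psi)\,s_{\psi'}=0$ and, by (ii), forces $\sum a_x^*a_x=s_\psi$. Completing Alice's instrument by the single additional Kraus operator $1-s_\psi\in\M$, which vanishes on $\Psi$ because $s_\psi\Psi=\Psi$, yields a valid instrument on $\M$; Bob's partial isometries are lifted analogously, producing a one-way LOCC in $(\M,\M')$ converting $\Psi$ to $\Phi$.

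For $(a)\Rightarrow(b)$: the key device is to prepend to the given LOCC the single-round instruments $\{s_\psi, 1-s_\psi\}$ on Alice's side and $\{s_{\psi'}, 1-s_{\psi'}\}$ on Bob's side. Both prep outcomes occur with probability $1$ on the input state $\Psi$, so the augmented protocol still converts $\Psi$ to $\Phi$ with probability $1$ in $(\M,\M')$. Applying \cref{lem:pure state locc and kraus ops} to this augmented protocol and restricting to the non-trivial branch (which exhausts all probability), I obtain rank-1 local Kraus operators of the form $k_{A,y}=\tilde k_{A,y}\,s_\psi\in\M$ and $k_{B,y}=\tilde k_{B,y}\,s_{\psi'}\in\M'$ satisfying $k_{A,y}k_{B,y}\Psi=c_y\Phi$ with $\sum|c_y|^2=1$. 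Setting $a_y:=s_\psi\tilde k_{A,y}s_\psi\in s_\psi\M s_\psi$ and $b_y:=s_{\psi'}\tilde k_{B,y}s_{\psi'}\in s_{\psi'}\M' s_{\psi'}$, the commutation relations between $s_\psi,s_{\psi'}$ and the opposite algebras yield $a_yb_y\Psi=e\,K_y\Psi=c_y\Phi$, where $K_y=k_{A,y}k_{B,y}$. Viewing the $\{a_y\},\{b_y\}$ as Kraus operators in $(\M_0,\M_0')$ via the isomorphisms above and completing them to full local instruments with garbage Kraus operators, the identity $\sum\|a_yb_y\Psi\|^2=\sum|c_y|^2=1$ forces the garbage outcomes to have probability zero on $\Psi$, and the result is a valid LOCC in $(\M_0,\M_0')$ converting $\Psi$ to $\Phi$.

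The main obstacle is the careful verification of the completeness relations of the constructed instruments. In $(b)\Rightarrow(a)$ this rests on the factor-ideal argument that promotes the $\M_0$-completeness to $s_\psi\M s_\psi$-completeness. In $(a)\Rightarrow(b)$ the prepending step is essential: it ensures that every Kraus operator produced by \cref{lem:pure state locc and kraus ops} in the non-trivial branch carries $s_\psi$ (resp.\ $s_{\psi'}$) as its rightmost factor, so that after compression the completeness sums land in the corner algebras with the right upper bounds, enabling the completion by garbage Kraus operators of probability zero.
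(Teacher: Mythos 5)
Your proof is correct and follows the same two ideas as the paper's (very terse) argument: for (a)$\Rightarrow$(b) compress the Kraus operators to the corner $e\H$ (your prepending of $\{s_\psi,1-s_\psi\}$ and $\{s_{\psi'},1-s_{\psi'}\}$ is exactly the bookkeeping that makes "truncating all Kraus operators to $\H_0$" legitimate), and for (b)$\Rightarrow$(a) pad the instruments with additional Kraus operators of probability zero to restore unitality on $\H$. The only stylistic difference is that your (b)$\Rightarrow$(a) detours through \cref{cor:oneway}; this is unnecessary (the padding by $1-s_\psi$, resp.\ $1-s_{\psi'}$, works round by round on an arbitrary protocol, since all intermediate vectors stay in $e\H$) and quietly inherits that corollary's implicit reduction of $\Phi$ to the positive cone, but it does not affect correctness.
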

\begin{proof}
    \ref{it:wlog std1} $\Rightarrow$ \ref{it:wlog std2}: By truncating all Kraus operators of the overall instrument to the subspace $\H_0$, we get an LOCC protocol of the bipartite system $(\M_0,\M_0')$ that maps $\Psi$ to $\Phi$.
    \ref{it:wlog std2} $\Rightarrow$ \ref{it:wlog std1}: We can simply add additional Kraus operators to ensure unitality on $\H$. 
\end{proof}

\begin{proof}[Proof of \cref{thm:nielsen} in the general case:]
    We only have to show eq.~\eqref{eq:nielsen1}.
    "$\Rightarrow$":
    In particular, we have $\Psi\barslocc\Phi$ so that \cref{thm:slocc} implies that we can find partial isometries $v\in\M$, $v'\in\M'$ which act isometrically on $\Psi$ and satisfy $\Phi \in [\M' vv'\Psi]\cap [\M v'v\Psi]$.
    Thus, we may simply assume that $\Phi \in \H_0$ and that $s_\psi \ge s_\phi$ (see \eqref{eq:H_0}).
    By \cref{lem:wlog std}, it follows that the standard bipartite system $(\M_0,\M_0')$ on $\H_0$ admits an LOCC protocol sending $\Psi$ to $\Phi$.
    Using that Nielsen's theorem has been proved for standard bipartite systems, we get a probability distribution $\{p_x\}$ and partial isometries $v_x\in\M_0$ such that $\phi_0=\sum_x p_xv_x\psi_0 v_x^*$, where $\psi_0$ and $\phi_0$ are the states induced by $\Psi$ and $\Phi$ on $\M_0$.
    Since $\M_0 = s_\psi \M s_\psi$ and $s_\psi\ge s_\phi$, we get $\phi = \sum_x p_x v_x\psi v_x^*$.

    "$\Leftarrow$": 
    It follows from the assumption that $s_\phi \lesssim s_\psi$.
    By \cref{it:slocc4} of \cref{thm:slocc}, can use LOCC to map $\Psi$ to a vector $\tilde\Psi$ such that $\Phi \in [\M'\tilde\Psi]\cap[\M\tilde\Psi]$.
    The claim now follows from \cref{lem:wlog std}.
\end{proof}

\subsection{Pure state LOCC and different types}
\label{sec:trivialization-III}

In this section, we first discuss pure state LOCC transformations for bipartite systems $(\M,\M')$ on $\H$ where $\M$ (and hence $\M'$) is a  type $\III$ factor.
We then show that whenever $\M$ is not of type $\I$, every pure state $\Omega\in\H$ has the property that arbitrary finite-dimensional entangled states may be prepared from it via LOCC. 

\begin{theorem}\label{thm:LOCC-trivial}
    Let $(\M,\M')$ be a  bipartite system of factors on $\H$ with $\M,\M'\neq \CC$. Then the following are equivalent:
    \begin{enumerate}[(a)]
        \item\label{it:LOCC-trivial1} $\M$ has type $\III$ (equivalently, $\M'$ has type $\III$),
        \item\label{it:LOCC-trivial2} $\Psi \xrightarrow{\overline{\LOCC}}\Phi$ for any two unit vectors $\Psi,\Phi \in \H$.
    \end{enumerate}
    Moreover, $\M$ (hence $\M'$) has type $\III_1$ if and only if for any two unit vectors $\Psi,\Phi \in \H$ and any $\eps>0$ there exist unitaries $u\in \M$, $v\in\M'$ such that
    \begin{align}
        \norm{u v\Psi - \Phi} <\eps.
    \end{align}
\end{theorem}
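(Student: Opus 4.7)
My plan is to deduce both parts of the theorem from Nielsen's theorem (\cref{thm:nielsen}), after establishing the relevant approximation properties at the level of the marginal state $\psi$ on $\M$.

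For (a) $\Rightarrow$ (b): By Nielsen, $\Psi \barlocc \Phi$ is equivalent to $\psi \prec \phi$, i.e., $\psi \in \overline{\conv}\{u\phi u^* : u \in \U(\M)\}$ in the norm of $\M_*$. I will show that in a type $\III$ factor with separable predual \emph{any} two normal states satisfy this relation. The argument uses that all nonzero projections of $\M$ are Murray--von Neumann equivalent, so that partial isometries can freely rearrange supports; combined with a Dixmier-style averaging to absorb spectral differences, the norm closure of $\conv\{u\phi u^* : u \in \U(\M)\}$ is seen to contain every normal state of the same total mass. For (b) $\Rightarrow$ (a), if $\M$ is semifinite and $\M \neq \CC$, then \cref{thm:slocc} forces $s_\psi \sim s_\phi$ for all vector states in $\H$, since LOCC triviality implies the symmetric inequality $s_\phi \lesssim s_\psi$ in both directions. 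However, the generalized Schmidt rank $r(\Psi) = \tr(s_\psi)$ (cp.\ \cref{cor:schmidt-rank}) is a nontrivial SLOCC monotone in any semifinite factor $\neq\CC$. After reducing, if necessary, to a standard bipartite system via \cref{lem:minimal subspace} or \cref{lem:tensor-standard}, I can produce explicit $\Psi, \Phi$ with $s_\psi \not\sim s_\phi$, contradicting (b).

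For the $\III_1$ part, the key input is the \emph{Connes--Størmer transitivity theorem}: a factor $\M$ (with separable predual) is of type $\III_1$ iff for any two normal states $\omega_1, \omega_2$ on $\M$ and any $\eps > 0$ there is a unitary $u \in \U(\M)$ with $\|u\omega_1 u^* - \omega_2\| < \eps$. For the forward direction I reduce to a standard bipartite system (\cref{lem:minimal subspace}), apply Connes--Størmer at the state level to get $u$ with $\|u\psi u^* - \phi\| < \eps$, and lift to the vector level via the Haagerup estimate \eqref{eq:purification_estimate} together with the identity $\Omega_{u\psi u^*} = u\,JuJ\,\Omega_\psi$ arising from the canonical standard-form implementation (\cref{lem:auts-lp}). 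Since $v := JuJ \in \U(\M')$, this already provides local unitaries $u \in \U(\M)$, $v \in \U(\M')$ with $uv\,\Omega_\psi \approx \Omega_\phi$. For arbitrary unit vectors $\Psi, \Phi$, I link them to their positive-cone representatives $\Omega_\psi, \Omega_\phi$ by partial isometries in $\M'$, which can be extended to unitaries using the type $\III$ property that all nonzero proper projections are equivalent in a factor with separable predual. The converse is immediate: local unitary approximability implies at the $\M$-marginal level that the unitary orbit of $\phi$ is norm-dense in the set of normal states equivalent to $\psi$, which by Connes--Størmer forces $\M$ to be of type $\III_1$.

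The step I expect to be most delicate is the final extension of the linking partial isometries in $\M'$ to full unitaries: the extension fails in the degenerate corner case where exactly one of the source or range projections equals the identity while the other does not. I plan to circumvent this by a short preliminary perturbation of $\Psi$ and $\Phi$ by a small amount of a cyclic separating reference vector (renormalizing), after which the perturbed vectors are cyclic and separating, all relevant $\M'$-supports equal $1$, and the extension of partial isometries to unitaries becomes automatic. Choosing the perturbation to be of order $\eps$ lets me fold this error into the $\eps$-budget of the theorem.
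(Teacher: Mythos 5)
Your overall architecture matches the paper's: Nielsen's theorem plus triviality of majorization in type $\III$ for the equivalence (a)$\Leftrightarrow$(b), and Connes--St{\o}rmer homogeneity lifted to the vector level via the standard form for the $\III_1$ part. The one genuine gap is in (a)$\Rightarrow$(b): the statement that \emph{any} two normal states $\psi,\phi$ on a $\sigma$-finite type $\III$ factor satisfy $\psi\in\overline{\conv}\{u\phi u^*: u\in\U(\M)\}$ is precisely Haagerup--St{\o}rmer's \cref{lem:majorization_in_typeIII-mt}, which the paper simply cites. Your two-line sketch --- ``partial isometries can freely rearrange supports, combined with a Dixmier-style averaging to absorb spectral differences'' --- does not constitute a proof of it. Dixmier averaging of $u\phi u^*$ has no trace to converge to in type $\III$, and the whole content of the lemma is that the \emph{norm}-closed convex hull of a single unitary orbit already exhausts the state space; Murray--von Neumann equivalence of all projections gives you control of supports but not of the ``spectral'' data of the states. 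You must either cite this result or supply its actual proof; as written this step would fail.

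The remaining deviations from the paper are legitimate or minor. For (b)$\Rightarrow$(a) you argue via the generalized Schmidt rank $r(\Psi)=\Tr s_\psi$ as an SLOCC monotone, whereas the paper exhibits two vector states whose marginals fail to majorize each other by inspecting spectral scales; your route works, but note that \cref{lem:tensor-standard} enlarges $\H$ and so cannot be used here --- you must produce the two vectors with inequivalent supports inside the \emph{given} $\H$ (reduction via \cref{lem:minimal subspace} together with \cref{lem:wlog std} does suffice, and cutting a cyclic separating vector of the reduced standard system by a finite-trace projection in $\M_0'$ gives the second vector). For the $\III_1$ part your treatment is essentially the paper's; your extra care about extending the linking partial isometries to unitaries is welcome (the paper instead invokes strong density of unitaries in the partial isometries of a properly infinite factor), but the specific fix is not quite right: $(\Psi+\eps\Xi)/\norm{\Psi+\eps\Xi}$ with $\Xi$ cyclic and separating need not be separating for a given $\eps$. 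Invoke instead the density of the set of cyclic and separating vectors (a dense $G_\delta$ for $\sigma$-finite algebras in standard form), or note that no reduction is needed at all since a type $\III$ algebra on a separable Hilbert space is automatically in standard form. Finally, in the converse of the $\III_1$ statement you should say explicitly that the hypothesis first forces type $\III$ (by the established equivalence), hence every normal state on $\M$ has a vector representative in $\H$, before passing from local-unitary approximability of vectors to homogeneity of the state space.
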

Hence, $\M$ has type $\III$ if and only if every pure state can be transformed into any other pure state to arbitrary accuracy by an LOCC protocol. It has type $\III_1$ if and only if the same is true via local operations without using classical communication.  

The proof of \cref{thm:LOCC-trivial} relies on the following two results. 
The first was shown by Haagerup and St\o rmer in \cite{haagerup1990equivalence}:
\begin{lemma}[{\cite[Lem.~9.3]{haagerup1990equivalence}}]\label{lem:majorization_in_typeIII-mt}
    Let $\M$ be a $\sigma$-finite type $\III$ factor. Then $\psi\prec \phi$ for all pairs of normal states $\phi,\psi$ on $\M$.
\end{lemma}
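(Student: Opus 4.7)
The plan is to reduce to faithful states and then invoke homogeneity of the state space under the unitary group, which is the hallmark of type III factors.

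First, since $\M$ is $\sigma$-finite it admits a faithful normal state $\omega$. For $\eps>0$ the perturbations $\phi_\eps := (1-\eps)\phi+\eps\omega$ and $\psi_\eps := (1-\eps)\psi+\eps\omega$ are faithful normal states with $\|\phi-\phi_\eps\|,\|\psi-\psi_\eps\|\le 2\eps$. Since unitary conjugation is an isometry on $\M_*$, the closed convex orbits $\overline{\conv}\{u\phi u^*:u\in\U(\M)\}$ and $\overline{\conv}\{u\phi_\eps u^*:u\in\U(\M)\}$ have Hausdorff distance at most $2\eps$. Thus if the claim holds for the faithful pair $(\phi_\eps,\psi_\eps)$ for every $\eps>0$, a triangle inequality together with $\|\psi-\psi_\eps\|\to 0$ yields $\psi\in\overline{\conv}\{u\phi u^*:u\in\U(\M)\}$. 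Hence we may assume both $\phi$ and $\psi$ are faithful.

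Second, invoke the transitivity results characterizing type $\III$ factors. For type $\III_1$ factors, the Connes-St\o rmer homogeneity theorem states that $\overline{\{u\phi u^*:u\in\U(\M)\}}$ already contains every faithful normal state, so the claim follows without any use of convex combinations. For type $\III_\lambda$ factors with $0\le\lambda<1$, the bare orbit is strictly smaller than the set of faithful normal states and convex combinations are genuinely necessary. The content of the cited Haagerup-St\o rmer Lemma~9.3 is that, despite this, $\overline{\conv}\{u\phi u^*:u\in\U(\M)\}$ does equal the full set of (faithful) normal states in every $\sigma$-finite type $\III$ factor.

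The main obstacle is the $\III_\lambda$ case with $0\le\lambda<1$. Here the approach I would follow exploits that a $\sigma$-finite properly infinite factor contains, for any $n$, a family of $n$ mutually orthogonal projections $p_1,\dots,p_n$ with $\sum_i p_i=1$ and $p_i\sim 1$. Combined with unitaries that permute such projections, and with the averaging techniques available from Takesaki's structure theorem (which presents $\M$ as a crossed product of a semifinite algebra by a modular-type action), one can average over the modular flow of $\phi$ to spread its spectral content and approximate an arbitrary faithful state in norm by a convex combination of unitary conjugates. The delicate part is to do this using unitary conjugates of $\phi$ itself and not of some auxiliary state, which is precisely what the Haagerup-St\o rmer construction accomplishes.
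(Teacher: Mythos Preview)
The paper does not prove this lemma; it simply cites it as an external result from Haagerup--St\o rmer, so there is no in-paper proof to compare against.

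Your proposal is not a proof but a case split in which the hard case is left open. The reduction to faithful states is fine, and the type $\III_1$ case is correctly dispatched via Connes--St\o rmer homogeneity. But for type $\III_\lambda$ with $0\le\lambda<1$ you write that the needed fact ``is precisely what the Haagerup--St\o rmer construction accomplishes'' --- that is circular, since the statement you are asked to prove \emph{is} that lemma. The sketch you offer (decompose $1$ into $n$ equivalent projections, invoke Takesaki's crossed-product structure, ``average over the modular flow of $\phi$ to spread its spectral content'') does not contain a mechanism that actually produces approximations to an \emph{arbitrary} faithful $\psi$ from unitary conjugates of $\phi$. Averaging the modular flow of $\phi$ yields states close to $\phi$-invariant states, not to a prescribed $\psi$; and the projection-permutation idea alone does not bridge the gap between distinct modular spectra in the $\III_0$ or $\III_\lambda$ cases.

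If you want an honest self-contained argument for the $\lambda<1$ case, you would need either the spectral-state/flow-of-weights machinery that Haagerup--St\o rmer develop (their Lemma~9.3 sits at the end of a long chain of results about the map $\psi\mapsto\hat\psi$), or an independent approximation scheme showing that the norm-closed convex unitary orbit of any faithful state exhausts all states. Neither is present here, so the $\III_\lambda$, $\lambda<1$, case remains a genuine gap.
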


The second result is a famous result by Connes and St\o rmer, known as the `homogeneity of the state space' of type $\III_1$ factors:
    \begin{theorem}[\cite{connes_homogeneity_1978}] A factor $\M\neq\CC$ with separable predual if of type $\III_1$ if and only if for every two normal states $\psi$ and $\phi$ on $\M$ and every $\eps>0$ there exists a unitary $u\in \M$ such that
    \begin{align}
        \norm{u\psi u^* - \phi} <\eps.
    \end{align}    
\end{theorem}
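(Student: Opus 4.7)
The theorem naturally decomposes into the equivalence (a)$\Leftrightarrow$(b) and the type $\III_1$ characterization. For (a)$\Rightarrow$(b), Nielsen's theorem (\cref{thm:nielsen}) reduces $\Psi\barlocc\Phi$ to the majorization $\psi\prec\phi$ of the marginals on $\M$, and \cref{lem:majorization_in_typeIII-mt} asserts that any two normal states on a $\sigma$-finite type $\III$ factor are in this relation, so the implication is immediate. For the converse (b)$\Rightarrow$(a) I would proceed by contrapositive. Assuming $\M$ is semifinite, applying universal LOCC convertibility in both directions and invoking \cref{thm:slocc} forces Murray--von Neumann equivalence of all support projections $s_\psi$ of marginals of vectors in $\H$. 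To produce a contradiction, I would choose $\Omega\in\H$ such that $e:=s_\omega s_{\omega'}$ makes $\M_0:=e\M e\ne\CC$; this is automatic in type $\II$ since there are no minimal projections, and in type $\I$ one takes $\Omega$ sufficiently entangled. By \cref{lem:minimal subspace}, $(\M_0,\M_0')$ on $\H_0\subset\H$ is a standard bipartite system of nontrivial semifinite factors, and one easily produces in it two normal states whose supports have different finite traces. Purifying these inside $\H_0\subset\H$ yields vectors $\Psi,\Phi\in\H$ whose supports in $\M$ coincide with the corner supports and thus have distinct traces, contradicting the forced Murray--von Neumann equivalence (since the canonical trace of $\M$ restricts to that of $\M_0$).

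For the $\III_1$ characterization, suppose first that $\M$ is type $\III_1$. Being properly infinite, $\M$ admits a standard bipartite system $(\M,\M')$ automatically (cp.~\cref{lem:tensor-standard}), so every normal state has a unique representative in the positive cone $\P$. Given $\Psi,\Phi\in\H$ with marginals $\psi,\phi$, the Connes--St\o rmer theorem furnishes $u\in\U(\M)$ with $\|u\psi u^*-\phi\|<\eps^2/4$, and the purification estimate \eqref{eq:purification_estimate} upgrades this to $\|\Omega_{u\psi u^*}-\Omega_\phi\|<\eps/2$. Since $u\Psi$ and $\Omega_{u\psi u^*}$ share their marginal on $\M$, they differ by a partial isometry in $\M'$, and analogously for $\Phi$ and $\Omega_\phi$; both partial isometries extend to unitaries $w_\psi,w_\phi\in\U(\M')$ by proper infiniteness of $\M'$. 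Setting $v:=w_\phi w_\psi^*\in\U(\M')$ and using $[u,v]=0$, one computes $uv\Psi=vu\Psi=w_\phi\Omega_{u\psi u^*}$ and $\Phi=w_\phi\Omega_\phi$, so $\|uv\Psi-\Phi\|=\|\Omega_{u\psi u^*}-\Omega_\phi\|<\eps$. Conversely, if the unitary convertibility holds, then (b) holds trivially and the first part forces $\M$ to be type $\III$; in particular $(\M,\M')$ is standard. For any normal $\psi,\phi$ on $\M$, applying the hypothesis to the cone vectors $\Omega_\psi,\Omega_\phi$ gives $u\in\U(\M)$ and $v\in\U(\M')$ with $\|uv\Omega_\psi-\Omega_\phi\|<\eps$; as $v\in\M'$ commutes with $\M$, the marginal of $uv\Omega_\psi$ on $\M$ equals $u\psi u^*$, and the elementary bound $\|\omega-\omega'\|\le 2\|\Xi-\Xi'\|$ for arbitrary unit-vector purifications yields $\|u\psi u^*-\phi\|<2\eps$, which is the Connes--St\o rmer criterion for type $\III_1$.

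The main obstacle will be the bookkeeping in the $\III_1$ direction: I must combine the Connes--St\o rmer unitary on $\M$ with the motion of purifications inside $\H$ effected by unitaries in $\M'$, using proper infiniteness to extend the partial isometries supplied by \cref{lem:polar-decomposition} to full unitaries, and then exploit that $\M$ and $\M'$ commute to assemble a single product $uv$ close to $\Phi$. A secondary subtlety in the $(b)\Rightarrow(a)$ step is the reduction to a standard bipartite system, which is what allows one to freely prescribe the traces of support projections; this is handled by the corner trick of \cref{lem:minimal subspace} together with the fact that Murray--von Neumann equivalence in semifinite factors is detected by trace values, so that inequivalence in $\M_0$ transfers to inequivalence in $\M$.
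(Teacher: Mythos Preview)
Your proposal does not address the stated theorem at all. The statement you were given is the Connes--St\o rmer homogeneity theorem, cited in the paper from \cite{connes_homogeneity_1978}: a factor $\M\ne\CC$ with separable predual is of type $\III_1$ if and only if for all normal states $\psi,\phi$ and all $\eps>0$ there is $u\in\U(\M)$ with $\norm{u\psi u^*-\phi}<\eps$. The paper does not prove this result; it is quoted as a black box and then \emph{used} in the proof of \cref{thm:LOCC-trivial}. What you have written is, in its entirety, an argument for \cref{thm:LOCC-trivial} (the equivalence of type $\III$ with universal $\overline\LOCC$ convertibility, plus the $\III_1$ characterization via local unitaries). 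Your text repeatedly refers to items (a) and (b), to $\Psi\barlocc\Phi$, to \cref{thm:nielsen}, \cref{thm:slocc}, and \cref{lem:minimal subspace} --- none of which are relevant to proving Connes--St\o rmer, and indeed you explicitly \emph{invoke} the Connes--St\o rmer theorem in your forward direction. This is circular with respect to the actual statement.

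If your intent was to prove \cref{thm:LOCC-trivial}, then your argument is broadly sound and close in spirit to the paper's, with one genuine difference: for the implication (b)$\Rightarrow$(a) the paper argues directly via Nielsen's theorem and spectral scales (in a semifinite factor one readily exhibits two states with $\psi\not\prec\phi$, so their purifications cannot be connected by $\overline\LOCC$), whereas you detour through \cref{thm:slocc} and Murray--von Neumann equivalence of support projections. Your route works but is longer and requires the corner reduction of \cref{lem:minimal subspace} where the paper needs only the existence of vector representatives. Either way, none of this constitutes a proof of the theorem you were asked about.
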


\begin{proof}[Proof of \cref{thm:LOCC-trivial}]
\ref{it:LOCC-trivial1} $\Rightarrow$ \ref{it:LOCC-trivial2} follows immediately by applying \cref{lem:majorization_in_typeIII-mt} to Nielsen's theorem. 
For the converse direction \ref{it:LOCC-trivial2} $\Rightarrow$ \ref{it:LOCC-trivial1},
we first observe that by \cite[Prop.~V.3.13, Ex.~V.1]{takesaki1} and \cref{lem:tensor-standard} (equivalently, from the discussion in \cref{sec:monotones}), either all states on $\M$ or all states on $\M'$ admit vector representations in $\H$. We may assume that this is the case for $\M$. If $\M$ is a semifinite factor with $\M\neq \CC$, there exist two states $\psi,\phi$ such that $\psi\not\prec\phi$ (this may easily be seen by considering the spectral scales).
By Nielsen's theorem, we cannot convert the vector representative $\Psi\in\H$ of $\psi$ into the vector representative $\Phi\in\H$ of $\phi$ via $\overline{\LOCC}$.

The statement about type $\III_1$ factors follows from the homogeneity of the state space: First, assume that $\M$ has type $\III_1$. By \cref{lem:polar-decomposition}, we can find  partial isometries $v,w\in\M'$ such that $\Psi = v\Omega_\psi$ and $\Phi = w\Omega_\phi$. By homogeneity of the state space, for every $\eps>0$ there exists a unitary $u\in \M$ such that $\norm{uj(u)\Omega_\psi - \Omega_\phi}<\eps$. Hence, there exist partial isometries as stated in the theorem mapping $\Psi$ to a vector $\eps$-close to $\Phi$. Since the unitary group for $\M$ is strongly dense in $\overline{\U}(\M)$, the result follows.
For the converse direction, we consider any two normal states $\psi,\phi$ on $\M$.
By the assumption, we know that \ref{it:LOCC-trivial2} holds, and $\M$ is of type $\III$. Since any type $\III$ von Neumann algebra acting on a separable Hilbert space is in standard form, we find vector representatives $\Omega_{\psi},\Omega_{\phi}\in\H$. Moreover, for every $\eps>0$ there exist unitaries $u\in \M$, $v\in\M'$ such that $\norm{uv\Omega_\psi - \Omega_\phi}<\tfrac{\eps}{2}$. Since $v\Omega_{\psi}$ has the same marginal on $\M$ as $\Omega_{\psi}$, we find
\begin{align}
\norm{u \psi u^* - \phi} & = \sup_{\substack{a\in\M \\ \norm{a}=1}}|\ip{v\Omega_{\psi}}{u^{*}auv\Omega_{\psi}}-\ip{\Omega_{\phi}}{a\Omega_{\phi}}| \leq 2\norm{uv\Omega_{\psi} - \Omega_\phi} < \eps.
\end{align}
This implies the homogeneity of the state space and, thus, that $\M$ is of type $\III_{1}$ \cite{connes_homogeneity_1978}.
\end{proof}

Given that pure state LOCC trivializes for type $\III$, it is interesting to ask what is possible in type $\II$. To this end, we now prove a theorem characterizing the entanglement properties of irreducible, standard bipartite systems that are not of type $\I$.
To state it we define the \emph{Bell coefficient} of a density matrix $\rho$ on $\H$ via
\begin{align}
    \beta(\rho,\M,\M') := \sup \Tr \rho \left(a_1(b_1+b_2) + a_2(b_1-b_2)\right) \leq 2\sqrt{2},
\end{align}
where the optimization is over $a_i\in \M$ and $b_i\in \M'$ such that $-1\leq a_i\leq 1$ and $-1\leq b_i\leq 1$. 
It measures the maximal value that can be achieved in the CHSH game \cite{clauser_proposed_1969,summers_tangent}. 

\begin{theorem}\label{thm:equivalence-bell}
     Let $(\M,\M')$ be bipartite system of factors on $\H$ in Haag duality. The following are equivalent:
    \begin{enumerate}[(a)]
        \item\label{item:LOCC-equiv1} For every $\Omega\in \H$, every $n\in\NN$, and every unit vector $\Psi\in \CC^n\otimes \CC^n$ we have
        \begin{align}
    \Omega\ox \ket{1}\ket{1} \xrightarrow{\overline{\LOCC}} \Omega' \ox \Psi,\quad \Omega'\in\H.
        \end{align}
    \item\label{item:notI} $\M$ is not of type $\I$. 
    \end{enumerate}
    If $\M$ is approximately finite-dimensional, both items are equivalent to:
    \begin{enumerate}[(a),resume]
  \item\label{item:CHSH1} For every density matrix $\rho$ on $\H$ we have $\beta(\rho,\M,\M') = 2\sqrt 2$.
    \end{enumerate}
\end{theorem}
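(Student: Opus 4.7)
My plan rests on applying Nielsen's theorem (\cref{thm:nielsen}) to the tensored bipartite system $(\M\ox M_n(\CC),\M'\ox M_n(\CC))$ on $\H\ox\CC^n\ox\CC^n$, which is again a factorial bipartite system in Haag duality. Under this reduction, the transition $\Omega\ox\ket 1\ket 1\barlocc\Omega'\ox\Psi$ is equivalent to the majorization $\omega\ox\ketbra 11\prec\omega'\ox\psi_A$ on $\M\ox M_n(\CC)$, where $\omega,\omega'$ denote the marginals on $\M$ induced by $\Omega,\Omega'$ and $\psi_A=\tr_2\kettbra\Psi$ is the marginal of $\Psi$. Condition (a) thus becomes the statement that for every $\Omega\in\H$, every $n$ and every $\psi_A$, a vector-realizable $\omega'$ solving this majorization exists.

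For (a) $\Rightarrow$ (b) I argue the contrapositive. If $\M$ is type $\I$ then $\M\ox M_n(\CC)$ is type $\I$ on a tensor-product Hilbert space, and the tensored system has a well-defined Schmidt rank that is monotone under $\overline\LOCC$. Taking $\Omega$ to be a product vector (Schmidt rank one), the source $\Omega\ox\ket 1\ket 1$ still has Schmidt rank one, while for a maximally entangled $\Psi\in\CC^n\ox\CC^n$ the target $\Omega'\ox\Psi$ has Schmidt rank at least $n$, obstructing the transition.

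For (b) $\Rightarrow$ (a) I first reduce to the case $\psi_A=I/n$: since $I/n$ is majorized by every state on $M_n(\CC)$, tensor-monotonicity of majorization gives $\omega'\ox(I/n)\prec\omega'\ox\psi_A$ for every $\omega'$, so it suffices to handle this worst case. If $\M$ is of type $\III$, so is $\M\ox M_n(\CC)$, and \cref{thm:LOCC-trivial} applied to the tensored system provides $\overline\LOCC$-convertibility between any two unit vectors in $\H\ox\CC^n\ox\CC^n$; any $\Omega'$ then works. If $\M$ is of type $\II$, let $\rho\in L^1(\M)^+$ be the density of $\omega$ and define the non-increasing function $\lambda'(u):=n\lambda_\rho(nu)$. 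A change of variables yields $\int\lambda'(u)\,du=\int\lambda_\rho(v)\,dv=1$, and by the diffuse character of non-type-$\I$ factors together with the realizability results of \cref{app:majorization}, $\lambda'$ is the spectral scale of some density $\rho'\in L^1(\M)^+$. A direct computation shows that $\rho'\ox(I/n)$ has spectral scale $(1/n)\lambda'(t/n)=\lambda_\rho(t)$, exactly matching that of $\rho\ox\ketbra 11$; by \cref{thm:factor-majorization-mt}, matching spectral scales give majorization in both directions, hence the required Nielsen condition.

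For the AFD equivalence, (c) $\Rightarrow$ (b) is by contrapositive: if $\M$ is type $\I$, a product state $\rho=\rho_A\ox\rho_B$ factorizes Bell expectations into $(\tr\rho_Aa_i)(\tr\rho_Bb_j)$ and hence satisfies the classical bound $\beta(\rho)\le 2$, contradicting (c). For (b) $\Rightarrow$ (c), I invoke \cite[Prop.~4.9]{keyl_entanglement_2006}, which establishes maximal Bell violation for every normal state in bipartite systems of AFD non-type-$\I$ factors in Haag duality. The main technical obstacle lies in the type $\II$ step of (b) $\Rightarrow$ (a): one must both realize the prescribed spectral scale by an honest density $\rho'\in\M$ and ensure that the resulting $\omega'$ admits a vector representative $\Omega'\in\H$. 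The latter is automatic when $(\M,\M')$ is standard -- in particular for type $\II_\infty$ and type $\III$ factors in Haag duality by \cref{lem:tensor-standard} -- and can otherwise be arranged by passing to the standard subsystem of \cref{lem:minimal subspace} before carrying out the construction.
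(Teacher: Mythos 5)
Your proposal is correct and follows essentially the same route as the paper: reduce the target to the maximally mixed $\Psi=\Omega_n$, dispose of type $\III$ via \cref{thm:LOCC-trivial}, handle type $\II$ by the explicit rescaling $\lambda_{\rho'}(t)=n\lambda_\rho(nt)$ of the spectral scale so that $\rho'\ox\tfrac1n$ matches $\rho\ox\kettbra{1}$, and use product states for the CHSH contrapositive. The only cosmetic differences are that you obtain equality of spectral scales (hence two-sided majorization) where the paper only verifies the one-sided Lorenz-curve inequality, and that for (b)$\Rightarrow$(c) you cite \cite{keyl_entanglement_2006} directly whereas the paper rederives the maximal violation from Summers' strong-stability theorem \cite{summers_maximal_1987} together with the reduction to a standard bipartite system.
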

\begin{proof}
For the implication \ref{item:LOCC-equiv1} $\Rightarrow$ \ref{item:notI}, suppose conversely that $\ref{item:LOCC-equiv1}$ holds true for some type $\I$ factor $\M$. In this case $\H = \H_A\ox \H_B$ and there exists a product state $\Phi_A\ox\Phi_B\in \H$. 
Considering $\Omega = \Phi_A\ox\Phi_B$ yields a contradiction since LOCC cannot prepare an  entangled state from a non-entangled state.
For the converse direction, we only have to consider the case that $\M$ has type $\II$ because of \cref{thm:LOCC-trivial} and since $\M\ox M_n(\CC)\cong \M$ for any properly infinite factor $\M$. 
We can furthermore restrict to the case $\Psi = \Omega_n:= \frac{1}{\sqrt{n}}\sum_{j=1}^n \ket j\ket j$, since every $\Psi \in \CC^n\ox \CC^n$ may be prepared from $\Omega_n$ via LOCC (the tracial state is majorized by all density matrices).
We thus have to show that for every $\rho\in L^1(\M)^+$ with $\Tr\rho=1$ on a type $\II$ factor $(\M,\Tr)$ we have 
\begin{align}
    \rho \ox |1\rangle\langle 1| \prec \rho'\ox \frac{1}{n}
\end{align}
for some $\rho'\in L^1(\M)^+$ with $\Tr\rho'=1$. 
The spectral scales fulfill \cite{long_paper}:
\begin{align}
    \lambda_{\rho\ox |1\rangle\langle 1|}(t) = \lambda_\rho(t),\quad \lambda_{\rho'\ox \frac{1}{n}} = \frac{1}{n}\lambda_{\rho'}\big(\frac{t}{n}\big),\quad t\in(0, n\Tr(1)),
\end{align}
where $\lambda_\rho$ is extended by zero outside the interval $(0,n\Tr(s_\rho))$ if $\Tr(s_\rho)<\Tr(1)$. 
We thus have to show that we can find $\rho'$ such that
\begin{align}
    \int_0^t \frac{1}{n}\lambda_{\rho'}(s/n)\,ds\geq \int_0^t \lambda_\rho(s)\,ds,\quad t\in(0,n\Tr(1)). 
\end{align}
This is achieved by choosing $\rho'$ via 
\begin{align}
    \lambda_{\rho'}(t) := n\, \lambda_\rho(t n)\,\chi_{(0,\Tr(s_\rho)/n)}.
\end{align}
For $t\leq \Tr(s_\rho)/n$) we get:
\begin{align}
    \int_0^t \lambda_{\rho'}(s)\,ds = n \int_0^{t} n \lambda_\rho(s n)\,ds = \int_0^{t n}\lambda_\rho(s)\,ds \geq \int_0^t \lambda_\rho(s)\,ds. 
\end{align}
For $t>\Tr(s_\rho)/n$ we have $\int_0^t \lambda_{\rho'}(s)\,ds=1 \geq \int_0^t \lambda_\rho(s)\,ds$. 

We now assume that $\M$ is approximately finite-dimensional.
Since product states exist in the case of type $\I$, the implication 
\ref{item:CHSH1} $\Rightarrow$ \ref{item:notI} is clear. 
For the converse direction: Every approximately finite-dimensional factor $\M$ that is not of type $\I$ is \emph{strongly stable} (or McDuff): $\M \cong \M\ox \mc R_1$, where $\mc R_1$ is the (unique) approximately finite-dimensional type $\II_1$ factor.\footnote{This follows from the fact that the flow of weights \cite{connes_flow_1977} is a complete invariant for approximately finite-dimensional factors \cite{connes_classification_1976,haagerup_connes_1987} and is invariant under semifinite amplifications (see for example \cite{long_paper} for a detailed discussion).} We now make use of \cite[Thm.~2.3]{summers_maximal_1987}, which shows that on a standard bipartite system $(\M,\M')$  with $\M$ strongly stable, every normal state, i.e., every density matrix on $\mc H$, maximally violates the CHSH inequality.
This shows the claim if $(\M,\M')$ is a standard bipartite system. 
If it is not standard, we consider the standard bipartite system $(\tilde \M,{\tilde \M}')=(\B(\H)\ox 1\ox \M,1\ox \B(\H)\ox \M')$ on $\tilde{\mathcal H} = \H\ox\H\ox\H$ with density matrix $\tilde \rho = \ketbra{1}{1}\ox\ketbra{1}{1}\ox\rho$ \cite[Cor.~III.2.6.16]{blackadar_operator_2006}. 
We then have $\beta(\tilde\Omega,\tilde \M,{\tilde \M}')=2\sqrt{2}$. For any $\eps>0$, let $\tilde a_i\in \tilde M,\tilde b_i\in {\tilde M}'$ be contractions such that 
\begin{align}
    \Tr\tilde \rho\left(\tilde a_1(\tilde  b_1+\tilde b_2)+\tilde a_2(\tilde b_1-\tilde b_2)\right) > 2\sqrt{2}(1-\eps).
\end{align}
If we denote by $v$ the isometry $\H\to \H^{\ox 3}$ defined by $\Psi\mapsto \ket1\ox\ket 1\ox \Psi$, the contractions $a_i = v^*\tilde a_i v \in \M$, $b_i = v^*\tilde b_i v\in \M'$ fulfill
\begin{align}
    \Tr \rho\left(a_1(b_1+b_2)+a_2(b_1-b_2)\right) =  \Tr\tilde \rho\left(\tilde a_1(\tilde  b_1+\tilde b_2)+\tilde a_2(\tilde b_1-\tilde b_2)\right) > 2\sqrt{2}(1-\eps), 
\end{align}
which finishes the proof.
\end{proof}

To distinguish the two subtypes of type $\II$, we need the notion of maximally entangled states:

\begin{definition}
    Let $(\M,\M')$ be a bipartite system of factors in Haag duality on a Hilbert space $\H$. 
    A state vector $\Psi\in\H$ is \emph{maximally entangled} if $\Psi\barlocc\Phi$ holds for every other state vector $\Phi\in\H$.
\end{definition}

By Nielsen's theorem, maximally entangled states are precisely characterized by having maximally mixed marginal states, i.e., states $\psi$ such that $\psi \prec \phi$ for all states $\phi$.

\begin{proposition}
    Let $(\M,\M')$ be a bipartite system of factors in Haag duality on a Hilbert space $\H$.
    Then maximally entangled states exist if and only if it is not true that $\M$ and $\M'$ are both of type $\I_\oo$ or of type $\II_\oo$.
\end{proposition}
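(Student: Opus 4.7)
The plan is to invoke Nielsen's theorem (\cref{thm:nielsen}): a unit vector $\Psi \in \H$ is maximally entangled if and only if its marginal $\psi$ on $\M$ satisfies $\psi \prec \phi$ for every normal state $\phi$ on $\M$ which is realized as the marginal of some vector in $\H$. I treat the two directions separately.

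For the ``if'' direction, suppose $(\M,\M')$ is not of the form ``both $\I_\oo$'' or ``both $\II_\oo$''. If $\M$ and $\M'$ are both of type $\III$, \cref{lem:majorization_in_typeIII-mt} shows $\psi \prec \phi$ for every pair of normal states on $\M$, and every unit vector is maximally entangled. Otherwise both are semifinite and at least one is finite; by symmetry I may assume $\M$ is finite with trace $\Tr$. I would construct a vector $\Omega$ whose marginal is uniform on a projection of maximal possible support: set $t_* := \sup\{\Tr s_\psi : \Psi\in\H, \|\Psi\|=1\}\leq \Tr 1_\M <\oo$, show $t_*$ is attained (elementary from the tensor decomposition in the $\I_n$ case, and following from the coupling-constant description of normal representations in the $\II_1$ case), and apply \cref{lem:minimal subspace} to a maximizer $\Omega_0$ to obtain a standard subsystem $(\M_0, \M_0')$ on $\H_0 = s_{\omega_0} s_{\omega_0'}\H$ in which $\M_0 \cong s_{\omega_0}\M s_{\omega_0}$ is a finite factor. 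Purifying the tracial state of $\M_0$ inside this standard subsystem gives a vector $\Omega \in \H$; a short computation using $\Omega \in \H_0$ shows $\rho_\omega = (1/t_*) s_\omega$ with $\Tr s_\omega = t_*$. Then, for any other realizable marginal $\phi$, $\Tr s_\phi \leq t_*$ forces $s_\phi \lesssim s_\omega$ in $\M$, and replacing $\phi$ by a unitary conjugate (which preserves majorization) lets me assume $s_\phi \leq s_\omega$. The comparison $L_\omega(t) = \min(t/t_*, 1)$ versus $L_\phi$ (concave with $L_\phi(t_*)=1$, hence $L_\phi(t) \geq t/t_*$ on $[0,t_*]$) yields $\omega \prec \phi$.

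For the ``only if'' direction, suppose both $\M$ and $\M'$ are of type $\I_\oo$ or both of type $\II_\oo$. Both factors are properly infinite, so the bipartite system is automatically standard (as noted after \cref{def:standard}), and every normal state on $\M$ arises as a marginal. Given any unit vector $\Psi$ with marginal $\psi$, since $\Tr 1_\M = \oo$ I can, for each integer $N \geq 1$, pick a projection $p_N \in \M$ with $\Tr p_N = N$ and form $\phi_N$ of density $(1/N) p_N$, so $L_{\phi_N}(t) = t/N$ for $t \leq N$. Since $\rho_\psi \neq 0$ and $\lambda_\psi$ is non-increasing, $L_\psi(t_0) > 0$ for any $t_0 > 0$. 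Choosing $N > t_0/L_\psi(t_0)$ gives $L_{\phi_N}(t_0) < L_\psi(t_0)$, contradicting $\psi \prec \phi_N$, so no vector is maximally entangled. The main obstacle is in the ``if'' direction: ensuring the supremum $t_*$ is attained and that the constructed vector has marginal uniform on its support, even when the bipartite system is non-standard. Reducing to the standard subsystem via \cref{lem:minimal subspace} is meant to handle all finite-factor cases uniformly.
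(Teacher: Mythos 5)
Your proposal is correct and follows essentially the same route as the paper: by Nielsen's theorem, maximal entanglement reduces to the marginal being a least element in the majorization order among realizable marginals, which exists precisely when the (smaller) factor is finite or type $\III$ and fails when both factors are of type $\I_\oo$ or $\II_\oo$. The paper states this in one line; your elaboration via $t_*=\sup_\Psi \Tr s_\psi$ and \cref{lem:minimal subspace} usefully handles the non-standard case (where not every normal state on $\M$ is realizable), and the attainment of $t_*$ that you defer is exactly the coupling-constant identity \eqref{eq:trace coupling} of \cref{sec:monotones}.
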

\begin{proof}
    By virtue of Nielsen's theorem, this follows directly from the fact that maximally mixed states exist precisely in factors of type $\I_n$, $n<\oo$, $\II_1$ and $\III$.
\end{proof}


\subsection{Entanglement monotones for semifinite bipartite systems}\label{sec:monotones}

A central goal of entanglement theory is to \emph{quantify} entanglement.
Since entanglement cannot be measured by a single number, one usually measures entanglement via so-called \emph{entanglement monotones}: Functions on multipartite quantum states that can either only decrease or only increase in an LOCC protocol. We now consider entanglement monotones for pure state LOCC transformations on bipartite systems of factors in Haag duality. 
Since pure state LOCC transformations trivialize for type $\III$ systems we only consider semifinite systems. We note, however, that non-trivial entanglement measures still exist and are relevant in QFT if we consider pairs of subsystems that are properly spacelike separated \cite{hollands_entanglement_2018}.

For the remainder of this section, we fix a bipartite system $(\M,\M')$ of semifinite factors on $\H$.
To define entanglement monotones for a semifinite bipartite system $(\M,\M')$, we can use the trace on $\M$ or the trace on $\M'$.
It is only natural to ask whether the resulting monotones depend on this choice.
That this is not the case follows from the work of Murray and von Neumann \cite{murray_rings_1936,murray_rings_1937} (see also \cite[Sec.~V]{takesaki2}):
First, note that the trace on a factor is unique only up to scaling. 
Thus, it is uniquely determined by its value on a single finite projection.\footnote{E.g., the usual trace on a type $\I$ factor is determined by declaring $\tr P=1$ for a one-dimensional projection $P$.}
Thus, the question is whether the traces on $\M$ and $\M'$ can be chosen in a matching way.
To do this, we  pick a vector $\Psi\in\H$ such that $[\M'\Psi]$ is a finite projection in $\M$, which is equivalent to $[\M\Psi]$ being a finite projection in $\M'$ \cite[Sec.~X]{murray_rings_1936}, and fix the relative scaling of the traces through the equation
\begin{equation}\label{eq:trace coupling}
    \tr_{\M} [\M'\Psi] = \tr_{\M'} [\M\Psi].
\end{equation}
It then follows that \eqref{eq:trace coupling} holds for all vectors $\Psi\in\H$ \cite[Sec.~X]{murray_rings_1936}.
If $\M$ and $\M'$ are finite, the number
\begin{equation}\label{eq:coupling const}
    c(\M,\M') = \frac{\tr_{\M'} 1}{\tr_{\M}1} \in (0,\oo)
\end{equation}
is the \emph{coupling constant} of Murray-von Neumann, which measures the "size" of $\M'$ relative to $\M$ \cite[Def.~V.3.]{takesaki2}.
If we extend the coupling constant to infinite factors using \eqref{eq:coupling const} with the conventions $\frac t\oo=0$, $\frac\oo t=\oo$ and $\frac\oo\oo=1$ (for $t>0$), then one gets
$c(\M,\M')=c(\M',\M)^{-1}$ (with $0^{-1}=\oo$ and $\oo^{-1}=0$) and%
    \footnote{If $\M$ and $\M'$ are finite, this is proved in \cite[Prop.~V.3.13]{takesaki2}. If $c(\M,\M')$ is $0$ or $\oo$, i.e., if only one factor is finite and the other one is infinite, they cannot be in standard representation since then $\M$ and $\M'$ would be anti-isomorphic. If both algebras are infinite, they are in standard representation by \cite[Cor.~III.2.6.16]{blackadar_operator_2006}.}
\begin{equation}
    (\M,\M') \ \text{is standard}
    \quad \iff \quad c(\M,\M')=1.
\end{equation}
Next, we will show that, with the scaling of the traces on $\M$ and $\M'$ fixed by \eqref{eq:trace coupling}, the marginal states of all bipartite pure states have the same spectral scales.
As before, if $\Psi\in\H$ is a unit vector, we denote by $\psi$ and $\psi'$ the induced states on $\M$ and $\M'$, respectively.

\begin{proposition}\label{prop:schmidt spectrum}
    Let $\Psi\in \H$ be a unit vector and let $\rho\in L^1(\M)$ and $\rho'\in L^1(\M')$ be the densities of $\psi$ and $\psi'$, respectively.
    Then
    \begin{equation}
        \lambda_\rho(t) = \lambda_{\rho'}(t),\qquad t>0.
    \end{equation}
\end{proposition}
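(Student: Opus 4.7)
The plan is to reduce to the standard case and then relate the densities $\rho$ and $\rho'$ via a trace-preserving $\ast$-anti-isomorphism induced by the modular conjugation.

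First, apply \cref{lem:minimal subspace} with $e = s_\psi s_{\psi'}$: the bipartite system $(\M_0, \M_0') = (e\M e, e\M' e)$ on $\H_0 = e\H$ is standard, with $\Psi$ cyclic-separating. The restricted traces $\tr_{\M_0}, \tr_{\M_0'}$ inherit the coupling condition \eqref{eq:trace coupling}, and (since $s_{\psi'} \in \M'$ commutes with $\rho$, and symmetrically) the densities $\rho, \rho'$ restrict to elements of $L^1(\M_0)^+, L^1(\M_0')^+$ with unchanged spectral scales. Acting by a unitary in $\M_0'$ if needed, we may assume $\Psi = \Omega_\psi$ lies in the natural positive cone. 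Identifying $\H_0 \cong L^2(\M_0,\tr_{\M_0})$, then $\Psi = \rho^{1/2}$, the modular conjugation is $J\xi = \xi^*$, and $JbJ = R_{b^*}$ realizes $\M_0'$ as the right multiplications of $\M_0$.

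Consider the $\ast$-anti-isomorphism $j: \M_0 \to \M_0'$, $j(a) = Ja^*J$, which acts as $j(a)\xi = \xi a$ on $L^2(\M_0)$. A direct $L^2$-computation gives $\psi'(j(a)) = \tr_{\M_0}(\rho a) = \psi(a)$ for all $a \in \M_0$. To check that $j$ is trace-preserving on a spectral projection $e_t = \chi_{[t,\infty)}(\rho)$ of $\rho$, note that $[\M_0' e_t\Psi] = e_t$ by cyclicity of $\Psi$ for $\M_0'$, while $[\M_0 e_t\Psi] = Je_tJ$: the latter follows in $L^2(\M_0)$ from $e_t\rho^{1/2} = \rho^{1/2} e_t$ (since $e_t$ commutes with $\rho$), which gives $\overline{\M_0 e_t\Psi} = \overline{\M_0 \rho^{1/2} e_t} = L^2(\M_0)\, e_t$, whose orthogonal projection is $R_{e_t} = Je_tJ$. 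The coupling formula \eqref{eq:trace coupling} applied to $e_t\Psi$ then yields $\tr_{\M_0}(e_t) = \tr_{\M_0'}(Je_tJ)$, and by uniqueness of the trace on semifinite factors (up to scaling) this extends to $\tr_{\M_0'} \circ j = \tr_{\M_0}$ on all of $\M_0$. In particular, $\rho' = j(\rho) = J\rho J$ is the density of $\psi'$ on $\M_0'$.

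Since $\rho' = J\rho J$ and $j$ is trace-preserving, the spectral projections correspond as $\chi_{[t,\infty)}(\rho') = Je_tJ$, and
\[
D_{\rho'}(t) = \tr_{\M_0'}(Je_tJ) = \tr_{\M_0}(e_t) = D_\rho(t) \qquad (t > 0),
\]
from which $\lambda_\rho = \lambda_{\rho'}$ follows by the definition of the spectral scale. The main obstacle is the $L^2$-identification $[\M_0 e_t\Psi] = Je_tJ$: this Schmidt-decomposition-type fact requires a concrete calculation in the standard form and relies crucially on $e_t$ commuting with $\rho$ (which is why one uses spectral projections of $\rho$ rather than arbitrary projections in $\M_0$).
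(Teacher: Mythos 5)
Your proof is correct and follows essentially the same route as the paper's: reduce to the standard bipartite system via \cref{lem:minimal subspace}, identify $\rho' = J\rho J$, and establish trace compatibility $\tr_{\M_0'}\circ j = \tr_{\M_0}$ by applying the coupling condition \eqref{eq:trace coupling} to vectors of the form $p\,p'\Psi$ with $p$ a spectral projection of $\rho$ and $p'=JpJ$, then invoking uniqueness of the trace on a factor. The only difference is cosmetic — you work concretely in $L^2(\M_0)$ where the paper argues abstractly in the standard form (its \cref{lem:traces on std subsystem}).
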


\begin{remark}
If $c(\M,\M') < 1$ then $\M'$ is finite and there exists a unit vector $\Psi\in \H$ that induces the faithful tracial state on $\M'$, but not on $\M$, which need not even have a tracial state. In this case, we say that (the subsystem described by) $\M'$ is smaller than $\M$. 
$\Psi$ then precisely corresponds to a maximally entangled state. 
In the type $\I$ case $\M=M_{n_A}(\CC), \M'=M_{n_B}(\CC)$ we simply have $n_B<n_A$. 
Since the Schmidt spectrum captures all pure state entanglement properties and is fully encoded on $\M'$, entanglement properties can only capture the smaller of the two subsystems.
\end{remark}

\begin{lemma}\label{lem:traces on std subsystem}
    Let $\Psi\in \H$ be a unit vector and let $(\M_0,\M_0')$ be the standard bipartite system obtained by reducing $(\M,\M')$ onto $\H_0= [\M'\Psi]\cap[\M\Psi]$ (cp.\ \cref{lem:minimal subspace}).
    Let $J_0$ be the modular conjugation defined by the cyclic separating vector $\Psi_0:=\Psi\in\H_0$.
    Then
    \begin{equation}
        \tr_{\M_0}(a) = \tr_{\M_0'}(JaJ),\qquad a\in L^1(\M_0).
    \end{equation}
\end{lemma}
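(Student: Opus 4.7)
The plan is to reduce the identity to its evaluation on a single finite-trace projection. Both $\tr_{\M_0}$ and $\tau(a):=\tr_{\M_0'}(J_0aJ_0)$ are normal semifinite faithful traces on the factor $\M_0$, the latter because $a\mapsto J_0aJ_0$ is an anti-isomorphism $\M_0\to\M_0'$. By uniqueness of the trace on a factor up to scalar, $\tau=c\cdot\tr_{\M_0}$ for some $c>0$, and it suffices to check $c=1$ on a single nonzero finite projection. Throughout I use the canonical compression isomorphisms $\M_0\cong s_\psi\M s_\psi$ and $\M_0'\cong s_{\psi'}\M' s_{\psi'}$; under these $\tr_{\M_0}$ and $\tr_{\M_0'}$ correspond to the restrictions of $\tr_\M$ and $\tr_{\M'}$.

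Given a finite projection $p\in\M_0$, let $\tilde p\in s_\psi\M s_\psi$ and $\tilde q\in s_{\psi'}\M' s_{\psi'}$ denote the elements corresponding to $p$ and $q:=J_0pJ_0\in\M_0'$. I probe the constant $c$ with the test vector
\[
\Omega := \tilde p\,\tilde q\,\Psi \in \H_0.
\]
Since $\Psi\in\P_0$ and $\tilde p\tilde q\Psi=p(J_0pJ_0)\Psi$ as elements of $\H_0$, the vector $\Omega$ again lies in $\P_0$ and is the standard positive-cone representative of the state $p\psi_0 p$ on $\M_0$, where $\psi_0=\langle\Psi,(\placeholder)\Psi\rangle\restriction\M_0$. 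Faithfulness of $\psi_0$ (since $\Psi$ is separating for $\M_0$) gives $[\M_0'\Omega]=\supp(p\psi_0 p)=p$, and the $J_0$-symmetry of $\Omega$ then forces $[\M_0\Omega]=J_0[\M_0'\Omega]J_0=q$.

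It remains to identify these support projections in the ambient system as $[\M'\Omega]=\tilde p$ and $[\M\Omega]=\tilde q$. Because $s_\psi\Omega=\Omega$, the induced state $\omega^\M=\langle\Omega,(\placeholder)\Omega\rangle\restriction\M$ satisfies $\omega^\M(x)=\omega^\M(s_\psi x s_\psi)$, and a short operator-inequality computation on $s_\psi\H$ shows that the smallest $s\in\M$ with $s\Omega=\Omega$ is the smallest $s$ satisfying $\tilde p\le s_\psi s s_\psi$, which equals $\tilde p$ itself; the same reasoning on the $\M'$-side yields $\tilde q$. Feeding this into the coupling relation \eqref{eq:trace coupling} applied to $\Omega$ gives
\[
\tr_{\M_0}(p)=\tr_\M(\tilde p)=\tr_\M[\M'\Omega]=\tr_{\M'}[\M\Omega]=\tr_{\M'}(\tilde q)=\tr_{\M_0'}(J_0pJ_0),
\]
so $c=1$. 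The identity then extends from finite projections to all $a\in L^1(\M_0)$ by spectral integration and linearity, using $J_0\chi_{[\lambda,\infty)}(a)J_0=\chi_{[\lambda,\infty)}(J_0aJ_0)$. The main subtlety is precisely the last transfer of support projections between the reduced and the ambient system: since compression by $s_\psi$ is not a $*$-homomorphism on $\M$, one must verify carefully the equivalence $\tilde p\le s_\psi s s_\psi\Leftrightarrow s\ge\tilde p$ (and its $\M'$ analogue).
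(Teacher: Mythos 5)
Your strategy is the same as the paper's: reduce to a single finite projection by uniqueness of the trace on the factor $\M_0$, lift $p$ and $J_0pJ_0$ to projections $\tilde p\in\M$, $\tilde q\in\M'$, identify them as $[\M'\Omega]$ and $[\M\Omega]$ for the test vector $\Omega=\tilde p\tilde q\Psi$, and conclude via the coupling relation \eqref{eq:trace coupling}. One intermediate claim is, however, false as stated: $\Omega=pJ_0pJ_0\Psi_0$ is \emph{not} the positive-cone representative of $p\psi_0p$ for a general projection $p$. Indeed, writing $\Delta_0$ for the modular operator of $\Psi_0$, one has $\norm{pJ_0pJ_0\Psi_0}^2=\langle p\Psi_0,\Delta_0^{1/2}p\Psi_0\rangle=\norm{\Delta_0^{1/4}p\Psi_0}^2$, which differs from $(p\psi_0p)(1)=\norm{p\Psi_0}^2$ unless $p$ lies in the centralizer of $\psi_0$. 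The conclusion you extract from that claim, namely $[\M_0'\Omega]=p$, is nevertheless true and can be proved directly: if $r\le p$ is a projection in $\M_0$ with $r\Omega=\Omega$, then $(p-r)J_0pJ_0\Psi_0=0$, and applying $J_0$ and multiplying by $p-r$ gives $\norm{\Delta_0^{1/4}(p-r)\Psi_0}^2=0$, hence $r=p$ since $\Psi_0$ is separating. So the blemish is repairable. The paper avoids both this issue and your ``main subtlety'' (transferring supports between $\M_0$ and $\M$) by choosing the test projection $p_0$ to be a spectral projection of the density of $\psi_0$: then $p_0\Psi_0=J_0p_0J_0\Psi_0$, the lifts satisfy $p\Psi=p'\Psi$, and $[\M'pp'\Psi]=[p\M'\Psi]=p$ is a one-line computation. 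Since a single nonzero finite projection in the centralizer always exists and suffices for your scaling argument, you may as well make the same choice.
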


In \cref{lem:traces on std subsystem}, the trace on the reduced algebra $\M_0$ is the reduced trace, i.e., if $e=s_\psi s_{\psi'}$ is the projection onto $\H_0$, then $\tr_{\M_0}(eae)=\tr_\M( s_\psi a s_\psi)$ for $a\in L^1(\M)$, and similarly for the commutant.

\begin{proof}
    Let $p_0\in \M_0$ be a finite nonzero projection that commutes with $\psi_0$ (e.g., a spectral projection of the density $\rho_0$ of $\psi_0$) and let $p_0'=J_0p_0J_0$. 
    Then we have $p_0\Psi_0=p_0p_0'\Psi_0 = p_0'\Psi_0$.
    Set $e=s_\psi s_{\psi'}$ and note that $e\H=\H_0$.
    Let $p\in\M$ be the unique projection such that $p\le s_\psi$ and $epe=p_0$, and let $p'\in\M'$ be defined analogously.
    Then $p\Psi= pp'\Psi = p'\Psi$ and, hence, $p = [p\M'\Psi]=[\M'pp'\Psi]$ and similarly for $p'$.
    Then
    \begin{align*}
        \tr_{\M_0} p_0 = \tr_{\M} p = \tr_\M [\M'pp'\Psi] = \tr_{\M'} [\M pp'\Psi] = \tr_{\M'} p' =\tr_{\M_0'} p_0' = \tr_{\M_0'} J_0p_0J_0,
    \end{align*}
    where we used \cref{eq:trace coupling}.
    Since the two traces $\tr_\M$ and $\tr_{\M'}(J(\placeholder)J)$ on $\M_0$ are unique up to scaling, it follows that they coincide.
\end{proof}

\begin{proof}[Proof of \cref{prop:schmidt spectrum}]
    We consider the standard bipartite system $(\H_0,\M_0,\M_0')$ obtained by reducing $(\H,\M,\M')$ with the projection $e= [\M'\Psi]\cap[\M\Psi]$ (cp.\ \cref{lem:wlog std}).
    Put $\Psi_0 = \Psi\in\H_0$, and denote by $\psi_0$, $\rho_0$, $\psi_0'$ and $\rho_0'$ the induced objects associated with $\M_0$ and $\M_0'$.
    Let $J_0,\P_0$ be the modular conjugation and positive cone determined by the cyclic separating vector $\Psi_0$ (now $(\H_0,J_0,\P_0)$ is the standard form of $\M_0$), and note that $\Psi_0=\Omega_{\psi_0}\in\P_0$.
    It follows that
    \begin{equation}
        \rho_0' =  J_0\rho_0 J_0,
    \end{equation}
    $\M_0$ is isomorphic with $s_\psi\M s_\psi$ in such a way that $\psi_0$ is identified with $\psi$ and similarly for the commutant. 
    Moreover, \cref{lem:traces on std subsystem} shows that the traces of $\M_0$ and $\M_0'$ are compatible. Thus, we have $\lambda_\rho = \lambda_{\rho_0} = \lambda_{J\rho_0J} = \lambda_{\rho_0'}=\lambda_{\rho'}$.
\end{proof}

Proposition~\ref{prop:schmidt spectrum} shows that the spectral scale of the marginals is the natural generalization of the \emph{Schmidt spectrum} for pure states on type $\I$ systems to the general semifinite case.\footnote{As discussed in \cite{long_paper}, the spectral scale of a marginal state $\psi$ is in one-to-one correspondence with the spectral state $\hat{\psi}$ introduced by Haagerup and St{\o}rmer \cite{haagerup1990equivalence}. In this sense, the spectral state, which is a normal state on the flow of weights, can be interpreted as a further generalization of the Schmidt spectrum beyond the semifinite case.} It ensures that the entanglement monotones defined in the following remain the same if $\M$ and $\M'$ are interchanged. 

From Nielsen's theorem in conjunction with results from noncommutative majorization theory (specifically, \cref{it:factor majorization2-mt} in \cref{thm:factor-majorization-mt}), we immediately find (see also \cite{crann_state_2020}):
\begin{lemma}\label{lem:f monotone}
    Let $f:\RR^+\to \RR^+$ be a continuous convex function. Then the function
    \begin{align}\label{eq:f monotone}
        M_f(\Psi) := \Tr f(\rho) = \int_0^\oo f(\lambda_\rho(t))\,dt , 
    \end{align}
    defined on unit vectors of $\H$ (using the notation from above), is a pure state entanglement monotone: 
    \begin{equation}
        \Psi \xrightarrow{\overline\LOCC} \Phi \quad\implies\quad M_f(\Psi)\le M_f(\Phi).
    \end{equation}
\end{lemma}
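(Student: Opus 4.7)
The proof is a one-line application of two results that have already been established in the excerpt: Nielsen's theorem (\cref{thm:nielsen}) and the convex-function characterization of majorization on semifinite factors (\cref{thm:factor-majorization-mt}). My plan is simply to compose them, after verifying that the quantity $M_f$ is well-defined.

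First I would observe that $M_f$ does not depend on whether we compute it from the marginal on $\M$ or the marginal on $\M'$. This is exactly the content of \cref{prop:schmidt spectrum}: with the normalization of the traces on $\M$ and $\M'$ fixed by the coupling relation \eqref{eq:trace coupling}, the spectral scales of the marginal densities $\rho$ on $\M$ and $\rho'$ on $\M'$ agree pointwise, so
\begin{equation}
\int_0^\infty f(\lambda_\rho(t))\,dt \;=\; \int_0^\infty f(\lambda_{\rho'}(t))\,dt.
\end{equation}
Thus $M_f(\Psi)$ is an intrinsic invariant of $\Psi$, insensitive to the symmetry between $\M$ and $\M'$.

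Now suppose $\Psi \xrightarrow{\overline{\LOCC}} \Phi$. Nielsen's theorem (\cref{thm:nielsen}) yields the majorization relation $\psi \prec \phi$ for the induced marginal states on $\M$; equivalently, for the densities $\rho_\psi,\rho_\phi\in L^1(\M)^+$ one has $\rho_\phi \succ \rho_\psi$ in the sense of \cref{sec:majorization-mt}. Applying the implication \ref{it:factor majorization1-mt} $\Rightarrow$ \ref{it:factor majorization3-mt} of \cref{thm:factor-majorization-mt} to the semifinite factor $\M$ and to the continuous convex function $f:\RR^+\to\RR^+$ gives
\begin{equation}
\tr f(\rho_\phi) \;\ge\; \tr f(\rho_\psi),
\end{equation}
which by the integral formula in \eqref{eq:f monotone} is exactly $M_f(\Phi)\ge M_f(\Psi)$.

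There is no genuine obstacle here; all analytic issues have been dealt with in the appendix. The only point worth flagging is that both sides of the inequality may be $+\infty$ (for instance if $f(0)>0$ and $\M$ is infinite, or if $f$ grows too fast near $0$), but the inequality is meaningful and correct in the extended real sense, since the trace-integral formula $\Tr f(\rho) = \int_0^\infty f(\lambda_\rho(t))\,dt$ is simply the Lebesgue integral of a measurable non-negative function.
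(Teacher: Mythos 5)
Your proof is correct and is exactly the paper's argument: the lemma is derived there ``immediately'' by combining Nielsen's theorem with the equivalence in \cref{thm:factor-majorization-mt} (the paper enters that equivalence through the convex-hull item \ref{it:factor majorization2-mt}, you through the Lorenz-curve item \ref{it:factor majorization1-mt}, which is immaterial since the theorem identifies them). Your remarks on well-definedness via \cref{prop:schmidt spectrum} and on the inequality holding in the extended-real sense likewise match the paper's surrounding discussion.
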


Proposition \ref{prop:schmidt spectrum} guarantees that \eqref{eq:f monotone} is independent of exchanging $\M$ for $\M'$.
The numerical value of $M_f(\rho)$ depends on the scaling of the traces only up to a constant independent of $\Psi$. As a special case of these monotones, we obtain entanglement entropies.

For a state $\psi$ on $\M$ we define the \emph{R\'enyi entropy} of order $\alpha$    as
\begin{align}
        S_\alpha(\psi) := \frac{1}{1-\alpha}\log \Tr \rho^\alpha = \frac{1}{1-\alpha} \log \int_0^\oo \lambda_{\rho}(t)^\alpha\, dt,\quad \alpha \in(0,1)\cup(1,\infty).
\end{align}
The limit $\alpha \to 1$ of the R\'enyi entropy is the \emph{von Neumann entropy}
\begin{equation}
    S(\psi) \equiv S_1(\psi) = \tr \eta(\rho), \qquad \eta(t) = -t\log t.
\end{equation}
For $\alpha=0$, the R\'enyi entropy is defined as the logarithm of the rank of $\psi$: $S_0(\psi) = \log \tr s_\psi$.
The induced states $\psi$ and $\psi'$ of a unit vector $\Psi$ have the same entropies $S_\alpha(\psi)=S_\alpha(\psi')$, and this allows us to define the entanglement entropies for bipartite pure states:

\begin{definition}
    Let $\Psi\in\H$ be a unit vector.
    The \emph{R\'enyi entanglement entropy} of order $\alpha$ of $\Psi$ is 
    \begin{equation}
        S_\alpha(\Psi) := S_\alpha(\psi) = S_\alpha(\psi'),
    \end{equation} 
    where $\psi$ is the induced state on $\M$, and the \emph{generalized Schmidt rank} of $\Psi$ is
    \begin{equation}
        r(\Psi) := \Tr s_\psi = \Tr s_{\psi'}.
    \end{equation}
\end{definition}

As a special case of \cref{lem:f monotone}, we find that the R\'enyi entanglement entropies are entanglement monotones.
Moreover, \cref{thm:slocc}, together with the fact that a trace on a semifinite factor defines a dimension function on projections (see \cite[III.1.7.10]{blackadar_operator_2006}), implies that the Schmidt rank is even a monotone with respect to SLOCC transformations:

\begin{corollary}\label{cor:schmidt-rank}
    The R\'enyi entanglement entropy is an entanglement monotone:
    If $\Psi\xrightarrow{\overline{\LOCC}}\Phi$, then $S_\alpha(\M)_\Psi \geq S_\alpha(\M)_\Phi$, $\alpha \in [0,\oo)$.
    The Schmidt rank is a complete monotone with respect to SLOCC: 
    \begin{equation}
    \Psi\xrightarrow{\overline{\SLOCC}}\Phi \quad\iff\quad r(\Psi)\ge r(\Phi).
    \end{equation}
\end{corollary}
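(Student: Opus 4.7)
The plan is to assemble both claims from pieces already in place: Nielsen's theorem (\cref{thm:nielsen}), the SLOCC-characterization via Murray--von Neumann comparison of supports (\cref{thm:slocc}), the equivalence of majorization with integral inequalities on semifinite factors (\cref{thm:factor-majorization-mt}), and the identity $\lambda_{\rho_\psi}=\lambda_{\rho_{\psi'}}$ from \cref{prop:schmidt spectrum}, which guarantees that $S_\alpha(\Psi)$ and $r(\Psi)$ are unambiguously defined.

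For the first claim, assume $\Psi\xrightarrow{\overline\LOCC}\Phi$. Nielsen's theorem translates this into $\psi\prec\phi$ on $\M$, equivalently (using \cref{it:factor majorization4-mt} of \cref{thm:factor-majorization-mt}) into $\lambda_{\rho_\psi}\prec\lambda_{\rho_\phi}$ as probability densities on $\RR^+$. For $\alpha>1$, $t\mapsto t^\alpha$ is convex and nonnegative, so \cref{lem:f monotone} gives $\Tr\rho_\phi^\alpha\ge\Tr\rho_\psi^\alpha$; taking logarithms and multiplying by $1/(1-\alpha)<0$ yields $S_\alpha(\Phi)\le S_\alpha(\Psi)$. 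For $\alpha\in(0,1)$, $t\mapsto t^\alpha$ is concave, and I would invoke the concave version of the Hardy--Littlewood--P\'olya criterion on $\sigma$-finite measure spaces from \cref{app:majorization} to obtain $\int_0^\infty\lambda_{\rho_\phi}(t)^\alpha\,dt \le \int_0^\infty\lambda_{\rho_\psi}(t)^\alpha\,dt$; logarithms and multiplication by $1/(1-\alpha)>0$ give the inequality. The case $\alpha=1$ is identical with $\eta(t)=-t\log t$ replacing $t^\alpha$. Finally, for $\alpha=0$ one uses that $\overline\LOCC\Rightarrow\overline\SLOCC$ and invokes \cref{it:slocc2} of \cref{thm:slocc} to conclude $s_\phi\lesssim s_\psi$, whence $\Tr s_\phi\le\Tr s_\psi$.

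For the complete monotonicity of the Schmidt rank under $\overline\SLOCC$, I would directly combine \cref{it:slocc2} of \cref{thm:slocc}, which gives $\Psi\xrightarrow{\overline\SLOCC}\Phi\iff s_\phi\lesssim s_\psi$, with the fact that in a semifinite factor the Murray--von Neumann order coincides with the ordering of traces on projections (\cite[III.1.7.10]{blackadar_operator_2006}); this rewrites $s_\phi\lesssim s_\psi$ as $\Tr s_\phi\le\Tr s_\psi$, i.e.\ $r(\Phi)\le r(\Psi)$. The one step requiring care, and which I expect to be the main technical point, is the passage from \cref{it:factor majorization3-mt} of \cref{thm:factor-majorization-mt}, phrased for convex $f:\RR^+\to\RR^+$, to the concave functions needed for $0<\alpha\le1$; this is not a genuine obstacle, since once the problem is reduced to majorization of the classical spectral scales on $\RR^+$, the interchangeability of the convex and concave formulations (for continuous $f$ vanishing at $0$) is standard and is covered explicitly in the appendix.
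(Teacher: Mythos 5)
Your argument is correct and follows essentially the same route as the paper, which likewise derives the R\'enyi monotonicity as a consequence of \cref{thm:nielsen} and \cref{lem:f monotone}, and the Schmidt-rank statement from \cref{it:slocc2} of \cref{thm:slocc} together with the fact that the trace is a dimension function on projections of a semifinite factor. One small correction: the concave counterpart you need for $0<\alpha\le 1$ is \emph{not} stated explicitly in \cref{app:majorization} --- \cref{thm:submaj} and \cref{thm:maj} are formulated only for nonnegative convex $\phi$ with $\phi(0)=0$ --- but it does follow readily from item \ref{it:maj4} of \cref{thm:maj} (a DS operator $S$ on $\RR^+$ with $S(\lambda_{\rho_\phi})=\lambda_{\rho_\psi}$) combined with Jensen's inequality for unital positive maps, so this is a citation inaccuracy rather than a mathematical gap; indeed you are being more careful here than the paper, which simply calls the R\'enyi entropies a ``special case'' of \cref{lem:f monotone} even though $t\mapsto t^\alpha$ is concave in that range.
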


In the type $\I$ case, where $(\M,\M') =(\B(\H_A)\ox1,1\ox\B(\H_B))$ and $\Psi\in\H_A\ox\H_B$, the generalized Schmidt rank $r(\Psi)$ is the usual Schmidt rank of $\Psi$: 
It is the number of nonzero Schmidt coefficients $\lambda_\alpha\ge0$ in the Schmidt decomposition
\begin{equation}
    \Psi = \sum_\alpha \lambda_\alpha^{1/2} \ket{\alpha}_A\ket{\alpha}_B
\end{equation}
where $\{\ket\alpha_j\}\subset\H_j$, $j=A,B$ are suitable bases.
Therefore, \cref{cor:schmidt-rank} answers an open problem from \cite{van_luijk_schmidt_2024} asking for a generalization of the Schmidt rank that works for bipartite pure states outside of the type $\I$ case.

We conclude that entanglement monotones can be defined straightforwardly in the semifinite case, just as in finite dimensions. 
However, one has to be careful with the interpretation of the numerical values of entanglement monotones (see \cite{segal1960entropy} for a general discussion of entropy). 
As an example, suppose $\M$ is a type $\II_1$ factor and let $\tau$ be the tracial state (which equals the trace $\tau=\tr$ if the latter is normalized).
Then, as one would expect, for any normal state $\psi$ on $\M$, we have $\psi\succ\tau$ and hence
\begin{align}
    S_\alpha(\tau) \geq S_\alpha(\psi). 
\end{align}
However, the spectral scale $\lambda_\tau$ is simply the constant function $\lambda_\tau(t)=1$ on the interval $[0,\Tr(1)]=[0,1]$, which implies
\begin{align}
    S_\alpha(\tau) = 0 
\end{align}
and hence $S_\alpha(\psi) \leq 0$ for all normal states $\psi$ on $\M$. 
On the other hand, if $\rho$ is a density matrix on $\H$, we evidently have
\begin{align}
    S_\alpha(\rho) \geq 0.
\end{align}
How should we interpret the negativity of entanglement entropies? The answer is straightforward if we instead consider relative R\'enyi entropies. For concreteness, let us consider, for $\psi,\phi\in \M_*^+$, the \emph{Petz-R\'enyi relative entropies} defined as \cite{hiai2021quantum}
\begin{align}
    S_\alpha(\psi,\phi) := \frac{1}{\alpha-1} \log \Tr(\rho_\psi^\alpha \rho_\phi^{1-\alpha}),\qquad \alpha\in(0,1)\cup(1,\infty)
\end{align}
when $\alpha < 1$ or $\psi$ is absolutely continuous with respect to $\phi$ (otherwise we set $S_\alpha(\psi,\phi)=\infty$).
For a $d$-dimensional quantum systems modelled by the factor $M_d(\CC)$ we have
\begin{align}
    S_\alpha(\psi, \Tr_d) = - S_\alpha(\psi),\quad S_\alpha(\psi,\tau_d) = \log(d) - S_\alpha(\psi), 
\end{align}
where $\Tr_d$ denotes the standard trace on $M_d(\CC)$ with $\Tr_d(1)=d$ and where $\tau_d =\Tr_d(\placeholder)/d$. Thus
\begin{align}
    S_\alpha(\psi) = \log(d) - S_\alpha(\psi,\tau_d). 
\end{align}
In the type $\II_1$ case we have
\begin{align}
    S_\alpha(\psi) = - S_\alpha(\psi,\tau)
\end{align}
and $\log(\Tr(1))=0$ by convention. 
On the other hand, \cref{thm:equivalence-bell} shows that from any pure state on a standard bipartite system of type $\II_1$ factors, we can extract arbitrary amounts of entanglement. 
This matches \cref{sec:oo-qudits}, which shows that the purification $\Omega$ of $\tau$ corresponds to infinitely many Bell pairs. 
The \emph{maximally} entangled state $\Omega$ has R\'enyi entanglement entropy $S_\alpha(\Omega) = 0$, which should be considered as \emph{renormalized} version of the formally infinite \emph{physical} R\'enyi entanglement entropy $S\up{\text{ph.}}_\alpha(\Omega)=+\infty$.
The \emph{physical} R\'enyi entanglement entropy $S\up{\text{ph.}}_\alpha$ of \emph{any} pure state can be considered to be
\begin{align}\label{eq:physical entE}
    S\up{\text{ph.}}_\alpha(\Psi) = \infty + S_\alpha(\Psi) \leq S_\alpha\up{\text{ph.}}(\Omega).
\end{align}
Thus, the finite R\'enyi entanglement entropy $S_\alpha(\Psi)$ on the type $\II_1$ factor formally results from the subtraction of an infinite additive constant (cp.~\cite{longo2023continuous_entropy} for a similar discussion of entropy in the type $\II$ setting).

\clearpage
\appendix

\section{Commutative and noncommutative majorization theory}
\label{app:majorization}

In this appendix, we review commutative and noncommutative majorization theory in the $\sigma$-finite setting.
Majorization theory was initiated by Hardy-Littlewood-Polya in \cite{hardy1929} and has a wide range of applications (see, e.g.,  \cite{marshall_inequalities_2011} for an overview).
It was generalized to the non-commutative setting of semifinite von Neumann algebras in a series of works in the 1980s, see \cite{alberti_stochasticity_1982,kamei1983majorization,kamei1984double,petz1985scale,fack1986generalized,hiai_majorization_1987,hiai_majorizations_1987,hiai1989distance} and references therein. Our goal is to present a complete, but streamlined treatment of those aspects of majorization theory most significant for entanglement theory. 

We restrict our attention to majorization of \emph{positive} $L^1$ elements but allow for majorization $a\succ b$ of elements $a,b$ of different von Neumann algebras.
The important role played by doubly substochastic maps is emphasized.
Our guiding principle is that majorization theory for von Neumann algebras is essentially a classical, i.e., commutative, theory.
Indeed, almost all proofs in the noncommutative case are given by a reduction to the commutative case.
For this reason, we begin by reviewing majorization theory on $\sigma$-finite measure spaces in section \ref{sec:classical majorization}.
In \cref{sec:q majorization}, we consider majorization theory in the noncommutative setting by paralleling (and reducing to) the commutative case.

\subsection{Majorization theory on $\sigma$-finite measure spaces.}
\label{sec:classical majorization}

All measure spaces will be assumed to be $\sigma$-finite.
Whenever possible, we omit the $\sigma$-algebras, i.e., we denote a measure space $(X,\Sigma,\mu)$ simply by $(X,\mu)$.
Unless explicitly said otherwise, intervals $I\subset \RR$ will be equipped with the Lebesgue measure $dt$ and we simply write $I$ for $(I,dt)$.

We begin by recalling the concept of decreasing rearrangements.
For measurable functions $f:X\to \RR^+$, the distribution function is defined as
\begin{equation}
    D_f(t) := \mu( [f>t] )
\end{equation}
where $[f>t]:= f^{-1}((t,\oo))$.
The \emph{decreasing rearrangement} $f^\downarrow$ of $f$ is the right-continuous decreasing function $f^\downarrow : \RR^+\to \RR^+$ defined by
\begin{equation}
    f^\downarrow(r) = \inf\{t>0 : D_f(t)>r\} = D_{D_f}(r).
\end{equation}
By construction, $f^\downarrow$ is supported on $[0,\mu(\supp f))$ and equimeasurable with $f$, i.e., the level sets have the same measure $\mu([s\le f <t]) = |[ s\le f^\downarrow<t]|$ (where $|\placeholder|$ denotes the Lebesgue measure). This implies the identity
\begin{equation}\label{eq:equimeasurability}
    \int_X \phi\circ f\,d\mu = \int_0^\oo \phi(f^\downarrow(r))\,dr,
\end{equation}
valid for all measurable functions $\phi$ on $\RR^+$ so that the integrals exist.
In the following, we only consider integrable functions $f\in L^1(X,\mu)^+$.
The \emph{Lorenz curve} of $f$ is defined as the anti-derivative of the decreasing rearrangement:
\begin{equation}\label{eq:lorentz curve}
    L_f(t) := \int_0^t f^\downarrow(s)\,ds,\qquad t>0.
\end{equation}
It can be characterized directly via $f$ with the following variational expression
\begin{equation}\label{eq:variational formula}
    S_f(t) = \sup \bigg\{ \int_X e f\,d\mu : e\in L^1(X,\mu),\ 0\le e \le 1, \ \int_X e\,d\mu \le t \bigg\}.
\end{equation}
The majorization ordering is defined directly in terms of the Lorenz curves $L_f$:

\begin{definition}
    Let $(X,\mu)$ and $(Y,\nu)$ be measure spaces and let $f\in L^1(X,\mu)^+$ and $g\in L^1(Y,\nu)^+$.
    Then $f$ \emph{submajorizes} $g$, denoted $f\succ_w g$, if 
    \begin{equation}
        L_f(t)\ge L_g(t),\qquad t>0,
    \end{equation}
    and $f$ \emph{majorizes} $g$, denoted $f\succ g$, if $f$ submajorized $g$ and if $\int f\,d\mu=\int g\,d\nu$ holds.
\end{definition}

Majorization is intimately connected with the concept of stochastic maps.
Let $(X,\mu)$ and $(Y,\nu)$ be measure spaces.
A \emph{doubly substochastic} (DSS) operator from $(X,\mu)$ to $(Y,\nu)$ is a positive linear operator $T:L^\oo(X,\nu)\to L^\oo(Y,\nu)$ that is continuous for the respective weak* topologies, subunital $T(1)\le1$ and integral-non-increasing 
\begin{equation}\label{eq:int_pres}
    \int_Y T(f)\,d\nu \le \int_X f\,d\mu,\qquad f\in L^1(X,\mu)^+\cap L^\oo(X,\mu).
\end{equation}
The set of DSS operators from $X$ to $Y$ will be denoted $\DSS((X,\mu)\to (Y,\nu))$.
Eq.~\eqref{eq:int_pres} ensures that a DSS operator extends from $L^1\cap L^\oo$ to a linear contraction $L^1(X,\mu)\to L^1(Y,\nu)$.\footnote{\label{footnote}This follows from $\norm{f}_{L^1(X,\mu)} = \inf\{ \int (f_++f_-)\,d\mu : 0\le f_\pm, \, f=f_+-f_-\}$, which implies that the operator norm of a densely-defined linear operator $S:D(S)\subset L^1(X,\mu)\to L^1(Y,\nu)$ is $\norm S = \sup\{\norm{S(f)}_{L^1(Y,\nu)} : 0\le f\in D(S),\, \int f\,d\mu=1 \}$. This holds in general for linear maps $S:D(S)\subset B\to E$, if $B$ is a "base norm space", e.g., a noncommutative $L^1$ space, and $E$ is a normed space \cite{nagel1974order,lami2018non,jameson1970ordered}.}
Abusing notation, we denote this extension also by $T$.
For every DSS operator $T$ from $(X,\mu)$ to $(Y,\nu)$, there is a \emph{dual} DSS operator $T^*$ from $(Y,\nu)$ to $(X,\mu)$.
It is defined as the dual of the operator $T$ considered as an operator between the $L^1$ spaces, i.e., $T^*$ is the weak* continuous operator $L^\oo(Y,\nu)\to L^\oo(X,\mu)$ defined by
\begin{equation}
    \int_Y g\cdot T(f)\,d\nu = \int_X T^*(g)\cdot f\,d\mu
\end{equation}
for $f\in L^1(X,\mu)$ and $g\in L^\oo(Y,\nu)$.
It is clear that $T^*$ is DSS if $T$ is. 
Combining this fact with the variational formula \eqref{eq:variational formula} immediately shows that
\begin{equation}
    f\succ_w T(f), \qquad f\in L^1(X,\mu)^+,
\end{equation}
for all $T\in \DSS((X,\mu)\to(Y,\nu))$.
In fact, the converse is also true: $f\succ_w g$ if and only if $g=T(f)$ for some DSS operator as we will see below (see \cref{thm:maj}).

A doubly substochastic operator $T$ is \emph{doubly stochastic} (DS) if it unital, i.e., $T1=1$, and integral-preserving, i.e., eq.~\eqref{eq:int_pres} holds with equality.
The set of DS operators will be denoted $\DS((X,\mu)\to(Y,\nu))$.

\begin{lemma}\label{lem:IP on Omega}
    Let $T\in \DSS((X,\mu) \to(Y,\nu))$ and let $\Omega\subset X$.
    The following are equivalent:
    \begin{enumerate}[(a)]
        \item\label{it:IP on Omega1} $T$ is integral-preserving on $\Omega$, i.e., 
        \begin{equation}\label{eq:IP on Omega}
            \int_Y T(f)\,d\nu=\int_X f\,d\mu
        \end{equation} 
        for all $f\in L^1(X,\mu)$ with support in $\Omega$,
        \item\label{it:IP on Omega2} $T$ is integral-preserving on a single function $f\in L^1(X,\mu)^+$ with $\supp f = \Omega$.
        \item\label{it:IP on Omega3} $T^*$ is "unital onto $\Omega$": $T^*(1) |_{\Omega} \equiv 1$.
    \end{enumerate}
    Thus, $T$ is integral-preserving (resp.\ DS) if and only if $T^*$ is unital (resp.\ DS).
\end{lemma}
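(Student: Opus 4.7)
The plan is to base everything on the duality identity
\begin{equation*}
    \int_Y T(f)\,d\nu \;=\; \int_Y 1\cdot T(f)\,d\nu \;=\; \int_X T^*(1)\,f\,d\mu, \qquad f\in L^1(X,\mu)^+,
\end{equation*}
which is valid because $T^*(1)\in L^\oo(Y,\nu)$ by definition of the dual DSS operator and $f\in L^1(X,\mu)$. Since $T$ is doubly substochastic, $T^*$ is subunital, so $h:=1-T^*(1)$ lies in $L^\oo(X,\mu)^+$, and the above identity rewrites as
\begin{equation*}
    \int_X f\,d\mu - \int_Y T(f)\,d\nu \;=\; \int_X h\,f\,d\mu.
\end{equation*}
This single formula will drive every equivalence.

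The implication \ref{it:IP on Omega1} $\Rightarrow$ \ref{it:IP on Omega2} is immediate: by $\sigma$-finiteness, $\Omega$ admits an $f\in L^1(X,\mu)^+$ with $\supp f=\Omega$ (e.g.\ $f=\sum 2^{-n}\chi_{\Omega_n}/\mu(\Omega_n)$ for a finite-measure exhaustion). For \ref{it:IP on Omega2} $\Rightarrow$ \ref{it:IP on Omega3}, the identity above forces $\int_X h f\,d\mu=0$; since $h\ge0$ and $\{f>0\}=\Omega$, this means $h=0$ a.e.\ on $\Omega$, i.e.\ $T^*(1)|_\Omega \equiv 1$. For \ref{it:IP on Omega3} $\Rightarrow$ \ref{it:IP on Omega1}, if $f\in L^1(X,\mu)$ is supported in $\Omega$, then $hf = 0$ a.e., so the identity yields integral preservation on $f$ (one first reduces to $f\ge0$, or equivalently writes $f=f_+-f_-$ with both supported in $\Omega$).

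The final sentence follows by instantiation and self-duality. Taking $\Omega=X$, the equivalence \ref{it:IP on Omega1} $\Leftrightarrow$ \ref{it:IP on Omega3} says: $T$ is integral-preserving iff $T^*(1)=1$, i.e.\ iff $T^*$ is unital. Applying the same equivalence to $T^*$ (which is also DSS, and whose dual is $T$) gives: $T^*$ is integral-preserving iff $T$ is unital. Combining the two equivalences,
\begin{equation*}
    T \text{ is DS} \ \Longleftrightarrow\ T \text{ unital and int.-pres.} \ \Longleftrightarrow\ T^* \text{ int.-pres.\ and unital} \ \Longleftrightarrow\ T^* \text{ is DS}.
\end{equation*}
The only step requiring mild care is the implication \ref{it:IP on Omega2} $\Rightarrow$ \ref{it:IP on Omega3}, where one must correctly identify the support set $\{f>0\}$ with $\Omega$ up to null sets; everything else is a direct manipulation of the duality identity.
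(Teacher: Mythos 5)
Your proof is correct and follows essentially the same route as the paper's: both rest on the duality identity $\int_Y T(f)\,d\nu=\int_X T^*(1)f\,d\mu$ and the observation that $\int_X(1-T^*(1))f\,d\mu=0$ with $1-T^*(1)\ge0$ forces $T^*(1)\equiv1$ on $\supp f$. Your additional remarks on constructing an $f$ with $\supp f=\Omega$ via $\sigma$-finiteness and on deducing the final sentence by self-duality are details the paper leaves implicit, but the argument is the same.
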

\begin{proof}
    \ref{it:IP on Omega1} $\Rightarrow$ \ref{it:IP on Omega2} is clear. \ref{it:IP on Omega3} $\Rightarrow$ \ref{it:IP on Omega1} follows from $\int T(f)\,d\nu = \int T^*(1) f\,d\mu$.
    \ref{it:IP on Omega2} $\Rightarrow$ \ref{it:IP on Omega3}:
    Let $f$ be as in \ref{it:IP on Omega1}.
    Since $1-T^*(1)\in L^\oo(X,\mu)^+$, we have
    \begin{align*}
        0\le \int (1-T^*(1)) f\,d\mu = \int f \,d\mu - \int T^*(1)f\,d\mu = \int f\,d\mu-\int T(f)\,d\nu =0,
    \end{align*}
    which implies that $T^*(1)\equiv 1$ on the support of $f$, which is $\Omega$.
\end{proof}

\begin{lemma}\label{lem:interconversion}
    Let $(X,\mu)$ be a  measure space and let $f\in L^1(X,\mu)^+$.
    There exists a DS operator $T$ from $(\supp f,\mu)$ to the interval $[0,\mu(\supp f))$ (with the Lebesgue measure) such that
    \begin{equation}\label{eq:interconversion}
        T(f) = f^\downarrow,\quad T^*(f^\downarrow)=f.
    \end{equation}
    Thus, there is a DSS operator $T$ from $(X,\mu)$ to $\RR^+$ such that \eqref{eq:interconversion} holds. 
\end{lemma}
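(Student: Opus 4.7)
The plan is to reduce to the case $X=\supp f$ with $m=\mu(X)\in(0,\oo]$ and to build the required operator via its dual $T^{*}\colon L^{\infty}([0,m))\to L^{\infty}(X,\mu)$. The driving fact is equimeasurability, which I would phrase as equality of pushforward measures $\mu_f := f_*\mu = f^{\downarrow}_*(ds)$ on $\RR^{+}$. Since $f\in L^{1}(X,\mu)^{+}$ and $\supp f=X$, the rearrangement $f^{\downarrow}$ is strictly positive on $[0,m)$, and each of its level sets at positive values has finite Lebesgue measure. Thus the conditional expectation onto the $\sigma$-algebra $\sigma(f^{\downarrow})$ generated by $f^{\downarrow}$ is elementary: it averages over each (finite-measure) level interval.

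For any $g\in L^{\infty}([0,m))$ I would form $E_{\sigma(f^{\downarrow})}(g)$; by Doob--Dynkin this equals $h\circ f^{\downarrow}$ for a Borel function $h\colon\RR^{+}\to\RR$ which is unique modulo $\mu_{f^{\downarrow}}$-nullsets. Since $\mu_f=\mu_{f^{\downarrow}}$, the pullback $T^{*}(g):=h\circ f$ is well defined in $L^{\infty}(X,\mu)$ with $\norm{T^{*}(g)}_\infty\le\norm{g}_\infty$. Positivity is clear; unitality follows from $E_{\sigma(f^{\downarrow})}(1)=1\Rightarrow h\equiv 1$; and integral preservation is the change-of-variables chain
\begin{equation*}
\int_{X} T^{*}(g)\,d\mu
\;=\;\int_{\RR^{+}}\! h\,d\mu_{f}
\;=\;\int_{\RR^{+}}\! h\,d\mu_{f^{\downarrow}}
\;=\;\int_{0}^{m}\! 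E_{\sigma(f^{\downarrow})}(g)\,ds
\;=\;\int_{0}^{m}\! g\,ds.
\end{equation*}
By \cref{lem:IP on Omega}, $T^{*}$ is thus DS, so the dual $T$ defined by $\int_{0}^{m} g\,T(f)\,ds := \int_X T^{*}(g) f\,d\mu$ is the sought DS operator from $(X,\mu)$ to $[0,m)$.

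The identity $T^{*}(f^{\downarrow})=f$ is immediate: $f^{\downarrow}$ is $\sigma(f^{\downarrow})$-measurable, so $E_{\sigma(f^{\downarrow})}(f^{\downarrow})=f^{\downarrow}$ forces $h=\id$ on the essential range of $f^{\downarrow}$, which coincides with the essential range of $f$ by equimeasurability. For $T(f)=f^{\downarrow}$, I would test against arbitrary $g\in L^{\infty}([0,m))$ and pull $f^{\downarrow}$ out of the conditional expectation:
\begin{equation*}
\int_{0}^{m}\! g\,T(f)\,ds
\;=\; \int_{X}\! f\cdot h(f)\,d\mu
\;=\; \int_{0}^{m}\! f^{\downarrow}\,E_{\sigma(f^{\downarrow})}(g)\,ds
\;=\; \int_{0}^{m}\! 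E_{\sigma(f^{\downarrow})}(f^{\downarrow} g)\,ds
\;=\; \int_{0}^{m}\! f^{\downarrow} g\,ds,
\end{equation*}
where the second equality is again change of variables via $\mu_f=\mu_{f^{\downarrow}}$. The DSS statement for $(X,\mu)\to\RR^{+}$ in the second sentence then follows by pre-composing with restriction to $\supp f$ and extending the codomain from $[0,m)$ to $\RR^{+}$ by zero, which merely trades unitality for subunitality.

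I expect the main obstacle to be the measure-theoretic handling of $E_{\sigma(f^{\downarrow})}$ when $m=\oo$, since conditional expectation of a bounded function against infinite Lebesgue measure is not directly defined. The workaround is that $f^{\downarrow}\,ds$ is a finite measure on $[0,m)$ with full support, and one can either reweight by $f^{\downarrow}\,ds$ or define $h$ level-by-level on the countable decomposition $\{f^{\downarrow}\in [1/n,1/(n-1))\}$ into sets of finite Lebesgue measure. A minor subtlety is that $h$ is only determined on the essential range of $f^{\downarrow}$, but equimeasurability identifies this with the essential range of $f$ modulo $\mu$-nullsets, so $h\circ f$ is $\mu$-a.e.\ unambiguous.
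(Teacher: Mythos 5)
Your construction is correct and is essentially the paper's proof seen from the dual side: the paper writes $T=F^*\circ\psi\circ E$ with $E,F$ the conditional expectations onto the level-set $\sigma$-algebras of $f$ and $f^\downarrow$ and $\psi$ the equimeasurability isomorphism between them, and your $T^*(g)=h\circ f$ with $E_{\sigma(f^\downarrow)}(g)=h\circ f^\downarrow$ is precisely the adjoint of that composite. Your explicit treatment of the $\mu(\supp f)=\infty$ case (via the finite-measure level sets of $f^\downarrow$) is a worthwhile detail the paper leaves implicit, but the argument is the same.
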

\begin{proof}
    We may assume $f$ to be faithful, i.e., $\supp f=X$. Let $\Sigma$ be the $\sigma$-algebra on $X$ and let $\Sigma_f$ be the $\sigma$-subalgebra generated by the level sets of $f$.
    The conditional expectation (or coarse graining) is a DS operator $E$ from $(X,\Sigma,\mu)$ to $(X,\Sigma_f,\mu|_{\Sigma_f})$ such that $E(f)=f$ and $E^*(f)=f$.
    We can consider $[0,\mu(X))$ with the Borel $\sigma$-algebra $\mathrm B$ or with the $\sigma$-subalgebra $\mathrm B_{f^\downarrow}$ generated by $f^\downarrow$.
    By identifying the level sets of $f$ and $f^\downarrow$, we get a $\sigma$-algebra isomorphism $\Sigma_f\cong \mathrm B_{f^\downarrow}$ which maps the restricted measure $\mu|_{\Sigma_f}$ to the restricted the Lebesgue measure.
    This induces an isomorphism $\psi:L^\oo(X,\Sigma_f,\mu_0)\to L^\oo([0,\mu(X)),\mathrm B_{f^\downarrow},dt)$ such that $\psi(f)=f^\downarrow$ which satisfies $\psi^* = \psi^{-1}$.
    Finally, let $F$ be the conditional expectation of $\mathrm B_{f^\downarrow}$. Then $T=F^*\circ\psi\circ E$ satisfies the claim.
\end{proof}

\begin{theorem}[Submajorization]\label{thm:submaj}
    Let $(X,\mu)$ and $(Y,\nu)$ be  measure spaces and let $f\in L^1(X,\mu)^+$ and $g\in L^1(Y,\nu)^+$.
    The following are equivalent:
    \begin{enumerate}[(a)]
        \item\label{it:submaj1} $f\succ_w g$,
        \item\label{it:submaj2} $\int_X \phi\circ f\,d\mu \ge \int_Y \phi\circ g\,d\nu$ for all convex functions $\phi:\RR^+\to\RR^+$ with $\phi(0)=0$,
        \item\label{it:submaj3} $\int_X (f-t)_+\,d\mu \ge \int_Y (g-t)_+\,d\nu$ for all $t\ge0$,
        \item\label{it:submaj4} there exists a DSS operator $T$ from $(X,\mu)$ to $(Y,\nu)$ such that $T(f)=g$.
    \end{enumerate}
    The operator $T$ can be chosen to satisfy $T(\chi_{\supp(f)})\le \chi_{\supp(g)}$ and $T^*(\chi_{\supp(g)})\le \chi_{\supp(f)}$.
\end{theorem}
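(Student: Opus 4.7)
The strategy is to dispatch $(d)\Rightarrow(a)$ by duality, reduce the equivalence $(a)\Leftrightarrow(b)\Leftrightarrow(c)$ to an essentially classical manipulation of decreasing rearrangements, and to tackle the main implication $(a)\Rightarrow(d)$ by passing to rearrangements via Lemma~\ref{lem:interconversion} and then constructing a DSS kernel between intervals.

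The direction $(d)\Rightarrow(a)$ is immediate: for any admissible $e\in L^\infty(Y,\nu)$ with $0\le e\le 1$ and $\int e\,d\nu\le t$, the dual operator $T^*$ is again DSS (see Lemma~\ref{lem:IP on Omega}), so $T^*(e)$ satisfies the same bounds as $e$. By the variational formula~\eqref{eq:variational formula},
\begin{equation}
\int_Y e\cdot T(f)\,d\nu \;=\; \int_X T^*(e)\cdot f\,d\mu \;\le\; L_f(t),
\end{equation}
and taking the supremum over $e$ gives $L_{T(f)}(t)\le L_f(t)$.

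For $(a)\Leftrightarrow(b)\Leftrightarrow(c)$, I observe that all three conditions are invariant under replacing $f,g$ by their decreasing rearrangements (for $(b)$ by equimeasurability~\eqref{eq:equimeasurability}, for $(a),(c)$ by definition), so I may assume $f,g$ are non-increasing on subintervals of $\RR^+$. Then $(b)\Rightarrow(c)$ by the choice $\phi(t)=(t-s)_+$, while $(c)\Rightarrow(b)$ follows from the integral representation
\begin{equation}
\phi(t) \;=\; \phi'_+(0)\,t \;+\; \int_0^\infty (t-s)_+\, d\phi'_+(s),
\end{equation}
valid for nonnegative convex $\phi$ with $\phi(0)=0$; the linear term is absorbed using that $(a)$ forces $\int f\,d\mu\ge\int g\,d\nu$ (obtained by letting $t\to\infty$ in the Lorenz curves), and monotone convergence handles the integral. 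The equivalence $(a)\Leftrightarrow(c)$ is the Legendre--Fenchel identity
\begin{equation}
\int (f^\downarrow - t)_+\, ds \;=\; \sup_{s\ge 0}\bigl(L_f(s) - ts\bigr),\qquad t\ge 0,
\end{equation}
which expresses each quantity as the Fenchel conjugate of the other.

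The main implication $(a)\Rightarrow(d)$ is the hard part. Applying Lemma~\ref{lem:interconversion} to $f$ and $g$ reduces the problem to constructing a DSS operator $S$ between the intervals $I_f=[0,\mu(\supp f))$ and $I_g=[0,\nu(\supp g))$ with $S(f^\downarrow)=g^\downarrow$. Sandwiching $S$ between the DS operators of Lemma~\ref{lem:interconversion} and their adjoints then yields the required $T$; since those outer operators respect supports exactly, the bounds $T(\chi_{\supp f})\le\chi_{\supp g}$ and $T^*(\chi_{\supp g})\le\chi_{\supp f}$ hold automatically. To build $S$, I would approximate $f^\downarrow, g^\downarrow$ by step functions $f_n, g_n$ on a common refining dyadic partition, arrange the approximation so that $L_{f_n}\ge L_{g_n}$ is preserved, and invoke the classical finite-dimensional Hardy--Littlewood--Polya theorem to obtain doubly substochastic matrices $S_n f_n = g_n$. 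Promoting each $S_n$ to a DSS operator between the interval $L^1$-spaces and extracting a weak*-cluster point $S$---by BW-compactness of the DSS unit ball on separable $L^1$-spaces---produces the desired operator. The main obstacle is controlling the convergence $S_n f_n\to S f^\downarrow$ in the $\sigma$-finite setting; this requires uniform integrability at infinity and careful compatibility of the partitions. A cleaner alternative would be a Hahn--Banach separation: if $g$ were not in the weak*-closed convex hull of $\{Tf : T\in\DSS((X,\mu)\to(Y,\nu))\}$, it could be separated by some $\psi\in L^\infty(Y,\nu)^+$, and optimizing this test function on level sets via \eqref{eq:variational formula} would contradict $(a)$.
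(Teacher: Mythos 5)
Your proposal follows essentially the same route as the paper's proof: reduction to decreasing rearrangements, the anti-derivative/Legendre identity linking $L_f$ and $t\mapsto\int (f-t)_+\,d\mu$ (the paper's eq.~\eqref{eq:ac_identity}), generation of all admissible convex $\phi$ from the elementary functions $(\cdot-t)_+$, and, for (a)$\Rightarrow$(d), passage to rearrangements via \cref{lem:interconversion} followed by discretization, the finite Hardy--Littlewood--P\'olya theorem, and point-weak* compactness of the DSS operators. The convergence obstacle you flag is resolved exactly as you suspect---uniform $L^1$-contractivity of the approximants together with $f_n\to f^\downarrow$ and $g_n\to g^\downarrow$ in $L^1$ forces $S f^\downarrow=g^\downarrow$ at the cluster point---while the support conditions are not automatic from the sandwiching but are obtained by the harmless final cut-down $T(h)=\chi_{\supp(g)}\,T'(\chi_{\supp(f)}h)$, as in the paper.
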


For the proof, we need the following result, which follows from the Banach-Alaoglu and Tychonov theorems (cp.\ \cref{lem:compactness}):
\begin{lemma}\label{lem:dss_are_compact_classical}
    Let $(X,\mu)$ and $(Y,\nu)$ be  measure spaces.
    Then $\DSS((X,\mu)\to (Y,\nu))$ is compact in the point-weak* topology, i.e., in the topology induced by the functionals 
    \begin{equation}
        T\mapsto \int_Y g\cdot T(f)\,d\nu,\qquad (f,g)\in L^\oo(X,\mu)\times L^1(Y,\nu).
    \end{equation}
    The subset of unital DSS operators is closed in this topology. 
    If both measure spaces are finite, the set of DS operators is closed.
\end{lemma}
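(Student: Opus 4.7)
The plan is to identify $\DSS((X,\mu)\to(Y,\nu))$ with a closed subset of a product of weak*-compact balls and to deduce compactness from Banach--Alaoglu together with Tychonov. For any $T\in\DSS$ and $f\in L^\oo(X,\mu)$, positivity together with $T(1)\le 1$ yields $\norm{T(f)}_\oo\le\norm{f}_\oo$, so $T(f)$ lies in the weak*-closed ball $B_f$ of radius $\norm{f}_\oo$ in $L^\oo(Y,\nu)$. Banach--Alaoglu makes each $B_f$ weak*-compact and Tychonov makes $\prod_{f\in L^\oo(X,\mu)} B_f$ compact in the product topology. Since the point-weak* topology on $\DSS$ coincides with the subspace topology inherited from this product (both are generated by the functionals $T\mapsto \int_Y g\,T(f)\,d\nu$ with $g\in L^1(Y,\nu)$), it suffices to show that $\DSS((X,\mu)\to(Y,\nu))$ is closed in $\prod_f B_f$.

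Closedness reduces to checking that each defining property of a doubly substochastic operator is preserved under point-weak* limits $T_\alpha\to T$. Linearity, positivity, and subunitality $T(1)\le 1$ are all expressible as equalities or inequalities between the continuous evaluation functionals $T\mapsto \int_Y g\,T(f)\,d\nu$, and therefore define closed subsets.

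The only delicate point is the substochasticity condition $\int_Y T(f)\,d\nu\le \int_X f\,d\mu$ for $f\in (L^1\cap L^\oo)(X,\mu)^+$: when $\nu$ is infinite, $1_Y\notin L^1(Y,\nu)$ and $T\mapsto \int_Y T(f)\,d\nu$ is not manifestly continuous. Here I would invoke $\sigma$-finiteness: pick an increasing exhaustion $Y_n\uparrow Y$ with $\nu(Y_n)<\oo$ and use monotone convergence to write
\begin{equation*}
    \int_Y T(f)\,d\nu \;=\; \sup_n \int_Y \chi_{Y_n}\,T(f)\,d\nu.
\end{equation*}
Each term on the right is continuous in $T$ since $\chi_{Y_n}\in L^1(Y,\nu)$, so the supremum is lower semicontinuous and the sublevel set $\{T:\int_Y T(f)\,d\nu\le \int_X f\,d\mu\}$ is closed. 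Intersecting over all admissible $f$ yields closedness of $\DSS$. This lower semicontinuity step, forced by the asymmetry between $L^\oo$ and $L^1$ in the $\sigma$-finite-but-infinite case, is the main obstacle.

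For the remaining statements, $T(1)=1$ is a simultaneous equality $\int_Y g\,T(1)\,d\nu=\int_Y g\,d\nu$ of continuous functionals indexed by $g\in L^1(Y,\nu)$, hence defines a closed subset of $\DSS$. When $\mu$ and $\nu$ are both finite, $1_X\in L^1(X,\mu)$ and $1_Y\in L^1(Y,\nu)$, so $T\mapsto \int_Y T(f)\,d\nu$ becomes continuous and the integral-preservation condition $\int_Y T(f)\,d\nu=\int_X f\,d\mu$ is itself closed; combined with unitality this yields closedness of $\DS((X,\mu)\to(Y,\nu))$.
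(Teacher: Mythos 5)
Your proof is correct and follows essentially the same route as the paper's, which establishes the noncommutative analogue (\cref{lem:compactness}) by embedding the operators into a product of weak*-compact balls via Banach--Alaoglu and Tychonov and then verifying closedness, handling the integral-non-increasing condition exactly as you do, namely via lower semicontinuity of the integral (trace) on the positive cone. The only point you — like the paper — leave implicit is that the point-weak* limit is again weak*-continuous as a map $L^\oo(X,\mu)\to L^\oo(Y,\nu)$, which is part of the definition of a DSS operator and does not follow automatically from pointwise convergence of weak*-continuous maps.
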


\begin{proof}[Proof of \cref{thm:submaj}]
    Let us begin by noting the following identity
    \begin{equation}\label{eq:ac_identity}
        \int_X (f-t)_+ d\mu =\int_0^\oo (f^\downarrow(r)-t)_+\,dr = \int_0^s f^\downarrow(r)\,dr - s t, \qquad s\in [D_f(t^-),D_f(t)],
    \end{equation}
    where $D_f(t^-)$ denotes the left-limit of the (right-continuous) function $D_f$ at $t$.
    Furthermore, we note that each item implies 
    \begin{equation*}\label{eq:weight_ineq}
        \int_X f\,d\mu \ge \int_Yg\,d\nu.
    \end{equation*}
    We claim that we may assume $(X,\mu)=(Y,\nu)=(\RR^+,dx)$ and $f=f^\downarrow$ as well as $g=g^\downarrow$.
    Indeed, items \ref{it:submaj1} -- \ref{it:submaj3} are satisfied by $f^\downarrow$ and $g^\downarrow$ if and only if they are satisfied by $f$ and $g$ as is evident from equimeasurability \eqref{eq:equimeasurability}.
    Moreover, \cref{lem:interconversion} shows that \ref{it:submaj4} holds for $f$ and $g$ if and only if it holds for $f^\downarrow$ and $g^\downarrow$.
    From now on, we assume both measure spaces to be $\RR^+$ and $f,g$ to be decreasing right-continuous functions $L^1$ functions.

    \ref{it:submaj2} $\Rightarrow$ \ref{it:submaj3} is clear since, for each $t>0$, $\phi_t(x) = (x-t)_+$ is a convex function with $\phi_t(0)=0$.

    \ref{it:submaj3} $\Rightarrow$ \ref{it:submaj2}: Let $A=\lin_{\RR^+}\{\phi_t : t>0\}$.
    Note that item \ref{it:submaj2} holds for all $\phi\in A$ and that $A$ is closed under finite pointwise suprema.
    Since every convex function $\phi$ with $\phi(0)=0$ is a monotone limit of functions in $A$, item \ref{it:submaj2} follows from the monotone convergence theorem.

    \ref{it:submaj1} $\Rightarrow$ \ref{it:submaj3}:
    Let $t>0$. Then \eqref{eq:ac_identity} gives:
    \begin{multline*}
        \int_0^\oo (g(r)-t)_+\,dr = \int_0^{D_g(t)} g(r)\,dr - tD_g(t) \le \int_0^{D_g(t)}f(r)\,dr - t D_g(t)\\
        = \int_0^{D_g(t)} (f(r)-t)\,dr \le \int_0^{D_g(t)} (f(r)-t)_+\,dr \le \int_0^\oo (f(r)-t)_+\,dr.
    \end{multline*}

    \ref{it:submaj3} $\Rightarrow$ \ref{it:submaj1}:
    Let $t>0$ and pick $s>0$ such that $s\in [D_f(t^-),D_f(t)]$. Then  \eqref{eq:ac_identity} gives:
    \begin{multline*}
        \int_0^s g(r)\,dr
        =\int_0^s (g(r)-t)\,dr + ts
        \le \int_0^\oo(g(r)-t)_+\,dr + ts \\
        \le \int_0^\oo(f(r)-t)_+\,dr +ts  
        =\int_0^s f(r)\,dr.
    \end{multline*}

    \ref{it:submaj4} $\Rightarrow$ \ref{it:submaj3}:
    Let $t>0$, then $g-t = Tf-t \le T(f-t) \le T(f-t)_+$ implies $(g-t)_+ \le T(f-t)_+$ and hence
    \begin{align*}
        \int_0^\oo (g(r)-t)_+\,dr \le \int_0^\oo (T(f-t)_+)(r)\,dr \le \int_0^\oo (f(r)-t)_+\,dr.
    \end{align*}
    
    \ref{it:submaj1} $\Rightarrow$ \ref{it:submaj4}:
    We set $I\up n_m = [\frac{m-1}n,\frac mn)$, $m,n\in\NN$, to get a partition $I\up n = \{I\up n_m:m\in\NN\}$ of $\RR^+$.
    Set $f\up n_m= \int_{I_m\up n}f(s)\,ds$ and $g\up n_m=\int_{I_m\up n}g(s)\,ds$.
    By construction, $(f\up n_m)_m$ and $(g\up n_m)_m$ are in $\ell^1(\NN)$ with positive decreasing entries that satisfy $\sum_{m=1}^k f\up n_m \ge \sum_{m=1}^k g\up n_m$ for all $k,n\in\NN$ .
    Thus, there exists a doubly substochastic $n^2\times n^2$ matrix $[T\up n_{ml}]_{ml}$ such that $g\up n_m=\sum_{l=1}^{n^2} T\up n_{ml}f\up n_l$ for $m\le n^2$.
    Define a doubly substochastic operator $T\up n $ on $L^\oo(\RR^+)$ by
    \begin{equation}\label{eq:Tupn}
        T\up n h(r) =   n\sum_{m,l=1}^{n^2} \chi_{I_m\up n}(r) \, T\up n_{ml} \int_{I_l\up n} h(s)\,ds + \chi_{[n,\oo)}(r)h(r)
    \end{equation}
    (the prefactor of $n$ is the inverse of the Lebesgue measure of $I_m\up n$).
    By construction, we have $T\up n f\up n = g \up n$ where $f\up n = \sum_{m=1}^n \chi_{I_m\up n}(t) f\up n_m$, and similarly for $g$.
    By \cref{lem:dss_are_compact_classical}, the sequence $T\up n$ of doubly substochastic operators has a cluster point $\tilde T$ (w.r.t.\ the topology specified in \cref{lem:dss_are_compact_classical}).
    Since $(f,g) = \lim_n (f\up n,g\up n)$ in $L^1\times L^1$, we have $Tf=g$.

    Finally, we show that we may assume $T(\chi_{\supp(f)})\le \chi_{\supp(g)}$ and $T^*(\chi_{\supp(g)})=\chi_{\supp(f)}$.
    To this end, let $T'$ be the DSS operator from \ref{it:submaj4} and define a DSS operator $T$ by $T(h) = \chi_{\supp(g)}\,T'(\chi_{\supp(f)}h)$.
    Then $T(\chi_{\supp(f)}) = \chi_{\supp(g)}\cdot T'(1)\le \chi_{\supp(g)}$. Furthermore, since $T^*(h) =\chi_{\supp(f)}\,(T')^*(\chi_{\supp(g)}h)$, the same argument shows the claim for $T^*$.
\end{proof}

\begin{theorem}[Majorization]\label{thm:maj}
    Let $(X,\mu)$ and $(Y,\nu)$ be  measure spaces and let $f\in L^1(X,\mu)^+$ and $g\in L^1(Y,\nu)^+$ with $\int_X f\,d\mu=\int_Yg\,d\nu$.
    The following are equivalent:
    \begin{enumerate}[(a)]
        \item\label{it:maj1} $f\succ g$,
        \item\label{it:maj2} $\int_X \phi\circ f\,d\mu \ge \int_Y \phi\circ g\,d\nu$ for all convex functions $\phi:\RR^+\to\RR^+$ with $\phi(0)=0$,
        \item\label{it:maj3} $\int_X (f-t)_+\,d\mu \ge \int_Y (g-t)_+\,d\nu$ for all $t\ge0$,
        \item\label{it:maj4} there exists a DS operator $S\in\DS(\RR^+)$ such that $S(f^\downarrow)=g^\downarrow$.
        \item\label{it:maj5} there exists a DSS operator $T$ from $(X,\mu)$ to $(Y,\nu)$ such that $T(f)=g$ which satisfies $T(\chi_{\supp(f)})\le \chi_{\supp(g)}$ and $T^*(\chi_{\supp(g)})=\chi_{\supp(f)}$.
    \end{enumerate}
    If these equivalent properties hold, then $\mu(\supp f)\le \nu(\supp g)$. If $\mu(X)=\nu(Y)<\oo$, items \ref{it:maj1} to \ref{it:maj5} are further equivalent to
    \begin{enumerate}[resume*]
        \item\label{it:maj6} there exists a DS operator $T$ from $(X,\mu)$ to $(Y,\nu)$ such that $T(f)=g$.
    \end{enumerate}
\end{theorem}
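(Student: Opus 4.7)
The strategy is to reduce the theorem to the submajorization theorem (Theorem~\ref{thm:submaj}) by systematically exploiting the hypothesis $\int_X f\,d\mu = \int_Y g\,d\nu$. Under this hypothesis the relation $f \succ g$ is identical to $f \succ_w g$, so the equivalences (a) $\Leftrightarrow$ (b) $\Leftrightarrow$ (c) follow immediately from the corresponding equivalences in Theorem~\ref{thm:submaj}.

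To establish (a) $\Leftrightarrow$ (e), I will start from the DSS operator $T$ supplied by Theorem~\ref{thm:submaj}, which already satisfies $T(f) = g$, $T(\chi_{\supp f}) \le \chi_{\supp g}$, and $T^*(\chi_{\supp g}) \le \chi_{\supp f}$. The equal-mass condition together with $T(f) = g$ forces $T$ to be integral-preserving on $\supp f$, so by Lemma~\ref{lem:IP on Omega} one has $T^*(1)|_{\supp f} \equiv 1$. Since $T$ maps $L^\oo$-functions supported on $\supp f$ into functions supported on $\supp g$ (by the first support bound), this yields $T^*(\chi_{\supp g})|_{\supp f} = T^*(1)|_{\supp f} = 1$, which upgrades $T^*(\chi_{\supp g}) \le \chi_{\supp f}$ to equality. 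Conversely, (e) $\Rightarrow$ (a) is immediate: $f \succ_w g$ follows from Theorem~\ref{thm:submaj}, and $\int g\,d\nu = \int T^*(\chi_{\supp g})\,f\,d\mu = \int f\,d\mu$ from the tightened support identity.

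For (a) $\Rightarrow$ (d), I will first note that $a := \mu(\supp f) \le \nu(\supp g) =: b$ (otherwise $L_f(b) < \int f = \int g = L_g(b)$ would contradict $f \succ g$). Applying (e) to the rearrangements $f^\downarrow, g^\downarrow$ on $\RR^+$ yields a DSS operator $\tilde T$ on $\RR^+$ with $\tilde T(f^\downarrow) = g^\downarrow$, $\tilde T(\chi_{[0,a)}) \le \chi_{[0,b)}$, and $\tilde T$ integral-preserving on $[0,a)$. To promote $\tilde T$ to a DS operator $S$ on all of $\RR^+$, I will set $\rho := \chi_{[0,b)} - \tilde T(\chi_{[0,a)}) + \chi_{[b,\oo)} \ge 0$, pick a measure-preserving bijection $\phi$ between $\{(y,t) : y \in \RR^+,\, 0 \le t < \rho(y)\}$ and $[a,\oo)$ (both $\sigma$-finite with infinite total measure), and define
\begin{equation*}
    S(h)(y) := \tilde T(h\chi_{[0,a)})(y) + \int_0^{\rho(y)} h(\phi(y,t))\,dt.
\end{equation*}
A direct verification shows $S$ is doubly stochastic with $S(f^\downarrow) = g^\downarrow$. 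The converse (d) $\Rightarrow$ (a) follows from Theorem~\ref{thm:submaj} applied to the rearrangements together with the integral preservation of $S$.

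Finally, (a) $\Leftrightarrow$ (f) under $\mu(X) = \nu(Y) < \oo$ is handled analogously: (f) $\Rightarrow$ (e) is immediate, while (e) $\Rightarrow$ (f) uses the same bijection construction, this time noting that $\mu(X \setminus \supp f) = (\nu(\supp g) - \mu(\supp f)) + \nu(Y \setminus \supp g)$ matches the ``deficit plus excess'' mass on the codomain side exactly. The main obstacle of the entire proof is the explicit DSS-to-DS construction in (d), and its finite-measure analogue in (f); the other implications amount to careful bookkeeping on top of Theorem~\ref{thm:submaj} and Lemma~\ref{lem:IP on Omega}.
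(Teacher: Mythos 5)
Your overall strategy---reducing everything to \cref{thm:submaj} and then upgrading the resulting DSS operator---is the same as the paper's, and your treatment of (a)$\Leftrightarrow$(b)$\Leftrightarrow$(c), of item (e) (your route via $T^*(1)|_{\supp f}\equiv 1$ differs slightly from the paper's, which instead integrates $f\cdot(\chi_{\supp f}-T^*(\chi_{\supp g}))$ and uses faithfulness of $f$ on its support, but both are fine), and of item (f) is essentially sound. A cosmetic caveat for (f): the measure spaces are general $\sigma$-finite and may have atoms, so a measure-preserving \emph{point} bijection need not exist; replace it by the averaging construction of \cref{lem:DS extension}, which only needs functions $w_n$ with prescribed integrals.

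The genuine gap is in (a)$\Rightarrow$(d) when $a=\mu(\supp f)=\infty$ (hence also $b=\infty$), which is the relevant case for faithful densities. Then $[a,\infty)=\emptyset$, your correction term vanishes, and $S=\tilde T$, which is unital only if $\tilde T$ already was---but the operator supplied by \cref{thm:submaj} is only guaranteed to have a unital \emph{dual} (integral preservation on $\supp f^\downarrow=\RR^+$ gives $\tilde T^*(1)=1$, not $\tilde T(1)=1$). Moreover, no repair by ``adding mass'' can work here: if $R$ is any positive normal map with $R(f^\downarrow)=0$ and $f^\downarrow$ is faithful on $\RR^+$, then for every bounded set $E$ one has $\chi_E\le f^\downarrow(\sup E)^{-1}f^\downarrow$, hence $R(\chi_E)=0$ and, by normality, $R(1)=0$; so every DSS operator dominating $\tilde T$ and still mapping $f^\downarrow$ to $g^\downarrow$ equals $\tilde T$. (Equivalently, \cref{lem:DS extension} with $\Omega=X=\RR^+$ says $\tilde T$ admits a DS extension iff $\int(1-\tilde T(1))\,dt=0$.) One must therefore build a \emph{different} operator from scratch. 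The paper does this by rerunning the discretization from the proof of \cref{thm:submaj} with doubly stochastic $n^2\times n^2$ matrices chosen so that $g^{(n)}_m\le\sum_l S^{(n)}_{ml}f^{(n)}_l$ (the finite-dimensional Hardy--Littlewood--P\'olya theorem), extracting a point-weak* cluster point via \cref{lem:dss_are_compact_classical}---which is a \emph{unital} DSS operator with $S(f^\downarrow)=g^\downarrow$---and then deducing integral preservation, hence double stochasticity, from the faithfulness of $f^\downarrow$ together with \cref{lem:IP on Omega}. Your proof needs this (or an equivalent) argument to cover the infinite-support case; as written it only establishes (a)$\Rightarrow$(d) for $\mu(\supp f)<\infty$.
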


For the proof, we need the following result on DS extensions of DSS maps:

\begin{lemma}\label{lem:DS extension}
    Let $T$ be a DSS operator from $(X,\mu)$ to $(Y,\nu)$ that is integral-preserving on $\Omega\subset X$ (cp.\ \cref{lem:IP on Omega}).
    The following are equivalent
    \begin{enumerate}[(a)]
        \item\label{it:DS extension1} There exists a $\tilde T\in \DSS((X,\mu)\to (Y,\nu))$ such that $\tilde T(f)=T(f)$ for all functions supported on $\Omega$
        \item\label{it:DS extension2} $\mu(X\setminus\Omega)=\int_Y (1-T(\chi_\Omega))\,d\nu$, where both sides may be infinite.
    \end{enumerate}
\end{lemma}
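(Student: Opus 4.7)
Interpreting item (a) as asserting the existence of a \emph{doubly stochastic} extension $\tilde T$ (the section is titled ``DS extension'', and with (a) read as DSS the statement would be vacuous since one could take $\tilde T = T$; under the DS reading, (b) becomes the nontrivial and natural compatibility condition, since IP of $T$ on $\Omega$ already forces $\int_Y T(\chi_\Omega)\,d\nu = \mu(\Omega)$, so (b) is equivalent to $\mu(X) = \nu(Y)$), the plan is to treat both directions by decomposing $X = \Omega \sqcup \Omega^c$ into a ``fixed'' block on which $\tilde T$ must agree with $T$ and a ``free'' block to be built by hand.

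For (a) $\Rightarrow$ (b): the extension property applied to $\chi_\Omega$ gives $\tilde T(\chi_\Omega) = T(\chi_\Omega)$, so unitality forces $\tilde T(\chi_{\Omega^c}) = 1 - T(\chi_\Omega)$, and integral preservation of $\tilde T$ applied to $\chi_{\Omega^c}$ yields (b) directly. For (b) $\Rightarrow$ (a), set $h := 1 - T(\chi_\Omega) \in L^\infty(Y,\nu)^+$. If $\mu(\Omega^c) = 0$, then (b) forces $h = 0$ $\nu$-a.e., hence $T(1) = 1$, and together with IP of $T$ on $\Omega$ this already makes $T$ itself DS, so take $\tilde T = T$. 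Otherwise, the measure $\nu_h := h\,d\nu$ on $Y$ is $\sigma$-finite of the same (possibly infinite) total mass $\mu(\Omega^c)$ as $\mu_0 := \mu|_{\Omega^c}$. Routing through a common interval $[0, \mu(\Omega^c))$ by applying \cref{lem:interconversion} on each side and dualising where necessary (duals of DS operators are DS by \cref{lem:IP on Omega}), obtain a DS operator $V \in \DS((\Omega^c, \mu_0) \to (Y, \nu_h))$. Define $S: L^\infty(\Omega^c, \mu_0) \to L^\infty(Y, \nu)$ by $S(g) := h\cdot V(g)$, extended by zero on $\{h = 0\}$, and set
\[
    \tilde T(f) := T(\chi_\Omega f) + S(f|_{\Omega^c}).
\]
Then $\tilde T$ is positive, $\tilde T(1) = T(\chi_\Omega) + h = 1$, and integral preservation splits as
\[
    \int_Y \tilde T(f)\,d\nu = \int_\Omega f\,d\mu + \int_Y V(f|_{\Omega^c})\,d\nu_h = \int_\Omega f\,d\mu + \int_{\Omega^c} f\,d\mu,
\]
using IP of $T$ on $\Omega$ and the DS property of $V$. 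The extension property is immediate since $S(f|_{\Omega^c}) = 0$ whenever $f$ is supported on $\Omega$.

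The main technical obstacle is the construction of the DS operator $V$ between two $\sigma$-finite measure spaces of equal, possibly infinite, total mass, together with the verification that $S$ is weak*-continuous as a map into $L^\infty(Y, \nu)$ rather than the naturally occurring $L^\infty(Y, \nu_h)$. The former is handled by the standard gluing via \cref{lem:interconversion} and its dual, applied to a $\sigma$-finite exhaustion if $\mu(\Omega^c) = \infty$. The latter follows from the bound $\|h\,V(g)\|_{L^\infty(Y,\nu)} \le \|h\|_\infty \|V(g)\|_{L^\infty(Y,\nu_h)}$ and the observation that multiplication by the fixed bounded measurable function $h$ is weak*-weak* continuous between the relevant $L^\infty$ spaces, so weak*-continuity of $\tilde T$ is inherited from $T$ and $V$.
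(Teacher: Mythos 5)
Your proof is correct and follows essentially the same route as the paper's: keep $T$ on $\Omega$ and fill in the complement with a hand-built doubly stochastic map whose output density is the leftover $1-T(\chi_\Omega)$ (the paper constructs this map directly by partitioning $X\setminus\Omega$ into finite-measure blocks $\Omega_n^c$ and choosing weights $w_n$ with $\sum_n w_n = 1-T(\chi_\Omega)$, which is a special case of your operator $S$), and your reading of item (a) as requiring a \emph{doubly stochastic} extension is indeed the intended one. One small caveat: your parenthetical claim that (b) is equivalent to $\mu(X)=\nu(Y)$ holds only when $\mu(\Omega)<\infty$ — in general $\mu(X)=\nu(Y)$ is necessary but not sufficient for (b), as the paper notes right after the lemma — but this aside plays no role in your actual argument, which works with (b) as stated.
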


Formally, item \ref{it:DS extension2} is equivalent to $\mu(X)-\mu(\Omega) = \nu(Y)- \mu(\Omega)$, which is formally equivalent to $\mu(X)=\nu(Y)$.
While $\mu(X)=\nu(Y)$ is necessary for \ref{it:DS extension2} to hold, it is not sufficient if the measure spaces are not finite.
Sufficient conditions for \ref{it:DS extension2} are: (i) $\mu(X)=\nu(Y)<\oo$, and (ii) $\mu(X)=\nu(Y)=\oo$ and $\mu(\Omega)<\oo$.

\begin{proof}
    \ref{it:DS extension1} $\Rightarrow$ \ref{it:DS extension2} is clear. 
    \ref{it:DS extension1} $\Rightarrow$ \ref{it:DS extension2}: Let $(\Omega_n^c)$ be a partition of $X\setminus\Omega$ into sets of finite measure and let $(w_n)$ be a family functions in $L^1(X,\mu)^+$ with $\int w_n\,d\mu = \mu(\Omega_n^c)$ and $\sum_n w_n =1-T(\chi_\Omega)$
    (such a family of functions always exists; in the case, $\mu(X\setminus\Omega)<\oo$, these can simply be chosen as $X\setminus\Omega$ and $w = 1-T(\chi_\Omega)$.)
    A doubly stochastic map with the desired property is obtained as follows
    \begin{equation}
        \tilde T(h) := T(\chi_\Omega h ) + \sum_n  \,w_n \int_{\Omega_n^c}h \,\frac{d\mu}{\mu(\Omega_n^c)}.
    \end{equation}
    Indeed, $\tilde T(1) = T(\chi_\Omega) + \sum_n w_n =1$ and $\int T(h)\,d\nu = \int T(h\chi_\Omega)\,d\nu + \sum_n \mu(\Omega_n^c)^{-1} \int w_n\,d\nu \int_{\Omega_n^c} h\,d\mu = \int_\Omega h\,d\mu + \sum_n \int_{\Omega_n^c}h\,d\mu = \int h\,d\mu$.
\end{proof}

\begin{proof}
    By \cref{thm:submaj}, we have \ref{it:maj4} $\Rightarrow$ \ref{it:maj1}.
    For the proof, we introduce a weaker version of item \ref{it:maj5}:
    \begin{enumerate}[(e')]\it
        \item\label{it:maj5'} there exists a $T\in \DSS((X,\mu)\to (Y,\nu))$ such that $T(f)=g$ with $T(\chi_{\supp(f)})\le \chi_{\supp(g)}$ and $T^*(\chi_{\supp(g)})\le\chi_{\supp(f)}$.
    \end{enumerate}
    The equivalence of items \ref{it:maj1}, \ref{it:maj2}, \ref{it:maj3} and \ref{it:maj5'} follows from \cref{thm:submaj}.
    Equivalence of \ref{it:maj5'} and \ref{it:maj5} is seen as follows:
    Let $T'$ be the DSS operator from \ref{it:maj5'} and define a DSS operator $T$ by $T(h) = \chi_{\supp(g)}\,T'(\chi_{\supp(f)}h)$.
    Since $f$ is faithful on its support, the claim in \ref{it:maj5} follows from
    \begin{equation*}
        0\le \int_{\supp(f)} f\cdot (\chi_{\supp(f)}-T^*(\chi_{\supp(g)})\,d\mu = \int_{\supp(f)}f\,d\mu - \int_{\supp(g)} g\,d\nu=0.
    \end{equation*}

    \ref{it:maj1} $\Rightarrow$ \ref{it:maj4}:
    By \cref{thm:submaj}, there exists a DSS operator $T$ on $\RR^+$ such that $Tf^\downarrow=g^\downarrow$ and $T^*(\chi_{\supp g^\downarrow}) = \chi_{\supp f^\downarrow}$ and $T(\chi_{\supp f^\downarrow})\le\chi_{\supp g^\downarrow}$.
    By \cref{lem:IP on Omega}, $T$ is integral preserving on $\supp f^\downarrow$.
    If $|(\supp f^\downarrow)|=\mu(\supp f)<\oo$, the claim follows from \cref{lem:DS extension}.
    This leaves us with the case where $\supp f^\downarrow= \RR^+$.
    For this, we adapt the argument for "\ref{it:submaj1} $\Rightarrow$ \ref{it:submaj4}" in the proof of \cref{thm:submaj} to show that we can pick a \emph{unital} DSS map $S$ such that $S(f)=g$.
    We shall also use the notation from that proof.
    Since $\sum_{m=1}^{k} f\up n_m \ge \sum_{m=1}^{k} g\up n_m$, $k\in\NN$, there is, for every $n\in\NN$, a doubly stochastic $n^2\times n^2$ matrix $[S\up n_{ml}]$ such that $g_m\up n \le \sum_{l=1}^{n^2}S_{ml}\up n f_l\up n$ for all $m=1,\ldots n^2$ \cite[Thms.~II.1.10 \& II.2.8]{bhatia}.
    The operators $S\up n$ that defined in terms of these matrices as in \eqref{eq:Tupn} are doubly stochastic operators on $\RR^+$ and satisfy 
    \begin{align*}
        S\up n(f\up n) &= n\sum_{m,l=1}^{n^2}  S_{ml}\up n f\up n_l\, \chi_{I\up n_m} +  \chi_{[n,\oo)}  f\up n  =:\tilde g\up n\ge g\up n + \chi_{[n,\oo)} (f\up n -g\up n)
    \end{align*}
    Note that, since $\norm{\tilde g\up n-g\up n}_{L^1} \to 0$ as $n\to\oo$, $\tilde g\up n \to g$ as $n\to\oo$.
    By using \cref{lem:dss_are_compact_classical}, we can pick a cluster point $S$, which is guaranteed to be a unital DSS operator such that $S(f^\downarrow) =g^\downarrow$.
    By \cref{lem:IP on Omega}, the facts that $f^\downarrow$ is faithful and that $S(f^\downarrow)$ has the same integral as $f^\downarrow$ imply that $S$ is also integral-preserving and, hence, DS.

    If we assume $\mu(X)=\nu(Y)<\oo$, the equivalence \ref{it:maj5} $\Leftrightarrow$ \ref{it:maj6} is a direct consequence of \cref{lem:DS extension}.
\end{proof}

Even in the case $(X,\mu)=(Y,\nu)$, the equivalent properties of \cref{thm:maj} are insufficient to guarantee the existence of a DS operator from $(X,\mu)$ to $(Y,\nu)$ such that $T(f)=g$.

\begin{example}
    Consider $X=Y=\NN$ with the counting measure $\mu$.
    Let $g=(g_n)\in \ell^1$ be a strictly decreasing sequence of positive numbers $g_n>0$, e.g., $g_n=2^{-n}$, and put $f = (0,g_1,g_2,g_3,\ldots)$.
    Then $f^\downarrow=g^\downarrow=g$ and, hence, $f\prec g\prec f$.
    It is clear that there exists no DS operator $T$ such that $T(g)=f$ because the measure of the cosupport of $f$ is larger than the measure of the cosupport of $g$ (indeed, $\mu(\NN\setminus\supp(f))=1 > \mu(\NN\setminus\supp(g))=0$).
    Moreover, we claim that there exists no DS operator $T\in\DS(\NN)$ such that $T(f) = g$.
    To see this, note that a doubly (sub)stochastic map $T$ on $\ell^1$ is simply an infinite matrix $T\equiv [t_{nm}]_{n,m=1}^\oo$ with row sums and column sums (less than or) equal to one.
    If $T$ is a DSS matrix with $T(f)=g$, we have 
\begin{equation*}
    g_1 = T(f)_1=\sum_{m=2}^\oo t_{1m}g_{m-1} \le g_1 \sum_{m=2}^\oo t_{1m} \le g_1 \sum_{m=1}^\oo t_{1m} \le g_1.
\end{equation*}
Hence, each inequality is an equality, which implies that the first row of $T$ is $[t_{1m}]_{m=1}^\oo = [0,1,0,0,\ldots]$. Hence, the second column is $[t_{n2}]_{n=1}^\oo=[1,0,0,\ldots]^\top$. 
With this information, we repeat the same argument for the second row:  Since $g_2 = \sum_{m=3}^\oo t_{2m} g_{m-1} \le g_2 \sum_{m=3}^\oo t_{2m} \le g_2$, equality forces the second row of $T$ to be $[0,0,1,0,\ldots]$ and the third column to be $[0,1,0,\ldots]$. Iterating this, shows that $T$ equals the right-shift, i.e., $t_{nm}=\delta_{n+1,n}$, which is not doubly stochastic.
\end{example}

As far as we know, no general criterion deciding the existence of such a DS map is known.
A much easier problem is to decide whether a given DSS map $T$ with $T(f)=g$ admits a DS extension with the same property (cp.\ \cref{lem:DS extension}). 
Turning this into a criterion for the existence of a DS map $T$, however, requires a variation over \emph{all} DSS maps $T$ with $T(f)=g$ (see \cref{cor:index} below).

By \cref{thm:maj}, $f\succ g$ implies the existence of a DSS operator such that $T(f)=g$.
Since such an operator is integral-preserving on $\supp f$, \cref{lem:DS extension} implies that $T$ has a DS extension if and only if $\mu(X\setminus\supp f) = \int_Y (1-T(\chi_{\supp f}))\,d\nu = \nu(Y\setminus\supp g)+ \int_{\supp g} \int (1-T(\chi_{\supp f}))\,d\nu$.
As a consequence, we get:

\begin{corollary}\label{cor:index}
    Let $(X,\mu)$ and $(Y,\nu)$ be  measure spaces and let $f\in L^1(X,\mu)^+$ and $g\in L^1(Y,\nu)^+$ with $f\succ g$.
    The following are equivalent:
    \begin{enumerate}[(a)]
        \item\label{it:index1} There exists a DS operator $T$ from $(X,\mu)$ to $(Y,\nu)$ such that $T(f)=g$,
        \item\label{it:index2} There exists a DS map $T$ from $(X,\Sigma_f,\mu|_{\Sigma_f})$ to $(Y,\Sigma_g,\nu|_{\Sigma_g})$ such that $T(f)=g$, where $\Sigma_{f/g}$ is the $\sigma$-algebra generated by the level sets of $f/g$.
        \item\label{it:index3} There exists a doubly substochastic operator $T$ from $(X,\mu)$ to $(Y,\nu)$ such that $T(f)=g$ and 
        \begin{equation}\label{eq:index}
            \mu(X\setminus\supp f) =  \nu(Y\setminus \supp g) + \int_{\supp g} (1-T(\chi_{\supp f}))\,d\nu
        \end{equation}
    \end{enumerate}
    If these hold, then $\mu(X)=\nu(Y)$, $\mu(X\setminus\supp f) \ge \nu(Y\setminus \supp g)$ and $\mu(\supp f)\le \nu(\supp g)$.
\end{corollary}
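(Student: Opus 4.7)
My strategy is to prove the cycle (a) $\Rightarrow$ (c) $\Rightarrow$ (a) via \cref{lem:DS extension}, and to handle (a) $\Leftrightarrow$ (b) by standard conditional-expectation arguments. The proof rests on the following simple but crucial observation, which I verify first: whenever $T$ is any positive, weak${}^*$-continuous operator with $T(f)=g$, the image $T(\chi_{\supp f})$ is automatically supported on $\supp g$. Indeed, $\chi_{[f>1/n]}\le n\,f$ gives $T(\chi_{[f>1/n]})\le n\,g$, which vanishes off $\supp g$; the monotone convergence $\chi_{[f>1/n]}\uparrow\chi_{\supp f}$ combined with normality of $T$ transfers the support property to $T(\chi_{\supp f})$.

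For (a) $\Rightarrow$ (c): if $T$ is DS with $T(f)=g$, then unitality and integral preservation give $\mu(X)=\nu(Y)$, while \cref{lem:IP on Omega} applied with $\Omega=\supp f$ gives $\int_Y T(\chi_{\supp f})\,d\nu=\mu(\supp f)$. Combining these with the support observation reduces \eqref{eq:index} to the valid identity $\mu(X\setminus\supp f)=\nu(Y)-\mu(\supp f)$. Conversely, for (c) $\Rightarrow$ (a): \cref{lem:IP on Omega} still gives integral preservation on $\supp f$, and the support observation turns \eqref{eq:index} into
\begin{equation*}
\mu(X\setminus\supp f) \;=\; \nu(Y)-\mu(\supp f) \;=\; \int_Y\bigl(1-T(\chi_{\supp f})\bigr)\,d\nu,
\end{equation*}
which is exactly the hypothesis of \cref{lem:DS extension}. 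Applying that lemma produces a DS extension $\tilde T$ of $T$ that agrees with $T$ on functions supported in $\supp f$; in particular $\tilde T(f)=g$.

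For (a) $\Leftrightarrow$ (b) I use the conditional expectations $E_f$ and $E_g$ onto $\Sigma_f$ and $\Sigma_g$, together with the canonical inclusions $\iota_f,\iota_g$ back into the full algebras; all four maps are DS operators. Given a DS map $T$ as in (a), the composition $\tilde T=E_g\circ T\circ\iota_f$ is DS on the sub-$\sigma$-algebras and satisfies $\tilde T(f)=E_g(T(f))=E_g(g)=g$ because $g$ is $\Sigma_g$-measurable; given $\tilde T$ as in (b), $T:=\iota_g\circ\tilde T\circ E_f$ is DS on the full algebras with $T(f)=g$ by the same token. The three concluding inequalities follow immediately: $\mu(X)=\nu(Y)$ has been used directly; $\mu(X\setminus\supp f)\ge\nu(Y\setminus\supp g)$ is read off \eqref{eq:index} since the remaining integral is non-negative; and $\mu(\supp f)=\int_Y T(\chi_{\supp f})\,d\nu\le\nu(\supp g)$ follows from the support observation together with $T(\chi_{\supp f})\le 1$. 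The only real conceptual hurdle is the support-preservation observation itself; once it is in hand, everything else is bookkeeping around \cref{lem:DS extension} and \cref{lem:IP on Omega}.
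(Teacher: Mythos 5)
Your proof is correct and follows essentially the same route as the paper: (a) $\Leftrightarrow$ (b) via the doubly stochastic conditional expectations onto $\Sigma_f$ and $\Sigma_g$, and (a) $\Leftrightarrow$ (c) by reducing \eqref{eq:index} to the hypothesis of \cref{lem:DS extension} using \cref{lem:IP on Omega}. The only addition is that you explicitly verify the fact, asserted without proof in the paper, that $T(\chi_{\supp f})$ is supported in $\supp g$; your argument via $\chi_{[f>1/n]}\le n f$ and normality is a correct way to do this.
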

\begin{proof}
    Equivalence of \ref{it:index1} and \ref{it:index2} is clear from the double stochasticity of the conditional expectation.
    \ref{it:index3} $\Leftrightarrow$ \ref{it:index1} follows from \cref{lem:DS extension} because the right-hand side of \eqref{eq:index} equals $\int_Y (1-T(\chi_{\supp f}))\,d\nu$ (since $T(\chi_{\supp f})$ is supported in $\supp g$).
\end{proof}

The example mentioned above shows that the three properties $\mu(X)=\nu(Y)$, $\mu(X\setminus\supp f)\ge \nu(Y\setminus \supp g)$ and $\mu(\supp f)\le \nu(\supp g)$ are \emph{not} sufficient to guarantee the existence of a doubly stochastic map $T$ such that $T(f)=g$.

\subsection{From commutative to noncommutative majorization theory}\label{sec:q majorization}

We review majorization theory on von Neumann algebras following the philosophy of regarding von Neumann algebras as noncommutative measure space.
Recall that in the commutative case, majorization depends on the specific choice of a faithful measure and not just on the measure equivalence class.
The noncommutative analog of this is given by the choice of a trace, typically denoted $\Tr$, on the von Neumann algebra.
To simplify things, we will, in the following, mean by "semifinite von Neumann algebra" $\M$ a tuple $(\M,\tr)$ of a $\sigma$-finite semifinite von Neumann algebra $\M$ together with choice of a normal semifinite faithful trace $\tr$.
To emphasize the corresponding algebra we sometimes write $\Tr_\M$.
We denote by $L^p(\M):=L^p(\M,\Tr)$ the $L^p$ spaces of $\M$ with respect to its trace.
In particular, $L^1(\M)$ is isomorphic to the predual $\M_*$ via the Radon-Nikodym map.

The notions of distribution function and decreasing rearrangements from commutative majorization theory have immediate noncommutative generalizations:
The distribution function of a positive element $\rho \in L^1(\M)^+$ is the right-continuous non-decreasing function 
\begin{equation}
    D_\rho(t):= \Tr( \chi_{[t,\oo)}(\rho) ), \qquad t>0,
\end{equation}
Following Petz \cite{petz1985scale}, the noncommutative analog of the decreasing rearrangement is called the \emph{spectral scale} and is defined by
\begin{equation}
    \lambda_\rho(t) := \inf \{s>0 : D_\rho(s)\le t\}.
\end{equation}
The spectral scale of $\rho \in L^1(\M)^+$ coincides with the \emph{generalized s-numbers} of Fack and Kosaki \cite{fack1986generalized}, which are also defined for non-positive operators (see also \cite[Sec.~4.2]{hiai_lecture_2021}). Since we only consider positive elements, we follow Petz's nomenclature.
If $\M = L^\oo(X,\mu)$ is an abelian von Neumann algebra with $\tr=\int\placeholder\,d\mu$, these definitions reduce to the distribution function and decreasing rearrangement of functions on a $\sigma$-finite measure space.
In particular, we have
\begin{equation}
    \tr \phi(\rho) = \int_{\RR^+} (\phi\circ \lambda_\rho)(t)\,dt
\end{equation}
We will often consider the spectral scale as an element of $L^1(\RR^+)$.
Based on the variational expression \eqref{eq:variational formula}, the \emph{Lorenz curve} of an element $\rho\in L^1(\M)^+$ is defined as the function
\begin{equation}\label{eq:nc lorentz curve}
    L_\rho(t) := \sup \big\{ \tr(\rho x) : x=x^*\in \M,\ 0\le x\le 1,\ \tr x\le t\}.
\end{equation}
We will see below that this is indeed simply the anti-derivative of the spectral scale (cp.\ \cref{cor:marjorization is classical}).
Equipped with a definition of Lorenz curves, we define majorization and submajorization as in the commutative case:

\begin{definition}
    Let $\M$ and $\N$ be semifinite von Neumann algebras and let $\rho\in L^1(\M)^+$ and $\sigma \in L^1(\N)^+$. 
    We say that $\rho$ \emph{submajorizes} $\sigma$, denoted $\rho\succ_w \sigma$, if $L_\rho(t)\ge L_\sigma(t)$ for all $t\ge0$, and that $\rho$ \emph{majorizes} $\sigma$, denoted $\rho\succ \sigma$, if $\rho\succ_w \sigma$ and $\Tr \rho = \Tr \sigma$.
\end{definition}

Next, we consider doubly substochastic operators. The definition from the commutative case immediately generalizes:

\begin{definition}
Let $\M$ and $\N$ be semifinite von Neumann algebras. A normal, completely positive map $T:\M\to\N$ is called \emph{doubly substochastic} (DSS) if $T(1_\M) \leq 1_\N$ and 
\begin{align}
    \Tr_\N\circ T(\rho) \leq \Tr_\M(\rho),\qquad \rho\in L^1(\M)^+\cap \M.
\end{align} 
We denote the set of such maps by $\DSS(\M\to\N)$. A DSS map is \emph{doubly stochastic} (DS) if it is unital and trace-preserving.
\end{definition}

As in the classical case, $T$ extends to a linear contraction $L^1(\M) \to L^1(\N)$ (see \cref{footnote}), which we also denote by $T$.
The normal linear map $T^*:\N \to \M$ such that
\begin{equation}
    \Tr_\N T(\rho)y  = \Tr_\M \rho T^*(y), \qquad \rho\in L^1(\M)^+,\ y\in\N^+,
\end{equation}is DSS (resp.\ DS) if $T$ is and will be called the DS map dual to $T$.
Since $T^*$ is DSS, the variational expression \eqref{eq:nc lorentz curve} immediately gives
\begin{equation}
    \rho \succ_w T(\rho), \qquad \rho\in L^1(\M)^+,
\end{equation}
for every $T\in \DSS(\M\to\N)$.

If $p$ is a projection in a semifinite von Neumann algebra, then the corner $\M_p:=p\M p$ is a semifinite von Neumann algebra, naturally equipped with the restricted trace $\tr_{\M_p} = \tr_\M\upharpoonright \M_p$.
A subalgebra $\N\subset\M$ of a semifinite von Neumann algebra $\M$ is called \emph{semifinite} if $\tr_\N:=\tr_\M \upharpoonright \N$ is a semifinite trace on $\N$.\footnote{Not every subalgebra of semifinite von Neumann algebra is semifinite, e.g., $\B(\H_A)\ox 1\subset\B(\H_A\ox\H_B)$ is semifinite if and only if $\H_B$ is finite-dimensional.}

\begin{lemma}\label{lem:sf_subalg}
    \begin{enumerate}[(1)]
        \item\label{it:sf_subalg1}
            The inclusion $j:\N\hookrightarrow\M$ of a semifinite subalgebra is DS and the dual DS map $j^*:\M\to\N$ is the (unique) trace-preserving conditional expectation.
        \item\label{it:sf_subalg2}
            If $p$ is a projection in $\M$, the natural inclusion $j:\M_p\hookrightarrow\M$ is DSS, and the dual DSS map $j^*:M\to \M_p$ is the cut-down given by $j^*(x)= pxp$.
        \item\label{it:sf_subalg3}
            Let $\rho=\rho^*\in L^1(\M)$ and let $\A\subset \M_{\supp \rho}$ be the abelian subalgebra generated by $\rho$.
            Then $\A$ is a semifinite subalgebra of $\M_{\supp \rho}$.\footnote{Note that the (unital) abelian algebra that is generated by $\rho$ in $\M$ is not semifinite in general (it is semifinite if and only if $\rho$ has finite co-support, i.e., if $\tr(1-\supp \rho)<\oo$.}
    \end{enumerate}
\end{lemma}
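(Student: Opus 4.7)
The plan is to dispatch items (1) and (2), which are essentially bookkeeping from the definitions, and then to concentrate on the only substantive claim, (3).

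For (1), $j$ is unital (both algebras share the same identity) and satisfies $\Tr_\M\circ j = \Tr_\N$ by the very definition of the restricted trace, so $j$ is DS. Its dual $j^*:\M\to\N$ is characterised by $\Tr_\N(j^*(x)y) = \Tr_\M(xy)$ for all $y\in\N$, which is exactly the defining property of the unique normal $\Tr$-preserving conditional expectation onto $\N$ supplied by Takesaki's theorem \cite[Thm.~IX.4.2]{takesaki2}; this theorem applies because the modular group of a trace is trivial and therefore leaves $\N$ globally invariant tautologically. Hence $j^*$ is the desired conditional expectation.

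For (2), I would note that $j(1_{\M_p}) = p \le 1_\M$ and $\Tr_\M\circ j = \Tr_{\M_p}$ by the definition of the corner trace, so $j$ is DSS. Taking adjoints,
\[
\Tr_{\M_p}(j^*(x)\,y) = \Tr_\M(x\,j(y)) = \Tr_\M(x\,pyp) = \Tr_{\M_p}((pxp)\,y), \qquad y\in\M_p,
\]
identifies $j^*(x) = pxp$. This map is evidently normal and completely positive, unital ($j^*(1_\M)=p=1_{\M_p}$) and trace-non-increasing, hence DSS.

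For (3), I would apply the spectral theorem to the self-adjoint element $\rho\in L^1(\M)$ (viewed as affiliated to $\M$) to conclude that the abelian von Neumann algebra $\A\subset\M_{\supp\rho}$ is generated by the spectral projections $\chi_B(\rho)$ for Borel $B\subset\RR\setminus\{0\}$, with identity $\chi_{\RR\setminus\{0\}}(\rho) = \supp\rho$. To establish semifiniteness of $\Tr\!\restriction\!\A$, it suffices to produce, for every non-zero projection $q\in\A$, a non-zero subprojection $q'\le q$ in $\A$ of finite trace. Writing $q=\chi_B(\rho)$ with $B\subset\RR\setminus\{0\}$, the truncations $q_n := \chi_{B\cap\{|\rho|\ge 1/n\}}(\rho)\in\A$ increase to $q$ as $n\to\oo$ precisely because $0\notin B$; hence some $q_n$ is non-zero, and Chebyshev's inequality yields
\[
\Tr(q_n) \le n\,\Tr(|\rho|\,q_n) \le n\,\Tr(|\rho|) < \oo.
\]
The only real subtlety is the restriction to $\M_{\supp\rho}$: this is exactly what excludes $0$ from the spectral range of the generator and makes the above discretisation approximate every projection from below. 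Without this restriction, the kernel projection $\chi_{\{0\}}(\rho)$ could have infinite trace and the statement would fail, as foreshadowed by the footnote in the statement. Beyond this point I anticipate no serious obstacle.
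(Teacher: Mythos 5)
Your proposal is correct and follows essentially the same route as the paper: items (1) and (2) are the same duality bookkeeping, and for item (3) the paper likewise truncates the spectrum away from $0$ (via $f_\eps=f\cdot\chi_{\Sp(\rho)\setminus(-\eps,\eps)}$ applied to general positive elements of $\A$ rather than to projections) and bounds the trace of the truncation by a Chebyshev-type estimate using $\rho\in L^1(\M)$. The only cosmetic difference is that you verify semifiniteness on projections and cite Takesaki's theorem for the conditional expectation, where the paper checks $E|_\N=\id_\N$ directly; both are fine.
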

\begin{proof}
    \ref{it:sf_subalg1}:
    Since $j$ is clearly doubly stochastic the same holds for $E$.
    Let $x\in L^1(\N)$ then $\tr_\N[y E(x)] = \tr_\M[j(y)x] = \tr_\M[yx]=\tr_\N[yx]$ for all $y\in\N$ shows $E(x)=x$. This proves $E|_\N =\id_\N$.
    Item \ref{it:sf_subalg2} follows similarly.

    \ref{it:sf_subalg3}:
    Without loss of generality, assume $p=\supp(\rho)=1$.
    Elements of $\A$ are of the form $f(\rho)$ for bounded Borel functions $f:\Sp(\rho)\to\RR$. Let $f\ge0$ and set $f_\eps = f\cdot \chi_{\Sp(\rho)\setminus(-\eps,\eps)}$ for $\eps>0$.
    The function $f_\eps$ converges pointwise from below to the function $f_0 = f \cdot \chi_{\Sp(\rho)\setminus\{0\}}$. Thus, the functional calculi converge ultraweakly to $f_0(\rho)$ from below. 
    However, $f_0(\rho)=f(\rho)$ since $\{0\}$ is a null set of the spectral measure of $\rho$ and since $\rho\in L^1(\M)$, we know that $\tr(f_\eps(\rho))<\infty$. Thus, every element in $\A^+$ can be approximated in the ultraweak topology by elements with finite trace, i.e., $ \tr\upharpoonright\A$ is a semifinite trace.
\end{proof}

Equipped with this result, we can show that every individual $L^1$ element of a semifinite von Neumann algebra is equivalent to a function a measure space from the point of view of majorization theory.
If $(X,\mu)$ is a measure space, we implicitly understand that $L^\oo(X,\mu)$ is equipped with the trace $\int\placeholder\,d\mu$ and we write $\DSS((X,\mu)\to\N)$ and $\DSS(\N\to(X,\mu))$ for the corresponding sets of DSS maps.

\begin{corollary}\label{cor:marjorization is classical}
    Let $\M$ be a semifinite von Neumann algebra and let $\rho=\rho^*\in L^1(\M)$. Then:
    \begin{enumerate}[(i)]
        \item\label{it:majorization is classical1}
            There exists a measure space $(X,\mu)$, a function $f\in L^1(X,\mu;\RR)$ and $T\in \DSS((X,\mu)\to\M)$, $S\in \DSS(\M\to (X,\mu))$ such that $S(\rho)=f$ and $T(f)=\rho$.
        \item\label{it:majorization is classical2}
            Assume $\rho\ge0$ and let $(X,\mu)$ and $f$ be as in \ref{it:majorization is classical1}. Then $\rho \prec f \prec \rho$ and 
            \begin{equation}
                \lambda_\rho(t) = f^\downarrow(t) \qandq L_\rho(t) = \int_0^t \lambda_\rho(s)\,ds = L_f(t).
            \end{equation}
    \end{enumerate}
    \end{corollary}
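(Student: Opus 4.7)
The plan is to use the abelian subalgebra $\A\subset\M_{\supp\rho}$ generated by $\rho$, which is semifinite by \cref{lem:sf_subalg}\ref{it:sf_subalg3}, to realize $\rho$ as a function on a measure space, and to assemble $T$ and $S$ from the DSS embeddings and trace-preserving conditional expectations provided by \cref{lem:sf_subalg}.

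First, I would apply the spectral theorem to identify $\A\cong L^\infty(X,\mu)$, where $X:=\Sp(\rho)\setminus\{0\}$ and $\mu(B):=\Tr\chi_B(\rho)$ for Borel $B\subset X$. The measure is $\sigma$-finite because $\Tr\rho<\infty$ bounds $\mu((\eps,\infty))$ uniformly in $\eps>0$, and by uniqueness up to scale of a normal semifinite trace on an abelian von Neumann algebra, the restriction of $\Tr$ to $\A$ is transported to integration against $\mu$; hence $\rho$ corresponds to $f(t)=t\in L^1(X,\mu;\RR)$. I would then define $T$ as the composition $\A\hookrightarrow\M_{\supp\rho}\hookrightarrow\M$ and $S:x\mapsto E(\supp\rho\cdot x\cdot\supp\rho)$, where $E:\M_{\supp\rho}\to\A$ is the trace-preserving conditional expectation. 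Both $T$ and $S$ are DSS as compositions of the DSS maps in \cref{lem:sf_subalg}\ref{it:sf_subalg1},\ref{it:sf_subalg2}, and since $\rho\in\A$ one has $T(f)=\rho$ and $S(\rho)=E(\rho)=f$, which establishes (i).

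For (ii), the DSS property of $T$ and $S$ combined with the variational formula \eqref{eq:nc lorentz curve} immediately gives $\rho\succ_w f$ and $f\succ_w\rho$; trace-preservation of the inclusion $\A\hookrightarrow\M_{\supp\rho}$ supplies $\int f\,d\mu=\Tr\rho$, so $\rho\prec f\prec\rho$, and hence $L_\rho=L_f$. To obtain $\lambda_\rho=f^\downarrow$, I would note that $\chi_{[t,\infty)}(\rho)\in\A$ corresponds under the isomorphism to the indicator of $\{s\ge t\}\subset X$, so $D_\rho(t)=\mu(\{s\ge t\})$ agrees with $D_f(t)=\mu(\{s>t\})$ outside the at most countable set of atoms of $\mu$; the two therefore coincide as right-continuous non-increasing functions on $\RR^+$ and have the same right-continuous generalized inverse. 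Combining this with the definition \eqref{eq:lorentz curve} of $L_f$ then yields $L_\rho(t)=\int_0^t\lambda_\rho(s)\,ds=L_f(t)$.

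The only genuine subtlety is reconciling the convention $[t,\infty)$ used in the noncommutative definition of $D_\rho$ with the convention $(t,\infty)$ used in the classical definition of $D_f$; this is ultimately harmless thanks to right-continuity, but it must be handled carefully to justify the pointwise identification of $\lambda_\rho$ with $f^\downarrow$ on $\RR^+$.
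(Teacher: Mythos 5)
Your proof is correct and follows essentially the same route as the paper: identify the abelian algebra generated by $\rho$ in the corner $\M_{\supp\rho}$ with $L^\oo(X,\mu)$ for $X=\Sp(\rho)\setminus\{0\}$, $\mu=\Tr\chi_{(\placeholder)}(\rho)$, $f(t)=t$, and assemble $T$ and $S$ from the DSS inclusions, cut-downs, and conditional expectations of \cref{lem:sf_subalg}. You in fact supply more detail than the paper (which disposes of (ii) with "follows from (i)"), and your observation that the $[t,\infty)$ versus $(t,\infty)$ convention mismatch only affects the distribution functions on the countable set of atoms and leaves the generalized inverse unchanged is a legitimate point the paper glosses over.
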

\begin{proof}
    \ref{it:majorization is classical2} follows from     \ref{it:majorization is classical1}.
    \ref{it:majorization is classical1}:
    We set $X=\Sp(\rho)\setminus\{0\}$, $\mu(\Omega)=\tr \chi_\Omega(\rho)$ and $f(t)=t$.
    The functional calculus gives an isomorphism between $L^\oo(X,\mu)$ and the abelian subalgebra of $\A\subset \M_{\supp \rho}$ generated by $\rho$ which identifies the integral with respect to $\mu$ with the trace on $\A$. The result follows from \cref{lem:sf_subalg}.
\end{proof}

As a direct consequence of the classical result \cref{thm:submaj}, we then obtain the main result of this section:

\begin{theorem}[Submajorization]\label{thm:q submaj}
    Let $\M$ and $\N$ be semifinite von Neumann algebras and let $\rho\in L^1(\M)^+$ and $\sigma\in L^1(\N)^+$. The following are equivalent:
    \begin{enumerate}[(a)]
        \item\label{it:q submaj1} $\rho \succ_w \sigma$,
        \item\label{it:q submaj2} $\lambda_\rho \succ_w \lambda_\sigma$, 
        \item\label{it:q submaj3} $\Tr \phi(\rho) \ge \Tr \phi(\sigma)$ for all convex functions $\phi:\RR^+\to\RR^+$ with $\phi(0)=0$,
        \item\label{it:q submaj4} there exists a DSS map $T$ from $\M$ to $\N$ such that $T(\rho)=\sigma$.
    \end{enumerate}
    The operator $T$ can be chosen to satisfy $T(\supp \rho)\le \supp \sigma$ and $T^*(\supp \sigma) \le \supp \rho$.
\end{theorem}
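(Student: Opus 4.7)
The plan is to reduce to the commutative submajorization theorem \cref{thm:submaj} via \cref{cor:marjorization is classical}, which provides, for each $\rho \in L^1(\M)^+$, a measure space $(X_\rho,\mu_\rho)$, a function $f_\rho \in L^1(X_\rho,\mu_\rho)^+$ with $f_\rho^\downarrow = \lambda_\rho$ and $L_{f_\rho} = L_\rho$, together with DSS maps $T_\rho \in \DSS((X_\rho,\mu_\rho) \to \M)$ and $S_\rho \in \DSS(\M \to (X_\rho,\mu_\rho))$ satisfying $T_\rho(f_\rho) = \rho$ and $S_\rho(\rho) = f_\rho$.

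The equivalences (a) $\Leftrightarrow$ (b) $\Leftrightarrow$ (c) follow quickly. The identity $L_\rho = L_{\lambda_\rho}$ (and likewise for $\sigma$) converts $\rho \succ_w \sigma$ directly into $\lambda_\rho \succ_w \lambda_\sigma$, giving (a) $\Leftrightarrow$ (b). For (b) $\Leftrightarrow$ (c), I would use the trace formula $\Tr \phi(\rho) = \int_0^\infty \phi(\lambda_\rho(t))\,dt$ (valid when $\phi(0)=0$) to recast (c) as the commutative analog in \cref{thm:submaj} applied to the decreasing functions $\lambda_\rho,\lambda_\sigma \in L^1(\RR^+)^+$.

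The implication (d) $\Rightarrow$ (a) follows from the variational formula \eqref{eq:nc lorentz curve} and duality: if $T(\rho)=\sigma$ with $T$ DSS, and $y \in \N$ satisfies $0 \le y \le 1$ and $\Tr y \le t$, then $T^*(y)$ enjoys the same bounds in $\M$, so $\Tr(\sigma y) = \Tr(\rho\,T^*(y)) \le L_\rho(t)$; taking the supremum over $y$ yields $L_\sigma(t) \le L_\rho(t)$.

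The main step is (a) $\Rightarrow$ (d). Combining (a) with $L_\rho = L_{f_\rho}$ and $L_\sigma = L_{f_\sigma}$ gives $f_\rho \succ_w f_\sigma$, so the commutative \cref{thm:submaj} supplies $T' \in \DSS((X_\rho,\mu_\rho) \to (X_\sigma,\mu_\sigma))$ with $T'(f_\rho) = f_\sigma$ satisfying the support bounds $T'(\chi_{\supp f_\rho}) \le \chi_{\supp f_\sigma}$ and $(T')^*(\chi_{\supp f_\sigma}) \le \chi_{\supp f_\rho}$. Set $T := T_\sigma \circ T' \circ S_\rho$, which is DSS as a composition of DSS maps and satisfies $T(\rho) = T_\sigma(T'(f_\rho)) = T_\sigma(f_\sigma) = \sigma$. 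For the support conditions, note from the construction in \cref{lem:sf_subalg} that $S_\rho$ is a cut-down by $\supp \rho$ followed by the conditional expectation onto the abelian subalgebra generated by $\rho$, so $S_\rho(\supp \rho)$ equals the identity of that corner, i.e.\ $\chi_{\supp f_\rho}$; similarly $T_\sigma$ is an inclusion which carries $\chi_{\supp f_\sigma}$ back to $\supp \sigma$. Hence $T(\supp \rho) \le \supp \sigma$, and the dual bound $T^*(\supp \sigma) \le \supp \rho$ follows analogously using $T^* = S_\rho^* \circ (T')^* \circ T_\sigma^*$ together with $S_\rho^* = T_\rho$ and $T_\sigma^* = S_\sigma$. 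The only genuine obstacle lies in the commutative case itself, already disposed of in \cref{thm:submaj}; all noncommutative content is absorbed into \cref{cor:marjorization is classical}, after which the argument is essentially bookkeeping.
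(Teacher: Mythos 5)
Your proof is correct and follows essentially the same route as the paper, which derives \cref{thm:q submaj} precisely as a direct consequence of the classical \cref{thm:submaj} via the reduction in \cref{cor:marjorization is classical}; your write-up just makes explicit the composition $T = T_\sigma\circ T'\circ S_\rho$ and the bookkeeping for the support conditions that the paper leaves implicit.
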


If one additionally assumes that $\tr \rho =\tr \sigma$, one gets a majorization theorem for elements of von Neumann algebras stating that $\rho$ majorizes $\sigma$ if and only if majorization holds on the level of spectral scales, i.e., 
\begin{equation}
    \rho\succ \sigma \quad\iff\quad \lambda_\rho \succ \lambda_\sigma.
\end{equation}
This is, in turn, equivalent to the existence of a DSS operator $\M\to\N$ such that $T(\rho)=\sigma$.
As discussed in \cref{sec:classical majorization}, in general, $\rho\succ \sigma$ does not imply the existence of a DS map that takes $\rho$ to $\sigma$.
In the case that $\M=\N$ is a semifinite \emph{factor}, Hiai \cite[Thm.~2.5]{hiai_majorization_1987} showed that majorization can be characterized through mixtures of unitaries (just as in finite-dimensions).
This will be considered in \cref{sec:majorization in factors} below.

\subsection{Properties of doubly substochastic maps between von Neumann algebras}

\begin{lemma}
    Let $T\in\DSS(\M\to \N)$ and let $\phi:\RR\to\RR^+$ be a convex function with $\phi(0)=0$.
    Then
    \begin{equation}\label{eq:berezin_lieb}
        \tr \phi(Ta) \le \tr \phi(a),\qquad a=a^*\in\M,
    \end{equation}
    where both sides may be infinite. 
    If $T$ is doubly stochastic, \eqref{eq:berezin_lieb} holds for all convex functions $\phi:\RR\to\RR^+$.
\end{lemma}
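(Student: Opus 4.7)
The plan is to deduce both parts of the lemma from Theorem~\ref{thm:q submaj} (the noncommutative submajorization theorem) via a positive/negative-part decomposition. First I would reduce the statement to the positive case. Writing $a=a_+-a_-$ as the difference of its orthogonal positive and negative parts, and using $\phi(0)=0$, the Borel functional calculus yields the operator identity
\begin{equation*}
\phi(a)=\phi_+(a_+)+\phi_-(a_-),\qquad \phi_\pm(t):=\phi(\pm t),\ t\ge 0,
\end{equation*}
where each $\phi_\pm:\RR^+\to\RR^+$ is convex and vanishes at $0$. From $Ta=Ta_+-Ta_-$ together with $Ta_+,Ta_-\ge0$ one has $Ta\le Ta_+$ and $-Ta\le Ta_-$, so the operator inequalities $(Ta)_+\le Ta_+$ and $(Ta)_-\le Ta_-$ hold (the positive/negative part being the smallest positive operator dominating $\pm Ta$).

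Next I would assemble a chain of submajorizations. For $0\le x\le y$ in $\M^+$, the variational expression \eqref{eq:nc lorentz curve} gives $L_x\le L_y$ directly (since $\tr(xe)\le\tr(ye)$ whenever $0\le e\le 1$), so $y\succ_w x$. Applied to the operator inequalities above, this produces $Ta_\pm\succ_w(Ta)_\pm$. On the other hand, the DSS property of $T$ (via the DSS dual $T^*$ and the variational formula) gives $a_\pm\succ_w Ta_\pm$. By transitivity,
\begin{equation*}
a_\pm\succ_w (Ta)_\pm.
\end{equation*}

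I would then invoke Theorem~\ref{thm:q submaj}\,(a)$\Leftrightarrow$(c) applied to $\phi_\pm$ to conclude $\tr\phi_\pm(a_\pm)\ge\tr\phi_\pm((Ta)_\pm)$, and sum the two inequalities to obtain $\tr\phi(a)\ge\tr\phi(Ta)$. When $a_\pm$ happens to lie outside $L^1(\M)$, I would note that the identity $\tr\phi(x)=\int_0^\infty\phi(\lambda_x(s))\,ds$ reduces the step to the classical statement (Theorem~\ref{thm:submaj}\,(a)$\Leftrightarrow$(b)), which extends from $L^1$-functions to arbitrary non-increasing non-negative functions on $[0,\infty)$ by truncation and monotone convergence; when either side is already $+\infty$ the inequality is trivial.

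For the doubly-stochastic case, I would reduce to the previous case by a translation trick that exploits $T(1)=1$ and $\tr_\N 1=\tr_\M 1$. Set $m_*:=\inf_{t\in\RR}\phi(t)\ge 0$. If $m_*\tr 1=+\infty$ both sides are infinite and the inequality is trivial. Otherwise, picking $t_*\in\RR$ at (or approaching) the minimum of $\phi$, the function $\psi(t):=\phi(t+t_*)-m_*$ is convex, non-negative, and satisfies $\psi(0)=0$. Using $T(a-t_*\cdot 1_\M)=Ta-t_*\cdot 1_\N$ and applying the already-proved DSS case to $\psi$ and the self-adjoint element $a-t_*\cdot 1_\M\in\M$ yields $\tr\psi(T(a-t_*))\le\tr\psi(a-t_*)$; adding the equal quantity $m_*\tr 1$ to both sides (justified since $\tr_\N 1=\tr_\M 1$ for a DS map) restores the desired $\tr\phi(Ta)\le\tr\phi(a)$. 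The main (mild) obstacle is not conceptual but bookkeeping: checking that the extension of Theorem~\ref{thm:q submaj} beyond $L^1$ via spectral scales is valid, and that in the DS case infinite-trace contributions either cancel coherently on both sides or render the inequality trivial.
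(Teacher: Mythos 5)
Your route (reduce to \cref{thm:q submaj} via the decomposition $\phi(a)=\phi_+(a_+)+\phi_-(a_-)$) is genuinely different from the paper's proof, which never invokes the submajorization theorem: there, \eqref{eq:berezin_lieb} is proved directly for the generating functions $\phi(t)=(t-\lambda)_+$ (resp.\ $(\alpha t+\beta)_+$ in the DS case) using the operator inequality $Ta-\lambda 1_\N\le T\bigl((a-\lambda 1_\M)_+\bigr)$ together with monotonicity of $x\mapsto\tr x_+$, and then extended to all admissible $\phi$ by positive combinations and monotone limits. That argument stays entirely inside traces of positive parts and needs none of the $L^1$ machinery, which is why the lemma can be stated for arbitrary bounded self-adjoint $a$.

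The genuine gap in your version is the claimed operator inequality $(Ta)_\pm\le Ta_\pm$. The principle you invoke --- that $b_+$ is the smallest positive operator dominating $b$ --- is false in the noncommutative setting (Kadison's anti-lattice theorem): with $b=\mathrm{diag}(1,-1)$ and $x=\bigl(\begin{smallmatrix}3/2&1\\1&1\end{smallmatrix}\bigr)$ one has $x\ge0$, $y:=x-b\ge0$, so $b=x-y$ is a difference of positives, yet $x-b_+=\bigl(\begin{smallmatrix}1/2&1\\1&1\end{smallmatrix}\bigr)$ has negative determinant, so $b_+\not\le x$. This is not just a defect of the justification: after rescaling, $(s,t)\mapsto sX+tY$ with $X,Y$ proportional to $x,y$ defines a DSS map $\CC^2\to M_2(\CC)$ with $a=(1,-1)$, $Ta_+=X$, $Ta_-=Y$ and $(Ta)_+\not\le Ta_+$. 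What you actually need is only the submajorization $Ta_\pm\succ_w(Ta)_\pm$, and that \emph{is} true --- test the Lorenz curve of $(Ta)_+$ on its own spectral projections $e$, for which $\tr((Ta)_+e)=\tr(Ta\,e)\le\tr(Ta_+e)\le L_{Ta_+}(t)$ --- so the step is repairable, but not by the argument you wrote. Two further points are waved through rather than proved: (i) \cref{thm:q submaj} is stated for $L^1$ elements, whereas $a_\pm$ and $Ta_\pm$ are merely bounded, so your truncation/monotone-convergence extension of the theorem is itself a nontrivial piece of the proof (precisely the technicality the paper's direct argument sidesteps); and (ii) in the DS case your translation trick requires $\inf\phi$ to be attained, which fails for, e.g., $\phi(t)=e^{-t}$, and then one needs an extra approximation such as $\max(\phi,c)-c$ with $c\downarrow0$ before the DSS case applies.
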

\begin{proof}
    Let $T\in \DSS(\M\to\N)$, $a=a^*$, and consider $\phi(t) = (t-\lambda)_+$ with $\lambda>0$.
    Then $Ta-\lambda1_\N \le T(a-\lambda1_\M) \le  T(a-\lambda1_\M)_+ = T\phi(a)$ and the fact that $x\mapsto x_+$ is a monotone function on $\RR$ imply
    \begin{equation*}
        \tr_\N \phi(Ta) = \tr_\N (Ta-\lambda1_\N)_+ \le \tr_\N (T\phi(a))_+ = \tr_\N T\phi(a) \le \tr_\M\phi(a),
    \end{equation*}
    where we used that $\phi(a)$ is positive.
    The same argument applies to $\phi(t)=( (-t)-\lambda)_- = (t+\lambda)_-$.
    Clearly \eqref{eq:berezin_lieb} extends to finite sums $\phi(t)=\sum\phi_i(t)$ of functions of the form $\phi_i(t)=(t\mp \lambda_i)_\pm$ with $\lambda_i\ge0$.
    Normality further extends
    \eqref{eq:berezin_lieb} to monotone limits.
    Since all convex functions $\RR\to\RR^+$ with $\phi(0)=0$ can be obtained in this way, the first claim follows.
    Now assume $T\in \DS(\M\to \N)$.
    We establish the claim for $\phi(t) = (\alpha a+\beta)_+$, $\alpha,\beta\in\RR$:
    \begin{align*}
        \tr_\N \phi(Ta) = \tr_\N (\alpha Ta+ \beta1))_+ = \Tr_\N (T(b))_+ \le \tr_\M b_+ = \tr_\M \phi(a)
    \end{align*}
    where $b=\alpha a +b=b^*\in\M$.
    As before, \eqref{eq:berezin_lieb} extends to all monotone limits of finite sums of such functions.
    In this way, we obtain all convex functions $\RR\to\RR^+$.
\end{proof}

In particular, since $\phi(t)=\abs t^p$ is a convex function $\RR\to\RR^+$ for $p\ge1$, we obtain 
\begin{equation}\label{eq:Lp_contractive}
    \tr |Ta|^p \le \tr |a|^p
\end{equation}
for every $T\in\DSS(\M\to\N)$ and all self-adjoint $a\in\M$.
Since completely positive contractions satisfy $\abs{Ta}\le T\abs a$, $a=a^*\in\M$, \eqref{eq:Lp_contractive} extends to non-hermitian elements.
Thus, every DSS map $T$ yields a bounded operator $T:L^p(\M)\to L^p(\N)$ for all $p\ge1$.

\begin{lemma}\label{lem:compactness}
    Let $\M$ and $\N$ be semifinite von Neumann algebras.
    Then the convex set $\DSS(\M\to\N)$ is compact in the point ultraweak topology, i.e., the initial topology determined by the maps $T\mapsto \tr b T(a)$ with $b\in \M$ and $a\in L^1(\N)$.
    The convex subset of unital DSS maps is closed in this topology.
\end{lemma}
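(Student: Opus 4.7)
The plan is a standard Banach-Alaoglu plus Tychonoff compactness argument paralleling the commutative case of \cref{lem:dss_are_compact_classical}. The key point is that once the right compact product space is chosen, each defining condition of a DSS map cuts out a closed subset in the point ultraweak topology.

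First, any $T\in\DSS(\M\to\N)$ is positive and subunital, hence a contraction, so $T(a)\in\overline{B}_{\norm{a}}(\N)$ for every $a\in\M$. This yields an embedding
\begin{equation*}
    \Phi:\DSS(\M\to\N)\hookrightarrow\prod_{a\in\M}\overline{B}_{\norm{a}}(\N),\qquad \Phi(T)=(T(a))_{a\in\M},
\end{equation*}
where each factor carries the ultraweak topology $\sigma(\N,\N_*)$. By Banach-Alaoglu each ball is compact, by Tychonoff the product is compact, and the subspace topology induced by $\Phi$ coincides with the point ultraweak topology. So it suffices to show that the image of $\Phi$ is closed.

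Next I would verify closure of the defining conditions one by one. Linearity is the vanishing of the continuous evaluations $T\mapsto T(\alpha a+\beta b)-\alpha T(a)-\beta T(b)$ in the Hausdorff space $\N$ with its ultraweak topology, hence closed. Complete positivity and subunitality $T(1)\le 1$ are closed because the positive cones in $M_n(\N)$ and $\N$ are ultraweakly closed. For the trace-reducing condition $\Tr_\N T(a)\le\Tr_\M a$ with $a\in L^1(\M)^+\cap\M$, semifiniteness of $\Tr_\N$ lets me write $\Tr_\N T(a)=\sup_p\langle T(a),p\rangle$ where $p$ ranges over the finite-trace projections of $\N$. Each such $p$ lies in $L^1(\N)$, so $T\mapsto\langle T(a),p\rangle$ is continuous in the point ultraweak topology, uniformly bounded by $\Tr_\M a$ along any net in $\DSS(\M\to\N)$; the bound passes to the supremum, giving simultaneously $T(a)\in L^1(\N)^+$ and $\Tr_\N T(a)\le \Tr_\M a$ for the limit.

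The subtlest step, which I expect to be the main obstacle, is closedness of the normality condition itself: a bounded net of normal positive functionals on $\M$ can in principle have a $\sigma(\M^*,\M)$-limit with a singular part. I would handle this by exploiting the specific DSS structure. For each $\phi\in\N_*^+$ the preadjoints yield a bounded net $T_{\alpha,*}(\phi)\in\M_*$, and the uniform trace-reducing estimate combined with positivity gives the tightness needed for a noncommutative Dunford-Pettis-type argument, forcing the $\sigma(\M^*,\M)$-limit to remain in $\M_*$. Equivalently, for a bounded increasing net $a_\lambda\nearrow a$ in $\M^+$, positivity yields $h:=\sup_\lambda T(a_\lambda)\le T(a)$, and the normality of each $T_\alpha$ together with the uniform $L^1$-bound permits the exchange $\sup_\lambda\lim_\alpha\langle T_\alpha(a_\lambda),\phi\rangle=\lim_\alpha\sup_\lambda\langle T_\alpha(a_\lambda),\phi\rangle$, forcing $\langle h,\phi\rangle=\langle T(a),\phi\rangle$ and hence $h=T(a)$. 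The closedness of the unital DSS subset is then immediate, since $T(1)=1$ is the intersection of the two closed conditions $T(1)\le 1$ and $T(1)\ge 1$.
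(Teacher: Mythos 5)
Your overall architecture --- Banach--Alaoglu plus Tychonoff to make the space of contractions point-ultraweakly compact, then closedness of linearity, complete positivity, subunitality, and the trace inequality --- is exactly the paper's route (the paper relegates the compactness to a footnote and handles the trace condition by ultraweak lower semicontinuity of $\Tr_\N$ on $\N^+$, which is precisely your $\sup_p$ over finite-trace projections unwound). Up to that point the proposal is correct.

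The gap is the normality step, and it is a genuine one: the interchange $\sup_\lambda\lim_\alpha=\lim_\alpha\sup_\lambda$ is not justified by a uniform $L^1$-bound, and no Dunford--Pettis/tightness argument is available, because the trace condition controls $T_\alpha^*$ applied to the trace but says nothing about uniform tightness of the preadjoints $T_{\alpha,*}(\phi)$ for a fixed normal $\phi$. Concretely, take $\M=\N=\B(\ell^2(\NN))$ with the standard trace, $p=\kettbra{e_1}$ and $T_n(a)=\ip{e_n}{ae_n}\,p$. Each $T_n$ is a normal cp map with $T_n(1)=p\le1$ and $\Tr T_n(a)=\ip{e_n}{ae_n}\le\norm{a}\le\Tr a$ for $a\ge0$, so $T_n\in\DSS(\M\to\M)$; yet every point-ultraweak cluster point is $T(a)=\omega(a)p$ with $\omega$ a singular state, hence not normal. (Taking $a_\lambda=\sum_{k\le\lambda}\kettbra{e_k}\nearrow 1$ and $\phi=\tr(p\,\placeholder)$ gives $\sup_\lambda\lim_n\ip{e_n}{a_\lambda e_n}=0\ne 1=\lim_n\sup_\lambda\ip{e_n}{a_\lambda e_n}$, so your interchange fails exactly on this example.) Thus the set of \emph{normal} DSS maps is not point-ultraweakly closed and the lemma as literally stated cannot be proved this way. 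To your credit, you isolated the one step that the paper's own proof silently skips: it verifies cp, subunitality and the trace bound for the limit but never addresses normality. The statement must be read with normality dropped from the definition of the ambient set (as is effectively done in the cluster-point constructions feeding into \cref{thm:submaj}), or the closedness claim restricted to families with uniformly tight preadjoints; under that reading your remaining steps, which coincide with the paper's, go through.
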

\begin{proof}
    Since the space contractions $\M\to\N$ is compact in the point-ultraweak topology, we only have to show that $\DSS(\M\to\N)$ is a closed subset.%
    \footnote{This follows from the Tychonov theorem: The space of linear contractions $\M\to\N$ with the point-ultraweak topology can be regarded as a closed subset of the product space $\prod_{x\in B(\M)} B(\N)$, where $B$ denotes the unit ball with the ultraweak topology. By the Banach-Alaoglu theorem, $B(\N)$ is ultraweakly compact and by Tychonov's theorem, the product space is compact. Thus, the linear contractions are point-ultraweakly compact.}
    Let $T_\alpha\in \DSS(\M\to\N)$, $\alpha\in I$, converge in the point-ultraweak topology to some contraction $T$.
    As pointwise limits of contractive cp maps are cp, $T$ is cp and, hence, subunital $T1_\M \le 1_\N$.
    Since the trace is ultraweakly lower semicontinuous on the positive cone $\N^+$, we further have
    $\tr[Tx] \le \liminf_\alpha \tr[T_\alpha x] \le \lim_\alpha \tr x =\tr x$ for all $x\in\M^+$.
    Therefore, $T\in \DSS(\M\to\N)$.
    If each $T_\alpha$ is unital, then $T$ is unital: $T1_\M = \lim_\alpha T_\alpha 1_\M=\lim_\alpha 1_\N =1_\N$.
    If the traces are finite, they are elements of the predual and, hence, ultraweakly continuous so that $\tr[Tx] = \lim_\alpha \tr[T_\alpha x] = \lim_\alpha \tr[x]=\tr[x]$, $x\in\M$.
\end{proof}

We need the following Lemma:

\begin{lemma}
    Let $a\in \M^+$ be such that $\tr a=\oo$. Then, for every $\alpha>0$ there exist $a_n\in \M^+$, $n\in \NN$, such that $\tr a_i = \alpha$ and $\sum_i a_i = a$.
\end{lemma}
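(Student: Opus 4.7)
The plan is to build a monotone family $(p_s)_{s\geq 0}$ of positive contractions in the abelian von Neumann subalgebra of $\M$ generated by $a$, with $p_0=0$, $p_s\uparrow \supp a$ strongly as $s\to\oo$, and such that the map $s \mapsto \tr(p_s a)$ is continuous and surjects onto $[0,\oo)$. Given such a family, I pick $0 = s_0 < s_1 < s_2 < \cdots$ with $\tr(p_{s_n}a) = n\alpha$ (possible by continuity and surjectivity) and set $a_n := (p_{s_n} - p_{s_{n-1}})\, a$. Since $p_{s_n}-p_{s_{n-1}}\geq 0$ commutes with $a$, each $a_n\in\M^+$ and has $\tr a_n = \alpha$; moreover the partial sums $\sum_{n=1}^N a_n = p_{s_N}\, a$ converge strongly (hence monotonically, hence in the required sense) to $(\supp a)\cdot a = a$.

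To construct $p_s$ I use the spectral calculus. Because $a$ is bounded but $\tr a = \oo$, the spectral scale $\lambda_a$ satisfies $\lambda_a(s) > 0$ for every $s > 0$. For each such $s$ set $\mu = \mu(s) := \lambda_a(s)$ and $q^> := \chi_{(\mu,\oo)}(a)$, $q^= := \chi_{\{\mu\}}(a)$, with traces $\tau^> := \tr q^>$ and $\tau^= := \tr q^=$. The definition of the spectral scale gives $\tau^> \leq s \leq \tau^>+\tau^=$, so the quotient $\theta := (s - \tau^>)/\tau^= \in [0,1]$ is well defined (its value being immaterial if $\tau^==0$), and I set
\[
p_s \;:=\; q^> + \theta\, q^=.
\]
Then $\tr p_s = s$ by construction, and $s\mapsto p_s$ is monotone: when $\mu(s)$ strictly decreases between $s_1<s_2$, the additional spectral projections $\chi_{(\mu(s_2),\mu(s_1)]}(a)$ added to $p_s$ dominate any change in $\theta$; when $\mu(s)$ is constant across a plateau of $\lambda_a$ (i.e.\ an atom of the spectral distribution), only $\theta$ varies, and it does so linearly in $s$. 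A short layer-cake computation, using that $p_s\chi_{(t,\oo)}(a) = \chi_{(t,\oo)}(a)$ for $t\geq\mu$ while $p_s\chi_{(t,\oo)}(a) = p_s$ for $t<\mu$, yields
\[
\tr(p_s a) \;=\; \int_0^\oo \tr\bigl(p_s\,\chi_{(t,\oo)}(a)\bigr)\,dt \;=\; \int_0^\oo \min\!\bigl(s,\,\tr\chi_{(t,\oo)}(a)\bigr)\,dt \;=\; \int_0^s \lambda_a(u)\,du,
\]
which is continuous in $s$ and tends to $\tr a = \oo$ as $s\to\oo$, giving the required surjectivity. Monotone strong convergence $p_{s_n}\uparrow \supp a$ then follows because $\mu(s_n)\downarrow 0$.

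The key subtlety the construction is designed to handle is the possible presence of atoms in the spectral distribution of $a$: an atom at $\mu_0 > 0$ contributes a block $\mu_0\cdot \chi_{\{\mu_0\}}(a)$ of potentially large $\tr$-mass which cannot be split into pieces of prescribed $\tr$-value using Borel subsets of $\Sp(a)$ alone. The fractional coefficient $\theta$ in the definition of $p_s$ is precisely the ingredient that resolves this, providing the sub-projection mass needed to interpolate continuously across the jumps that such atoms would otherwise introduce into $s\mapsto \tr(p_s a)$. With this in place, the remainder of the argument is standard functional calculus and monotone/strong convergence.
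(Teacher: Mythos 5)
The paper states this lemma without proof, so there is nothing in the text to compare against; judged on its own merits, your construction has a genuine gap. The interpolation $p_s=q^>+\theta q^=$ yields $\tr p_s=s$ only when $\tau^==\tr\chi_{\{\mu\}}(a)$ is finite. If $\tau^==\oo$ --- which is compatible with $\mu=\lambda_a(s)>0$ and occurs already for $a=1$ or for $a$ any infinite-trace projection in a $\II_\oo$ factor or in $\B(\H)$ --- then $\tr(\theta q^=)\in\{0,\oo\}$ and no scalar $\theta$ supplies the missing trace $s-\tau^>$. In the extreme case $a=1$ the abelian von Neumann algebra generated by $a$ is $\CC 1$, which contains no positive element of finite nonzero trace, so no family $(p_s)$ of the kind you describe exists inside $W^*(a)$ at all. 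A second, related failure: your assertion that $\mu(s_n)\downarrow 0$ is false whenever $D_a(t)=\oo$ for some $t>0$; then $\lambda_a(s)\ge t_*:=\sup\{t:D_a(t)=\oo\}>0$ for every $s$, so $p_s\le\chi_{[t_*,\oo)}(a)$ never increases to $\supp a$, and if $a$ has spectral mass in $(0,t_*)$ the partial sums $p_{s_N}a$ converge to $\chi_{[t_*,\oo)}(a)\,a\ne a$. Both failures have the same root: the witnesses $a_n$ cannot in general be taken from the abelian algebra generated by $a$. Your argument, including the layer-cake identity $\tr(p_sa)=\int_0^s\lambda_a(u)\,du$, is correct under the additional hypothesis that $D_a(t)<\oo$ for all $t>0$, but that hypothesis does not follow from $\tr a=\oo$.

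A repair must leave $W^*(a)$ and use semifiniteness of the trace directly. For instance: choose projections $e_k\uparrow 1$ with $\tr e_k<\oo$ (possible since $\M$ is $\sigma$-finite and the trace is semifinite), set $e_0=0$ and $b_k:=a^{1/2}(e_k-e_{k-1})a^{1/2}\ge0$. Then $\tr b_k\le\norm{a}\,\tr(e_k-e_{k-1})<\oo$, the partial sums $\sum_{k\le K}b_k=a^{1/2}e_Ka^{1/2}$ increase to $a$, and $\sum_k\tr b_k=\tr a=\oo$ by normality. Since each $b_k$ may be split as $tb_k+(1-t)b_k$ for any $t\in[0,1]$, a greedy regrouping of $(b_k)$ into consecutive blocks of total trace exactly $\alpha$ produces the required $a_n$. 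This is shorter than the spectral-scale route and covers all cases.
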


While doubly substochastic maps exist between arbitrary semifinite von Neumann algebras (take, for instance, $T=0$), this is quite different for doubly stochastic maps:
\begin{lemma}\label{lem:existence_of_ds_maps}
    $\DS(\M\to \N)\ne \emptyset$ if and only if $\tr1_\M = \tr 1_\N$.
\end{lemma}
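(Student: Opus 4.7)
The plan is to separate the two implications and further split the converse into the finite and infinite trace cases, relying only on the immediately preceding lemma.

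For necessity, suppose $T\in\DS(\M\to\N)$. If $\tr 1_\M < \oo$, then $1_\M \in L^1(\M)^+\cap\M$, so trace preservation combined with unitality yields $\tr 1_\N = \tr T(1_\M) = \tr 1_\M$. If $\tr 1_\M = \oo$, I would invoke the preceding lemma to decompose $1_\M = \sum_n a_n$ with $a_n \in \M^+$ and $\tr a_n = 1$. Normality of $T$ gives $1_\N = T(1_\M) = \sum_n T(a_n)$, and trace preservation applied to each $a_n\in L^1(\M)^+\cap\M$ forces $\tr 1_\N = \sum_n \tr T(a_n) = \sum_n \tr a_n = \oo$. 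In both subcases, $\tr 1_\M = \tr 1_\N$ as extended reals.

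For sufficiency in the finite case, where $c := \tr 1_\M = \tr 1_\N < \oo$, the explicit map $T(x) = c^{-1}\tr_\M(x)\,1_\N$ is manifestly a normal ucp map, and trace preservation is immediate. For the infinite case, I would apply the preceding lemma on each side to obtain $a_n \in \M^+$ and $b_n \in \N^+$ with $\tr a_n = \tr b_n = 1$, $\sum_n a_n = 1_\M$ and $\sum_n b_n = 1_\N$. Then I set
\begin{equation*}
    T(x) := \sum_n \omega_n(x)\, b_n, \qquad \omega_n(x) := \tr_\M(a_n^{1/2}\, x\, a_n^{1/2}).
\end{equation*}
Each $\omega_n$ is a normal state on $\M$, so every partial sum is a normal cp map, and for $0 \le x \le 1_\M$ these partial sums are dominated by $\sum_n b_n = 1_\N$; hence the series converges in the strong operator topology to a normal cp map $T$. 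Unitality follows from $T(1_\M) = \sum_n \omega_n(1_\M) b_n = \sum_n b_n = 1_\N$, and trace preservation from $\tr_\N T(x) = \sum_n \omega_n(x)\tr_\N b_n = \sum_n \omega_n(x) = \tr_\M\bigl(x \sum_n a_n\bigr) = \tr_\M(x)$.

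The main obstacle is the infinite-trace case: one cannot simply rescale the trace to obtain a state, so there is no one-line construction. The key idea is to use the preceding lemma to partition both identities into countable sums of unit-trace positive elements and to pair them up between $\M$ and $\N$; the normality and $\sigma$-finiteness requirements then handle themselves via monotone convergence on the positive cone.
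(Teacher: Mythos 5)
Your proof is correct and follows essentially the same route as the paper: necessity from trace preservation applied to (a decomposition of) the identity, the finite case via $c^{-1}\tr_\M(\placeholder)1_\N$, and the infinite case by pairing unit-trace decompositions $1_\M=\sum_n a_n$, $1_\N=\sum_n b_n$ and setting $T(x)=\sum_n \tr(a_n x)\,b_n$, which is exactly the paper's construction (your $\omega_n(x)=\tr(a_n^{1/2}xa_n^{1/2})$ coincides with $\tr(a_n x)$ by traciality). The only difference is that you spell out the necessity direction and the convergence/normality of the series more carefully than the paper, which simply asserts them.
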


\begin{proof}
    Clearly $T\in \DS(\M\to \N)$ implies $\tr1_\N = \tr T(1_\M) = \tr 1_\M$.
    For the converse, let us first consider the case where both traces are finite. Then, we can define a doubly stochastic map by $T = (\tr1_\N)^{-1} \tr(\placeholder) 1_\N$.
    If the traces are infinite, we can pick $p_i\in \M^+$ and $q_i\in \N^+$ such that $\sum p_i=1_\M$ and $\sum q_i=1_\N$ and such that $\tr p_i = \tr q_i=\alpha>0$.
    Set $T = \sum_i \alpha^{-1} q_i \tr p_i(\placeholder)$.
    We have $T1 = \sum_i q_i \alpha\tr p_i =\sum q_i =1$ and $\tr T a = \sum_i \alpha^{-1} \tr q_i \tr p_i a = \sum_i \tr p_i a = \tr a$.
\end{proof}

We note the following noncommutative generalization of \cref{lem:DS extension}:

\begin{lemma}\label{lem:DS extension2}
    Let $T\in \DSS(\M\to \N)$ be trace-preserving on $\M_p$ for some projection $p\in\M$ (i.e., $\tr Ta = \tr a$ for all $a\in L^1(\M)$ with $\supp a \le p$).
    There exists a $\tilde T\in \DS(\M\to \N)$ with $\tilde T\upharpoonright\M_p = T\upharpoonright\M$ if and only if
    \begin{equation}
        \tr (1-p) = \tr(1 - T(p))
    \end{equation}
    where both sides may be infinite.
\end{lemma}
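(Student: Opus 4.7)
The plan is to mirror the classical extension lemma \cref{lem:DS extension}: dispose of necessity by a direct trace computation, and prove sufficiency by splitting $\M$ into the corners $\M_p$ and $\M_{1-p}$ and defining $\tilde T$ piecewise on each.

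For necessity, any extension $\tilde T\in\DS(\M\to\N)$ satisfies $\tilde T(p)=T(p)$ (since $p\in\M_p$), so unitality of $\tilde T$ forces $\tilde T(1-p)=1_\N-T(p)$. Applying the trace, which $\tilde T$ preserves, yields the claimed identity $\tr(1-p)=\tr(1-T(p))$ (allowing both sides to be $\infty$).

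For sufficiency, set $q:=1_\N-T(p)\in\N^+$; by hypothesis this is a positive contraction with $\tr q=\tr(1-p)$. I would then work on the semifinite corners $\M_{1-p}$ and $\N_q:=q\N q$, equipped with the inherited traces, and note that $\tr_{\M_{1-p}}(1_{\M_{1-p}})=\tr(1-p)=\tr q=\tr_{\N_q}(1_{\N_q})$. This is precisely the condition under which \cref{lem:existence_of_ds_maps} produces a doubly stochastic map $S\in\DS(\M_{1-p}\to\N_q)$; composed with the inclusion $\N_q\hookrightarrow\N$, we view $S$ as a normal cp trace-preserving map $\M_{1-p}\to\N$ with $S(1-p)=q$. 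Now define
$$\tilde T(a):=T(pap)+S\bigl((1-p)a(1-p)\bigr),\qquad a\in\M.$$
Normality and complete positivity are automatic, since the cut-downs $a\mapsto pap$ and $a\mapsto(1-p)a(1-p)$ are normal cp maps into $\M_p$ and $\M_{1-p}$, and $T|_{\M_p}$ and $S$ are normal cp. The routine checks reduce to: $\tilde T\!\upharpoonright\!\M_p=T\!\upharpoonright\!\M_p$ because the second summand vanishes on $\M_p$; $\tilde T(1)=T(p)+q=1_\N$ by definition of $q$; and trace-preservation on $\M^+$ via $\tr(pap)+\tr((1-p)a(1-p))=\tr a$, combined with trace-preservation of $T|_{\M_p}$ and of $S$.

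There is no genuine obstacle. The only point worth flagging is that the construction discards the cross terms $pa(1-p)$ and $(1-p)ap$, so $\tilde T$ does not reproduce $T$ on all of $\M$—only on $\M_p$; this is harmless because the hypothesis on $T$ is stated only on $\M_p$. Once the matching-trace condition is recognized as precisely the input required to invoke \cref{lem:existence_of_ds_maps} on the complementary corners, the argument is essentially automatic and contains the classical \cref{lem:DS extension} as the abelian special case.
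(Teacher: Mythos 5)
Your necessity argument is fine, and your overall plan (split into the corners $\M_p$ and $\M_{1-p}$ and define $\tilde T$ piecewise) is the same as the paper's. But the sufficiency step has a genuine gap: $q:=1_\N-T(p)$ is only a positive contraction, not a projection, so $\N_q:=q\N q$ is \emph{not} a corner of $\N$ --- it is not a von Neumann algebra with unit $q$ (the product $(qaq)(qbq)=q(aq^2b)q$ has unit $q$ only when $q^2=q$), and $\tr_{\N_q}(1_{\N_q})$ is not defined. Consequently \cref{lem:existence_of_ds_maps}, which is stated for semifinite von Neumann algebras, cannot be invoked. The natural repair of replacing $q$ by its support projection $s(q)$ fails on both counts: the hypothesis gives $\tr(1-p)=\tr(q)$, not $\tr(1-p)=\tr(s(q))$, and a DS map into $\N_{s(q)}$ would send $1-p$ to $s(q)$ rather than to $q$, so your $\tilde T$ would satisfy $\tilde T(1)=T(p)+s(q)\neq 1_\N$ whenever $q$ is not a projection. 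What you actually need on the complementary corner is not a DS map into a subalgebra but a normal cp trace-preserving map $S:\M_{1-p}\to\N$ with the \emph{prescribed non-unital image} $S(1-p)=q$.

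The paper builds exactly such an $S$ by hand: in the finite case $S(a)=(\tr(1-p))^{-1}\,\tr\big((1-p)a\big)\,\big(1-T(p)\big)$, which visibly sends $1-p$ to $q$ and is trace-preserving precisely because $\tr(q)=\tr(1-p)$; in the infinite case one decomposes $1-p=\sum_n p_n^c$ into finite projections and $q=\sum_n w_n$ into positive $L^1$ elements with $\tr w_n=\tr p_n^c$ (using the lemma preceding \cref{lem:existence_of_ds_maps}) and sums the corresponding rank-one-type maps. If you replace your invocation of \cref{lem:existence_of_ds_maps} by this construction, the rest of your argument (the piecewise definition, the verification of unitality, trace preservation, and agreement with $T$ on $\M_p$) goes through as you wrote it.
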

\begin{proof}
    The "only if" part is trivial. 
    For the converse, we adapt the proof of \cref{lem:DS extension}.
    If both sides are finite, we may set $\tilde T(a) = T(pa) (1-T(p)) + (\tr(1-p))^{-1}\tr((1-p)a)$.
    If both sides are infinite, we pick finite projections $p_n^c$ such that $\sum_n p_n^c = (1-p)$ and functions $w_n\in L^1(\N)^+$ such that $\Tr w_n =\Tr p_n^c$ and $\sum_n w_n = (1-T(p))$ and set $\tilde T(a) = T(pa) + \sum_n w_n \tr(p_n^c a)$.
\end{proof}

In the noncommutative setting, we defined doubly (sub)stochastic maps to be completely positive.
The reason for this are the applications to quantum mechanics (where complete positivity is enforced by the statistical interpretation) considered in the main text of this article.
However, in the literature on majorization theory on von Neumann algebras, doubly substochastic maps are usually only assumed to be positive maps.
In the following, we refer to the two classes as "positive DSS maps" and "completely positive DSS maps".
While the class of completely positive DSS maps is strictly smaller than the class of positive DSS maps (e.g., the transposition $a\mapsto a^\top$ on $M_n(\CC)$ is a positive DS map but not a completely positive DS map), they implement the same set of transitions:

\begin{lemma}\label{lem:WLOG cp}
    Let $a=a^*\in L^1(\M)$ and $b=b^*\in L^1(\N)$. The following are equivalent:
    \begin{enumerate}[(i)]
        \item\label{it:WLOG cp1} there exists a completely positive DSS (resp.\ DS) map $T$ with $T(a)=b$,
        \item\label{it:WLOG cp2} there exists a positive DSS (resp.\ DS) map $T$ with $T(a)=b$.
    \end{enumerate}
\end{lemma}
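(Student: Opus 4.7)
The implication (i)$\Rightarrow$(ii) is immediate since every completely positive map is positive. For (ii)$\Rightarrow$(i), my plan is to exploit the classical fact that every positive map between abelian von Neumann algebras is automatically completely positive. Concretely, I sandwich the given positive map $T$ between inclusions of and conditional expectations onto suitable abelian subalgebras of $\M$ and $\N$ that contain $a$ and $b$ respectively, showing that this process preserves the relation $T(a)=b$ while converting positivity into complete positivity.

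Set $p := \supp|a|$ and $q := \supp|b|$. In the DSS case, let $\A$ denote the abelian von Neumann subalgebra of $\M_p$ generated by $a$; by \cref{lem:sf_subalg}\ref{it:sf_subalg3} this is a semifinite subalgebra of $\M_p$, and the composite inclusion $\A \hookrightarrow \M_p \hookrightarrow \M$ is a CP DSS map whose dual $j_\A^* : \M \to \A$ equals the cut-down $x \mapsto pxp$ followed by the trace-preserving conditional expectation onto $\A$, again CP DSS (by \cref{lem:sf_subalg}). In the DS case, I additionally refine $1-p$ by a partition into nonzero projections of finite trace — available by $\sigma$-finiteness and semifiniteness of $\M$, and trivial when $p=1_\M$ — and enlarge $\A$ to include these projections. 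The resulting $\A$ remains abelian, is a unital semifinite subalgebra of $\M$ with $1_\A=1_\M$; its inclusion $j_\A$ and the associated trace-preserving conditional expectation $j_\A^* = E_\A$ are then CP DS by \cref{lem:sf_subalg}\ref{it:sf_subalg1}. Construct $\B \subset \N$ from $b$ analogously.

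Given a positive DSS (resp.\ DS) map $T : \M \to \N$ with $T(a)=b$, I set
\[
S := j_\B^* \circ T \circ j_\A : \A \to \B, \qquad T' := j_\B \circ S \circ j_\A^* : \M \to \N.
\]
As a composition of positive maps in which $j_\A$ and $j_\B^*$ are completely positive, $S$ is positive DSS (resp.\ DS); since $\A$ and $\B$ are abelian, positivity of $S$ automatically upgrades to complete positivity, so $S$ is CP DSS (resp.\ CP DS). Because $a \in \A$ and $b \in \B$, one has $j_\A^*(a)=a$ and $j_\B^*(b)=b$ (conditional expectations fix elements of their targets), so that $S(a) = j_\B^*(T(a)) = j_\B^*(b) = b$. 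Consequently $T'$ is CP DSS (resp.\ CP DS) and
\[
T'(a) = j_\B(S(j_\A^*(a))) = j_\B(S(a)) = j_\B(b) = b,
\]
establishing (i).

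The main obstacle is the construction of $\A$ in the DS case when $\tr(1-p)=\infty$: the full von Neumann algebra $\{a\}''$ generated by $a$ in $\M$ then contains $1-p$ as an atom of infinite trace and so fails to be semifinite in the sense of the paper, so the direct choice is unavailable. The partition of $1-p$ into finite projections is the key device that restores semifiniteness of the restricted trace on the enlarged abelian subalgebra while retaining $1_\A = 1_\M$, which is needed to make the inclusion DS rather than merely DSS.
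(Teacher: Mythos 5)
Your proof is correct, and for the DSS case it is essentially the paper's argument: restrict to the abelian subalgebra generated by $a$ (semifinite by \cref{lem:sf_subalg}), use that a positive map with abelian domain is automatically completely positive, and sandwich with the completely positive cut-down and trace-preserving conditional expectation. The only genuine divergence is in the DS case. The paper keeps the corner algebra $\M_{\supp a}$, notes that the resulting CP DSS map $S$ is trace-preserving there, and then invokes \cref{lem:DS extension2} to produce a CP DS extension after checking $\tr(1-S(\supp a)) = \tr T(1-\supp a) = \tr(1-\supp a)$. You instead make the abelian subalgebra unital from the outset by adjoining a partition of $1-\supp a$ into finite-trace projections, so that the inclusion and its dual conditional expectation are already DS and no extension lemma is needed. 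Both routes work; yours trades the extension lemma for the (easy but necessary) verification that the enlarged abelian algebra is still semifinite, and you correctly identify that this is exactly where the naive choice $\{a\}''$ breaks down when $\tr(1-\supp a)=\oo$. Two minor remarks: positivity upgrades to complete positivity as soon as the domain \emph{or} the codomain is abelian, so the compression on the $\N$ side is not actually needed --- the paper's $T_0 = T\restriction\A:\A\to\N$ already suffices; and strictly speaking $a$ is only affiliated with $\A$ (it lies in $L^1(\A)$ rather than $\A$), so "$j_\A^*(a)=a$" should be read for the $L^1$-extension of the conditional expectation --- the same implicit convention the paper uses.
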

\begin{proof}
    \ref{it:WLOG cp1} $\Rightarrow$ \ref{it:WLOG cp2} is clear.
    \ref{it:WLOG cp2} $\Rightarrow$ \ref{it:WLOG cp1}:
    Let $\A\subset \M_{\supp a}$ be the abelian subalgebra generated by $a$ and let $E:\M_{\supp a} \to \A$ be the trace-preserving conditional expectation (cp.\ \cref{lem:sf_subalg}).
    Note that $T_0 := T\upharpoonright\A$ is completely positive because $\A$ is abelian.
    Since the conditional expectation is completely positive as well, $S = (T_0\circ E)((\supp a)\placeholder(\supp a))$ is a completely positive DSS map $\M_{\supp a}\to \N$ with $S(a) = b$.
    This shows the case where $T$ is DSS.
    If $T$ is DS, the map $S$ is by construction trace-preserving on $\M_{\supp a}$.
    We use \cref{lem:DS extension2} to show that $S$ admits a completely positive DS extension:
    \begin{equation*}
        \tr(1-S(\supp a))= \tr(1-T(\supp a)) = \tr T(1-\supp a) = \tr(1-\supp a).
    \end{equation*}
\end{proof}

\subsection{Distance of unitary orbits and majorization in factors}\label{sec:majorization in factors}

Hiai \cite[Thm.~2.5]{hiai_majorization_1987} showed that majorization in semifinite factors can be characterized through mixtures of unitaries (just as in finite-dimension).
Combining his result with \cref{thm:q submaj}, we get:

\begin{theorem}[Majorization in semifinite factors]\label{thm:factor majorization}
    Let $\M$ be a semifinite factor and let $\rho,\sigma\in L^1(\M)^+$.
    The following are equivalent
    \begin{enumerate}[(a)]
        \item\label{it:factor majorization1} $\rho\succ \sigma$, i.e., $L_\rho(t)\ge L_\sigma(t)$ for all $t\ge0$ and $\tr\rho = \tr\sigma$,
        \item\label{it:factor majorization3} $\sigma \in \overline{\conv}\{u\rho u^* : u\in\U(\M)\}$,
        \item\label{it:factor majorization4} $\tr \rho = \tr \sigma$ and $\sigma \in \overline{\conv}\{v\rho v^* : \norm v\le 1\}$,
        \item\label{it:factor majorization5} $\tr \rho = \tr \sigma$ and $\tr \phi(\rho) \ge \tr \phi(\sigma)$ for all convex functions $\phi:\RR^+\to\RR^+$, 
        \item\label{it:factor majorization2} $\lambda_\rho \succ \lambda_\sigma$,
        \item\label{it:factor majorization6} there exists a DS operator $S$ on $\RR^+$ such that $S(\lambda_\rho) = \lambda_\sigma$.
    \end{enumerate}
    If $\M$ is finite, i.e., $\tr 1<\oo$, this is equivalent to the existence of a doubly stochastic operator $T\in \DS(\M)$ with $T(\rho)=\sigma$.
\end{theorem}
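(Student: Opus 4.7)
The plan is to arrange the six conditions into three natural groups and tie them together using the majorization machinery of the appendix. The first group, consisting of (a), (d), (e) and (f), is a direct consequence of previous results. By \cref{cor:marjorization is classical} the Lorenz curve $L_\rho$ is the antiderivative of the spectral scale $\lambda_\rho$, so (a) and (e) both read ``submajorization of the spectral scale together with equality of traces''; their equivalence with (d) is the content of \cref{thm:q submaj}, once one observes that under $\tr\rho=\tr\sigma<\infty$ a convex function $\phi:\RR^+\to\RR^+$ may be shifted by its value at $0$ without affecting the inequality in (d), so the requirement $\phi(0)=0$ imposed in \cref{thm:q submaj} costs nothing. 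The equivalence (e) $\Leftrightarrow$ (f) is the classical \cref{thm:maj} applied to $\lambda_\rho,\lambda_\sigma\in L^1(\RR^+)^+$.

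The second group brings in the unitary and contraction orbits. The implication (b) $\Rightarrow$ (c) is immediate since unitaries are contractions and $\tr(u\rho u^*)=\tr\rho$ for every unitary $u$. For (c) $\Rightarrow$ (a), I would observe that for every contraction $v\in\M$ the map $a\mapsto vav^*$ is a normal completely positive DSS map on $\M$: it is subunital because $vv^*\le 1$ and trace-non-increasing because $\tr(vav^*)=\tr(v^*va)\le\tr a$ for $a\ge 0$. The same is then true of any finite convex combination $T=\sum_i p_i\,v_i(\placeholder)v_i^*$, so \cref{thm:q submaj} gives $\rho\succ_w\sigma_0$ for every such $\sigma_0=T(\rho)$. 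For an $L^1$-approximating net $\sigma_\alpha\to\sigma$ of these combinations, the inequality $\tr(\sigma x)=\lim_\alpha\tr(\sigma_\alpha x)\le L_\rho(t)$, valid for every $x\in\M$ with $0\le x\le 1$ and $\tr x\le t$, yields $L_\sigma(t)\le L_\rho(t)$ after taking the supremum over such $x$; combined with $\tr\sigma=\tr\rho$ this gives (a).

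The only remaining implication, and the main obstacle, is (a) $\Rightarrow$ (b). This is the sole place where the factoriality of $\M$ is indispensable---without it, the contraction-orbit characterization (c) is the best one can achieve, as the argument above never used factoriality. I would simply invoke \cite[Thm.~2.5]{hiai_majorization_1987}, which states precisely that in a semifinite factor $\lambda_\rho\succ\lambda_\sigma$ forces $\sigma$ to lie in the $L^1$-closure of $\conv\{u\rho u^* : u\in\U(\M)\}$. A proof from scratch would require Hiai's careful analysis of Murray--von Neumann equivalence of spectral projections, which is the genuinely factor-theoretic ingredient of the theorem.

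For the final clause on finite factors, where $\tr 1<\infty$, \cref{lem:compactness} asserts that $\DS(\M)$ is compact in the point-ultraweak topology. Given (b), I would choose finite unitary mixtures $T_n(a)=\sum_i p_{n,i}u_{n,i}au_{n,i}^*\in\DS(\M)$ with $T_n(\rho)\to\sigma$ in $L^1(\M)$, pass to a point-ultraweak cluster point $T\in\DS(\M)$, and identify $T(\rho)=\sigma$ by noting that the $L^1$-limit $\sigma$ of $T_{n_k}(\rho)$ and its ultraweak limit $T(\rho)$ must coincide. The converse is a direct application of \cref{thm:q submaj} together with the trace-preservation of $T$.
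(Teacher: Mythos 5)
Your proposal follows essentially the same route as the paper: the paper records no independent proof of this theorem but obtains it by citing Hiai's \cite[Thm.~2.5]{hiai_majorization_1987} for the unitary-orbit characterization and combining it with \cref{thm:q submaj}, which is exactly your strategy; your filling-in of the routine implications (b) $\Rightarrow$ (c) $\Rightarrow$ (a), of (a) $\Leftrightarrow$ (e) $\Leftrightarrow$ (f) via \cref{cor:marjorization is classical} and \cref{thm:maj}, and of the finite-factor clause via compactness of $\DS(\M)$ (\cref{lem:compactness}) is correct and more detailed than what the paper itself provides. One small repair is needed in your treatment of (d): subtracting $\phi(0)$ from a convex $\phi:\RR^+\to\RR^+$ yields a convex function that may take negative values and need not be nondecreasing (e.g.\ $\phi(t)=(1-t)_+$), so it falls outside the class covered by \cref{thm:q submaj}, and the shift argument as stated does not close the gap. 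The equivalence nevertheless holds: when $\tr 1_\M=\oo$ the inequality in (d) for $\phi(0)>0$ is trivially $\oo\ge\oo$, and when $\tr 1_\M<\oo$ one should instead pass through item (f) and the doubly stochastic clause of the lemma around \eqref{eq:berezin_lieb}, which gives $\tr\phi(\sigma)\le\tr\phi(\rho)$ for \emph{all} convex $\phi$ without the normalization $\phi(0)=0$.
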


In \cref{it:factor majorization3,it:factor majorization4}, the closure is taken in the norm topology of $L^1(\M)$.

Next, we combine independent results of Hiai-Nakamura \cite{hiai1989distance} and Haagerup-St\o rmer \cite{haagerup1990equivalence} on the $L^1$-distance of unitary orbits.
We begin with the following:

\begin{proposition}\label{prop:lower bound}
    Let $\M$ be a semifinite von Neumann algebra and let $\rho,\sigma \in L^1(\M)^+$.
    Then
    \begin{equation}\label{eq:lower bound}
        \norm{\rho - \sigma}_{L^1(\M)} \ge \norm{\lambda_\rho - \lambda_\sigma}_{L^1(\RR^+)} = \norm{D_\rho - D_\sigma}_{L^1(\RR^+)}.
    \end{equation}
\end{proposition}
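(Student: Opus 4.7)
The plan is to handle the equality and the inequality separately, the equality being an elementary geometric identity and the inequality a more substantial trace estimate.

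For the equality $\norm{\lambda_\rho - \lambda_\sigma}_{L^1(\RR^+)} = \norm{D_\rho - D_\sigma}_{L^1(\RR^+)}$, I will observe that this is purely classical and depends only on the fact that both $\lambda_\rho$ and $D_\rho$ are right-continuous non-increasing $L^1$-functions on $\RR^+$ which, by construction, are generalized inverses of one another: $\lambda_\rho(t) > s \iff D_\rho(s) > t$, and likewise for $\sigma$. Geometrically, the subgraphs in $\RR_{\ge 0}^2$ of $\lambda_\rho$ and $D_\rho$ are reflections of one another across the diagonal. Since the $L^1$-distance of two non-increasing $L^1$-functions equals the area of the symmetric difference of their subgraphs (Cavalieri's principle, or Fubini applied to subgraph indicators), this reflection preserves the $L^1$-distance, giving the equality.

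For the inequality, I will first reduce it to a trace inequality via $L^1$-$L^\infty$ duality,
\begin{equation*}
    \norm{D_\rho - D_\sigma}_{L^1(\RR^+)} = \sup_{|h|\le 1}\int_0^\infty h(t)\,(D_\rho(t) - D_\sigma(t))\,dt,
\end{equation*}
associating to each test $h$ the $1$-Lipschitz function $H(s) := \int_0^s h(t)\,dt$ with $H(0) = 0$. The spectral identity $\Tr H(\rho) = \int_0^\infty h(t) D_\rho(t)\,dt$ (Fubini applied to $H(s) = \int_0^s h(t)\,dt$ together with $\Tr\phi(\rho) = \int\phi\,dm_\rho$) rewrites the duality as
\begin{equation*}
    \norm{D_\rho - D_\sigma}_{L^1(\RR^+)} = \sup_H\,\Tr(H(\rho) - H(\sigma)),
\end{equation*}
where the supremum runs over $1$-Lipschitz $H:\RR^+\to\RR$ with $H(0) = 0$. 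Hence it suffices to prove $|\Tr(H(\rho) - H(\sigma))| \le \norm{\rho - \sigma}_{L^1(\M)}$ for every such $H$.

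For this bound, I will interpolate linearly by $X_t := (1-t)\sigma + t\rho \in L^1(\M)^+$, $t \in [0,1]$, and apply a Daleckii–Krein-type fundamental theorem of calculus to $t\mapsto \Tr H(X_t)$:
\begin{equation*}
    \Tr(H(\rho) - H(\sigma)) = \int_0^1 \Tr\bigl(H'(X_t)(\rho - \sigma)\bigr)\,dt.
\end{equation*}
For smooth $H$ this follows from the classical trace derivative formula $\Tr(DH|_X(Y)) = \Tr(H'(X)Y)$, and the identity extends to Lipschitz $H$ by uniform approximation on compacta by smooth $H_n$ with $H_n'\to H'$ a.e.\ and uniformly bounded. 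Since $|H'|\le 1$ a.e., Borel functional calculus gives $\norm{H'(X_t)}_\infty \le 1$, so for every $t$,
\begin{equation*}
    \bigl|\Tr\bigl(H'(X_t)(\rho - \sigma)\bigr)\bigr| \le \norm{H'(X_t)}_\infty \cdot \norm{\rho - \sigma}_{L^1(\M)} \le \norm{\rho - \sigma}_{L^1(\M)},
\end{equation*}
and integrating over $t$ yields the claim. The principal obstacle will be justifying the interpolation identity for \emph{Lipschitz} $H$ in the semifinite setting, especially when $\rho, \sigma$ are unbounded; I expect to address this by a spectral truncation $\rho \mapsto \chi_{[0,N]}(\rho)\rho$ (and similarly for $\sigma$) reducing to compactly supported spectra on which the smooth-approximation argument applies directly, followed by $N \to \infty$ using the $L^1$-continuity of $a \mapsto H(a)$ coming from the Lipschitz bound on $H$.
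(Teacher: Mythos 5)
Your proof is correct, but it takes a genuinely different route from the paper's. The paper handles both the inequality and the equality with a single soft lemma (\cref{lem:trick}, adapted from Haagerup--St\o rmer): any monotone, normalization-preserving map $Q:\M_*^+\to\N_*^+$ contracts the norm distance, because $\norm{\omega-\varphi}=2(\omega\vee\varphi)(1)-\omega(1)-\varphi(1)$ and $Q$ respects the order-theoretic data on the right-hand side; applying this to $\rho\mapsto D_\rho$ gives the inequality, and applying it again (using $D_{D_\rho}=\lambda_\rho$ and $D_{\lambda_\rho}=D_\rho$) gives the equality. You instead split the statement into two independent pieces: a reflection-of-subgraphs identity for $\norm{\lambda_\rho-\lambda_\sigma}_1=\norm{D_\rho-D_\sigma}_1$ (correct, and arguably more transparent than the paper's double application of the contraction), and an $L^1$--$L^\oo$ duality reducing the inequality to the trace-Lipschitz estimate $|\Tr H(\rho)-\Tr H(\sigma)|\le \norm{\rho-\sigma}_{L^1(\M)}$ for $1$-Lipschitz $H$ with $H(0)=0$. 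That estimate is true and your interpolation strategy is workable, with two remarks: you only need the \emph{inequality}, not the Daleckii--Krein \emph{identity}, to survive the passage from smooth to Lipschitz $H$ (approximate by mollification with $\norm{H_n'}_\oo\le1$ and use $|H_n(s)|\le s$ plus dominated convergence against the spectral measure, so the a.e.-convergence of $H_n'$ against the possibly singular spectral measure of $X_t$ never has to be confronted); and spectral truncation bounds the spectrum of $\rho,\sigma$ but not the trace of their supports, so the polynomial/fundamental-theorem-of-calculus step must still be carried out in a corner of possibly infinite trace, which works because $\Tr X_t^n\le N^{n-1}\Tr X_t<\oo$ and traciality applies to $L^1\times\M$ pairings. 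The paper's route buys brevity and avoids all functional-calculus differentiability questions; yours makes the link to operator-Lipschitz trace estimates explicit at the cost of substantially more analytic overhead.
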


To show this, we adapt an argument of Haagerup-St\o rmer from \cite{haagerup1990equivalence} which yields the following general fact:

\begin{lemma}\label{lem:trick}
    Let $\M$ and $\N$ be von Neumann algebras and let $Q:\M_*^+ \to \N_*^+$ be a map (not necessarily linear) which is monotone and normalization-preserving, i.e., $\omega\le\varphi$ implies $Q(\omega)\le Q(\varphi)$ and $Q(\omega)(1)=\omega(1)$.
    Then
    \begin{equation}
        \norm{Q(\omega)-Q(\varphi)}\le \norm{\omega-\varphi}\qquad\omega,\varphi\in\M_*.
    \end{equation}
\end{lemma}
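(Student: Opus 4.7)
The plan is to reduce everything to the lattice structure of $\M_*^h$ (which is an L-space) and then exploit the two hypotheses on $Q$ separately: monotonicity reduces norms of differences of positive functionals to differences of evaluations at $1$, while normalization preservation converts those into the original values $\omega(1)$ and $\varphi(1)$.

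The key algebraic identity is the following: for any $\omega,\varphi\in\M_*^+$, setting $\psi := \omega\wedge\varphi$ (the infimum in the lattice of hermitian normal functionals, which is again positive), one has
\begin{equation*}
    \|\omega-\varphi\| = \omega(1) + \varphi(1) - 2\psi(1).
\end{equation*}
This follows from the L-space identities $\omega + \varphi = (\omega\vee\varphi) + (\omega\wedge\varphi)$ and $|\omega-\varphi| = \omega\vee\varphi - \omega\wedge\varphi$, combined with the fact that $\|\rho\|=\rho(1)$ for any $\rho\in\M_*^+$. First I would invoke these standard facts about the predual (or equivalently the Jordan decomposition of $\omega-\varphi$ together with the observation that the positive part $(\omega-\varphi)_+$ is dominated by $\omega$, so that $\psi := \omega - (\omega-\varphi)_+ = \varphi - (\omega-\varphi)_-$ is positive and gives the meet).

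With the identity in hand, the rest is short. Since $\psi \le \omega$ and $\psi \le \varphi$, monotonicity of $Q$ gives $Q(\psi)\le Q(\omega)$ and $Q(\psi)\le Q(\varphi)$. For the difference of two comparable positive functionals the norm collapses to the evaluation at $1$, so using the normalization property $Q(\rho)(1)=\rho(1)$ we obtain
\begin{equation*}
    \|Q(\omega)-Q(\psi)\| = Q(\omega)(1)-Q(\psi)(1) = \omega(1)-\psi(1),
\end{equation*}
and similarly $\|Q(\varphi)-Q(\psi)\|=\varphi(1)-\psi(1)$. The triangle inequality then yields
\begin{equation*}
    \|Q(\omega)-Q(\varphi)\| \le \omega(1)+\varphi(1) - 2\psi(1) = \|\omega-\varphi\|,
\end{equation*}
finishing the proof. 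Finally, for general $\omega,\varphi \in \M_*$ the domain of $Q$ is only $\M_*^+$, so the claim is really for $\omega,\varphi\in\M_*^+$; the stated inequality is vacuous otherwise.

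The only nontrivial point is the lattice identity, which is a standard fact about preduals of von Neumann algebras (as L-spaces); I would simply cite this (e.g.\ from Takesaki I, Chapter III) rather than reprove it. All remaining steps are bookkeeping with monotonicity and the $\rho \ge 0 \Rightarrow \|\rho\|=\rho(1)$ identity.
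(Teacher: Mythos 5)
Your overall strategy---reduce to evaluations at $1$ via monotonicity and normalization, then close with the triangle inequality through a common reference functional---is the right shape, but the reference functional you chose does not exist in the noncommutative setting, and this is a genuine gap. The hermitian part of $\M_*$ is a vector lattice if and only if $\M$ is abelian (Sherman--Takeda), so the meet $\omega\wedge\varphi$ and the identities $\omega+\varphi=(\omega\vee\varphi)+(\omega\wedge\varphi)$ and $|\omega-\varphi|=\omega\vee\varphi-\omega\wedge\varphi$ are simply not available; a citation of Takesaki cannot supply them. Concretely, your candidate $\psi=\omega-(\omega-\varphi)^+$ need not be positive: in $M_2(\CC)$ with densities $A=\kettbra{+}$ and $B=\kettbra{0}$, the Jordan positive part $(A-B)^+$ equals $2^{-1/2}$ times a rank-one projection whose range is \emph{not} spanned by $\ket{+}$, so $A-(A-B)^+$ has a strictly negative expectation value on $\ket{+}^\perp$ and hence does not lie in $\M_*^+$ (so $Q$ cannot even be applied to it). Worse, the approach cannot be repaired by substituting some other common lower bound: in the same example the only positive functional dominated by both $\kettbra{+}$ and $\kettbra{0}$ is $0$, whereas $\omega(1)+\varphi(1)-\norm{\omega-\varphi}=2-\sqrt2>0$; thus the dual variational formula $\norm{\omega-\varphi}=\omega(1)+\varphi(1)-2\sup\{\rho(1):0\le\rho\le\omega,\,\rho\le\varphi\}$ fails, and the bound your triangle inequality produces from lower bounds is strictly weaker than the claim.

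The fix is to route the argument through \emph{upper} bounds, which is exactly what the paper does and where the noncommutative asymmetry is harmless: $\theta:=\omega+(\omega-\varphi)^-=\varphi+(\omega-\varphi)^+$ is manifestly positive and dominates both $\omega$ and $\varphi$ (an upper bound, though not a least one), and a direct computation with the Jordan decomposition gives $2\theta(1)-\omega(1)-\varphi(1)=\norm{\omega-\varphi}$; more generally, $\norm{\omega-\varphi}=2\inf\{\rho(1):\rho\ge\omega,\,\rho\ge\varphi\}-\omega(1)-\varphi(1)$ by \cite[Lem.~2.7]{haagerup1990equivalence}. Monotonicity makes $Q(\theta)$ an upper bound for $Q(\omega)$ and $Q(\varphi)$, so the infimum formula on the $\N$ side yields $\norm{Q(\omega)-Q(\varphi)}\le 2Q(\theta)(1)-Q(\omega)(1)-Q(\varphi)(1)$, and normalization preservation turns the right-hand side into $\norm{\omega-\varphi}$. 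Your triangle-inequality bookkeeping survives verbatim if you replace the meet by this join: $\norm{Q(\omega)-Q(\varphi)}\le\norm{Q(\theta)-Q(\omega)}+\norm{Q(\theta)-Q(\varphi)}=2\theta(1)-\omega(1)-\varphi(1)=\norm{\omega-\varphi}$, using that the norm of a difference of \emph{comparable} positive functionals is its value at $1$ --- the one part of your argument that is correct and is retained.
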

\begin{proof}
    We follow the proof of \cite[Lem.~4.2]{haagerup1990equivalence}.
    To express the norms $\norm{\omega-\varphi}$ and $\norm{Q(\omega)-Q(\varphi)}$, we use the maximum of two normal positive linear functionals.
    For self-adjoint $\psi\in\M_*$, we denote by $\psi^\pm\in\M_*^+$ its positive and negative parts (satisfying $\psi=\psi^+-\psi^-$ and $\norm\psi = \psi^+(1)+\psi^-(1)$).
    The maximum of $\omega,\varphi\in\M_*^+$ is the functional $\omega\vee\varphi:=\omega+(\omega-\varphi)^-=\varphi+(\omega-\varphi)^+ \in \M_*$.
    It follows that $\omega\vee\varphi\ge\omega,\varphi$ and by \cite[Lem.~2.7]{haagerup1990equivalence}, it holds
    \begin{equation}\label{eq:norm_vee}
        \norm{\omega-\varphi} = 2(\omega\vee\varphi)(1)-\omega(1)-\varphi(1) = 2 \inf_{\substack{\rho\in\M_*^+\\\rho\ge \varphi,\omega}} \rho(1) -\omega(1)-\varphi(1) .
    \end{equation}
    Since $Q$ is monotone, we have $Q(\omega\vee\varphi)\ge Q(\omega),Q(\varphi)$.
    Applying \eqref{eq:norm_vee}, we get 
    \begin{multline}
        \norm{Q(\omega) - Q(\varphi)} 
        \le 2\,Q(\omega\vee\varphi)(1) - Q(\omega)(1)-Q(\varphi)(1) \nonumber= 2\,(\omega\vee\varphi)(1)-\varphi(1)-\omega(1) = \norm{\omega-\varphi}.
    \end{multline}
\end{proof}

\begin{proof}[Proof of \cref{prop:lower bound}]
    Up to the identifications $L^1(\M)\cong \M_*$ and $L^1(\RR^+)\cong L^\oo(\RR^+)_*$, the map $Q:L^1(\M)^+\to L^1(\RR^+)$, $Q(\rho) = D_\rho$ satisfies the assumption of \cref{lem:trick}, which proves the inequality $\norm{\rho-\sigma}_{L^1}\ge \norm{D_\rho-D_\sigma}_{L^1}$.
    Since $D_{D_\rho} = \lambda_\rho$ and $D_{\lambda_\rho}=D_\rho$, the same argument proves $\norm{\lambda_\rho-\lambda_\rho}_{L^1} \le \norm{D_\rho-D_\sigma}_{L^1}\le \norm{\lambda_\rho-\lambda_\sigma}_{L^1}$, which finishes the proof.
\end{proof}

We can use \cref{prop:lower bound} to show that the `rank' is a lower-semicontinuous function on $L^1(\M)$:

\begin{lemma}\label{lem:rank lsc}
    Let $\M$ be a semifinite von Neumann algebra.
    If $(\rho_n)$ is a Cauchy sequence in $L^1(\M)$, then
    \begin{equation}
        \tr \big(\supp(\lim_n \rho_n)\big) \le \liminf_n \tr\big(\supp(\rho_n)\big).
    \end{equation}
\end{lemma}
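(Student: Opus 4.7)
The plan is to reduce the noncommutative inequality to a classical lower-semicontinuity statement about support measures in $L^1(\RR^+)$, using the norm estimate from Proposition~\ref{prop:lower bound}. I will assume throughout that the $\rho_n$ are positive (the application in Theorem~\ref{thm:slocc} is of this form; in the general case one replaces $\rho_n$ and $\rho := \lim_n\rho_n$ by their absolute values, using $\supp\rho = \supp|\rho|$ and the corresponding spectral scale identities).

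First, Proposition~\ref{prop:lower bound} gives
\begin{equation*}
\norm{\lambda_{\rho_n}-\lambda_\rho}_{L^1(\RR^+)}\le\norm{\rho_n-\rho}_{L^1(\M)},
\end{equation*}
so $\lambda_{\rho_n}\to\lambda_\rho$ in $L^1(\RR^+)$. Next, since the spectral scale $\lambda_\rho$ is non-increasing and right-continuous, its set of positivity $\{t>0:\lambda_\rho(t)>0\}$ is a half-open interval whose Lebesgue measure is precisely $\tr(\supp\rho)$, and the analogous identity holds for each $\rho_n$. This reduces the claim to showing that $f\mapsto|\supp f|$ is $L^1$-lower-semicontinuous on the non-negative cone of $L^1(\RR^+)$.

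For this classical fact, I argue by contradiction. Suppose $\liminf_n|\supp f_n| < |\supp f|$; extract a subsequence realizing the liminf, and refine to a sub-subsequence along which $f_{n_k}\to f$ pointwise almost everywhere. For every $t$ with $f(t)>0$ and $f_{n_k}(t)\to f(t)$, one has $f_{n_k}(t)>0$ for all sufficiently large $k$, so
\begin{equation*}
\chi_{\supp f}(t)\le\liminf_k\chi_{\supp f_{n_k}}(t)\quad\text{a.e.}
\end{equation*}
Fatou's lemma then yields $|\supp f|\le\liminf_k|\supp f_{n_k}|$, contradicting the choice of subsequence.

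I do not anticipate any substantive obstacle here: once Proposition~\ref{prop:lower bound} is invoked, all noncommutative content evaporates and the statement becomes purely measure-theoretic. The only subtlety worth flagging is the standard subsequence dance needed to upgrade $L^1$-convergence to pointwise a.e.\ convergence so that Fatou's lemma applies; this is what allows the argument to conclude along the original sequence rather than along a specific subsequence.
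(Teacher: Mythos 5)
Your proof is correct and follows essentially the same route as the paper's: apply Proposition~\ref{prop:lower bound} to pass to the spectral scales, identify $\tr(\supp\rho)$ with $|\supp\lambda_\rho|$, and invoke lower semicontinuity of $f\mapsto|\supp f|$ on $L^1(\RR^+)^+$. The only difference is that you spell out the subsequence/Fatou argument for that last classical fact, which the paper simply cites as known; your handling of the positivity caveat via $|\rho|$ is also fine.
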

\begin{proof}
    By \cref{prop:lower bound}, the assumption implies that the spectral scales $\lambda_{\rho_n}$ form a Cauchy sequence in $L^1(\RR^+)$.
    Since $\tr(\supp(\rho))= \tr(\chi_{(0,\oo)}(\rho))$, we have $\tr(\supp(\rho))= |\supp(\lambda_\rho)|$.
    Therefore, the result follows from the lower-semicontinuity of the Lebesgue measure of the support on $\RR^+$.
\end{proof}

Since both $\rho\mapsto \lambda_\rho$ and $\rho\mapsto D_\rho$ are unitarily invariant maps, the estimate \eqref{eq:lower bound} yields a lower bound on the distance of the unitary orbits of $\rho$ and $\sigma$.
In the case where $\M$ is a factor, Hiai-Nakamura and Haagerup-Størmer showed that this estimate is, in fact, an equality \cite{hiai1989distance,haagerup1990equivalence}:

\begin{theorem}
    Let $\M$ be a semifinite factor and let $\rho,\sigma\in L^1(\M)^+$.
    Then 
    \begin{equation}\label{eq:unitary orbits L1 dist}
        \inf_{u\in \U(\M)} \norm{\rho - u\sigma u^*}_{L^1(\M)} = \norm{\lambda_\rho-\lambda_\sigma}_{L^1(\RR^+)} = \norm{D_\rho - D_\sigma}_{L^1(\RR^+)}.
    \end{equation}
\end{theorem}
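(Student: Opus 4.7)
The $\geq$ direction is already handled by Proposition~\ref{prop:lower bound}, which gives $\norm{\rho - u\sigma u^*}_{L^1} \geq \norm{\lambda_{u\sigma u^*} - \lambda_\rho}_{L^1} = \norm{\lambda_\sigma - \lambda_\rho}_{L^1}$ for every unitary $u$, using that the spectral scale is a unitary invariant. So the task is to show the infimum is attained up to arbitrary precision. The strategy is to bring both $\rho$ and $\sigma$ into a common ``canonical diagonal form'' inside a fixed maximal abelian subalgebra, and then reduce to the commutative case.

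Since $\M$ is a semifinite factor, I can pick an increasing family of projections $\{e_t\}_{t\in[0,\Tr(1))}$ in $\M$ with $\Tr(e_t)=t$ (in the type $\I_n$ case only integer $t$; in the diffuse case all $t\geq 0$). These generate an abelian subalgebra $\A\subset\M$ isomorphic to $L^\oo([0,\Tr(1)))$ with the restricted trace corresponding to Lebesgue measure. Define
\begin{equation*}
\tilde\rho := \int_0^\oo \lambda_\rho(t)\,de_t \in\A,\qquad \tilde\sigma := \int_0^\oo \lambda_\sigma(t)\,de_t \in\A,
\end{equation*}
so that $\lambda_{\tilde\rho}=\lambda_\rho$ and $\lambda_{\tilde\sigma}=\lambda_\sigma$. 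Next I invoke the approximation result used in the proof of Lemma~\ref{lemma:fidelity-orbits} (Hiai--Nakamura, \cite[Lem.~4.1]{hiai1989distance}): there exist sequences $u_n,v_n\in\U(\M)$ such that $\norm{u_n\rho u_n^* - \tilde\rho}_{L^1}\to 0$ and $\norm{v_n\sigma v_n^* - \tilde\sigma}_{L^1}\to 0$. (The cited lemma is formulated in $L^2$ on square roots of positive elements, but the underlying construction — approximating the spectral projections of $\rho$ by step projections formed from the $e_t$'s — works equally well in $L^1$; alternatively one can pass from the $L^2$ bound to $L^1$ via the noncommutative H\"older inequality after writing $\rho=\rho^{1/2}\rho^{1/2}$.) By the triangle inequality,
\begin{equation*}
\inf_{u\in\U(\M)}\norm{\rho-u\sigma u^*}_{L^1} \leq \liminf_{n\to\oo}\norm{u_n\rho u_n^* - v_n\sigma v_n^*}_{L^1} \leq \norm{\tilde\rho - \tilde\sigma}_{L^1(\A)} = \norm{\lambda_\rho - \lambda_\sigma}_{L^1(\RR^+)},
\end{equation*}
matching the lower bound from Proposition~\ref{prop:lower bound}.

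The second equality $\norm{\lambda_\rho-\lambda_\sigma}_{L^1}=\norm{D_\rho-D_\sigma}_{L^1}$ is a purely classical Fubini argument: both sides compute the two-dimensional Lebesgue measure of the symmetric difference of the hypographs $\{(s,t)\in\RR^+\times\RR^+:0\leq s<\lambda_\rho(t)\}$ and the analogous set for $\sigma$, using the generalized-inverse duality $\{\lambda_\rho(t)>s\}=\{t<D_\rho(s)\}$.

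The main obstacle is establishing the $L^1$ approximation $u_n\rho u_n^*\to\tilde\rho$ in full generality across all factor types. For type $\I_n$ this reduces to the elementary fact that eigenvalues can be permuted by unitary conjugation; for type $\I_\oo$ and $\II_\oo$ (diffuse with infinite trace) and for $\II_1$ (diffuse with finite trace) it is an instance of diagonal-approximation within a MASA, where the flexibility to freely choose an increasing family of projections with prescribed traces is exactly what makes the factor assumption essential. If this approximation step needed to be re-proven from scratch rather than cited, one would first handle finite-dimensional spectral truncations of $\rho$ (for which the statement is exact in any factor after dividing the spectral projections into equal-trace subprojections matching the $e_t$'s) and then pass to the limit using the $L^1$ continuity of the spectral calculus.
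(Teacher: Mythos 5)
Your argument is correct, but it is worth noting that the paper does not actually prove this statement: it is quoted as a result of Hiai--Nakamura and Haagerup--St\o rmer, with only the lower bound (\cref{prop:lower bound}) and the identity $\norm{\lambda_\rho-\lambda_\sigma}_{L^1}=\norm{D_\rho-D_\sigma}_{L^1}$ established in the text (the latter via two applications of \cref{lem:trick}, using $D_{D_\rho}=\lambda_\rho$ and $D_{\lambda_\rho}=D_\rho$, rather than your Fubini/hypograph argument --- both are fine). What you supply beyond the paper is the upper bound, and your route is essentially the Hiai--Nakamura strategy: conjugate $\rho$ and $\sigma$ into a common MASA built from an increasing family $\{e_t\}$ with $\Tr(e_t)=t$, where the $L^1$ distance of the diagonalized elements is exactly $\norm{\lambda_\rho-\lambda_\sigma}_{L^1(\RR^+)}$, and conclude by unitary invariance of the norm and the triangle inequality. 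The reduction is sound, and your Cauchy--Schwarz/H\"older passage from the $L^2$ statement on square roots in \cite[Lem.~4.1]{hiai1989distance} to $L^1$ convergence of $u_n\rho u_n^*$ is valid since $\norm{a-b}_1\le(\norm{a^{1/2}}_2+\norm{b^{1/2}}_2)\,\norm{a^{1/2}-b^{1/2}}_2$ for positive $a,b$. The one honest caveat, which you flag yourself, is that the entire weight of the argument rests on that cited approximation lemma (the equal-spectral-scale case of the theorem); your closing sketch of how to prove it from scratch via finite spectral truncations and equal-trace subprojections is the right idea but remains a sketch, so the proof is a correct reduction rather than a self-contained derivation --- which still puts it ahead of the paper, which defers entirely to the literature.
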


The identity \eqref{eq:unitary orbits L1 dist} is generalized in \cite{hiai1989distance} also for the $L^p$ distance, yielding $\inf_{u\in \U(\M)} \norm{\rho - u\sigma u^*}_{L^p(\M)} = \norm{\lambda_\rho-\lambda_\sigma}_{L^p(\RR^+)}$.

\addcontentsline{toc}{section}{References}
\printbibliography

\end{document}